%% file: qudit-quantum.tex
\documentclass[a4paper,onecolumn,11pt,unpublished]{quantumarticle}
\pdfoutput=1
\usepackage[utf8]{inputenc}
\usepackage[english]{babel}
\usepackage[T1]{fontenc}
\usepackage[numbers,sort&compress]{natbib}
\input{qupit.sty} % only if this does not also load hyperref

\definecolor{citeDarkGreen}{RGB}{0, 100, 0}
\definecolor{linkDarkRed}{RGB}{180, 0, 0}
\hypersetup{citecolor=citeDarkGreen, linkcolor=linkDarkRed}

\title{A Complete and Natural Rule Set for Multi-Qudit Clifford Circuits in All Odd Prime Dimensions}

\author[1]{Xiaoning Bian}
\email{bian@dal.ca}

\author[2,3]{Sarah Meng Li}
\email{sarah.li@uva.nl}

\author[4]{Neil J. Ross}
\email{neil.jr.ross@dal.ca}

\author[2,3]{John van de Wetering}
\email{john@vdwetering.name}

\author[5]{Yuming Zhao}
\email{yuming@math.ku.dk}

\affil[1]{Tsinghua University, Beijing, China}
\affil[2]{University of Amsterdam, Amsterdam, Netherlands}
\affil[3]{QuSoft, Amsterdam, Netherlands}
\affil[4]{Dalhousie University, Halifax, Canada}
\affil[5]{University of Copenhagen, Copenhagen, Denmark}

\begin{document}

\maketitle

\begin{abstract}
We present a complete set of $16$ rewrite rules for multi-qudit Clifford circuits in all odd prime dimensions. Completeness means that any two Clifford circuits representing the same linear map can be rewritten into each other using these rules. Each rule involves at most three qudits and admits an intuitive interpretation. To establish completeness, we first work at the symplectic level, using the isomorphism between the symplectic group $\mathrm{Sp}(2n,\mathbb{Z}_p)$ and the quotient of the Clifford group by the Pauli group. We construct a circuit normal form that captures the stabiliser tableau of a Clifford operator and is unique up to Pauli correction. Using this normal form, we derive a complete set of symplectic relations, which we then lift to Clifford relations by incorporating Pauli corrections. We also formally verify completeness up to global phase in the Agda proof assistant.

% We present a complete set of rewrite rules for multi-qudit Clifford circuits in all odd prime dimensions. Completeness means that any two Clifford circuits representing the same linear map can be rewritten into each other using these rules. There are $16$ rewrite rules in total, each involving no more than three qudits and admitting an intuitive interpretation. These rules are complete up to an overall scalar factor, as formally verified in the Agda proof assistant. We use the isomorphism between the symplectic group $\mathrm{Sp}(2n, \mathbb{Z}_p)$ and the quotient of the Clifford group by the Pauli group to first derive a complete set of symplectic relations for $\mathrm{Sp}(2n, \mathbb{Z}_p)$. We then lift these to Clifford relations by incorporating Pauli corrections. Our derivation uses a circuit normal form that captures the stabiliser tableau of a Clifford operator and is unique up to Pauli correction. Rewriting a Clifford circuit to this symplectic normal form requires $42$ relations, which we then reduce to $18$ symplectic relations. 
% Our computations in $\mathrm{Sp}(2n, \mathbb{Z}_p)$ are formalised in the Agda proof assistant, providing a machine-verified proof of correctness.
%providing a machine-checked proof of our main result.
%Moreover, the projective completeness of these rules has been formally verified in the Agda proof assistant. 
\end{abstract}

\begingroup
\hypersetup{linkcolor=black}
\tableofcontents
\endgroup

\input{scripts/1-introduction}
\input{scripts/2-foundation}

\input{scripts/3-assemble}
\input{scripts/4-completeness}

\input{scripts/5-conclusion}

\bibliographystyle{plain}
\bibliography{qupit}

\onecolumn
\appendix

\input{scripts/appendix/sectiontwoproofs}
\input{scripts/appendix/clifford}
\input{scripts/appendix/comp-def}
\input{scripts/appendix/push-normal}
\input{scripts/appendix/normal-form}
\input{scripts/appendix/boxrelations}

% \newpage

\end{document}

%% file: scripts/1-introduction.tex
\section{Introduction}
\label{sec:introduction}

Many physical platforms for qubits naturally possess higher energy levels that can encode qudits, enabling greater information density at the cost of increased control complexity~\cite{chi2022programmable,PhysRevLett.129.160501,Rin21,romanova2025measurement,wang2020qudits}. In particular, when $p$ is an odd prime, qudit systems exhibit rich algebraic structures that result in unique error correction capabilities~\cite{PhysRevLett.113.230501,FG24}, lower overhead magic state distillation~\cite{PhysRevX.2.041021,PS24}, stronger quantum correlations~\cite{Bru08,Des22}, and improvements to various quantum algorithms~\cite{PhysRevA.96.012306,PhysRevA.103.032417}. Ultimately, there is a trade-off between the engineering cost of accessing higher dimensions and the computational advantages they provide. By characterising qudit circuits in a systematic manner, we can better understand this balance and exploit the potential of higher-dimensional quantum information processing.

% \paragraph{Motivation}

Recently, algebraic approaches to quantum circuits have gained significant attention. An important problem in this line of work is to develop sound and complete equational theories for circuit families, meaning any two circuits representing the same linear map can be transformed into each other using a finite set of rules. Such presentations enable syntactic reasoning about circuits and underpin applications ranging from circuit optimisation to formal verification~\cite{bravyi2021clifford,xu2022quartz,xu2023synthesizing,jiang2024pattern,chareton2021automated,huang2026equivalence}. While there has been substantial progress in the qubit setting~\cite{cnotdihedral,toffolihadamard,Bian2021GeneratorsAR,cliffordt,cliffordcs,greylyn2014generators,dyadicinteger,makary2021generators,selinger2015generators}, analogous results for qudit circuits remain comparatively underdeveloped.

Among quantum circuit fragments, the Clifford group plays a particularly prominent role. Clifford unitaries are central to the stabiliser formalism~\cite{aaronson2004improved,gheorghiu2014standard}, fault-tolerant quantum computing~\cite{Got99,gottesmanbook,quditqec}, efficient classical simulation~\cite{harper2025gcamps,nest2008classical}, and practical circuit optimisation techniques~\cite{bravyi20226,bravyi2021clifford}. As a result, obtaining a complete and finite equational theory for qudit Clifford circuits is a natural and foundational problem.

Note that there is a complete equational theory for all (including non-unitary) qudit Clifford maps in the form of the qudit Clifford ZX-calculus~\cite{Booth:2022isz,poor2023qupit,wang2018qutrit}.
One might hope that we could easily adapt this into a complete calculus for unitary circuits, just by restricting the allowed maps. However, as ZX-diagrams represent general linear maps this does not work in any straightforward way. In general, completeness of a graphical calculus for a broad class of linear maps does not yield a calculus restricted to unitary maps. Consider for instance that a complete ZX-calculus for universal qubit linear maps was found in 2017~\cite{jeandel2018diagrammatic,ng2017universal}, but that it was not until 2022 that a complete calculus of universal quantum circuits appeared (which used very different methods)~\cite{clement2023complete}. Moreover, there is no complete calculus for, for instance, Toffoli-Hadamard circuits, even though there is one for the corresponding set of linear maps~\cite{backens2023completeness}.

\subsection{Main Results}
\label{subsec:results}

Building on the characterisation of the qutrit Clifford group~\cite{qutrit2024}, where a qutrit is a three-dimensional ($p=3$) qudit, we present a complete set of rewrite rules for multi-qudit Clifford circuits in all odd prime dimensions. There are $16$ \emph{Clifford relations} in total, each involving at most three qudits and admitting an intuitive interpretation.

% Building on the characterisation of the qutrit Clifford group~\cite{qutrit2024}, we present a complete set of rewrite rules for multi-qudit Clifford circuits in all odd prime dimensions. There are $19$ \emph{Clifford relations} in total, each involving at most three qudits and admitting an intuitive interpretation. 

\begin{T7}
    \completeness
    % The rewrite rules in \cref{fig:rewriterules3} are \emph{complete} for $n$-qudit Clifford circuits over any qudit dimension $p$ that is an odd prime. That is, given two $n$-qudit unitary Clifford circuits $C_1$ and $C_2$ implementing the same linear map, there is a sequence of rewrites from \cref{fig:rewriterules3} that proves $C_1$ and $C_2$ are equal.
    % \label{thm:completeness}
\end{T7}

\begin{figure}[!thb]
    \[
    \scalebox{0.6}{\input{figures/RewriteRules/RewriteRules6.tikz}}
    \]
  \caption{A complete set of rewrite rules for $n$-qudit Clifford circuits with scalar generator $-\omega$, where $\omega = e^{2\pi i/p}$, $p$ is an odd prime, and $n\in \N$. We have $-1=(-\omega)^p$ and $\omega = (-\omega)^{p+1}$. $g$ is a chosen generator of the multiplicative group $\mathbb{Z}_p^*$. Derived generators such as $M_g$, $X$, $Z$, \SWAP, and \CX are defined in \cref{fig:derived-generators1}.}
  \label{fig:rewriterules6}
\end{figure}

\begin{figure}[!htb]
    \[
    \scalebox{.7}{\input{figures/Preliminaries/derived-generators3.tikz}}
    \]
    \caption{A subset of the derived Clifford generators, expressed as circuits over the generating set $\{-\omega,H,S,\CZ\}$. These generators are formally defined in \cref{fig:derived-generators2}. Here, $a$ is an arbitrary element of the multiplicative group $\mathbb{Z}_p^*$ and $\lambda_p  = e^{(p-1)\pi i/4}$. $\left(\frac{a}{p}\right)_L$ denotes the Legendre symbol~\cite[Appendix A]{prakash2021normal}, which is $1$ if $a$ is a quadratic residue modulo $p$, that is, $a = x^2$ for some $x \in \Z_p$, and $-1$ otherwise.}
    \label{fig:derived-generators1}
\end{figure}  

In \cite{bian2026verified}, we verified that \cref{fig:rewriterules6} modulo scalars is complete for the projective Clifford group in Agda. To prove \cref{thm:completeness}, we introduce a symplectic normal form that corresponds to the stabiliser tableau of a Clifford operator up to Pauli corrections. This normal form is substantially easier to work with compared to the more involved normal form for qutrit Clifford operators. We then find a set of $42$ parametrised relations which suffice to reduce any Clifford circuit to this normal form. Finally, we show how these relations can be further reduced to a small set of $16$ Clifford relations in \cref{fig:rewriterules6}. In addition to the Clifford generators $-\omega$, $H$, $S$ and $\CZ$, we also use \emph{derived generators} listed in \cref{fig:derived-generators1}, such as \SWAP, \CX, the \emph{multiplier} $M_a$, and the remote $\CZ$ gate $\CIZ$. $M_a$ is defined by its action on the computational basis states as $M_a\ket{x} = \ket{ax}$, where $a$ is any element of $\Z_p^*$ and multiplication is over $\Z_p$. Using these derived generators, we can present the rewrite rules as intuitive gate-level interactions, mostly in the form of commutation and quasi-commutation rules.

\begin{remark}
\label{rem:minus-one}
The symbol $-1$ is used in three ways in \cref{fig:rewriterules6,fig:derived-generators1}. As a global scalar, as in $(-1)^{(p-1)/2}$ in \eqref{eq:H2-soundness}, it is the complex number $e^{\pi i}$. In the subscript of a multiplier, it is an element of $\Z_p$: $M_{-1}$ maps $\ket{x}$ to $\ket{-x}$. Similarly, a negative exponent on a gate of order $p$ is read in $\Z_p$: in \eqref{eq:CX-CZ-soundness}, $\CX^{-1}$ denotes $\CX^{p-1}$ and $S^{-1}$ denotes $S^{p-1}$. Finally, an inverse inside an exponent, such as $a^{-1}$ in \eqref{eq:def-Mg} or $g^{-2}$ in \eqref{eq:Mg-S-soundness}, is the modular inverse in $\Z_p$. Thus $S^{a^{-1}}$ is an integer power of $S$, not an $a$-th root of $S$. Fractions in exponents, such as $(1-g)/(2g^2)$ in \eqref{eq:Mg-S-soundness}, are read in the same way.
\end{remark}

\paragraph{Interpretation of the rewrite rules}
The multi-qudit rewrite rules of \cref{fig:rewriterules6} mostly correspond to equations that look familiar. For three qudits, we only have equations corresponding to commuting a CZ gate through a set of SWAP gates \eqref{eq:SWAP-CZ-soundness} or a CNOT gate \eqref{eq:CZ-CX-soundness}, together with the Yang–Baxter equation \eqref{eq:SWAP-SWAP-soundness}~\cite{poulain-dandecy-2020-fusion}. For two qudits, we have the rules for commuting generators through the CZ gate \eqref{eq:CZ-S-soundness}-\eqref{eq:CZ-Mg-soundness}, through the SWAP gate \eqref{eq:SWAP-S-soundness}-\eqref{eq:SWAP-H-soundness}, as well as the order of the CZ and SWAP gates \eqref{eq:CZ-soundness}-\eqref{eq:SWAP-soundness}. Finally, \eqref{eq:CX-CZ-soundness} is the path-sum decomposition of a CZ gate: the quadratic phase $q(j+\ell)$ induced by conjugating $S$ with a CNOT gate decomposes as $q(j)+q(\ell)+j\ell$, where $q(j)=\frac{j(j-1)}{2}$. After cancelling the local $S$-phases, one is left with the bilinear CZ phase $j\ell$.

The equations for a single qudit are perhaps less familiar. \eqref{eq:omega-soundness} and \eqref{eq:S-soundness} are simply the order of the scalar generator $-\omega$, which is $2p$; and the order of the S gate, which is $p$. \eqref{eq:H2-soundness}, which relates $H^2$ to $M_{-1}$, can be used to derive an Euler decomposition of the Hadamard in terms of Z and X rotations. \eqref{eq:Mg-soundness} relates doing iterated multiplication with a multiplier to doing the multiplication in one go. This rule could be replaced by an equation $M_aM_b=M_{ab}$ for all $a,b \in \mathbb{Z}_p^*$ which would give us $(p-1)^2$ equations instead of just the $p$ equations that \eqref{eq:Mg-soundness} represents. In particular, all equations involving $g$ in \cref{fig:rewriterules6}, \eqref{eq:Mg-S-soundness} and \eqref{eq:CZ-Mg-soundness}, continue to hold if we replace $g$ by any element of $\mathbb{Z}_p^*$, and so we could equally state these rules without choosing a $g$, at the cost of having more rules that are families of equations instead of a single equation. Finally, note that \eqref{eq:SH2-SH2-soundness} states that the gate $H^2SH^2$ commutes with $S$. Such a rule was also needed for qutrits, and was shown to be necessary in that setting~\cite{blake2026simpler}. The soundness of every rule in \cref{fig:rewriterules6}, namely that its two sides implement the same linear map, is established through path-sum calculations~\cite{amy2019towards,koh2017computing} in \cref{subsec:soundness}.

Some interesting observations follow from these rewrite rules. First, there is no complicated dependency on the dimension of the qudit $p$: the only dependencies are the order of the scalar $-\omega$ and of the $S$ and CZ gates, and the number of equations represented by \eqref{eq:Mg-soundness}, one for each $k \in \Z_p$. Second, the rewrites only act on at most three qudits, and this number does not increase with $p$ or $n$. This is in contrast to the rewrite rules for universal qubit circuits where rewrite rules on an arbitrary number of qubits must be included~\cite{clement2023complete}.

%, and the calculation of modular inverses in \eqref{eq:Mg-S-soundness} and \eqref{eq:CX-CZ-soundness}. One might have for instance expected more dependencies on $p \mod 8$, similar to how the scalar factor is defined for $M_a$ in \cref{fig:derived-generators1}.

\paragraph{Parametrised Relations and Symplectic Lifting}
Extending qutrit Clifford completeness~\cite{qutrit2024} to arbitrary odd prime dimensions requires a systematic method for deriving and simplifying parametrised box relations. For qutrits, $\Z_3$ has only two nonzero elements, each is its own inverse, which simplifies many calculations. For general $p$, we must handle parameters and their inverses symbolically. 
% Despite these additional difficulties, the field structure of $\Z_p$ simplifies our constructions: since every nonzero element has a multiplicative inverse, we can use to construct multipliers and other derived generators.

We combine this parametrisation with a divide-and-conquer approach based on $\Clifford_n/\Pauli_n \cong \symplectic{2n,\Z_p}$, separating the symplectic group presentation from Pauli and scalar corrections. We first reduce the box relations to a complete set of symplectic relations, then incorporate Pauli corrections to establish completeness up to global phase, and finally account for scalars to obtain exact Clifford completeness.

We formally verify completeness up to global phase in the Agda proof assistant. The formalisation is explained in~\cite{bian2026verified} and the code is available in the GitHub repository~\cite{qupitgit}. Our results connect group theory with circuit rewriting and provide a foundation for automated verification, optimisation, and synthesis of qudit Clifford circuits.

\subsection{Proof Overview}
\label{subsec:method}
%Let $p$ be an odd prime and let a qudit denote a $p$-dimensional quantum system. 

% A qudit is a $p$-dimensional quantum system. Here we consider the case in which $p$ is an odd prime. In what follows, w
We define unitary operators by their actions on the computational basis states. The single-qudit Pauli operators are defined as $X\ket{j}=\ket{j+1}$, $Z\ket{j}=\omega^j\ket{j}$, where $\omega = e^{2\pi i/p}$ and addition is over $\Z_p$. The $n$-qudit Pauli group $\Pauli_n$ is generated by $X$ and $Z$ acting on individual qudits, together with the scalar $-\omega$. The qudit \emph{$H$, $S$, and $\CZ$ gates}~\cite{dickinson1982eigenvectors,glaudellthesis,Got99,knill1996group,knill1996non,prakash2021normal} are defined in \eqref{eq:qudit-gates}, where $i = e^{2\pi i/4}$ is the imaginary unit and $\lambda_p = e^{(p-1)\pi i/4}$. 
%The choice of global phases ensures that each matrix determinant is equal to $1$. 
\begin{equation}
    H : \ket{j} \mapsto \frac{1}{\lambda_p\sqrt{p}}\sum_{\ell=0}^{p-1}\omega^{j\ell}\ket{\ell}, \quad
    S : \ket{j} \mapsto \omega^{\frac{j(j-1)}{2}}\ket{j},\quad CZ:\ket{j}\ket{\ell} \mapsto \omega^{j\ell}\ket{j}\ket{\ell}.
    \label{eq:qudit-gates}
\end{equation}

% The $n$-qudit Pauli group $\Pauli_n$ is generated by tensor products of $X$ and $Z$, together with phases $\omega^t$ for $t \in \Z_p$. 
   
The $n$-qudit Clifford group $\Clifford_n$ is the normaliser of $\Pauli_n$ in the group of $p^n$-dimensional unitary operators. It is generated by $H$, $S$, and $\CZ$ gates, together with the scalar $-\omega$, via matrix multiplication and tensor product~\cite{farinholt2014ideal}. For simplicity, we ignore all other global phases. It is well known that the qudit Clifford group modulo Paulis is isomorphic to the symplectic group $\symplectic{2n,\Z_p}$~\cite{appleby2005symmetric}. This correspondence allows Clifford generators to be represented by symplectic matrices over the finite field $\Z_p$, forming the basis of stabiliser tableau representation of Clifford operators~\cite{aaronson2004improved,Got99}. Building on this correspondence, we generalise the prior Clifford completeness results~\cite{qutrit2024,makary2021generators,selinger2015generators} for qubits and qutrits to all odd primes.

\paragraph{A Unique Normal Form for Multi-Qudit Clifford Circuits}

To find a complete set of rewrite rules, we first define a unique normal form for Clifford circuits. We introduce a normal form with two components: a \emph{symplectic normal form} characterising the stabiliser tableau of the Clifford circuit, followed by a \emph{Pauli normal form} giving the necessary Pauli correction. This decomposition reflects the semidirect product structure of the projective Clifford group $\widehat{\Clifford}_n$, obtained by identifying Clifford operators that differ by a global phase, $\widehat{\Clifford}_n\cong \symplectic{2n, \Z_p} \ltimes \left(\Z_p\right)^{2n}$~\cite{weil1964certains}.

% This decomposition reflects the semidirect product structure of the projective Clifford group $\widehat{\Clifford}_n$, where $\widehat{\Clifford}_n\cong \symplectic{2n, \Z_p}$ $\ltimes \left(\Z_p\right)^{2n}$~\cite{weil1964certains}. 
We show that symplectic normal forms $N^{(n)}$ are in one-to-one correspondence with symplectic matrices $M\in \symplectic{2n,\Z_p}$. $N^{(n)}$ is defined inductively using structured layers of circuit fragments, which we call \emph{normal boxes}. These are labelled $A$, $B$, $D$, and $E$, each implementing a specific action on Pauli generators. We retain the box labels used in the qubit and qutrit constructions~\cite{selinger2015generators,qutrit2024}, but absorb the $C$ and $F$ boxes from each layer into our Pauli normal form. An illustration of the multi-qudit Clifford normal form is given in \cref{fig:normal-form}.

%These names match the corresponding boxes for qutrits and qubits~\cite{selinger2015generators,qutrit2024}, which also included a $C$ and $F$ box in each layer but they are now absorbed into our Pauli normal form.

\begin{figure}[!tbh]
    \centering
     \scalebox{0.65}{\input{figures/Normalization/NormalForm.tikz}}
    \caption{The inductive construction of the normal form for any $C \in \Clifford_n$, up to a global phase. Each layer of the normal form alternates the Z-normal (red) and X-normal (green) circuits, progressively eliminating nontrivial entries in the stabiliser tableau of $C$. The first layer implements the conjugation action of $C$ on the $n$-th qudit up to Pauli correction. By induction, the normal form $N^{(n-1)}$ implements the remaining symplectic action on the first $n-1$ qudits. Finally, the Pauli normal form carries out the Pauli correction.}
    \label{fig:normal-form}
\end{figure}

\paragraph{Establishing Qudit Clifford Completeness}

To obtain a complete set of Clifford relations, it suffices to find a set of rewrite rules that transforms any Clifford circuit into its unique normal form. We proceed in three steps. First, we show that if a Clifford generator is applied before a normal form, the resulting circuit can be rewritten into a new normal form. This is achieved by showing how each generator can be ``pushed'' through each distinct normal box, yielding equations of the form illustrated in \eqref{eq:normal-box-rel}. We refer to them as the \emph{box relations}.
\begin{equation}
    \scalebox{0.72}{\input{figures/NormalForm/normal-box-rel.tikz}}
\label{eq:normal-box-rel}
\end{equation}  
Here, $\dir$ denotes the \emph{residual dirty gates}. They are composed of Clifford generators whose form depends on the box parameters $a$, $b$, and the dimension $p$. We continue pushing these gates through the remaining boxes until we obtain a new symplectic normal form. In total, there are $42$ box relations, parametrised by the indices of the corresponding normal boxes.
% Here, $\dir$ denotes the \emph{residual dirty gates}, which consist of Clifford generators and whose precise form depends on the parameters $a$, $b$, and the dimension $p$. These dirty gates are then pushed further into the remaining part of the normal form. 

Second, we obtain a compact presentation of the multi-qudit symplectic group using $18$ relations, from which we derive all $42$ box relations, first by hand and then in Agda; representative derivations are given in the supplement~\cite{Supplement2026}. Finally, we incorporate Pauli corrections and derive the rules governing their interaction with Clifford generators, establishing completeness up to global phase. Accounting for scalar phases then yields the complete set of Clifford relations in \cref{fig:rewriterules6}.

% By grouping these relations together in a systematic way, we obtain $42$ box relations, each parametrised by the indices of the corresponding normal boxes. Finally, we obtain a compact presentation of the multi-qudit symplectic group using $18$ relations, from which we derive all $42$ box relations, first by hand and then in Agda. We then incorporate Pauli corrections and derive the rules governing Paulis and their interaction with Cliffords, yielding the complete set of Clifford relations in \cref{fig:rewriterules6}.

% We then reduce these relations to a smaller set of $18$ symplectic relations by showing, first by hand and then in Agda how to derive the 42 box relations from these 18 relations. These 18 equations then form a compact and natural presentation of the multi-qudit symplectic Clifford group. Finally, we update these relations with Pauli corrections, and show that they imply all the relations governing Paulis and their interaction with Cliffords. This gives \cref{fig:rewriterules6}, a complete set of multi-qudit Clifford relations.

\subsection{Paper Organisation}
\label{subsec:org}

The remainder of this paper is organised as follows. In \cref{sec:foundations}, we introduce the algebraic and circuit-theoretic preliminaries. We present the normal forms of Clifford circuits in \cref{sec:assemble} and establish completeness of the rewrite rules in \cref{sec:box-relations}. In \cref{sec:conclusion}, we summarise our results and discuss future directions. Most technical proofs are deferred to the appendices.

% The remainder of this paper is organised as follows. In \cref{sec:foundations}, we introduce the necessary algebraic and circuit-theoretic preliminaries, including qudit Pauli and Clifford operators, circuit interpretations, and the rewriting framework used throughout the paper. \cref{sec:assemble} designs a unique normal form for multi-qudit Clifford circuits, based on a symplectic normal form together with a Pauli correction. In \cref{sec:box-relations}, we derive a complete set of rewrite rules for multi-qudit Clifford circuits: we first establish completeness at the symplectic level, and then lift these results to unitary Clifford circuits by accounting for Pauli and phase corrections. \cref{sec:conclusion} concludes with a brief summary of our results and a discussion of future directions. For readability, most technical proofs are deferred to the appendices.

%% file: scripts/2-foundation.tex
% --------------------------------------------------------------------
\section{Preliminaries}
\label{sec:foundations}

Here, we introduce the algebraic background and conventions used throughout the paper. In \cref{ssec:rings}, we introduce the rings and roots of unity underlying our odd-prime qudit formalism. In \cref{ssec:operators}, we define the Pauli and Clifford operators, review basic identities, and recall the symplectic structure of the Clifford group. In \cref{ssec:circuits}, we move from operators to circuits by specifying Clifford circuit families, their interpretation maps (unitary, projective, and symplectic), as well as the notions of soundness and completeness for rewriting systems that will be used in later sections. Proofs that are lengthy or purely technical are deferred to \cref{sec:sectiontwoproofs,sec:clifford}.

% --------------------------------------------------------------------
\subsection{Rings}
\label{ssec:rings}

We write $\N$ for the collection of non-negative integers, $\Z$ for the ring of integers, and $\Z_m$ for the ring of integers modulo $m\in\N$. If $a\in\Z_m$ is invertible, we sometimes denote its multiplicative inverse by $1/a$. Throughout the paper, $p$ denotes an arbitrary positive odd prime of $\Z$, and $\Z_p^*$ denotes the multiplicative group of non-zero elements of $\Z_p$.

Recall that a complex number $c\in \C$ is a \emph{root of unity} if $c^m=1$ for some positive integer $m$. We will be especially interested in $p$-th roots of unity.

\begin{definition}
  \label{def:omega}
  We write $\omega_p$ for the \emph{primitive $p$-th root of unity} $\omega_p = e^{2\pi i/p}$.
\end{definition}

In what follows, we drop the subscript and write $\omega$ for $\omega_p$, leaving $p$ implicit. Note that $\omega^p = 1$, $\omega^\dagger = \omega^{-1} = \omega^{p-1}$, and $\sum_{j=0}^{p-1}\omega^j = 0$.

\begin{remark}
  \label{rem:modular-inverse-exponent}
  Exponents of $\omega$, and of gates of order $p$, are read in $\Z_p$. In particular, a fraction $u/v$ in such an exponent, with $p\nmid v$, can be written as $uv^{-1} \bmod p$. For example, $\omega^{1/a}$ with $a\in\Z_p^*$ is the integer power $\omega^{a^{-1}}$, not $e^{2\pi i/(ap)}$. See also \cref{rem:minus-one}.
\end{remark}

\begin{definition}
  \label{def:cyclotomic integers}
  The ring $\Z[\omega]$ of \emph{cyclotomic integers} is the smallest subring of $\C$ containing $\omega$. 
\end{definition}

Since $\omega^\dagger = \omega^{p-1}$, the ring $\Z[\omega]$ is closed under complex conjugation. Every element of $\Z[\omega]$ can be written as a linear combination of $\omega^j,0\leq j\leq p-1$, so that
\begin{equation}
  \Z[\omega] = \Biggl\{\sum_{j=0}^{p-1}a_j\omega^{j}  \mid a_j \in \Z\Biggr\}.
  \label{eq:unique-fac}
\end{equation}

\begin{definition}
  \label{def:ring}
  The ring $\Z[1/p,\omega]$ is the smallest subring of $\C$ containing $1/p$ and $\omega$.
\end{definition}

The ring $\Z[1/p,\omega]$ is the localisation of $\Z[\omega]$ at the multiplicative set $\{1, p, p^2, \ldots\}$.
\begin{equation}
  \Z[1/p,\omega] = \biggl\{\frac{u}{p^\ell}\mid u \in \Z[\omega], \ell \in \N\biggr\}.
  \label{eq:unique-fac-local}
\end{equation}

\begin{definition}
\label{def:lambda}
We define $\lambda_p  := e^{(p-1)\pi i/4}$. In particular,
 \begin{equation*}
 \lambda_p=\begin{cases}
       1, &  p\equiv 1 \pmod{8} \\
       i, &  p\equiv 3 \pmod{8} \\
      -1, &  p\equiv 5 \pmod{8} \\
       -i, &  p\equiv 7 \pmod{8}.
     \end{cases}
    \label{eq:phase}
 \end{equation*}
\end{definition}

The ring $\mathbb{Z}[1/p,\omega]$ has two properties that will be useful later. \cref{lem:lambda-in-ring,lem:roots-of-unity} are proved in \cref{sec:sectiontwoproofs}.

\begin{proposition}
  \label{lem:lambda-in-ring}
  % $\frac{1}{\lambda_p\sqrt{p}}\in\Z[1/p,\omega]$.
  \lambdainring
\end{proposition}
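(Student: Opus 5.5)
The plan is to use the quadratic Gauss sum. Set
\[
G=\sum_{x=0}^{p-1}\omega^{x^2}.
\]
Since each $\omega^{x^2}\in\Z[\omega]$, we have $G\in\Z[\omega]$. The classical evaluation of the quadratic Gauss sum gives $G=\sqrt{p}$ when $p\equiv 1\pmod 4$ and $G=i\sqrt{p}$ when $p\equiv 3\pmod 4$. In both cases $G^2=\left(\frac{-1}{p}\right)_L p$, so $\left|G\right|^2=p$.

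The key identity to establish is that $\lambda_p\sqrt{p}$ equals $\pm G$ or $\pm\overline{G}$. Note that $\overline{G}=\sum_x\omega^{-x^2}\in\Z[\omega]$, since $\Z[\omega]$ is closed under conjugation. I would go case by case through Definition~\ref{def:lambda}.
\begin{itemize}
\item If $p\equiv 1\pmod 8$, then $\lambda_p\sqrt p=\sqrt p=G$.
\item If $p\equiv 5\pmod 8$, then $\lambda_p\sqrt p=-\sqrt p=-G$.
\item If $p\equiv 3\pmod 8$, then $\lambda_p\sqrt p=i\sqrt p=G$.
\item If $p\equiv 7\pmod 8$, then $\lambda_p\sqrt p=-i\sqrt p=-G$.
\end{itemize}
So in every case $\lambda_p\sqrt p=\pm G\in\Z[\omega]$.

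It then remains to invert. Because $G\overline{G}=p$, we get
\[
\frac{1}{\lambda_p\sqrt p}=\pm\frac{\overline{G}}{p}=\pm\frac{1}{p}\sum_{x=0}^{p-1}\omega^{-x^2},
\]
which has the form $u/p$ with $u\in\Z[\omega]$. By \eqref{eq:unique-fac-local}, this element lies in $\Z[1/p,\omega]$.

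The main obstacle is the sign of the Gauss sum: showing $G^2=\left(\frac{-1}{p}\right)_L p$ is easy, but pinning down $G=+\sqrt p$ rather than $-\sqrt p$ is Gauss's deep theorem. For this statement, however, the sign is irrelevant, since we only need $\lambda_p\sqrt p=\pm G$ and the final expression already carries a $\pm$. So I would prove only the elementary part. Expanding $G\overline{G}=\sum_{x,y}\omega^{x^2-y^2}$ and substituting $x=y+t$ gives $\sum_t\omega^{t^2}\sum_y\omega^{2ty}$. The inner sum vanishes unless $t=0$, because $2$ is invertible mod $p$, so $G\overline{G}=p$. Next, substituting $x\mapsto ax$ with $a$ a nonresidue shows $\overline{G}=\left(\frac{-1}{p}\right)_L G$. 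Together these give $G^2=\left(\frac{-1}{p}\right)_L p$, so $G$ is $\pm\sqrt p$ or $\pm i\sqrt p$ according to $p\bmod 4$. Finally, $\lambda_p$ is $\pm1$ exactly when $p\equiv1\pmod4$ and $\pm i$ when $p\equiv 3\pmod 4$. This matching of $\lambda_p$ with the residue of $p$ mod $4$ is the only case analysis needed.
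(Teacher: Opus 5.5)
Your proof is correct, and its core matches the paper's: both write $1/(\lambda_p\sqrt p)$ as $\pm\frac1p$ times a quadratic Gauss sum.

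\textbf{How the two routes differ.} The paper cites the complete evaluation $\sum_j\omega^{j^2}=\left(\frac{2}{p}\right)_L\lambda_p^{-1}\sqrt p$ from \cite{prakash2021normal}. It then writes $\frac{1}{\lambda_p\sqrt p}=\frac{\lambda_p^{-1}\sqrt p}{p}=\left(\frac{2}{p}\right)_L\frac1p\sum_j\omega^{j^2}$, with the sign fully determined. You instead invert via $G\overline G=p$ and obtain $\pm\overline G/p$.

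The genuinely different point is your observation that the sign never matters. Your matching of $\lambda_p\in\{\pm1\}$ or $\{\pm i\}$ with $p \bmod 4$ amounts to $(\lambda_p\sqrt p)^2=\lambda_p^2p=\left(\frac{-1}{p}\right)_L p=G^2$. Hence $\lambda_p\sqrt p=\pm G$ follows from the elementary half of the Gauss sum evaluation alone, and Gauss's sign theorem is not needed. This makes the proposition self-contained.

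The paper's stronger citation buys an explicit sign. That sign is superfluous here, but it is genuinely needed later, in the proof of \cref{lem:multiplier}. There the exact phase of $\sum_j\omega^{j^2/(2a)}$ is what forces $c_a=1$ rather than $c_a=\pm1$.

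\textbf{A step to repair.} In your elementary sketch, the claim that substituting $x\mapsto ax$ with $a$ a non-residue gives $\overline G=\left(\frac{-1}{p}\right)_L G$ is mis-justified. That substitution in $\sum_x\omega^{\pm x^2}$ only multiplies the exponent by the square $a^2$, so it cannot turn $G$ into $-G$.

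For $p\equiv 1\pmod 4$, the substitution $x\mapsto cx$ with $c^2=-1$ does give $\overline G=G$.

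For $p\equiv 3\pmod 4$, argue as follows. Write $\sum_x\omega^{ax^2}=1+2\sum_{r\in R}\omega^{ar}$, where $R$ is the set of nonzero squares. For a non-residue $a$ (here $a=-1$), the set $aR$ is exactly the set of non-residues. Therefore $G+\overline G=2+2\sum_{k\neq 0}\omega^k=0$, that is, $\overline G=-G$. With this fix, $G^2=\left(\frac{-1}{p}\right)_L p$ and the rest of your argument goes through unchanged.
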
    

\begin{proposition}
  \rootsofunity
  \label{lem:roots-of-unity}
  % The roots of unity in $\Z[1/p,\omega]$ are $\pm \omega^t$, $t \in \Z_p$.
\end{proposition}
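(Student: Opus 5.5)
The plan is to characterise the roots of unity in $\Z[1/p,\omega]$ in two steps. First, we show that every such root of unity already lies in the cyclotomic field $\mathbb{Q}(\omega)$. Second, we determine all roots of unity of that field. The inclusion $\Z[1/p,\omega]\subseteq \mathbb{Q}(\omega)$ is immediate from \eqref{eq:unique-fac-local}, since every element has the form $u/p^\ell$ with $u\in\Z[\omega]$. Conversely, each $\pm\omega^t$ clearly lies in $\Z[\omega]\subseteq\Z[1/p,\omega]$ and is a root of unity. So it remains to show that every root of unity $\zeta\in\mathbb{Q}(\omega)$ has the form $\pm\omega^t$.

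Suppose $\zeta\in\mathbb{Q}(\omega)$ is a primitive $m$-th root of unity. Then $\mathbb{Q}(\zeta)\subseteq\mathbb{Q}(\omega)$. Let $L=\operatorname{lcm}(m,p)$. The compositum $\mathbb{Q}(\zeta,\omega)=\mathbb{Q}(e^{2\pi i/L})$ is contained in $\mathbb{Q}(\omega)$, so comparing degrees gives $\varphi(L)\le\varphi(p)=p-1$. Since $p\mid L$, write $L=p^k r$ with $k\ge1$ and $\gcd(r,p)=1$. Then
\[
\varphi(L)=p^{k-1}(p-1)\varphi(r).
\]
The bound $\varphi(L)\le p-1$ therefore forces $k=1$ and $\varphi(r)=1$, so $r\in\{1,2\}$. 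Hence $L\in\{p,2p\}$. Since $m\mid L$, every root of unity in $\mathbb{Q}(\omega)$ is a $2p$-th root of unity.

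Finally, the $2p$-th roots of unity are exactly $\pm\omega^t$ for $t\in\Z_p$. Indeed, $-\omega$ has order $2p$ because $p$ is odd, and its powers $(-\omega)^j$ for $0\le j<2p$ run over all $\pm\omega^t$; this matches the remark after \cref{fig:rewriterules6} that $(-\omega)^p=-1$.

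The main obstacle is the degree fact $[\mathbb{Q}(e^{2\pi i/L}):\mathbb{Q}]=\varphi(L)$, i.e.\ the irreducibility of cyclotomic polynomials. I will cite this as standard rather than reprove it. The only other care needed is that the localisation at $p$ does not introduce elements outside $\mathbb{Q}(\omega)$, which is clear from \eqref{eq:unique-fac-local}.
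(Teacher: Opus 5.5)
Your proof is correct and takes the same route as the paper: embed $\Z[1/p,\omega]$ in $\mathbb{Q}(\omega)$, then use that the roots of unity of $\mathbb{Q}(\omega)$ are exactly the $2p$-th roots of unity $\pm\omega^t$. The only difference is that the paper cites this fact about $\mathbb{Q}(\omega)$ directly (Theorem 5.3 of \cite{extensions}), whereas you derive it from the degree formula $[\mathbb{Q}(e^{2\pi i/L}):\mathbb{Q}]=\varphi(L)$, which is a valid self-contained justification.
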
    

% --------------------------------------------------------------------
\subsection{Operators}
\label{ssec:operators}
We now introduce the operators of primary interest in this work, namely the \emph{Pauli} operators~\cite{Got99} and the \emph{Clifford} operators~\cite{dickinson1982eigenvectors,glaudellthesis,Got99,knill1996group,knill1996non,prakash2021normal}.

% We now introduce the operators that will be of interest in the rest of the paper: the \emph{Pauli} operators~\cite{Got99} and the \emph{Clifford} operators~\cite{dickinson1982eigenvectors,glaudellthesis,Got99,knill1996group,knill1996non,prakash2021normal}.

In what follows, we write $I_m$ for the $m\times m$ identity matrix, and we omit the subscript $m$ when it is not relevant or can be inferred from the context. We say that an operator $U$ is a \emph{scalar}, if it is a scalar multiple of the identity operator, i.e., if $U = \lambda I$, for some $\lambda \in \C$. For simplicity, we denote $U$ by $\lambda$ in such a case. When $\lambda = 1$, the identity operator $I$ may itself be written as $1$.
%if $U = \lambda I$, for some $\lambda \in \C$. For simplicity, we denote $U$ by $\lambda$ in such a case; in particular, taking $\lambda = 1$, the identity operator $I$ may itself be written as $1$.
%and when it is clear from the context, we abuse the notation and use $1$ to denote both the integer $1$ and the identity operator.

We write $\symplectic{2n,\mathbb{Z}_p}$ for the \emph{symplectic group}. For $m \in \N$, we write $\U_m \coloneq \{U \in \C^{m\times m} \mid U^\dagger U = I_m\}$ for the \emph{unitary group} of $m\times m$ complex matrices under matrix multiplication. We write $\U(1) \coloneq \{\alpha \in \C \mid \lvert \alpha \rvert = 1\}$ for the group of complex numbers of magnitude one under multiplication. The scalars $\{\alpha I_m \mid \alpha \in \U(1)\}$ form a subgroup of $\U_m$ called the \emph{group of global phases}. Following the convention above of denoting a scalar $\alpha I_m$ by $\alpha$, we also denote this subgroup by $\U(1)$.

\begin{definition}
    Let $m \in \N$. $\U_m(\Z[1/p,\omega])$ is the group of $m\times m$ unitaries with entries in $\Z[1/p,\omega]$. 
    \label{def:group}
\end{definition}

Throughout the paper, we define unitary operators by specifying their action on the computational basis. This uniquely determines their action on all states. Unless stated otherwise, juxtaposition 
$AB$ denotes matrix multiplication of the unitaries $A$ and $B$.

% In the remainder of this paper, we will often define unitary operators by specifying their action on the computational basis states. The definition then extends linearly to the rest of the space.

% --------------------------------------------------------------------
\subsubsection{Pauli Operators}
\label{sssec:paulis}

\begin{definition}
    The single-qudit \emph{Pauli $X$ and $Z$ gates} are defined as follows.
\begin{align}
X:\ket{j} \mapsto \ket{j+1}
\qquad
\mbox{and} 
\qquad
Z:\ket{j}\mapsto \omega^j\ket{j},
\end{align}
where $0\leq j \leq p-1$ and addition is performed modulo $p$.
    \label{def:pauli}
\end{definition}

We record several properties of the Pauli $X$ and $Z$ gates. The corresponding derivations can be found in \cref{sec:sectiontwoproofs}. We have
\begin{equation}
\label{eq:order}
X^p = Z^p = 1.
\end{equation}
It follows from \cref{eq:order} that $X^\dagger = X^{p-1}$ and $Z^\dagger = Z^{p-1}$. In addition, we have $ZXZ^\dagger X^\dagger = \omega$, which implies that 
\begin{equation}
  \label{eq:XZ}
  ZX = \omega XZ, \qquad ZXZ^\dagger = \omega X, \qquad \mbox{and} \qquad XZX^\dagger = \omega^{-1} Z.
\end{equation}

\begin{definition}
  \label{def:Pauli-groups}
  For $n \in \N$, the $n$-qudit \emph{Pauli group} $\Pauli_n$ is the group of $n$-qudit unitary operators generated by $-\omega$, $X$, and $Z$ under composition and tensor product.
\end{definition}
% The group $\Pauli_n$ is not closed under tensor product.

\cref{def:Pauli-groups} states that among all of the operators that can be constructed using $-\omega$, $X$, and $Z$ through composition and tensor product, $\Pauli_n$ contains exactly those that act on $n$ qudits. Since $p$ is odd, the scalar $-\omega$ has order $2p$, with $(-\omega)^p = -1$ and $(-\omega)^{p+1} = \omega$ (see \cref{lem:gen-scalar}). We write
\[
\langle -\omega \rangle \coloneq \left\{ (-\omega)^t \mid t \in \Z_{2p}\right\} = \left\{ \pm\omega^t \mid t \in \Z_{p}\right\}
\]
for the cyclic group generated by $-\omega$. Using \cref{eq:order,eq:XZ} and the usual properties of the tensor product, one can show that
\[
\Pauli_n = \left\{ (-\omega)^c \left(X^{a_1}Z^{b_1} \otimes \cdots \otimes X^{a_n}Z^{b_n}\right) \mid c\in \Z_{2p} \mbox{ and } a_j, b_j\in \Z_p \mbox{ for all } 1 \leq j \leq n\right\}.
\]

As a consequence, we have $\lvert \Pauli_n\rvert = 2p^{2n + 1}$. Note that $\omega \in \Pauli_1$ can already be obtained from $X$ and $Z$ because $ZXZ^\dagger X^\dagger = \omega$, whereas the phase $-1 = (-\omega)^p$ cannot. Including $-\omega$ as a generator ensures that $\Pauli_n$ contains all the global phases of the Clifford group defined below (see \cref{lem:qupit-Clifford-phase}), so that $\Clifford_n/\Pauli_n \cong \symplectic{2n,\Z_p}$ in \cref{subsec:symplectic}.

% --------------------------------------------------------------------
\subsubsection{Clifford Operators}
\label{sssec:cliffords}

\begin{definition}
  \label{def:H-and-S}
  The single-qudit $H$ and $S$ gates and the two-qudit $\CZ$ gate are defined as follows, where $\lambda_p$ is given as in \cref{def:lambda}. 
  \[
  H : \ket{j} \mapsto \frac{1}{\lambda_p\sqrt{p}}\sum_{\ell=0}^{p-1}\omega^{j\ell}\ket{\ell},
  \qquad
  S : \ket{j} \mapsto \omega^{\frac{j(j-1)}{2}}\ket{j},
  \qquad
  \mbox{and}
  \qquad
  \CZ:\ket{j}\ket{\ell} \mapsto \omega^{j\ell}\ket{j}\ket{\ell},
  \]
  where $0\leq j,\ell \leq p-1$ and multiplication is performed modulo $p$.
\end{definition}

The gates $H$, $S$, and $\CZ$ are called the \emph{Hadamard} gate, the \emph{phase} gate, and the \emph{controlled-$Z$} gate, respectively. We use these gates and the global phase $-\omega$ to define the \emph{Clifford} group, similar to how $-\omega$ and the $X$ and $Z$ gates were used to define the Pauli group in \cref{def:Pauli-groups}.

\begin{definition}
  \label{def:Clifford-circuits}
  For $n \in \N$, the $n$-qudit \emph{Clifford group} $\Clifford_n$ is the group of $n$-qudit unitary operators generated by $-\omega$, $H$, $S$, and $\CZ$ under composition and tensor product.
\end{definition}

We record a few properties of Clifford operators. In particular, we show that the Clifford phases are exactly the integer powers of $-\omega$.

\begin{proposition}
\label{prop:hsquared}
For $0\leq j \leq p-1$, $H^2 \ket{j} = \lambda_p^{-2}\ket{-j}$. Here the negation $-j$ is performed modulo $p$.
\end{proposition}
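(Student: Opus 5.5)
The plan is to compute $H^2\ket{j}$ directly from \cref{def:H-and-S} and reduce the resulting character sum using $\sum_{k=0}^{p-1}\omega^{k}=0$. Applying $H$ twice and using linearity gives
\[
H^2\ket{j} \;=\; \frac{1}{\lambda_p\sqrt{p}}\sum_{\ell=0}^{p-1}\omega^{j\ell}\,H\ket{\ell}
\;=\; \frac{1}{\lambda_p^2\, p}\sum_{m=0}^{p-1}\Biggl(\sum_{\ell=0}^{p-1}\omega^{(j+m)\ell}\Biggr)\ket{m},
\]
after exchanging the order of the two finite sums. Here all exponents are read modulo $p$, which is legitimate because $\omega^p=1$.

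The key step is to evaluate the inner sum $\sum_{\ell}\omega^{(j+m)\ell}$. If $j+m\equiv 0 \pmod p$, every term equals $1$, so the sum is $p$. Otherwise $c=j+m$ is a unit in $\Z_p$ because $p$ is prime. The map $\ell\mapsto c\ell$ is then a bijection of $\Z_p$, so the sum equals $\sum_{k=0}^{p-1}\omega^k=0$. Alternatively, one can note that $\omega^c\neq 1$ and sum the geometric series. Only the term $m\equiv -j$ survives, and it contributes $\frac{p}{\lambda_p^2 p}\ket{-j}=\lambda_p^{-2}\ket{-j}$, which is the claim.

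The argument contains no real obstacle. The only point needing care is the case split on whether $j+m$ vanishes modulo $p$, together with the observation that primality of $p$ (or simply the fact that $\omega^c$ is a nontrivial $p$-th root of unity when $c\not\equiv 0$) makes the off-diagonal sums vanish. A side remark worth recording is that $\lambda_p^{-2}=e^{-(p-1)\pi i/2}=(-1)^{(p-1)/2}$. This is consistent with the scalar that appears in \eqref{eq:H2-soundness} when $H^2$ is related to $M_{-1}$.
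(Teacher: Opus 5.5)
Your proposal is correct and follows the paper's proof essentially step for step. You expand $H^2\ket{j}$ into $\frac{1}{\lambda_p^2 p}\sum_{m}\bigl(\sum_{\ell}\omega^{\ell(j+m)}\bigr)\ket{m}$ and note that the inner sum is $p$ when $m\equiv -j$ and $0$ otherwise. Your bijection argument for the vanishing sums is an equivalent alternative to the paper's observation that $\omega^{j+m}$ is a nontrivial $p$-th root of unity.
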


Note that $H^2$ is a special case of a \emph{multiplier} since up to a global phase, $H^2\ket{j} = \ket{-j}$. Multipliers admit simple commutation rules with the Clifford generators, which help streamline the relation reduction in~\cite{qupitgit,Supplement2026}.

\begin{definition}
    For every $a\in\Z_p^*$, we define the \emph{multiplier by $a$} as the linear operator $M_a$ on $\C^p$ that acts on computational basis states as $M_a\ket{x} = \ket{ax}$.
\end{definition}

Since $a$ is invertible modulo $p$, $M_a$ permutes the computational basis states, so it is unitary and $M_a^\dagger = M_a^{-1} = M_{a^{-1}}$, where $a^{-1}\in \Z_p^*$ is the inverse of $a$ modulo $p$.
In \cref{lem:multiplier}, we show how to construct a multiplier using $H$, $S$, $X$, and $Z$ gates together with a global phase in $\langle -\omega\rangle$, so that it is indeed Clifford.

\begin{lemma}
    % For every $a\in\Z_p^*$, $M_a$ defined in \Cref{fig:derived-generators1} satisfies $M_a\ket{x}=\ket{ax}$ for all $x\in \Z_p$.
    \multiplier
    \label{lem:multiplier}
\end{lemma}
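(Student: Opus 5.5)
The plan is a direct computation of the action of the circuit for $M_a$ from \cref{fig:derived-generators1} on a basis state $\ket{x}$, using the path-sum method. I will assume that circuit has the Euler-type shape $\mu_a \cdot P\, S^{a} H S^{a^{-1}} H S^{a} H$. Here $\mu_a$ is the scalar built from $\lambda_p$ and the Legendre symbol $\left(\frac{a}{p}\right)_L$, and $P$ is a product of Pauli $X$ and $Z$ powers. If the exact arrangement of $H$'s, $S$-powers and Paulis differs, the same three steps apply.

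\textbf{Step 1: linear terms.} By \cref{def:H-and-S}, $S^c\ket{j}=\omega^{c\,j(j-1)/2}\ket{j}$, with exponents read in $\Z_p$ as in \cref{rem:modular-inverse-exponent}. So $S^c$ is the pure quadratic phase $\ket{j}\mapsto\omega^{c j^2/2}\ket{j}$ times a linear phase $\omega^{-cj/2}$, i.e. a power of $Z$. I would first commute the Pauli factors in the circuit past the $H$ gates, using $HZH^\dagger\propto X$ and $HXH^\dagger\propto Z^{-1}$ up to scalars in $\langle-\omega\rangle$ (both follow from \cref{def:H-and-S} and \cref{eq:XZ}). The aim is to check that all linear phase terms cancel, so that up to a known scalar the circuit equals $Q_a F Q_{a^{-1}} F Q_a F$, where $Q_c\ket{j}=\omega^{c j^2/2}\ket{j}$ and $F=\lambda_p H$ is the normalised Fourier transform.

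\textbf{Step 2: path sum.} Applying $F$, then $Q_c$, then $F$ to $\ket{x}$ gives
\[
p^{-1}\sum_{y,z}\omega^{xy + c y^2/2 + yz}\ket{z}.
\]
For $c\neq 0$ I complete the square in $y$, writing $c y^2/2 + y(x+z) = \tfrac{c}{2}\bigl(y + c^{-1}(x+z)\bigr)^2 - \tfrac{(x+z)^2}{2c}$. The sum over $y$ is then the quadratic Gauss sum
\[
G(c/2)=\sum_{y}\omega^{c y^2/2}=\left(\tfrac{c/2}{p}\right)_L\, G(1),
\]
and Gauss's evaluation gives $G(1)=\sqrt{p}$ if $p\equiv 1\pmod 4$ and $G(1)=i\sqrt{p}$ if $p\equiv 3\pmod 4$. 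Iterating this through the three stages, every quadratic term in the remaining variables cancels exactly when the $S$-exponents are $a, a^{-1}, a$. What survives is a delta function forcing the output to be $\ket{ax}$ (or $\ket{-ax}$, which the final $F$ and the $H^2$ behaviour of \cref{prop:hsquared} convert appropriately). This pins down the permutation part of $M_a$.

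\textbf{Step 3: the global scalar (main obstacle).} The accumulated scalar is a product of Gauss sums, factors $1/\sqrt{p}$, powers of $\lambda_p$ from $H=\lambda_p^{-1}F$, and whatever the Pauli commutations contributed. I must show it is exactly cancelled by the prefactor in \cref{fig:derived-generators1}, so the result is $\ket{ax}$ with coefficient $1$. Concretely this needs:
\begin{itemize}
\item $\left(\frac{2}{p}\right)_L^2=1$;
\item $\left(\frac{a}{p}\right)_L\left(\frac{a^{-1}}{p}\right)_L = 1$, so only one Legendre factor $\left(\frac{a}{p}\right)_L$ survives;
\item $G(1)^2=\left(\frac{-1}{p}\right)_L\, p$, together with the identity $\lambda_p^2 = i^{(p-1)/2}$, which relates the $\lambda_p$ powers to $\varepsilon_p^2$, where $\varepsilon_p\in\{1,i\}$ is the factor with $G(1)=\varepsilon_p\sqrt{p}$.
\end{itemize}
I would organise this by cases on $p \bmod 8$, matching the four cases of \cref{def:lambda}, and verify that the leftover phase is $1$ in each case.

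Finally, unitarity and $M_a^{-1}=M_{a^{-1}}$ follow because $M_a$ permutes basis states. Membership of the defining phase in $\langle -\omega\rangle$, as required by \cref{lem:roots-of-unity}, is then consistent with the computation.
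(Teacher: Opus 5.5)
Your route works, and it differs from the paper's. The paper first tracks Pauli conjugation to get $M_aZM_a^\dagger=Z^{a^{-1}}$ and $M_aXM_a^\dagger=X^{a}$. From this it concludes $M_a\ket{y}=c_a\ket{ay}$ for a single unknown scalar $c_a$. It then path-sums only $M_a\ket{0}$, using the already-known vanishing of all but one output coefficient to collapse the double sum to $p\sum_j\omega^{j(j-1)/(2a)}$.

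You path-sum $M_a\ket{x}$ for general $x$ instead. This gives the permutation and the scalar in one computation, and it does not need the conjugation identities to hold with exact Pauli phases. The paper's version gives a conceptual reason for the permutation structure and avoids completing the square for general input. Your Step~1 can be dropped: keep $S^{c}\ket{j}=\omega^{cj(j-1)/2}\ket{j}$ and the trailing Paulis inside the path sum.

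Several details are wrong as written, however.
\begin{itemize}
\item \textbf{The circuit shape.} The actual definition is $\left(\frac{a}{p}\right)_L\omega^{\frac{-a^2+4a-2}{8a}}\,Z^{(1-a)/(2a)}X^{(1-a)/2}S^{a^{-1}}HS^{a}HS^{a^{-1}}H$, with no $\lambda_p$ in the prefactor. Carrying out your Step~2 shows that $S$-exponents $(c,c^{-1},c)$ force the output $\ket{c^{-1}x}$. So your guessed pattern $a,a^{-1},a$ would give $M_{a^{-1}}$. It is the actual pattern $a^{-1},a,a^{-1}$ that yields $\ket{ax}$, and no $\ket{-ax}$ case arises.
\item \textbf{What the computation actually looks like.} The first summation variable $y$ produces the only Gauss sum, $G\bigl((2a)^{-1}\bigr)=\left(\frac{2}{p}\right)_L\left(\frac{a}{p}\right)_L\varepsilon_p\sqrt{p}$. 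The quadratic term in $z$ then cancels, and the $z$-sum forces the output of the last $H$ to be $ax-\frac{1-a}{2}$. Next, $X^{(1-a)/2}$ shifts this to $ax$, and $Z^{(1-a)/(2a)}$ cancels the residual linear phase $\omega^{-\frac{1-a}{2}x}$. The leftover constant $\omega^{((a-1)(a-3)-1)/(8a)}$ and the Legendre symbol $\left(\frac{a}{p}\right)_L$ are both cancelled by the prefactor.
\item \textbf{The scalar identity you need.} What remains is $\lambda_p^{-3}\left(\frac{2}{p}\right)_L\varepsilon_p$. Showing this equals $1$ requires the unsquared identity $\left(\frac{2}{p}\right)_L\varepsilon_p=\lambda_p^{-1}$, equivalently the paper's $\sum_j\omega^{j^2/(2a)}=\lambda_p^{-1}\sqrt{p}\left(\frac{a}{p}\right)_L$. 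The ingredients you list are all squared identities, so they fix this only up to sign. It is your mod-$8$ case check that does the real work.
\item \textbf{Two smaller errors.} First, $\lambda_p^2=i^{(p-1)/2}$ is false: in fact $\lambda_p=i^{(p-1)/2}$, so $\lambda_p^2=(-1)^{(p-1)/2}=\varepsilon_p^2$. Second, the correct conjugation rules are $HXH^\dagger=Z$ and $HZH^\dagger=X^{-1}$; the ones you wrote are conjugation by $H^\dagger$.
\end{itemize}
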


 Here, $\left(\frac{a}{p}\right)_L$ denotes the Legendre symbol~\cite[Appendix A]{prakash2021normal}, which is $1$ if $a$ is a quadratic residue modulo $p$, that is, $a = x^2$ for some $x \in \Z_p$, and $-1$ otherwise.

\begin{remark}
For $a\in\Z_p^*$, the scalar factor $\left(\frac{a}{p}\right)_{\!L}\omega^{\frac{-a^2+4a-2}{8a}}$ in \cref{lem:multiplier} is of the form $\pm\omega^t$ with $t\in\Z_p$, so it lies in $\langle -\omega\rangle$. Hence the on-the-nose multiplier $M_a\ket{x}=\ket{ax}$ belongs to the Clifford group $\Clifford_1$ for every $a\in\Z_p^*$. In particular, by \cref{prop:hsquared},
\[
  H^2
  =\lambda_p^{-2}M_{-1}
  =(-1)^{\frac{p-1}{2}}M_{-1},
\]
so that $M_{-1}=(-1)^{\frac{p-1}{2}}H^2$ with $(-1)^{\frac{p-1}{2}}\in\langle-\omega\rangle$. This is why we take $-\omega$, rather than $\omega$, as the scalar generator in \cref{def:Pauli-groups,def:Clifford-circuits}. Had we used $\omega$ instead, then by \cref{lem:det-H,lem:Clifford-syllable-determinant} every element of the resulting group would have determinant in $\{\omega^t \mid t\in\Z_p\}$. Since $p^n$ is odd, $\det(-I_{p^n})=-1$. $-1 \not\in \langle\omega\rangle$, since otherwise $-1=(-1)^p=\omega^{tp}=1$, a contradiction. When $p \equiv 3 \pmod 4$, $M_{-1}=-H^2$ and $\det(M_{-1})=-1$, then $M_{-1}$ would not be in the group generated by $\omega$, $H$, $S$, and $\CZ$, confirming the necessity of using $-\omega$ as the scalar generator.
\end{remark}

We also record several basic properties of the Clifford operators. To begin with, the generators have finite order, $(-\omega)^{2p} = S^p = \CZ^p = H^4 = 1$, and the determinants of $H$, $S$, and $\CZ$ are integer powers of $\omega$. More precisely, $\det(H)=\det(\CZ)=1$, whereas $\det(S)=\omega$ when $p=3$ and $\det(S)=1$ when $p\geq 5$. See \cref{subsec:clifford-properties} for details.

\begin{lemma}
  $\Clifford_n \subseteq \U_{p^n}(\Z[1/p,\omega])$.
  \label{lem:representable}
\end{lemma}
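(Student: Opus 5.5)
The plan is to prove that $\U_{p^n}(\Z[1/p,\omega])$ is a group closed under tensor product, and that it contains every generator of $\Clifford_n$. Since $\Clifford_n$ is generated from $-\omega$, $H$, $S$ and $\CZ$ by composition and tensor product (\cref{def:Clifford-circuits}), the inclusion then follows by induction on the construction of a Clifford operator.

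\textbf{Closure.} Here we use that $\Z[1/p,\omega]$ is a subring of $\C$ (\cref{def:ring}).
\begin{itemize}
\item \emph{Products and tensor products.} Their entries are sums of products of entries, so they stay in the ring. Unitarity is clearly preserved by both operations.
\item \emph{Inverses.} We have $U^{-1}=U^\dagger$. The ring $\Z[1/p,\omega]$ is closed under complex conjugation, because $\overline{\omega}=\omega^{p-1}$ and $1/p$ is real. So $U^\dagger$ again has entries in the ring.
\item \emph{Identities.} The identity matrices $I_{p^k}$ trivially belong, so we may pad a generator acting on fewer qudits with identity factors.
\end{itemize}
Together these show that $\U_{p^n}(\Z[1/p,\omega])$ is a subgroup of $\U_{p^n}$ and that $\bigcup_n \U_{p^n}(\Z[1/p,\omega])$ is closed under $\otimes$.

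\textbf{Generators.}
\begin{itemize}
\item $-\omega\in\Z[\omega]$ and $|-\omega|=1$, so the scalar generator belongs.
\item $S$ and $\CZ$ are diagonal, with entries of the form $\omega^{k}$. The exponent $j(j-1)/2$ is an integer, and in any case it is read modulo $p$ with $2$ invertible. So their entries lie in $\Z[\omega]$.
\item Each entry of $H$ has the form $\frac{1}{\lambda_p\sqrt p}\,\omega^{j\ell}$. By \cref{lem:lambda-in-ring}, $\frac{1}{\lambda_p\sqrt{p}}\in\Z[1/p,\omega]$, so every entry of $H$ lies in the ring.
\end{itemize}
All of these generators are unitary, as is standard: $H$ is the normalised discrete Fourier transform times a unit-modulus phase.

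\textbf{Main obstacle.} The only nontrivial ingredient is the membership $\frac{1}{\lambda_p\sqrt p}\in\Z[1/p,\omega]$, which I am taking from \cref{lem:lambda-in-ring}. If I had to prove it, I would use the quadratic Gauss sum $G=\sum_{x}\omega^{x^2}\in\Z[\omega]$. It satisfies $G=\sqrt p$ for $p\equiv1\pmod 4$ and $G=i\sqrt p$ for $p\equiv 3\pmod4$. One then checks from \cref{def:lambda} that $\lambda_p\sqrt p=\pm G$ in each residue class modulo $8$. Hence $\frac{1}{\lambda_p\sqrt p}=\pm\overline{G}/p\in\Z[1/p,\omega]$, using $G\overline G=p$.

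Given that proposition, the rest is routine bookkeeping.
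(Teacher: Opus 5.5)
Your proposal is correct and follows the same route as the paper. The paper's proof simply notes that the matrices of $-\omega$, $H$, $S$ and $\CZ$ have entries in $\Z[1/p,\omega]$, using \cref{lem:lambda-in-ring} for $H$; your closure bookkeeping (products, tensor products, adjoints) makes explicit what the paper leaves implicit, and your Gauss-sum sketch matches the paper's proof of that proposition, since your $\pm\overline{G}$ equals the paper's $\left(\frac{2}{p}\right)_L G$.
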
    

\begin{proof}
This follows from the fact that the matrices representing $-\omega$, $H$, $S$, and $\CZ$ in the computational basis have entries in $\Z[1/p,\omega]$. In the case of $H$, this follows from \cref{lem:lambda-in-ring}.
\end{proof}

\begin{proposition}
  \label{lem:qupit-Clifford-phase}
  The Clifford phases are exactly the integer powers of $-\omega$. That is, for all $n\in\N$, we have $\Clifford_n \cap \U(1) = \langle -\omega\rangle = \{ \pm\omega^t \mid t \in \Z_{p}\}$.
\end{proposition}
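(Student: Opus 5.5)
The equality $\langle -\omega\rangle=\{\pm\omega^t\mid t\in\Z_p\}$ was already noted after \cref{def:Pauli-groups}. So the plan is to prove the two inclusions $\langle-\omega\rangle\subseteq\Clifford_n\cap\U(1)$ and $\Clifford_n\cap\U(1)\subseteq\langle-\omega\rangle$.

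\emph{First inclusion.} By \cref{def:Clifford-circuits}, $-\omega$ is a generator of $\Clifford_n$. It is a scalar, so tensoring with identities (for instance $(-\omega)\otimes I_{p^{n-1}}$) gives the $n$-qudit operator $(-\omega)I_{p^n}$. Hence every integer power of $-\omega$ lies in $\Clifford_n\cap\U(1)$. The degenerate case $n=0$ is immediate, since the $0$-qudit Clifford operators are exactly the scalars generated by $-\omega$.

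\emph{Second inclusion.} Let $\alpha I_{p^n}\in\Clifford_n$ with $|\alpha|=1$. By \cref{lem:representable}, the entries of $\alpha I_{p^n}$ lie in $\Z[1/p,\omega]$, so $\alpha\in\Z[1/p,\omega]$. The key step is to show that $\alpha$ is a root of unity; then \cref{lem:roots-of-unity} gives $\alpha=\pm\omega^t\in\langle-\omega\rangle$. I would get this from the fact that $\Clifford_n$ is finite modulo nothing extra, i.e.\ that $\Clifford_n$ is a finite group.

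\emph{Finiteness of $\Clifford_n$.} Each Clifford $U$ normalises $\Pauli_n$, so it induces a permutation of the finite set $\Pauli_n$ by conjugation. If two Cliffords $U,V$ induce the same permutation, then $V^\dagger U$ commutes with every Pauli. The Paulis $X^{a}Z^{b}$ span all $p^n\times p^n$ matrices, so $V^\dagger U$ is a scalar. This gives an injection of $\Clifford_n/(\Clifford_n\cap\U(1))$ into the finite group $\mathrm{Sym}(\Pauli_n)$. That alone does not bound the scalars, so I would add a determinant argument. The determinants of $H$, $S$ and $\CZ$ are powers of $\omega$, as recorded before \cref{lem:representable}, and $\det(-\omega)=(-\omega)^{p^n}$. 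Determinants of tensor products with identities are again powers of these. Hence $\det U\in\langle-\omega\rangle$ for every $U\in\Clifford_n$. For the scalar $\alpha I$ this gives $\alpha^{p^n}\in\langle-\omega\rangle$, so $\alpha^{2p^{n+1}}=1$ and $\alpha$ is a root of unity. This determinant route actually makes the permutation argument unnecessary, and I would use it directly.

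The main obstacle is justifying the claim that $\det U\in\langle-\omega\rangle$ for all $U\in\Clifford_n$. Concretely, one needs $\det(G\otimes I_{p^k})=\det(G)^{p^k}$ and closure of $\langle-\omega\rangle$ under products and powers. This is exactly the content referenced as \cref{lem:det-H,lem:Clifford-syllable-determinant}, which I would invoke.
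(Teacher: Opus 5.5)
Your proposal is correct and, once you drop the permutation-of-Paulis detour (as you already suggest), it is the paper's proof: the determinants of $-\omega$, $H$, $S$ and $\CZ$ lie in $\langle-\omega\rangle$, so a Clifford scalar $cI_{p^n}$ satisfies $c^{p^n}\in\langle-\omega\rangle$ and is therefore a root of unity; since $c\in\Z[1/p,\omega]$ by \cref{lem:representable}, \cref{lem:roots-of-unity} gives $c=\pm\omega^t$. One small point to make explicit is that an arbitrary tensor product factors as $A\otimes B=(A\otimes I)(I\otimes B)$, so your identity-padding formula covers it; this is the same as the paper's use of $\det(A\otimes B)=\det(A)^m\det(B)^k$.
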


\begin{proof}
    Since $-\omega$ is a generator of $\Clifford_n$ by \cref{def:Clifford-circuits}, we have $(-\omega)^t \in \Clifford_n \cap \U(1)$ for every $t \in \Z_{2p}$, so that $\langle -\omega\rangle \subseteq \Clifford_n \cap \U(1)$. It remains to prove the reverse inclusion. 
    
    All zero-qudit Clifford operators are of the form $(-\omega)^t$ for some $t\in \Z_{2p}$, so that the statement holds when $n=0$. Now let $n\geq 1$. By \cref{lem:det-H,lem:Clifford-syllable-determinant}, the determinants of $H$, $S$, and $\CZ$ are integer powers of $\omega$, and hence lie in $\langle-\omega\rangle$. Moreover, since $p^n$ is odd, the scalar $-\omega = -\omega I_{p^n}$ has determinant $(-\omega)^{p^n}=-1\in\langle-\omega\rangle$. Recall that $\det(AB)=\det(A)\det(B)$ and that $\det(A\otimes B)=\det(A)^{m}\det(B)^{k}$ for a $k\times k$ matrix $A$ and an $m\times m$ matrix $B$. Since $\langle-\omega\rangle$ is closed under products, inverses, and integer powers, it follows that every element of $\Clifford_n$ has determinant in $\langle-\omega\rangle$.

    Now consider a phase $cI_{p^n}\in \Clifford_n$. By \cref{lem:representable}, $c\in \Z[1/p,\omega]$, and by the above there exists $s \in \Z_{2p}$ such that
    \begin{equation}
    \label{eq:dets}
        c^{p^n} = \det(c I_{p^n}) = (-\omega)^s.
    \end{equation}
    Hence, $c^{2p^{n+1}} = (-\omega)^{2ps} = 1$, so $c$ is a root of unity in $\Z[1/p,\omega]$. Therefore, by \cref{lem:roots-of-unity}, $c=\pm\omega^t$ for some $t\in\Z_p$, that is, $c \in \langle -\omega\rangle$. Hence, $\Clifford_n \cap \U(1) \subseteq \langle -\omega\rangle$.
\end{proof}

Equivalently, if $U, V \in \Clifford_n$ satisfy $U = cV$ for some $c \in \U(1)$, then $c \in \langle -\omega\rangle$, since $cI_{p^n} = UV^\dagger \in \Clifford_n$. That is, any global phase by which two Clifford operators differ is an integer power of $-\omega$.

\subsubsection{Clifford Conjugation of Pauli Generators}
\label{subsec:Clifford-conj}

Another way to define the Clifford unitaries is as those unitaries that send the Pauli operators to Pauli operators under conjugation.

\begin{definition}
\label{def:Clifford_unitaries}
    For $n \in \N$, the \emph{normaliser of the Pauli group} $\Pauli_n$ is denoted by $\widetilde{\Clifford_n}$,
    \[
    \widetilde{\Clifford_n} \coloneq \{U \in \U_{p^n} \mid UPU^\dagger \in \Pauli_n \; \mbox{ for all } P \in \Pauli_n\}.
    \]
\end{definition}

We will denote the action of $\widetilde{\Clifford_n}$ by conjugation on the Pauli group $\Pauli_n$ as $C \bullet P \coloneq CPC^\dagger$ for $C\in\widetilde{\Clifford_n}$. Note that if we write $Q \coloneq C\bullet P$, we then also have $CP = QC$, and we can interpret this as ``pushing $P$ through $C$ gives us $Q$''. It will be useful to introduce a circuit notation for this.

\begin{definition}\label{def:conjugation}
    Let $C \in \widetilde{\Clifford_n}$ and $P,Q \in \Pauli_n$ with $Q =C\bullet P$. Write $P = P_1 \otimes \cdots \otimes P_n$ and $Q = Q_1 \otimes \cdots \otimes Q_n$ for $P_j, Q_j \in \Pauli_1$. We write this in circuit notation as follows:
    \[
    \scalebox{.8}{\input{figures/Preliminaries/CliffordConjugation.tikz}}
    \]
    % That is, pushing $P$ through $C$ gives us $Q$: $CP = QC$.
\end{definition}

\begin{example}
    By \cref{eq:XZ}, $X, Z \in \widetilde{\Clifford_1}$. Diagrammatically, we have
    \begin{equation}
        \scalebox{.8}{\input{figures/Preliminaries/Pauli-conj.tikz}}
        \label{eq:Pauli action}
    \end{equation}
\end{example}

A Clifford unitary is fully determined up to global phase by its action of conjugating Pauli operators. 
\begin{proposition}
  \label{prop:scalars}
  Let $C\in\widetilde{\Clifford_n}$. If $C \bullet P = P$ for all $P \in \Pauli_n$, then $C$ is a scalar (i.e.~proportional to the identity).
\end{proposition}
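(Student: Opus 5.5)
The plan is to show that $C$ commutes with every operator in the algebra generated by the Pauli group, and that this algebra is all of $\C^{p^n\times p^n}$. Schur's lemma, or a direct computation, then forces $C$ to be scalar. The hypothesis $C\bullet P = P$ means $CPC^\dagger = P$, i.e.\ $CP = PC$ for every $P\in\Pauli_n$. In particular $C$ commutes with $X_k$ and $Z_k$, the $X$ and $Z$ gates acting on the $k$-th qudit, for each $1\le k\le n$.

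I would argue directly in the computational basis rather than through general representation theory. First, $C$ commutes with each $Z_k$, hence with every product $Z^{b_1}\otimes\cdots\otimes Z^{b_n}$. These operators are diagonal, and the joint eigenvalue of the basis state $\ket{j_1\ldots j_n}$ is the character $\omega^{\sum_k b_kj_k}$. Distinct basis strings $\mathbf j\neq\mathbf j'$ are separated by a suitable $Z_k$, since $\omega^{j_k}\ne\omega^{j'_k}$ whenever $j_k\ne j'_k$, because $\omega$ is a primitive $p$-th root of unity. So the joint eigenspaces are exactly the one-dimensional spans of the basis vectors. An operator commuting with all $Z_k$ preserves each joint eigenspace, so $C$ is diagonal: $C\ket{\mathbf j}=c_{\mathbf j}\ket{\mathbf j}$.

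Second, I use the $X_k$. Commutation $CX_k=X_kC$ applied to $\ket{\mathbf j}$ gives $c_{\mathbf j+e_k}=c_{\mathbf j}$, where $e_k$ is the $k$-th unit vector and addition is in $\Z_p^n$. Since the $e_k$ generate $\Z_p^n$, all $c_{\mathbf j}$ are equal, so $C=cI$. Unitarity is not even needed, except to make sense of $C^\dagger$.

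There is no serious obstacle. The only point needing care is that the single-qudit eigenvalues of $Z$ separate basis states, which uses primitivity of $\omega$ (Definition~\ref{def:omega}). The global-phase generator $-\omega$ in $\Pauli_n$ is irrelevant, since it commutes with everything. I would also remark that, by Proposition~\ref{lem:qupit-Clifford-phase}, if moreover $C\in\Clifford_n$ then the scalar lies in $\langle-\omega\rangle$, although the statement only asks for proportionality to the identity.
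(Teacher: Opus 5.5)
Your argument is correct, but it takes a different route from the paper.

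\textbf{The paper's route.} It argues by spanning. Every $p\times p$ matrix is a linear combination of the $X^aZ^b$, so $\Pauli_n$ spans all $p^n\times p^n$ matrices. Hence $C$ commutes with every matrix and must be a scalar.

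\textbf{Your route.} You use only the $2n$ generators $X_k$ and $Z_k$.
\begin{itemize}
\item Commuting with all $Z_k$ forces $C$ to be diagonal. This is because the joint eigenvalues $(\omega^{j_1},\dots,\omega^{j_n})$ separate the computational basis states, by primitivity of $\omega$.
\item Commuting with all $X_k$ then makes the diagonal entries invariant under translation by each $e_k$. Since the $e_k$ generate $\Z_p^n$, the entries are constant.
\end{itemize}

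\textbf{What each buys.} The paper's proof is shorter. However, it takes two standard facts as given:
\begin{itemize}
\item that the $X^aZ^b$ span $\mathcal{M}_p(\C)$, which needs a small argument such as trace orthogonality, or writing each $\ket{j}\bra{k}$ as $X^{j-k}$ times a Fourier combination of powers of $Z$;
\item that only scalars commute with every matrix.
\end{itemize}
Your computation is self-contained and uses nothing beyond the definitions of $X$ and $Z$. It effectively proves directly that nothing but scalars commutes with the whole Pauli group. It also shows that the generators $X_k$ and $Z_k$ alone already suffice.

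\textbf{One small correction.} Your aside says unitarity is needed only to make sense of $C^\dagger$, but that is not quite right: $C^\dagger$ makes sense for any matrix. Passing from $CPC^\dagger=P$ to $CP=PC$ genuinely uses $C^\dagger C=I$. Here that holds because $C\in\U_{p^n}$. It would also follow from the hypothesis itself: taking $P=I$ gives $CC^\dagger=I$. The paper's proof uses the same step when it passes from $CMC^\dagger=M$ to $CM=MC$.
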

\begin{proof}
    Any complex $p \times p$-matrix can be written in the form $\sum_{i=1}^{p^2}c_iP_i, \; c_i \in \C, \; P_i \in \{X^aZ^b \mid a, b \in \Z_p\}$. It follows that $\Pauli_n$ spans the vector space formed by $p^n\times p^n$ complex matrices. Since $C \bullet P = P$ for all $P \in \Pauli_n$, $C \bullet M = CMC^\dagger = M$ for all operators $M$. Hence $CM = MC$ so that $C$ must commute with all matrices. The only matrices with this property are the scalars.
\end{proof}

Consequently, if $C, D \in \widetilde{\Clifford_n}$ satisfy $C \bullet P = D \bullet P$ for all $P \in \Pauli_n$, then $(D^\dagger C) \bullet P = P$ for all $P \in \Pauli_n$. By \cref{prop:scalars}, $C = \lambda D$ for some $\lambda \in \U(1)$. If moreover $C, D \in \Clifford_n$, then by \cref{lem:qupit-Clifford-phase}, $\lambda \in \langle -\omega \rangle$.

% Finally, we introduce a property of Clifford operators that will be used to establish the existence and uniqueness of the symplectic normal form defined in \cref{sec:phase-free-normal-form}. 

Since this conjugation is a group homomorphism, it suffices to know its action on a set of generators of $\Pauli_n$. A useful set of generators are those with only one $Z$ or $X$ gate acting on a single qudit, and identities everywhere else.

\begin{definition}
    Let $Z_j$ denote the $n$-qudit Pauli $P_1\otimes\cdots\otimes P_{n}$ with $P_\ell = I$ if $\ell \neq j$, and $P_j = Z$. We write $X_j$ for the analogous $n$-qudit Pauli by substituting $Z$ with $X$ in the above definition. Let $\mathcal{B}_{n}\coloneq\{X_j,Z_j \mid 1\leq j\leq n\}$ be the $n$-qudit \emph{Pauli basis} for $\mathcal{P}_n$. An element in $\mathcal{B}_{n}$ is called a \emph{Pauli generator}.
    \label{def:Pauli notation}
\end{definition}

Note that $\mathcal{B}_n\cup\{-\omega\}$ generates $\mathcal{P}_n$. Given any $n$-qudit Clifford operator $C$, the collection of operators $P_j = C\bullet Z_j,Q_j = C\bullet X_j,1\leq j \leq n$ fully determines $C$ (up to a global phase). Moreover, $P_j$ and $Q_j$ inherit the commuting and non-commuting relations from $Z_j$ and $X_j$. By \cref{eq:XZ}, $P_jQ_j = \omega Q_jP_j$. When $j\neq \ell$, $P_jQ_\ell = Q_\ell P_j$.

\cref{lem:automorphism,lem:automorphism-CZ} show that the Clifford generators $H$, $S$, and $\CZ$ are indeed in $\widetilde{\Clifford_n}$ by specifying their action on the Pauli generators under conjugation. Full proofs are given in \cref{sec:sectiontwoproofs}.

\begin{lemma}
  \HSaction
% $H, S \in \Clifford_1$. Specifically, $HXH^\dagger = Z, \quad HZH^\dagger = X^{-1}, \quad SXS^\dagger = XZ, \quad SZS^\dagger = Z$.
\label{lem:automorphism}
\end{lemma}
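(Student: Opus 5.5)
The statement to prove is that $H,S\in\Clifford_1$ act on the Pauli generators by
\[
HXH^\dagger = Z,\qquad HZH^\dagger = X^{-1},\qquad SXS^\dagger = XZ,\qquad SZS^\dagger = Z.
\]
Since $X$, $Z$ and $-\omega$ generate $\Pauli_1$, and conjugation is a group homomorphism fixing scalars, these four identities imply that $H$ and $S$ normalise $\Pauli_1$. The approach is to verify each identity directly on computational basis states, using the definitions in \cref{def:pauli,def:H-and-S}. Because the global phase $1/(\lambda_p\sqrt p)$ of $H$ cancels against its conjugate in $H(\cdot)H^\dagger$, I may ignore $\lambda_p$ throughout. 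It is convenient to check the equivalent forms $HX = ZH$, $HZ = X^{-1}H$, $SX = XZS$ and $SZ = ZS$, which avoids computing $H^\dagger$ explicitly.

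For $S$, the claim $SZ=ZS$ is immediate, since both operators are diagonal. For $SX = XZS$, compute on $\ket{j}$. The left side gives $SX\ket{j} = S\ket{j+1} = \omega^{(j+1)j/2}\ket{j+1}$. The right side gives $XZS\ket{j} = \omega^{j(j-1)/2}\omega^{j}\ket{j+1}$. The exponents agree, because $\frac{j(j-1)}{2}+j=\frac{j(j+1)}{2}$.

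The one subtlety is that $\frac{j(j-1)}{2}$ is read with $j$ as an integer in $\{0,\dots,p-1\}$, while $j+1$ wraps around at $j=p-1$. So I must check that $\omega^{k(k-1)/2}$ depends only on $k \bmod p$. This holds because $p$ is odd, so $2$ is invertible modulo $p$ and $\frac{k(k-1)}{2}\equiv k(k-1)\cdot 2^{-1}\pmod p$ is a well-defined function of $k \bmod p$. This well-definedness is the only step needing care, and it is exactly where oddness of $p$ enters.

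For $H$, compute $HX\ket{j} = c\sum_\ell \omega^{(j+1)\ell}\ket{\ell}$ with $c=1/(\lambda_p\sqrt p)$. Also $ZH\ket{j} = c\sum_\ell \omega^{j\ell}\omega^{\ell}\ket{\ell}$, which is the same, so $HX=ZH$.

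Next, $HZ\ket{j} = c\,\omega^{j}\sum_\ell\omega^{j\ell}\ket{\ell}$. On the other side, $X^{-1}H\ket{j} = c\sum_\ell\omega^{j\ell}\ket{\ell-1}$. Reindexing with $m=\ell-1$ turns this into $c\sum_m \omega^{j(m+1)}\ket{m}$, which equals $HZ\ket{j}$. Hence $HZ=X^{-1}H$. This reindexing is legitimate since exponents of $\omega$ are taken mod $p$.

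Combining the four identities with unitarity of $H$ and $S$ gives the conjugation formulas. Membership in $\widetilde{\Clifford_1}$ then follows from the homomorphism argument above.
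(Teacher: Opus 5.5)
Your proof is correct. It takes a slightly different and more elementary route than the paper's.

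The paper computes the conjugates themselves. It writes out $H^\dagger\ket{j}$ explicitly, pushes it through $X$ and then $H$, and collapses the resulting double sum with the orthogonality relation of \cref{lem:summation}. It then obtains $HZH^\dagger$ indirectly as $H^2XH^2$, using the action of $H^2$ from \cref{prop:hsquared} and $H^4=1$ from \cref{cor:orderh}.

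You instead verify the intertwining relations $HX=ZH$, $HZ=X^{-1}H$, $SX=XZS$ and $SZ=ZS$. On each side of these you get a single sum or a single term, and they match term by term after at most a reindexing. So you need no character sums and no facts about $H^2$. The price is that you must use $H^{-1}=H^\dagger$ to pass back to conjugation. This is harmless: unitarity of $H$ is presupposed by membership in $\widetilde{\Clifford_1}$, and the paper also uses it silently when it replaces $(H^\dagger)^2$ by $H^2$.

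Your write-up also makes explicit two points the paper leaves implicit.
\begin{itemize}
\item The value $\omega^{k(k-1)/2}$ depends only on $k \bmod p$, which matters when $j+1$ wraps around. The paper handles this only through the convention of \cref{rem:modular-inverse-exponent} that fractions in exponents are read in $\Z_p$.
\item Checking the generators $X$, $Z$ and $-\omega$ suffices for $H$ and $S$ to normalise $\Pauli_1$, by the homomorphism argument.
\end{itemize}

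One small imprecision: the prefactor $1/(\lambda_p\sqrt p)$ does not cancel in $H(\cdot)H^\dagger$ (only $\lambda_p$ does), but you never rely on that claim. Your constant $c$ appears identically on both sides of each intertwining relation.
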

\begin{equation}
    \scalebox{.8}{\input{figures/Preliminaries/single-qupit-Clifford-auto2.tikz}}
    \label{eq:H-S-auto}
\end{equation}

\vspace{.5 cm}

\begin{lemma}
\label{lem:automorphism-CZ}
$\CZ \in \widetilde{\Clifford_2}$. Specifically,
\begin{equation}
    \scalebox{.7}{\input{figures/Preliminaries/two-qupit-Clifford-auto.tikz}}
    \label{eq:CZ-auto}
\end{equation}
\end{lemma}

Using the path-sum formalism~\cite{amy2019towards,koh2017computing}, we can directly verify that multipliers are Pauli normalising by checking their action on the Paulis $X$ and $Z$: 

\begin{equation}
    M_a Z M_{a^{-1}} = Z^{a^{-1}}, \quad M_a X M_{a^{-1}} = X^a.
    \label{eq:multiplier-auto}
\end{equation}

These results show that $\Clifford_n \subseteq \widetilde{\Clifford_n}$. In fact, these groups are the same up to the inclusion of global phases. This fact is well known~\cite[Theorem~7]{farinholt2014ideal}, but will also follow from our construction of an explicit normal form for any operator in $\widetilde{\Clifford_n}$ in \cref{sec:assemble}.

Note that \cref{def:Clifford_unitaries} includes all possible global phases $e^{i\alpha}$, making $\widetilde{\Clifford_n}$ uncountably infinite for all $n \in \N$.
In contrast, the global phases in $\Clifford_n$ are exactly powers of $-\omega$. The groups $\widetilde{\Clifford_n}$ and $\Clifford_{n}$ are then related as follows. First, note that since global phases commute with everything, we have that the phases form normal subgroups $\U(1)\trianglelefteq \widetilde{\Clifford_n}$ and $\langle -\omega \rangle \trianglelefteq \Clifford_{n}$. The quotients $\widetilde{\Clifford_n}/\U(1)$ and $\Clifford_{n}/\langle -\omega \rangle$ are then well-defined. By the fact above and \cref{lem:qupit-Clifford-phase}, the Clifford group is the normaliser of the Pauli group up to global phases,
\[
\widetilde{\Clifford_n}/\U(1) \ \cong\  \Clifford_{n}/\langle -\omega \rangle.
\]

% --------------------------------------------------------------------
\subsubsection{Symplectic Structure of the Clifford Group}
\label{subsec:symplectic}

\cref{prop:scalars} shows that two Cliffords that have the same action on the Paulis are the same up to global phase. For $C, D \in \widetilde{\Clifford_n}$, we can consider a slightly weaker notion where $C \bullet P = \mu_P (D\bullet P)$ for all Paulis $P$ where $\mu_P$ is some phase that is allowed to vary depending on $P$. In this case, $C=\lambda DQ$ for some global phase $\lambda$ and a Pauli $Q$ with $Q\bullet P = \mu_P P$, which is unique up to a global phase.

Conjugating by a Pauli only multiplies each Pauli by a phase, so once we ignore phases, Cliffords that differ by a Pauli act in the same way. Up to a phase, a Pauli is determined by its $Z$ and $X$ powers on each qudit, and since $Z$ and $X$ commute up to a phase, multiplying two Paulis simply adds these powers. Hence, the Paulis form the vector space $\Z_p^{2n}$ up to phases.

\begin{definition}
    Let $\Z_p^{2n}$ be an even-dimensional vector space over $\Z_p$. The \emph{symplectic inner product} $\Omega(\cdot, \cdot)$ is defined by $\Omega( (\vec a,\vec b), (\vec c,\vec d) ) = \vec a\cdot \vec d - \vec b\cdot \vec c$ where $(\vec a,\vec b),(\vec c,\vec d)\in \Z_p^{2n}$ are arbitrary vectors, and $\cdot$ is the regular dot-product of vectors. We say a matrix $M:\Z_p^{2n}\to \Z_p^{2n}$ is \emph{symplectic} when $\Omega(M \vec v, M\vec w) = \Omega(\vec v,\vec w)$ for all $\vec v,\vec w\in \Z_p^{2n}$. We denote the group of symplectic matrices by $\symplectic{2n,\Z_p}$.
\end{definition}

In this definition, a vector $\vec v = (\vec a,\vec b)$ corresponds to an $n$-qudit Pauli where the $Z$ powers of the Pauli correspond to $\vec a$, and the $X$ powers to $\vec b$. The value of the symplectic inner product $\Omega(\vec v,\vec w) = k$ then corresponds to the Paulis defined by $\vec v$ and $\vec w$ commuting up to a phase $\omega^k$. Since conjugation respects products and preserves this phase, it acts on $\Z_p^{2n}$ as a symplectic matrix.

Moreover, $\Pauli_n \leq \Clifford_n$, since $-\omega$ is a generator of both groups, $Z = H^2SH^2S^{-1}$ by \cref{lem:Z-soundness}, and $X = HSH^2S^{-1}H$ by \cref{lem:X-soundness}. Since $\Clifford_n \leq \widetilde{\Clifford_n}$, every element of $\Clifford_n$ normalises $\Pauli_n$. Hence $\Pauli_n \trianglelefteq \Clifford_n$, and the quotient $\Clifford_n/\Pauli_n$ is well-defined.

\begin{proposition}[\cite{hostens2005stabilizer,gross2006hudson}]
    \label{prop:symplectic-quotient}
    $\Clifford_{n}/\Pauli_{n} \cong \symplectic{2n,\Z_p}$.
\end{proposition}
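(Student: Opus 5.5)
We will build a surjective group homomorphism $\Phi:\Clifford_n\to\symplectic{2n,\Z_p}$ whose kernel is exactly $\Pauli_n$; the statement then follows from the first isomorphism theorem.

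\textbf{Definition of $\Phi$.} For $C\in\Clifford_n$ and $\vec v=(\vec a,\vec b)\in\Z_p^{2n}$, let $P_{\vec v}=Z^{a_1}X^{b_1}\otimes\cdots\otimes Z^{a_n}X^{b_n}$. Since $\Clifford_n\subseteq\widetilde{\Clifford_n}$ (by \cref{lem:automorphism,lem:automorphism-CZ}), $C\bullet P_{\vec v}$ lies in $\Pauli_n$. By the explicit description of $\Pauli_n$, it equals $\mu\, P_{\vec w}$ for a unique $\vec w$ and some phase $\mu$. We set $\Phi(C)\vec v:=\vec w$.

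\textbf{$\Phi$ is a symplectic homomorphism.} Since $P_{\vec v}P_{\vec v'}$ is proportional to $P_{\vec v+\vec v'}$ and conjugation is multiplicative, $\Phi(C)$ is additive, hence $\Z_p$-linear. It preserves $\Omega$ because the commutation phase $\omega^{\Omega(\vec v,\vec w)}$ between $P_{\vec v}$ and $P_{\vec w}$ is invariant under conjugation. It is a homomorphism because $(CD)\bullet P=C\bullet(D\bullet P)$ and phases are discarded. Checking the group-commutator phase against the chosen sign convention for $\Omega$ is a routine computation from \cref{eq:XZ}.

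\textbf{Kernel is $\Pauli_n$.} Pauli operators act on Paulis only by phases, by \cref{eq:XZ}, so $\Pauli_n\subseteq\ker\Phi$.

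Conversely, suppose $\Phi(C)=I$. Then $C\bullet Z_j=\omega^{s_j}Z_j$ and $C\bullet X_j=\omega^{t_j}X_j$. The phases must be powers of $\omega$: both sides have order $p$, so $\mu^p=1$.

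Next choose a Pauli $Q$ with the same phase pattern. Conjugating by $X_j$ multiplies $Z_j$ by $\omega^{-1}$, and conjugating by $Z_j$ multiplies $X_j$ by $\omega$, again by \cref{eq:XZ}. So a suitable $Q=\bigotimes_j X^{\alpha_j}Z^{\beta_j}$ exists. Then $Q^\dagger C$ fixes every element of $\mathcal B_n$, and hence all of $\Pauli_n$. By \cref{prop:scalars}, $C=\lambda Q$, and \cref{lem:qupit-Clifford-phase} gives $\lambda\in\langle-\omega\rangle\subseteq\Pauli_n$. Therefore $C\in\Pauli_n$.

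\textbf{Surjectivity (the main obstacle).} We must show that $\Phi(H)$, $\Phi(S)$ and $\Phi(\CZ)$ on all qudit pairs generate $\symplectic{2n,\Z_p}$. The images are read off from \cref{lem:automorphism,lem:automorphism-CZ}.

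The plan is a row-reduction argument, essentially the symplectic shadow of the normal form of \cref{sec:assemble}. Given $M$, consider the images $M\vec e$ of the basis vectors for $Z_n$ and $X_n$. Using local symplectic maps we clear entries: the images of $H$ and $S$ generate $\mathrm{SL}(2,\Z_p)=\symplectic{2,\Z_p}$ on each qudit, and the multipliers appear here via \eqref{eq:multiplier-auto}. Using the images of $\CZ$ and $\CX$ we clear the cross-qudit entries. This transforms the pair into the standard pair $(\vec e_{Z_n},\vec e_{X_n})$.

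The symplectic complement of that pair is $\symplectic{2(n-1),\Z_p}$ on the first $n-1$ qudits, and induction finishes the argument. The case $n=1$ uses the standard fact that $\begin{pmatrix}1&1\\0&1\end{pmatrix}$ and $\begin{pmatrix}0&-1\\1&0\end{pmatrix}$ generate $\mathrm{SL}(2,\Z_p)$.

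The delicate part is this elimination for general $p$. If the current vector has zero $Z$- and $X$-components on qudit $n$, we first apply a $\CX$ or $\SWAP$ to move a nonzero entry there. We then use $\CZ$ and $\CX$ powers, with $a^{-1}$ computed in $\Z_p$, to kill the other qudits' components. A counting argument comparing $|\symplectic{2n,\Z_p}|$ with the orbit sizes could replace the elimination if needed.
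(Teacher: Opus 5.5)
Your proof is correct and takes the route the paper relies on. The paper only cites this result, but its discussion in \cref{subsec:symplectic} identifies the kernel exactly as you do: a Pauli correction $Q$ together with \cref{prop:scalars} and \cref{lem:qupit-Clifford-phase}, the latter applicable because $\Pauli_n\le\Clifford_n$. Its normal form in \cref{sec:assemble} is the circuit-level version of your elimination argument for surjectivity.

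The one step you should write out is the second half of that elimination. Once the first vector has been sent to $\vec e_{Z_n}$, symplecticity forces the $X$-power on qudit $n$ of the second vector to be $1$. Its remaining entries must then be cleared using only maps that fix $\vec e_{Z_n}$: powers of $\CZ$ on qudits $j,n$ and of $\CX$ controlled on qudit $n$, followed by a power of $S$ on qudit $n$. This is analogous to what the $D$ and $E$ boxes of the $X$-normal circuits achieve.
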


% \cref{prop:scalars} justifies that two Clifford circuits implementing the same automorphism of the Pauli group differ only by a global phase, so constructing a circuit that matches the desired Pauli action suffices to prove existence (and uniqueness up to phase) of the normal form.

 % It is well known that $\Clifford_{n}/\Pauli_{n} \cong \symplectic{2n,\Z_p}$.

Instead of directly modding out the Paulis, we could first mod out the phases.

\begin{definition}
    Let $\widehat{\Pauli}_n = \Pauli_n/\langle-\omega\rangle$ be the $n$-qudit \emph{projective Pauli group} and $\widehat{\Clifford}_n =\Clifford_n/\langle-\omega\rangle$ be the $n$-qudit \emph{projective Clifford group}.
    \label{def:projective-groups}
\end{definition}
Modding out the phases first does not change the result: $\widehat{\Pauli}_n \trianglelefteq\widehat{\Clifford}_n$ and $\widehat{\Clifford}_n/\widehat{\Pauli}_n \cong \symplectic{2n,\Z_p}$. Moreover, $\widehat{\Clifford}_n$ contains a copy of $\symplectic{2n,\Z_p}$, so every element of $\widehat{\Clifford}_n$ is uniquely a projective Pauli times an element of this copy.

\begin{theorem}[\cite{weil1964certains}]
    $\widehat{\Clifford}_n\cong \symplectic{2n, \Z_p} \ltimes \left(\Z_p\right)^{2n}$.
    \label{thm:semidirect}
\end{theorem}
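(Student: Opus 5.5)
We prove the isomorphism by building a splitting of the short exact sequence
\[
1 \to \widehat{\Pauli}_n \to \widehat{\Clifford}_n \to \symplectic{2n,\Z_p} \to 1,
\]
whose exactness comes from \cref{prop:symplectic-quotient} and the remark after \cref{def:projective-groups}. First, $\widehat{\Pauli}_n \cong (\Z_p)^{2n}$. Since $\Pauli_n = \{(-\omega)^c X^{\vec a}Z^{\vec b}\}$ and every phase lies in $\langle-\omega\rangle$, the map $X^{\vec a}Z^{\vec b}\mapsto(\vec b,\vec a)$ is well defined modulo phases. By \cref{eq:XZ} it is a homomorphism, since products of Paulis only differ by phases. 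It is bijective by the description of $\Pauli_n$ and the count $\lvert\Pauli_n\rvert = 2p^{2n+1}$. Under this identification the conjugation action of $\widehat{\Clifford}_n$ on $\widehat{\Pauli}_n$ factors through the symplectic matrix of the Clifford. Hence it suffices to show the extension splits: we need a subgroup $K\le\widehat{\Clifford}_n$ that maps isomorphically onto $\symplectic{2n,\Z_p}$. Then $\widehat{\Clifford}_n = \widehat{\Pauli}_n\rtimes K$ by the standard recognition lemma, since $\widehat{\Pauli}_n$ is normal, $\widehat{\Pauli}_n\cap K=1$, and $\widehat{\Pauli}_n K=\widehat{\Clifford}_n$.

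To build $K$, we use the fact that $p$ is odd, via the Weyl (symmetric) ordering of Paulis. For $\vec v=(\vec b,\vec a)\in\Z_p^{2n}$, set
\[
W(\vec v) = \omega^{-2^{-1}\vec a\cdot\vec b}X^{\vec a}Z^{\vec b},
\]
where $2^{-1}$ is the inverse of $2$ in $\Z_p$. A short computation with \cref{eq:XZ} gives
\[
W(\vec v)W(\vec w)=\omega^{2^{-1}\Omega(\vec v,\vec w)}W(\vec v+\vec w)
\]
(up to the sign convention of $\Omega$), and also $W(\vec v)^p=I$. Now define $K$ as the set of classes $[C]\in\widehat{\Clifford}_n$ satisfying $CW(\vec v)C^\dagger = W(M\vec v)$ for all $\vec v$, with no extra phase, where $M$ is the symplectic image of $C$. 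This $K$ is clearly a subgroup, and it is well defined because a global phase does not affect conjugation.

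\emph{Intersection with the Paulis.} Conjugation by $W(\vec u)$ sends $W(\vec v)$ to $\omega^{\Omega(\vec u,\vec v)}W(\vec v)$. This is trivial for all $\vec v$ only when $\vec u=0$, since $\Omega$ is nondegenerate. Hence $K\cap\widehat{\Pauli}_n=1$.

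\emph{Surjectivity onto $\symplectic{2n,\Z_p}$.} This is the main obstacle. Take any $C$ with image $M$. Then $CW(\vec v)C^\dagger=\omega^{f(\vec v)}W(M\vec v)$ for some function $f:\Z_p^{2n}\to\Z_p$, and $f$ is well defined since the phase must be a power of $\omega$, as $p$-th powers are preserved. Comparing the multiplication law on both sides, and using that $M$ preserves $\Omega$, shows that $f(\vec v+\vec w)=f(\vec v)+f(\vec w)$. So $f$ is linear. By nondegeneracy of $\Omega$, there is a $\vec u$ with $f(\vec v)=\Omega(\vec u,\vec v)$ for all $\vec v$, possibly after adjusting the sign of $\vec u$ to match the convention. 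Multiplying $C$ by $W(-\vec u)$ (or the corresponding Pauli) then cancels the phase, giving an element of $K$ with image $M$.

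Thus $K\cong\symplectic{2n,\Z_p}$, and $\widehat{\Clifford}_n\cong\symplectic{2n,\Z_p}\ltimes(\Z_p)^{2n}$, with the action of a matrix on $\Z_p^{2n}$ given by the natural one. The step most likely to need care is checking that the phase function $f$ is additive. This is exactly where $2$ must be invertible, and it fails for qubits. The remaining steps are routine bookkeeping of phases.
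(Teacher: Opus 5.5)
\textbf{A sign error that breaks the argument as written.} Your strategy is sound: you split the extension by the subgroup $K$ of Cliffords that permute Weyl-ordered Paulis with no phases. However, your Weyl operator has the wrong sign for the paper's convention $ZX=\omega XZ$ in \eqref{eq:XZ}, so the multiplication law you rely on is false. Since $X^{\vec a}Z^{\vec b}X^{\vec c}Z^{\vec d}=\omega^{\vec b\cdot\vec c}X^{\vec a+\vec c}Z^{\vec b+\vec d}$, your $W(\vec v)=\omega^{-2^{-1}\vec a\cdot\vec b}X^{\vec a}Z^{\vec b}$ gives, for $\vec v=(\vec b,\vec a)$ and $\vec w=(\vec d,\vec c)$,
\[
W(\vec v)W(\vec w)=\omega^{2^{-1}\vec a\cdot\vec d+3\cdot 2^{-1}\,\vec b\cdot\vec c}\,W(\vec v+\vec w).
\]
This exponent is not even alternating, so no sign convention for $\Omega$ turns it into $\pm 2^{-1}\Omega$. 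Concretely, for one qudit and $\vec v=(1,1)$ you get $W(\vec v)^2=X^2Z^2=\omega^{2}W(2\vec v)$, not $W(2\vec v)$.

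The defect propagates. You get $f(\vec v+\vec w)-f(\vec v)-f(\vec w)=\beta(M\vec v,M\vec w)-\beta(\vec v,\vec w)$ for this non-invariant form $\beta$. As a result, your $K$ contains no lift of $\llbracket H\rrbracket_s$, whose matrix is $M(b,a)=(a,-b)$. Indeed, $W(M\vec v)^2=\omega^{-2}W(2M\vec v)$ while $W(\vec v)^2=\omega^{2}W(2\vec v)$, so a lift in $K$ would force $\omega^{2}=\omega^{-2}$.

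\textbf{The fix.} Use $W(\vec v)=\omega^{2^{-1}\vec a\cdot\vec b}X^{\vec a}Z^{\vec b}$, or equivalently $\omega^{-2^{-1}\vec a\cdot\vec b}Z^{\vec b}X^{\vec a}$. Then $W(\vec v)W(\vec w)=\omega^{2^{-1}(\vec b\cdot\vec c-\vec a\cdot\vec d)}W(\vec v+\vec w)=\omega^{2^{-1}\Omega(\vec v,\vec w)}W(\vec v+\vec w)$. This holds exactly with the paper's $\Omega$ and your ordering (Z-powers first). Every later step then goes through as you wrote it:
\begin{itemize}
\item $p$-th powers force the phases into $\langle\omega\rangle$;
\item invariance of $\Omega$ gives additivity, hence linearity, of $f$;
\item right multiplication by $W(-\vec u)$ removes the phase;
\item nondegeneracy gives $K\cap\widehat{\Pauli}_n=1$.
\end{itemize}

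\textbf{Comparison with the paper.} With that fix, your proof takes a genuinely different route, because the paper does not prove the statement at all. It cites Weil and only asserts that $\widehat{\Clifford}_n$ contains a copy of $\symplectic{2n,\Z_p}$. Your argument constructs that copy explicitly, as the stabiliser of the symmetrically ordered Paulis; this is the projective image of the Weil representation. It also shows exactly where oddness of $p$ enters: $2^{-1}$ exists, so the Pauli cocycle can be normalised to the $\symplectic{2n,\Z_p}$-invariant form $2^{-1}\Omega$. Its only external input is the exact sequence, in particular surjectivity of $\Clifford_n\to\symplectic{2n,\Z_p}$ from \cref{prop:symplectic-quotient}. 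The citation buys brevity, and costs the paper little. Its later arguments (\cref{prop:compose,prop:combine}) only use that Cliffords with the same symplectic image differ by a unique Pauli, not the splitting itself.
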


% Hence, the action of $H$ and $S$ on $\Pauli_1$ can be described by their action on the single-qudit Pauli generators.

% --------------------------------------------------------------------
\subsubsection{Derived Generators}
\label{sssec:derivedgens}

For convenience, \cref{fig:derived-generators2} defines a set of derived Clifford operators. They will be used to compactly present the qudit Clifford relations. The soundness of the phases, the Pauli $X$ and $Z$ gates, as well as the $\SWAP$ gate is established in \cref{lem:gen-scalar,lem:X-soundness,lem:Z-soundness,lem:SWAP-soundness}.

% \begin{figure}[!htb]
%     \[
%     \scalebox{.7}{\tikzfig{figures/Preliminaries/derived-generators0}}
%     \]
%     \caption{Circuit notation for the derived Clifford generators up to Pauli correction and global phase. $g$ generates $\Z_p^*$.}
%     \label{fig:derived-generators0}
% \end{figure}  

\begin{figure}[!htb]
    % The derived generators are named (T1)--(T9) inside the figure, in the order
    % in which they are drawn; these commands give them labels for \eqref.
    \rellabel{T}{eq:def-Mg}\rellabel{T}{eq:def-X}\rellabel{T}{eq:def-Z}%
    \rellabel{T}{eq:def-SWAP}\rellabel{T}{eq:def-CX}\rellabel{T}{eq:def-XC}%
    \rellabel{T}{eq:def-CIZ}\rellabel{T}{eq:def-CIX}\rellabel{T}{eq:def-XIC}%
    \[
    \scalebox{.7}{\input{figures/Preliminaries/derived-generators1.tikz}}
    \]
    \caption{A set of derived Clifford generators, expressed as circuits over the generating set $\{-\omega,H,S,\CZ\}$. Here, $a$ is an arbitrary element of the multiplicative group $\mathbb{Z}_p^*$ and $\lambda_p  = e^{(p-1)\pi i/4}$. $\left(\frac{a}{p}\right)_L$ denotes the Legendre symbol~\cite[Appendix A]{prakash2021normal}, which is $1$ if $a$ is a quadratic residue modulo $p$, that is, $a = x^2$ for some $x \in \Z_p$, and $-1$ otherwise.}
    \label{fig:derived-generators2}
\end{figure}    

% --------------------------------------------------------------------
\subsection{Circuits}
\label{ssec:circuits}

% We assume familiarity with the standard graphical notation for quantum circuits. For background, the reader may consult any standard textbook on quantum computation~\cite{NielsenChuang2010}.

In \cref{ssec:operators}, the Clifford generators $-\omega$, $H$, $S$, and $\CZ$, as well as the derived generators listed in \cref{fig:derived-generators2}, were treated as operators. In this subsection, we consider them as generators for circuits. The difference between these two perspectives is that when $X$ is viewed as an operator, we have for instance, $X^p = 1$. However, the circuit $X^p$ is not equal to the circuit of an identity matrix (the former contains $p$ gates, while the latter contains none).

We write $\circuitClifford$ for the collection of all circuits over the gate set $\{ -\omega, X, Z, H, S, \CZ\}$ and $\circuitClifford_n$ for the subset of $\circuitClifford$ containing only the circuits on exactly $n$ qudit wires. We similarly define $\circuitPauli$ and $\circuitPauli_n$ as the collection of circuits over the gate set $\{-\omega, X, Z\}$ and the collection of circuits in $\circuitPauli$ on exactly $n$ qudit wires. Note that $\circuitPauli \subseteq \circuitClifford$ and $\circuitPauli_n \subseteq \circuitClifford_n$. Here, $X$ and $Z$ are abbreviations for the circuits in \eqref{eq:def-X} and \eqref{eq:def-Z}, so every circuit in $\circuitClifford$ is a circuit over $\{-\omega, H, S, \CZ\}$.

% --------------------------------------------------------------------
\subsubsection{Interpretations}
\label{sssec:interpretations}

An \emph{interpretation map} for a family of circuits is a function which assigns an operator to every circuit. The standard interpretation is the \emph{unitary interpretation}, in which a circuit is interpreted as the unitary operator it represents.

\begin{definition}
\label{def:unitaryinterp}
Let $C\in\circuitClifford_n$. We define the \emph{unitary interpretation} $\llbracket C \rrbracket_u \in\Clifford_n$ of $C$ to be the unitary operator obtained by interpreting the gates in $\{ -\omega, X, Z, H, S, \CZ\}$ as the corresponding operators, the parallel composition of circuits as the tensor product of operators, and the sequential composition of circuits as the matrix multiplication of operators.
\end{definition}

Now recall that the quotients $\Clifford_n/\Pauli_n$ and $\Clifford_n/ \langle -\omega \rangle$ are well-defined and let $\pi_s:\Clifford_n \to \Clifford_n/\Pauli_n \cong \symplectic{2n,\Z_p}$ and $\pi_p: \Clifford_n \to \Clifford_n/\langle -\omega \rangle$ be the corresponding canonical projections. We use these projections to define two more interpretation maps.

\begin{definition}
\label{def:other_interp}
Let $C\in\circuitClifford_n$. We define: 
\begin{itemize}
\item the \emph{interpretation of $C$ up to Paulis} $\llbracket C \rrbracket_s \in\Clifford_n/\Pauli_n$ as $\llbracket C \rrbracket_s = \pi_s(\llbracket C \rrbracket_u)$ and
\item the \emph{interpretation of $C$ up to global phases} $\llbracket C \rrbracket_p \in\Clifford_n/\langle -\omega\rangle$ as $\llbracket C \rrbracket_p = \pi_p(\llbracket C \rrbracket_u)$.
\end{itemize}
\end{definition}

We sometimes refer to $\llbracket \;\cdot\; \rrbracket_s$ as the \emph{symplectic interpretation} (since  $\llbracket C \rrbracket_s$ belongs to the symplectic group) and to $\llbracket \;\cdot \;\rrbracket_p$ as the \emph{projective interpretation} (since $\llbracket C \rrbracket_p$ belongs to the projective group). The three interpretations introduced so far organise themselves into the diagram below.

\[
  \scalebox{1}{\input{figures/Maps/maps3.tikz}}
\]

\iffalse
\[
\begin{tikzcd}
	&& \circuitClifford_n \\
	\\
	\Clifford_n/\Pauli_n && \Clifford_n && \Clifford_n/\langle -\omega \rangle
	\arrow["{\llbracket \cdot \rrbracket_s}"', from=1-3, to=3-1]
	\arrow["{\llbracket \cdot \rrbracket_u}"{description}, from=1-3, to=3-3]
	\arrow["{\llbracket \cdot \rrbracket_p}", from=1-3, to=3-5]
	\arrow["{\pi_s}", from=3-3, to=3-1]
	\arrow["{\pi_p}"', from=3-3, to=3-5]
\end{tikzcd}
\]  
\fi

The interpretation maps $\llbracket \;\cdot\; \rrbracket_u$, $\llbracket \;\cdot\; \rrbracket_s$, and $\llbracket \;\cdot\; \rrbracket_p$ correspond to three ways to interpret Clifford circuits, and each interpretation comes with a level of granularity. We can now use these interpretation maps to define different types of equality among the elements of $\circuitClifford$.

\begin{definition}
\label{def:semanticrelations}
Let $C_1,C_2 \in \circuitClifford_n$. We write
\begin{itemize}
\item $C_1 \equiv C_2$ if $C_1$ and $C_2$ are the same circuit.
\item $C_1 \equiv_u C_2$ if $\llbracket C_1\rrbracket_u = \llbracket C_2\rrbracket_u$. 
\item $C_1 \equiv_p C_2$ if $\llbracket C_1\rrbracket_p = \llbracket C_2\rrbracket_p$. 
\item $C_1 \equiv_s C_2$ if $\llbracket C_1\rrbracket_s = \llbracket C_2\rrbracket_s$. 
\end{itemize}
\label{def:equality}
\end{definition}    

As \cref{def:semanticrelations} states: $C_1\equiv C_2$ when $C_1$ and $C_2$ are identical circuits; $C_1\equiv_u C_2$ when $C_1$ and $C_2$ represent the same unitary; $C_1\equiv_p C_2$ when $C_1$ and $C_2$ represent the same unitary up to a global phase; $C_1\equiv_s C_2$ when $C_1$ and $C_2$ represent the same unitary up to a Pauli. Note that we have the following implications
\[
C_1 \equiv C_2 \Rightarrow
C_1 \equiv_u C_2 \Rightarrow
C_1 \equiv_p C_2 \Rightarrow
C_1 \equiv_s C_2.
\]
None of the above implications can be reversed in general. However, we do know, for example, that if $C_1\equiv_s C_2$, then there exists a unique $P\in\Pauli_n$ such that $\llbracket C_1\rrbracket_u = \llbracket C_2\rrbracket_u P$.

Since $\circuitPauli \subseteq \circuitClifford$, all of the above interpretations restrict to $\circuitPauli$, giving well-defined interpretations of Pauli circuits.

% We note that the above interpretations restrict to $\circuitPauli$, i.e., to interpretations of Pauli circuits.

% --------------------------------------------------------------------
\subsubsection{Relations and Rewriting}
\label{sssec:relations}

A \emph{relation} on $\circuitClifford$ (or $\circuitPauli$) is a pair $(C,C')$ of elements of $\circuitClifford$ (or $\circuitPauli$) on the same number of qudits. If $\mathcal{R}$ is a set of relations on $\circuitClifford$, we write $\sim_{\mathcal{R}}$ for the smallest \emph{congruence} on $\circuitClifford$ that contains all of the relations in $\mathcal{R}$. By congruence, we mean an equivalence relation on $\circuitClifford$ that is compatible with composition and tensor product. For two circuits $D$ and $D'$, we have $D\sim_\mathcal{R} D'$ if and only if $D$ can be rewritten into $D'$ by applying the relations from $\mathcal{R}$ in either direction. That is, $D\sim_\mathcal{R} D'$ if and only if there is a sequence of rewrites
\[
D \to D_1 \to \cdots \to D_k \to D'
\]
such that each rewrite is an application of a relation in $\mathcal{R}$ that is applied to some local part of the circuits $D_i$.

Let $\llbracket \;\cdot\; \rrbracket_x$ be one of the three interpretations defined above (i.e., one of $\llbracket \;\cdot\; \rrbracket_u$, $\llbracket \;\cdot\; \rrbracket_p$, and $\llbracket \;\cdot \;\rrbracket_s$) and let $\equiv_x$ be the corresponding equivalence of circuits. We say that a collection $\mathcal{R}$ of relations is 
\begin{itemize}
\item \emph{sound with respect to $\llbracket\; \cdot \;\rrbracket_x$} if $C\sim_\mathcal{R} D$ implies $C\equiv_x D$ for all $C,D\in\circuitClifford$, and
\item \emph{complete with respect to $\llbracket \;\cdot \;\rrbracket_x$} if $C\equiv_x D$ implies $C\sim_\mathcal{R} D$ for all $C,D\in\circuitClifford$.
\end{itemize}

Taking $x = s$, $p$, and $u$ gives three completeness statements: symplectic completeness for $\Clifford_n/\Pauli_n\cong\symplectic{2n,\Z_p}$, projective Clifford completeness for $\Clifford_n/\langle-\omega\rangle$, and unitary Clifford completeness for exact equality in $\Clifford_n$. The symplectic presentation forgets Paulis and the projective presentation forgets global phases, whereas the unitary presentation retains all powers of $-\omega$. These are distinct quotient or phase-lifted presentations, rather than the same relation set interpreted simultaneously at all three levels.

%% file: figures/Maps/maps3.tikz
\begin{tikzpicture}
	\begin{pgfonlayer}{nodelayer}
		\node [style=none] (0) at (0, 4.5) {$\circuitClifford_n$};
		\node [style=none] (1) at (0, -0.75) {$\mathcal{C}_n$};
		\node [style=none] (2) at (-6.25, -0.75) {$\mathcal{C}_n/\mathcal{P}_n$};
		\node [style=none] (3) at (6.85, -0.75) {$\mathcal{C}_n/\langle -\omega\rangle$};
		\node [style=none] (4) at (0, 1.5) {};
		\node [style=none] (5) at (0, 0) {};
		\node [style=none] (6) at (0.2, 2) {$\llbracket \;\cdot\; \rrbracket_u$};
		\node [style=none] (7) at (-4.75, -0.75) {};
		\node [style=none] (8) at (-1, -0.75) {};
		\node [style=none] (9) at (-2.75, -1.5) {$\pi_s$};
		\node [style=none] (10) at (1, -0.75) {};
		\node [style=none] (11) at (5, -0.75) {};
		\node [style=none] (12) at (2.75, -1.5) {$\pi_p$};
		\node [style=none] (13) at (-0.5, 4) {};
		\node [style=none] (14) at (-5.5, 0) {};
		\node [style=none] (15) at (-4.25, 2) {$\llbracket \;\cdot\; \rrbracket_s$};
		\node [style=none] (16) at (0.5, 4) {};
		\node [style=none] (17) at (5.25, 0) {};
		\node [style=none] (18) at (4.25, 2) {$\llbracket \;\cdot\; \rrbracket_p$};
		\node [style=none] (20) at (8, -0.75) {};
		\node [style=none] (24) at (0, 4) {};
		\node [style=none] (25) at (0, 2.5) {};
	\end{pgfonlayer}
	\begin{pgfonlayer}{edgelayer}
		\draw [style=diredge] (4.center) to (5.center);
		\draw [style=diredge] (8.center) to (7.center);
		\draw [style=diredge] (10.center) to (11.center);
		\draw [style=diredge] (13.center) to (14.center);
		\draw [style=diredge] (16.center) to (17.center);
		\draw (24.center) to (25.center);
	\end{pgfonlayer}
\end{tikzpicture}

%% file: scripts/3-assemble.tex
\section{A Unique Normal Form for Multi-Qudit Clifford Circuits}
\label{sec:assemble}

In this section, we define a normal form in $\circuitClifford_n$ for elements in $\Clifford_n$. Similar to the Clifford normal form for qubits and qutrits~\cite{qutrit2024,makary2021generators,selinger2015generators}, our normal form for a qudit Clifford operator is based on uniquely representing its stabiliser tableau. Our construction differs in that we handle the symplectic and Pauli parts separately. In \cref{sec:phase-free-normal-form}, we first define the \emph{symplectic normal form}, as a normal form for elements in $\Clifford_n$ up to Pauli corrections, which is based on the isomorphism $\symplectic{2n, \Z_p} \cong \Clifford_n/ \Pauli_n\cong \widehat{\Clifford}_n / \widehat{\Pauli}_n$. For every symplectic matrix $M\in \symplectic{2n, \Z_p}$, there is a unique circuit $C\in\circuitClifford_n$ in the symplectic normal form such that $\llbracket C\rrbracket_s=M$. In \cref{subsec:phase-free-normal-form-pauli}, we then define a normal form for elements in $\widehat{\Pauli}_n$, and call this the \emph{Pauli normal form}. In \cref{subsec:assemble}, we derive the normal form of elements in $\widehat{\Clifford}_n$ by composing the symplectic and Pauli normal forms, using the semi-direct product structure established in \cref{thm:semidirect}. The normal form for $\Clifford_n$ follows by adding back in global phases.

To define the symplectic normal form, our first goal is to conjugate the Pauli generators to prescribed Pauli operators (up to phases) by using some elementary building blocks in $\circuitClifford_1$ and $\circuitClifford_2$, which we call \emph{normal boxes}. To ensure uniqueness, each Clifford operator must be built from these blocks in exactly one way (up to Pauli correction). We will consider four \emph{types} of normal boxes which we label as $A$, $B$, $D$, and $E$; see \cref{fig:normal-box-auto}. Each type of normal box has a subscript denoting which \emph{variant} it is. 
% Note that in \cref{fig:normal-box-auto}, Pauli propagation is tracked up to a global phase.

 \begin{figure}[!htb]
     \centering
      \[
\scalebox{.90}{\input{figures/NormalForm/normalBoxes-automorphisms-v2.tikz}}
    \]
    \vspace{-0.6cm}
     \caption{The types of normal boxes, and the specified action they must have on Paulis (up to phase). The colour of the box (red or green) specifies whether these are part of the Z-part, respectively the X-part, of the normal form.}
     \label{fig:normal-box-auto}
 \end{figure}

\begin{figure}[!htb]
    \centering
\scalebox{0.8}{\input{figures/NormalForm/normalBoxesMacroSimp-v5.tikz}}
% \scalebox{0.8}{\tikzfig{figures/NormalForm/normalBoxesMacroSimp-v4}}
% \centering \scalebox{0.8}{\tikzfig{figures/agda-box-defs-rels/box_def_c}}
    \caption{The concrete implementations of the Z- and X-normal boxes. Here $a,b\in \Z_p$, subject to the conditions on the right of each circuit.}
    \label{fig:Z-and-X-Normal-Boxes-imp}
\end{figure}

The ``required actions'' the boxes need to satisfy in \cref{fig:normal-box-auto} are needed for the normal form construction to work, but they do not uniquely define the boxes. We identify some ``additional action'' that helps simplify some of the arguments later. But even so the symplectic maps of the boxes are not uniquely determined and care must be taken to find concrete implementations that give the simplest possible relations later in the completeness proof.

We give concrete implementations of the normal boxes in \cref{fig:Z-and-X-Normal-Boxes-imp}. Note that as opposed to the normal boxes used in the proofs of qubit and qutrit completeness~\cite{qutrit2024,makary2021generators,selinger2015generators}, these definitions are parametric in the labels of the variants. This feature plays an important role in simplifying the search for box relations later.

\subsection{The Symplectic Normal Form for Multi-Qudit Clifford Circuits}
\label{sec:phase-free-normal-form}

We want to use these normal boxes to define a unique normal form for symplectic matrices, i.e.~stabiliser tableaux up to Pauli correction. This requires us to show the normal form is \emph{universal}, meaning it can represent an arbitrary symplectic matrix $M$, and \emph{unique}, meaning that each $M\in \symplectic{2n, \Z_p}$ is uniquely represented by one specific instantiation of this normal form.

We will show universality by combining together the normal boxes into layers of circuits that we call Z-normal and X-normal. A single layer will ``fix up'' the action of the Paulis on a single fixed qudit. By then inductively doing this for all qudits we reduce it to have the identity action on all the Paulis. This procedure is analogous to Gaussian elimination.

In a bit more detail, let $C$ be a Clifford circuit that we will consider to be an automorphism $\phi$ of $\Pauli_n$ using the conjugation action $\phi(P) = \llbracket C\rrbracket_u\bullet P$.
We claim that we can then construct a $Z$-normal circuit $W_Z$ and an $X$-normal circuit $W_X$ (to be defined later) such that $\left(W_XW_Z\right)^{-1}$ acts as $\phi^{-1}$ on $Z_n$ and $X_n$. 
That is, if we define $C'= C\left(W_XW_Z\right)^{-1}$ and $\phi'(P) = \llbracket C'\rrbracket_u\bullet P$, then $\phi'(Z_n) = Z_n$ and $\phi'(X_n) = X_n$ (up to phase).
By then finding other $W_Z$ and $W_X$ we can keep updating $C$ using some normal circuit $N$ to make $CN^{-1}$ act as the identity on all $Z_j$ and $X_j$. Then by \cref{prop:scalars} $N$ must implement the same linear map as $C$.

% To obtain a normal form for $\phi$, it suffices to construct $W_Z$ and $W_X$ and then to recursively iterate this procedure with the updated automorphism $\phi'=\phi\left(W_XW_Z\right)^{-1}$.

% Now that we have the symplectic normal form, we need to show that  For universality, by \cref{prop:scalars}, it suffices to show that we can represent an arbitrary automorphism of $\Pauli_n$ by a normal form, up to Pauli correction.

%\label{fig:normal-components}
\begin{definition}\label{def:Z-and-X-normal}
    Let the $A$, $B$, $D$, and $E$ normal boxes be the concrete circuits defined in \cref{fig:Z-and-X-Normal-Boxes-imp}. We then define the following types of normal circuits:
    \[\scalebox{0.65}{\input{figures/PhaseFreeNormalForm/ZNormal.tikz}} \qquad 
    \scalebox{0.65}{\input{figures/PhaseFreeNormalForm/XNormal.tikz}}\]
    We call the circuits on the left \emph{Z-normal} and on the right \emph{X-normal}. In these circuits, each normal box can carry any label allowed in \cref{fig:normal-box-auto}, and the $A$ box of a Z-normal circuit may sit on any wire, with a ladder of $B$ boxes above it up to the first wire and no gates below it.
% \begin{itemize}
%     \item an $n$-qudit Clifford circuit is \emph{Z-normal} if it is of the form in \cref{fig:Z-normal}, and
%     \item an $n$-qudit Clifford circuit is \emph{X-normal} if it is of the form in \cref{fig:X-normal}.
% \end{itemize}
\end{definition}

The Z- and X-normal circuits are the building blocks of the symplectic normal form for an $n$-qudit Clifford circuit. A pair of Z- and X-normal circuits suffices to rewrite a pair of rows and columns of a symplectic matrix to the identity.

% \begin{figure}[!htb]
% % \renewcommand{\thefigure}{(a)}
% \begin{subfigure}{.35\textwidth}
%   \centering
%    \scalebox{0.65}{\tikzfig{figures/PhaseFreeNormalForm/ZNormal}}
%   \caption{The Z-normal circuit.}
%   \label{fig:Z-normal}
% \end{subfigure}%
% % \renewcommand{\thefigure}{(b)}
% \qquad
% \begin{subfigure}{.65\textwidth}
%   \centering
%    \scalebox{0.65}{\tikzfig{figures/PhaseFreeNormalForm/XNormal}}
%       \caption{The X-normal circuit.}
%     \label{fig:X-normal}
% \end{subfigure}
% \caption{The Z- and X-normal circuits that are the building blocks of the symplectic normal form for an $n$-qudit Clifford circuit. Here each normal box can have arbitrary labels. A pair of Z- and X-normal circuits suffices to rewrite a pair of rows and columns of a symplectic matrix to the identity.
% % Up to Pauli correction, $W_Z^{(n)}$ gets rid of the $C$ box on the top output wire and $W_X^{(n)}$ gets rid of the $F$ box on the bottom output wire.
% }%They differ from the normal circuits in \cref{fig:normal-components-phaseful} up to some Pauli correcrtion. 
% \label{fig:normal-components}
% \end{figure}

\begin{definition}\label{def:normal}\label{fig:macro-normal-circuit}
An $n$-qudit Clifford circuit is \emph{symplectically normal} if it is of the following form:
% in \cref{fig:macro-normal-circuit}.
    \[
\scalebox{0.8}{\input{figures/PhaseFreeNormalForm/Normal.tikz}}
    \]
    That is, we alternate layers of Z- and X-normal circuits, with each subsequent layer acting on one fewer qudit. 
\end{definition}

Note that we can view the symplectically normal circuits as being defined inductively, with the $n$-qudit symplectic normal form adding a layer of Z- and X-normal circuits to the left of an existing $(n-1)$-qudit symplectic normal form.

% \begin{figure}[!htb]

% \vspace{-.3cm}
%     \caption{The structure of a symplectic normal form: }
%     \label{fig:macro-normal-circuit}
% \end{figure}

\begin{remark}\label{fig:single-qupit-normal}
    When $n=1$, a symplectic normal form is composed of only $A$ and $E$ boxes:
    \[
    \input{figures/PhaseFreeNormalForm/Single-qupit-Normal.tikz}
    \]

     % A generic construction of $N_S^{(1)}$ is given in \cref{fig:single-qupit-normal}.
     Here $(a,b) \in \Z_p \times \Z_p \setminus \{(0,0)\}$ and $c \in \Z_p$ and this normal form can represent a single-qudit Clifford uniquely up to Pauli correction.
% \begin{figure}[!htb]
%     \centering
%     \[
% \tikzfig{figures/PhaseFreeNormalForm/Single-qupit-Normal}
% \]
%     \caption{For $(a,b) \in \Z_p \times \Z_p \setminus \{(0,0)\}$ and $c \in \Z_p$, a generic construction of the symplectic normal form of a single-qudit Clifford operator. It is a Clifford normal form up to Pauli correction.}
%     \label{fig:single-qupit-normal}
% \end{figure}
\end{remark} 

% \begin{remark}
%     In all proofs and statements below, when it is clear from the context, we drop the superscript and the symplectic annotation in $\widetilde{W_Z^{(n)}}$ $\left(\widetilde{W_X^{(n)}}\right)$, and simply write it as $W_Z$ $\left(W_X\right)$.
% \end{remark}    

% Now that we have the symplectic normal form, we need to show that it is \emph{universal}, meaning it can represent an arbitrary symplectic matrix $M$, and \emph{unique}, meaning that each $M\in \symplectic{2n, \Z_p}$ is uniquely represented by one specific instantiation of this normal form. For universality, by \cref{prop:scalars}, it suffices to show that we can represent an arbitrary automorphism of $\Pauli_n$ by a normal form, up to Pauli correction.

% Let $\phi$ be an automorphism of $\Pauli_n$. We claim that one can construct a $Z$-normal circuit $W_Z$ and an $X$-normal circuit $W_X$ such that $\left(W_XW_Z\right)^{-1}$ acts as $\phi^{-1}$ on $Z_n$ and $X_n$. To obtain a normal form for $\phi$, it suffices to construct $W_Z$ and $W_X$ and then to recursively iterate this procedure with the updated automorphism $\phi'=\phi\left(W_XW_Z\right)^{-1}$. 

The proofs of the statements below, which establish the universality and uniqueness of the normal form, are analogous to those for the qutrit case in~\cite{qutrit2024}, so we postpone them to \cref{sec:sectionthreeproofs}.

\begin{lemma}
  \lemznormal
    % For every $P \in \Pauli_n\setminus \{\omega^tI;\;t \in \Z_p\}$ be a non-scalar $n$-qudit Pauli operator, there exists a unique $Z$-normal circuit $W_Z$ such that $\llbracket W_Z\rrbracket_u\bullet P\equiv_p Z \otimes I \otimes \cdots \otimes I$.
\label{lem:Z-normal-reduction}
\end{lemma}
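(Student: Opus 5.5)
We write $P \equiv_p P_1 \otimes \cdots \otimes P_n$ with $P_j \equiv_p X^{b_j}Z^{a_j}$ for $(a_j,b_j)\in\Z_p^2$. This is possible by the explicit description of $\Pauli_n$, and since $P$ is non-scalar some $(a_j,b_j)\neq(0,0)$. The plan is to read off the Z-normal circuit directly from these exponent pairs, and then show that no other choice works. Throughout we track Paulis only up to phase, which is enough for the statement since it is phrased with $\equiv_p$. Because conjugation is a homomorphism, the action of a circuit can be computed box by box using the actions prescribed in \cref{fig:normal-box-auto}. The concrete implementations of \cref{fig:Z-and-X-Normal-Boxes-imp} realise these actions, which can be checked from \cref{lem:automorphism,lem:automorphism-CZ} and \eqref{eq:multiplier-auto}.

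\emph{Existence.} Let $k$ be the largest index with $(a_k,b_k)\neq(0,0)$, and place the $A$ box on wire $k$. No gates sit below wire $k$, which is consistent with $P_j = I$ for all $j>k$. The label of the $A$ box is chosen as $(a_k,b_k)$, so that by its required action it sends $X^{b_k}Z^{a_k}$ to $Z$ up to phase. We then climb the ladder wire by wire. At step $j = k-1, k-2, \dots, 1$, the current Pauli restricted to wires $j,j+1$ is $X^{b_j}Z^{a_j}\otimes Z$, with the identity on every wire below $j+1$. We choose the $B$ box on wires $(j,j+1)$ with label $(a_j,b_j)$. This includes the label $(0,0)$, whose required action moves $I\otimes Z$ to $Z\otimes I$. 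By its required action, this box sends $X^{b_j}Z^{a_j}\otimes Z$ to $Z\otimes I$. An induction on $j$ then shows that after the top $B$ box the Pauli is $Z\otimes I\otimes\cdots\otimes I$ up to phase.

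\emph{Uniqueness.} Suppose $W_Z$ is any Z-normal circuit with $\llbracket W_Z\rrbracket_u\bullet P\equiv_p Z_1$. We first show that its $A$ box must sit on wire $k$.
\begin{itemize}
\item If the $A$ box sits on a wire $k' < k$, then no box touches wire $k$. The nontrivial factor $P_k$ therefore survives, which contradicts the target $Z_1$.
\item If the $A$ box sits on a wire $k' > k$, then it acts on the factor $I$. It therefore outputs $I$, because every Clifford fixes the identity. The $B$ box on wires $(k'-1,k')$ would then have to send $P_{k'-1}\otimes I$ to something with trivial lower component. One checks from the prescribed actions that this is impossible: every $B$ label outputs a nontrivial Pauli on its lower wire when its input has $I$ there. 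Following the ladder upwards, wire $k'$ keeps a non-identity factor, again a contradiction.
\end{itemize}

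It remains to show that the labels are forced, working from the bottom up. Write $\phi_A$ and $\phi_B$ for the conjugation actions of the $A$ and $B$ boxes. Each such action is a bijection on Paulis modulo phase. The label $(a,b)$ of an $A$ box is therefore determined by $\phi_{A_{a,b}}^{-1}(Z)\equiv_p X^bZ^a$, and similarly the label of each $B$ box is determined by $\phi_{B_{a,b}}^{-1}(Z\otimes I)\equiv_p X^bZ^a\otimes Z$.

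I expect the main obstacle to lie in this last part: the case $k'>k$ and the injectivity of the labels. Both rely on the precise required actions of $A$ and $B$ listed in \cref{fig:normal-box-auto}. The key fact to verify is that distinct labels give distinct preimages of $Z$, respectively of $Z\otimes I$. This is where I would first check the figure's actions directly. The remainder is a routine induction on the ladder, mirroring the qutrit argument of~\cite{qutrit2024}.
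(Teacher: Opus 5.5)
\textbf{Gap in the uniqueness argument, case $k'>k$.} Your existence argument matches the paper's, and your case $k'<k$ is fine. The case $k'>k$ does not work as written, for two reasons.

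First, suppose $k'\ge k+2$. The first $B$ box, on wires $(k'-1,k')$, then receives $P_{k'-1}\otimes I=I\otimes I$ and returns $I\otimes I$. So there is no contradiction at that box, and wire $k'$ carries the identity throughout rather than ``keeping a non-identity factor''. The only box that can produce a contradiction is the one on wires $(k,k+1)$, since it is the first to see a non-identity input $P_k\otimes I$.

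Second, and more seriously, you invoke the claim that a $B$ box never sends $Q\otimes I$ with $Q\neq I$ to something of the form $R\otimes I$. This does not follow from the required action $X^aZ^b\otimes Z\mapsto Z\otimes I$. Consider the map determined by $X\otimes Z\mapsto Z\otimes I$, $Z\otimes I\mapsto X^{-1}\otimes I$, $Z\otimes X\mapsto I\otimes X$, and $I\otimes Z\mapsto I\otimes Z$. It preserves all commutation phases of this basis, so it is realised by a two-qudit Clifford. That Clifford satisfies the required $B$-action for the label whose preimage of $Z\otimes I$ is $X\otimes Z$, yet it sends $Z\otimes I$ to $X^{-1}\otimes I$. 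Your route therefore needs an extra property of the concrete boxes that you have not verified. Contrary to what you say, it cannot be read off from the required actions.

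\textbf{The fix.} Argue with preimages rather than by forward propagation; this is what the paper's proof and the remark after it rely on. Take an arbitrary Z-normal circuit with its $A$ box on wire $k'$ and labels $(c_j,d_j)$. Pull $Z\otimes I$ back through the $B$ boxes from the top wire downwards, then pull $Z$ back through the $A$ box. Using only the required actions, this gives the preimage $X^{c_1}Z^{d_1}\otimes\cdots\otimes X^{c_{k'}}Z^{d_{k'}}\otimes I\otimes\cdots\otimes I$ of $Z\otimes I\otimes\cdots\otimes I$, with $(c_{k'},d_{k'})\neq(0,0)$. Here we use the paper's convention, in which $A_{a,b}$ pulls $Z$ back to $X^aZ^b$ and $B_{a,b}$ pulls $Z\otimes I$ back to $X^aZ^b\otimes Z$; your exponents are transposed relative to it.

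Conjugation is a bijection on Paulis modulo phase. Hence $W_Z$ sends $P$ to $Z\otimes I\otimes\cdots\otimes I$ if and only if $P$ equals this preimage up to phase. This forces $k'=k$ and fixes every label at once. The step you flag as the main obstacle, that distinct labels give distinct preimages, is then immediate. Your ``bottom-up'' label argument only makes sense top-down: you cannot know what the $A$ box must output until the boxes above it have been pulled back.
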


\begin{lemma}
  \lemxnormal
%   For every $Q\in \Pauli_n$ such that $(Z \otimes I \otimes \cdots \otimes I) Q = \omega Q(Z \otimes I \otimes \cdots \otimes I)$, there exists a unique $X$-normal circuit $W_X$ such that $\llbracket W_X\rrbracket_u\bullet Q \equiv_p I \otimes \cdots \otimes I \otimes X$.
\label{lem:X-normal-reduction}
\end{lemma}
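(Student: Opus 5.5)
The plan is to argue exactly as for the qutrit case in~\cite{qutrit2024}, working throughout up to phase, so that every Pauli is identified with its exponent vector in $\Z_p^{2n}$. Call the wire carrying the single $Z$ in the hypothesis the \emph{pivot} wire. This is the wire produced by the preceding Z-normal circuit of \cref{lem:Z-normal-reduction}; it is also the wire on which the target $X$ must end up. Write $Q \equiv_p \bigotimes_j Z^{a_j}X^{b_j}$. By \cref{eq:XZ}, the hypothesis that $Q$ commutes with $Z$ on the pivot up to exactly $\omega$ forces the $X$-exponent of $Q$ on the pivot to be $1$ (in the convention of the statement). The symplectic data of $Q$ is therefore the pivot $Z$-exponent $a$ together with an arbitrary pair $(a_j,b_j)\in\Z_p^2$ on each of the other $n-1$ wires. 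That gives $p\cdot p^{2(n-1)}$ possibilities.

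\textbf{Existence.} I would read off from \cref{fig:normal-box-auto} the required actions of the X-normal boxes. Each $D$ box acts on the pivot and one other wire, fixes $Z$ on the pivot, and sends $Z^{a_j}X^{b_j}$ on its wire tensored with $X$ on the pivot to $I\otimes X$. The $E$ box acts on the pivot alone, fixes $Z$, and sends $Z^{a}X$ to $X$.

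I would then push $Q$ through the X-normal circuit gate by gate. The $D$ box on wire $j$ gets its label from $(a_j,b_j)$, which is $(0,0)$ for the trivial variant. Since each $D$ box fixes $Z$ on the pivot and leaves the already-cleared wires alone, I can clear the non-pivot wires one at a time. After the ladder the image is $Z^{a'}X$ on the pivot only, and the $E$ box with label $a'$ removes the residual $Z$-power. Because every box fixes the pivot $Z$, the final $W_X$ also fixes the image of the previous Z-normal step, which is what is needed in the inductive normal form. I would check these box actions directly on the concrete circuits of \cref{fig:Z-and-X-Normal-Boxes-imp} using \cref{lem:automorphism,lem:automorphism-CZ} and \eqref{eq:multiplier-auto}.

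\textbf{Uniqueness.} I would argue by counting, or equivalently by showing the labels are forced. The X-normal circuits are parametrised by one label in $\Z_p^2$ for each of the $n-1$ $D$ boxes and one label in $\Z_p$ for the $E$ box, so there are exactly $p^{2(n-1)+1}$ of them, the same as the number of admissible $Q$ up to phase.

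Uniqueness then follows from injectivity of the map $W_X\mapsto \llbracket W_X\rrbracket_u^{-1}\bullet(X\text{ on pivot})$. To prove it, I would show by peeling boxes from the outside in that the component of this preimage on wire $j$ determines the label of the $D$ box on wire $j$. The pivot $Z$-exponent, after subtracting the contributions of the $D$ boxes, then determines the $E$ label. The main obstacle is this step. It requires checking from the concrete implementations that the correspondence between a box label and the Pauli it clears is a bijection, including the trivial label. It also requires checking that later boxes do not disturb the components already fixed. I would verify both facts by a short symbolic conjugation computation for each box type.
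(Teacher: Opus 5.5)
Your overall strategy matches the paper's: push $Q$ through the boxes, let each label be forced by the Pauli it must clear, and count. However, your model of the X-normal circuit is wrong, and the existence step fails as written.

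\textbf{The wires do not match.} In the statement, the $Z$ of the hypothesis sits on wire $1$, while the target $X$ sits on wire $n$. These are not the same wire. In fact, for $n\geq 2$ no $W_X$ can both fix $Z\otimes I\otimes\cdots\otimes I$ and send $Q$ to $I\otimes\cdots\otimes I\otimes X$. Conjugation preserves commutation phases, so $Z_1Q=\omega QZ_1$ would force $Z_1X_n=\omega X_nZ_1$, which is false.

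\textbf{The box actions are different from what you assume.} The $D$ boxes do not fix a pivot $Z$. They form a ladder on adjacent wires $(k,k+1)$, $k=1,\dots,n-1$, with required actions $Z\otimes I\mapsto I\otimes Z$ and $(XZ^j)\otimes(X^aZ^b)\mapsto I\otimes XZ^j$ for every $j$. The $E$ box sits on wire $n$, with $XZ^i\mapsto X$ and $Z\mapsto Z$. So three of your claims are false: that each $D$ box fixes $Z$ on the pivot, that the image after the ladder is supported on the pivot, and that $W_X$ fixes the image of the preceding Z-normal step. What the inductive normal form actually uses is \cref{lem:X-normal-propagate-Z}: $W_X$ transports $Z_1$ to $Z_n$.

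\textbf{How to repair existence.} Track a pivot that moves down the ladder. Write $Q\equiv_p (XZ^i)\otimes Q_2\otimes\cdots\otimes Q_n$ with $Q_k\equiv_p X^{a_k}Z^{b_k}$. Give the $D$ box on wires $(k-1,k)$ the label $(a_k,b_k)$. After that box, the image is $I^{\otimes(k-1)}\otimes XZ^i\otimes Q_{k+1}\otimes\cdots\otimes Q_n$. The exponent $i$ is carried along unchanged, so the $E$ label is simply $i$, with no subtraction of $D$-box contributions.

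\textbf{Uniqueness is easier than you think.} Once existence holds for every admissible $Q$, your counting already settles uniqueness. The map $W_X\mapsto \llbracket W_X\rrbracket_u^{-1}\bullet(I\otimes\cdots\otimes I\otimes X)$ hits each of the $p^{2n-1}$ admissible $Q$ (up to phase). Its domain has exactly $p^{2(n-1)+1}=p^{2n-1}$ elements, so it must be injective. The ``peeling'' injectivity computation you flag as the main obstacle is therefore unnecessary. This counting route is slightly cleaner than the paper's, which asserts that each label is forced in turn by the required box actions.
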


\begin{lemma}
  \lemxnormalpropagateZ
% Every $X$-normal circuit $W_X$ satisfies $\llbracket W_X\rrbracket_u\bullet(Z \otimes I \otimes \cdots \otimes I \otimes I)\equiv_p I \otimes I \otimes \cdots \otimes I \otimes Z$.
\label{lem:X-normal-propagate-Z}
\end{lemma}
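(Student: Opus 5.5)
The plan is to push the Pauli $Z_1 = Z \otimes I \otimes \cdots \otimes I$ through $W_X$ box by box, and at each step keep track of where the single $Z$ sits, ignoring global phases throughout. Since conjugation is a group homomorphism and $\llbracket W_X \rrbracket_u$ is the ordered product of the interpretations of its boxes, it suffices to know how each box acts on a single-wire $Z$. That action is already recorded in the required and additional actions of \cref{fig:normal-box-auto}. Any phases that appear along the way are powers of $\omega$ (or $\pm\omega^t$), so they are absorbed by $\equiv_p$.

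\textbf{Step 1: fix the box structure.} From \cref{def:Z-and-X-normal}, an $X$-normal circuit on $n$ qudits is a ladder of two-qudit $D$ boxes acting on consecutive wire pairs $(1,2),(2,3),\ldots,(n-1,n)$, together with a single-qudit $E$ box on the last wire. I will read off from \cref{fig:normal-box-auto} the following actions, valid for every allowed label:
\begin{itemize}
\item each $D$ box sends $Z$ on its upper wire to $Z$ on its lower wire, up to phase, whatever the parameter;
\item the $E$ box fixes $Z$ up to phase.
\end{itemize}
If a particular label's action is stated only for $X$, I would verify the $Z$ action directly from the concrete implementation in \cref{fig:Z-and-X-Normal-Boxes-imp}. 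This uses \cref{lem:automorphism}, \cref{lem:automorphism-CZ} and \eqref{eq:multiplier-auto}, and is a short symbolic computation in the parameters $a,b$.

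\textbf{Step 2: induct along the ladder.} I will show by induction on $k$ that after the first $k$ $D$ boxes, $Z_1$ has been mapped to $Z_{k+1}$ up to phase. The key point is that the box on wires $(k,k+1)$ receives a $Z$ only on its upper wire. All later boxes act on wires that do not carry the current $Z$ until their turn, so they act trivially on it. In particular, the $E$ box touches only the last wire. After the whole ladder the operator is $Z_n$, and the final $E$ box leaves it unchanged up to phase.

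\textbf{Main obstacle.} The hard part is Step 1 for the $D$ boxes, which involve $\CX$- or $\CZ$-type interactions with parameter-dependent multipliers. There I need to confirm that the image of $Z_{\text{top}}$ has no leftover $Z$ or $X$ component on the top wire for every label, including the degenerate parameter values (for example $a=0$ or $b=0$ variants). I would first check the generic case by conjugating through the gates in order. I would then treat the special labels separately, relying on the fact that their "additional action" in \cref{fig:normal-box-auto} was presumably chosen precisely so that this propagation holds.
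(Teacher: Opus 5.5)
Your proposal is correct and follows the same approach as the paper's proof. The paper also takes the ``additional actions'' in the third column of \cref{fig:normal-box-auto} as given, namely that every $D$ box sends $Z\otimes I$ to $I\otimes Z$ and every $E$ box fixes $Z$, and chains them down the ladder. The label-by-label check you flag as the main obstacle is therefore already part of how the boxes are specified, rather than something the proof has to redo.
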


\begin{lemma}
    \lemnormalcircuitcompo
    % Let $P,Q\in \Pauli_n$ such that $PQ = \omega QP$. Then there exist unique normal circuits $W_Z$ and $W_X$ such that $\llbracket W_XW_Z\rrbracket_u\bullet P  \equiv_p I \otimes \cdots \otimes I \otimes Z$ and $\llbracket W_XW_Z\rrbracket_u\bullet Q  \equiv_p I \otimes \cdots \otimes I \otimes X$.
    \label{lem:normal-circuit-compositon}
\end{lemma}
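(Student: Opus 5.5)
We prove existence and uniqueness together by composing \cref{lem:Z-normal-reduction,lem:X-normal-reduction,lem:X-normal-propagate-Z}.

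\textbf{Existence.} Since $PQ=\omega QP$, the operator $P$ is not a scalar. By \cref{lem:Z-normal-reduction} there is a Z-normal circuit $W_Z$ with $\llbracket W_Z\rrbracket_u\bullet P\equiv_p Z\otimes I\otimes\cdots\otimes I$. Set $Q'=\llbracket W_Z\rrbracket_u\bullet Q$. Conjugation is an automorphism, so it preserves commutation phases, and hence
\[
(Z\otimes I\otimes\cdots\otimes I)\,Q'=\omega\, Q'\,(Z\otimes I\otimes\cdots\otimes I)
\]
up to the phase ambiguity in $\equiv_p$. Scalars do not affect commutation, so this holds exactly. \Cref{lem:X-normal-reduction} now gives an X-normal circuit $W_X$ with $\llbracket W_X\rrbracket_u\bullet Q'\equiv_p I\otimes\cdots\otimes I\otimes X$. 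By \cref{lem:X-normal-propagate-Z}, $W_X$ sends $Z\otimes I\otimes\cdots\otimes I$ to $I\otimes\cdots\otimes I\otimes Z$ up to phase. Because $\bullet$ is a group action, $\llbracket W_XW_Z\rrbracket_u\bullet P=\llbracket W_X\rrbracket_u\bullet(\llbracket W_Z\rrbracket_u\bullet P)$, and the required images of $P$ and $Q$ follow.

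\textbf{Uniqueness.} Suppose $(W_Z',W_X')$ is another pair with the same property. By \cref{lem:X-normal-propagate-Z}, $\llbracket W_X'\rrbracket_u^{-1}$ sends $I\otimes\cdots\otimes I\otimes Z$ back to $Z\otimes I\otimes\cdots\otimes I$ up to phase. Hence $\llbracket W_Z'\rrbracket_u\bullet P\equiv_p Z\otimes I\otimes\cdots\otimes I$, and the uniqueness clause of \cref{lem:Z-normal-reduction} forces $W_Z'=W_Z$. Then $W_X'$ and $W_X$ both send the same $Q'=\llbracket W_Z\rrbracket_u\bullet Q$ to $I\otimes\cdots\otimes I\otimes X$ up to phase. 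The uniqueness in \cref{lem:X-normal-reduction} gives $W_X'=W_X$.

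\textbf{Main obstacle.} The step needing the most care is the phase bookkeeping. The hypotheses of the component lemmas are stated with $\equiv_p$, and we must check that phases never change the relevant commutation relation. They cannot, since global scalars commute with everything, so each reduction step satisfies the exact hypothesis of the next lemma.
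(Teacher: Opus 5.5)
Your proof is correct and follows essentially the same route as the paper. Existence composes \cref{lem:Z-normal-reduction}, the preserved commutation relation, \cref{lem:X-normal-reduction} and \cref{lem:X-normal-propagate-Z}, and uniqueness uses \cref{lem:X-normal-propagate-Z} to force $W_Z'=W_Z$ before invoking the uniqueness clause of \cref{lem:X-normal-reduction}; your explicit check that a global phase cannot change the commutation relation is a detail the paper leaves implicit.
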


\begin{proposition}
\propnormalform
% Let $\phi: \Pauli_n \rightarrow \Pauli_n$ be an automorphism of the Pauli group such that $\phi$ fixes scalars. There exists a unique Clifford circuit $C\in\circuitClifford_n$ in symplectic normal form such that for all $P\in \Pauli_n$, $\llbracket C \rrbracket_u\bullet P \equiv_p \phi(P)$.
    \label{prop:normal-form}
\end{proposition}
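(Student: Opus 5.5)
We argue by induction on $n$, building the normal form one layer at a time as in \cref{def:normal}. The statement (as the commented version indicates) is that for every automorphism $\phi$ of $\Pauli_n$ fixing scalars there is a unique symplectically normal circuit $C$ with $\llbracket C\rrbracket_u\bullet P\equiv_p\phi(P)$ for all $P$. For $n=0$ the only normal circuit is the empty one and the claim is trivial; the case $n=1$ (\cref{fig:single-qupit-normal}) is also covered by the inductive step below.

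\emph{Existence.} Set $P=\phi^{-1}(Z_n)$ and $Q=\phi^{-1}(X_n)$. Since $\phi$ preserves commutation phases and $Z_nX_n=\omega X_nZ_n$, we get $PQ=\omega QP$. Hence \cref{lem:normal-circuit-compositon} gives unique $W_Z,W_X$ with $W:=W_XW_Z$ sending $P\mapsto Z_n$ and $Q\mapsto X_n$ up to phase. Define $\phi'(R)=\phi(\llbracket W\rrbracket_u^\dagger\bullet R)$. Then $\phi'$ fixes $Z_n$ and $X_n$ up to phase. Because $\phi'$ preserves symplectic products, for $j<n$ the images $\phi'(Z_j),\phi'(X_j)$ commute with $Z_n$ and $X_n$. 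They therefore have trivial $n$-th tensor factor up to phase, so $\phi'$ restricts to an automorphism $\psi$ of $\Pauli_{n-1}$ on the first $n-1$ qudits. Induction yields a normal $N^{(n-1)}$ realising $\psi$. The circuit $C:=(N^{(n-1)}\otimes I)\,W$, which is normal by \cref{def:normal}, then satisfies $\llbracket C\rrbracket_u\bullet R\equiv_p\phi(R)$ on every generator in $\mathcal{B}_n$, hence on all of $\Pauli_n$ up to phase.

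\emph{Uniqueness.} Let $C=(N\otimes I)W$ and $C'=(N'\otimes I)W'$ both realise $\phi$, with $W=W_XW_Z$ and $W'=W_X'W_Z'$. The inner part $N\otimes I$ fixes $Z_n$ and $X_n$. It follows that $W$ must send $\phi^{-1}(Z_n)\mapsto Z_n$ and $\phi^{-1}(X_n)\mapsto X_n$ up to phase, and the same holds for $W'$. The uniqueness clause of \cref{lem:normal-circuit-compositon} then forces $W\equiv W'$ as circuits. Cancelling $W$, the circuits $N$ and $N'$ realise the same automorphism of $\Pauli_{n-1}$, so induction gives $N\equiv N'$.

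The delicate point is that everything holds only up to phase. $\phi$ is determined on $\Pauli_n$ only modulo scalars, and the restriction $\psi$ must be shown to be a well-defined automorphism fixing scalars. We would handle this by working throughout in $\widehat{\Pauli}_n\cong\Z_p^{2n}$, where $\phi$ is simply a symplectic matrix. There the induction step is the statement that a symplectic matrix fixing $e_n$ and $f_n$ preserves the symplectic complement of $\mathrm{span}(e_n,f_n)$, which is $\Z_p^{2(n-1)}$, and restricts to an element of $\symplectic{2n-2,\Z_p}$. The remaining care is to check that \cref{lem:normal-circuit-compositon} can be applied with the specified pair $(P,Q)$, which requires $P$ to be non-scalar; this holds because $\phi^{-1}(Z_n)$ is not a scalar.
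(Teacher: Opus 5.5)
Your proposal is correct and follows essentially the same route as the paper. Both argue by induction on $n$, use \cref{lem:normal-circuit-compositon} to fix the first layer $W_XW_Z$ from $\phi^{-1}(Z_n)$ and $\phi^{-1}(X_n)$, restrict $\phi'(U)=\phi(\llbracket W_XW_Z\rrbracket_u^{-1}\bullet U)$ to the first $n-1$ qudits via commutation with $Z_n$ and $X_n$, and get uniqueness from that lemma's uniqueness clause plus the induction hypothesis. Your passage to $\widehat{\Pauli}_n\cong\Z_p^{2n}$ is only a tidier way of tracking the phases the paper handles implicitly with $\equiv_p$.
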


Since there is a bijection between the symplectic matrices and the normal forms, we can count the number of $n$-qudit normal forms to compute the cardinality of $\symplectic{2n, \Z_p}$. The proof of \cref{lem:cardinality} can be found in \cref{sec:sectionthreeproofs}.

\begin{lemma}
\corcardinality
% $\left\lvert \symplectic{2n, \Z_p} \right\rvert = \prod_{k=1}^n \left(p^{2k}-1\right) \cdot \left(p^{2k-1}\right).$
    \label{lem:cardinality}
\end{lemma}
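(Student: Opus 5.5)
The plan is to count symplectic normal forms. By \cref{prop:normal-form} (together with \cref{prop:symplectic-quotient}), symplectically normal circuits on $n$ qudits are in bijection with $\symplectic{2n,\Z_p}$. So $\left\lvert\symplectic{2n,\Z_p}\right\rvert$ equals the number of $n$-qudit symplectically normal circuits. I would first record this reduction. Then I would use the inductive structure of \cref{def:normal}. An $n$-qudit normal form is a pair $(W_Z,W_X)$ of Z- and X-normal circuits on $n$ qudits, followed by an $(n-1)$-qudit normal form. By uniqueness, distinct choices give distinct normal forms. This gives the recursion $N_n = z_n x_n N_{n-1}$ with $N_0=1$, where $z_n$ and $x_n$ are the numbers of distinct Z-normal and X-normal $n$-qudit circuits.

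Next I would count $z_n$ and $x_n$. By \cref{lem:Z-normal-reduction}, Z-normal circuits are in bijection with the non-scalar Paulis up to phase, that is, with the nonzero vectors of $\Z_p^{2n}$. Hence $z_n = p^{2n}-1$. One can also check this directly from the box labels: the $A$ box sits on some wire $k$, has $p^2-1$ variants, and has $k-1$ $B$ boxes above it with $p^2$ variants each. Summing gives $\sum_{k=1}^{n}(p^2-1)p^{2(k-1)} = p^{2n}-1$.

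By \cref{lem:X-normal-reduction}, X-normal circuits are in bijection with the Paulis $Q$ up to phase satisfying the single linear condition $\Omega(v_Z, v_Q)=1$, where $v_Z$ is the vector of $Z\otimes I\otimes\cdots\otimes I$. This is an affine hyperplane in $\Z_p^{2n}$, so $x_n = p^{2n-1}$. I would cross-check this against the box parameters: $n-1$ $D$ boxes with $p^2$ labels each, and one $E$ box with $p$ labels, giving $p^{2n-2}\cdot p = p^{2n-1}$. The $n=1$ case agrees with \cref{fig:single-qupit-normal}, which gives $(p^2-1)p$.

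Unfolding the recursion yields $\prod_{k=1}^n (p^{2k}-1)p^{2k-1}$. The only delicate step is confirming that the uniqueness in \cref{lem:Z-normal-reduction,lem:X-normal-reduction} is uniqueness as circuits, not merely as maps up to phase. That is exactly what makes the bijection with Pauli vectors exact, rather than overcounting Paulis that differ only by phase. I would handle this by working throughout with Paulis modulo $\langle-\omega\rangle$, that is, with vectors in $\Z_p^{2n}$.
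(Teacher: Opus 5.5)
Your proposal is correct and follows essentially the same route as the paper: reduce to counting symplectic normal forms via \cref{prop:normal-form}, peel off one layer $W_X^{(n)}W_Z^{(n)}$ at a time, and count $p^{2n}-1$ Z-normal and $p^{2n-1}$ X-normal circuits by exactly the box-label count you give as a cross-check. Your extra count via Pauli vectors is a good conceptual confirmation, but note that surjectivity of the map from admissible $Q$ to X-normal circuits implicitly uses \cref{lem:X-normal-propagate-Z}, which guarantees that the preimage of $I\otimes\cdots\otimes I\otimes X$ under any X-normal circuit satisfies the commutation hypothesis of \cref{lem:X-normal-reduction}.
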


\subsection{The Normal Form for Pauli Operators}
\label{subsec:phase-free-normal-form-pauli}
Here, we introduce a normal form for multi-qudit Pauli operators, which will serve as a basic building block in \cref{subsec:assemble}. This normal form is unique up to a global phase.

\begin{definition}\label{def:normal-pauli-phasefree}\label{fig:macro-normal-circuit-pauli-phasefree}
    An $n$-qudit circuit in $\circuitClifford_n$ is said to be a \emph{Pauli normal form} if it is of the following form:
    \[
    \scalebox{0.85}{\input{figures/PhaseFreeNormalForm/Normal-Pauli-phasefree.tikz}}
    \]
% form in \cref{fig:macro-normal-circuit-pauli-phasefree}. 
    % \label{def:normal-pauli-phasefree}
\end{definition}

% \begin{figure}[!htb]
% \[
% \scalebox{0.85}{\tikzfig{figures/PhaseFreeNormalForm/Normal-Pauli-phasefree}}
% \]
% \vspace{-.3cm}
%     \caption{The construction of a Pauli normal form.}
%     \label{fig:macro-normal-circuit-pauli-phasefree}
% \end{figure}

\begin{lemma}
    For every element $P\in\Pauli_n$, there exists a unique Pauli normal form $N_P^{(n)}$ such that $\llbracket N_P^{(n)} \rrbracket_u\equiv_p P$.
    \label{lem:pauli-normal-phasefree}
\end{lemma}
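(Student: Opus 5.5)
We need to prove that each $P\in\Pauli_n$ has a unique Pauli normal form $N_P^{(n)}$ with $\llbracket N_P^{(n)}\rrbracket_u\equiv_p P$. We do not see the figure, but presumably the Pauli normal form is a layer of $X^{a_j}$ gates followed (or preceded) by $Z^{b_j}$ gates on each wire $j$, with $a_j,b_j\in\Z_p$. The plan is to reduce both existence and uniqueness to the explicit description of $\Pauli_n$ given in \cref{sssec:paulis}:
\[
\Pauli_n=\{(-\omega)^c(X^{a_1}Z^{b_1}\otimes\cdots\otimes X^{a_n}Z^{b_n})\}.
\]

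\textbf{Existence.} Given $P$, write $P=(-\omega)^c\bigotimes_j X^{a_j}Z^{b_j}$ using the description above. Take $N_P^{(n)}$ to be the normal form circuit with exponents $a_j,b_j$ on wire $j$. Its unitary interpretation is $\bigotimes_j X^{a_j}Z^{b_j}$, possibly with the $X$ and $Z$ factors in the opposite order. If the figure orders them the other way, we use \cref{eq:XZ}, $ZX=\omega XZ$, which gives $Z^{b}X^{a}=\omega^{ab}X^aZ^b$. In either case $\llbracket N_P^{(n)}\rrbracket_u$ equals $P$ up to a phase in $\langle-\omega\rangle$, so $\llbracket N_P^{(n)}\rrbracket_u\equiv_p P$.

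\textbf{Uniqueness.} Suppose two normal forms with exponents $(a_j,b_j)$ and $(a'_j,b'_j)$ both interpret to $P$ up to phase. Then
\[
\bigotimes_j X^{a_j}Z^{b_j}=\lambda\bigotimes_j X^{a'_j}Z^{b'_j}
\]
for some phase $\lambda$. Apply both sides to the basis state $\ket{0\cdots0}$. The left side gives $\ket{a_1\cdots a_n}$ and the right side gives $\lambda\ket{a'_1\cdots a'_n}$, so $a_j=a'_j$ for all $j$. Cancelling the invertible $X$ factors leaves
\[
\bigotimes_j Z^{b_j}=\lambda\bigotimes_j Z^{b'_j}.
\]
Evaluating on $\ket{0\cdots0}$ gives $\lambda=1$. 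Evaluating on the state with a single $1$ at position $j$ then gives $\omega^{b_j}=\omega^{b'_j}$, so $b_j=b'_j$ in $\Z_p$, since $\omega$ is primitive.

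Equivalently, the $p^{2n}$ operators $X^{\vec a}Z^{\vec b}$ are pairwise non-proportional, because they are linearly independent, as used in the proof of \cref{prop:scalars}. Note that uniqueness is at the level of circuits, since the exponents are taken in $\{0,\dots,p-1\}$.

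The main obstacle is bookkeeping rather than mathematics. The $X$ and $Z$ in the normal form are the derived circuits \eqref{eq:def-X} and \eqref{eq:def-Z}, so we must use \cref{lem:X-soundness,lem:Z-soundness} to confirm that their interpretations are exactly the Pauli $X$ and $Z$, at least up to phase. We must also fix the range of the exponents so that distinct parameter choices really do give distinct circuits.
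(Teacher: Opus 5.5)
Your proposal is correct and follows the same route as the paper, whose proof simply cites the fact that every $P\in\Pauli_n$ can be written uniquely as $P=(-\omega)^t\bigotimes_{j=1}^{n}X^{a_j}Z^{b_j}$ with $t\in\Z_{2p}$ and $a_j,b_j\in\Z_p$. Your basis-state evaluation fills in the uniqueness of the exponents, which the paper takes for granted.
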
    

\begin{proof}
This follows from the fact that every $P\in\Pauli_n$ can be uniquely written as $P=(-\omega)^t\bigotimes_{j=1}^nX^{a_j}Z^{b_j}$ for $t\in\Z_{2p}$ and $a_j,b_j\in\Z_p$.
\end{proof}    

\subsection{Composing the Symplectic and Pauli Normal Forms}
\label{subsec:assemble}
We now combine the symplectic and Pauli normal forms to obtain a canonical normal form for arbitrary multi-qudit Clifford circuits. Using the semi-direct product structure of the Clifford group, this composition separates the symplectic action from the residual Pauli correction, yielding a unique representation up to global phase. In what follows, $A;B$ denotes the sequential composition of gates in circuit (diagrammatic) order, meaning that $A$ is applied first and then $B$. This corresponds to the matrix product $BA$.

\begin{definition}\label{fig:CliffordNormal}
    An $n$-qudit Clifford circuit $N^{(n)}$ is a \emph{Clifford normal form} if $N^{(n)} \equiv N_S^{(n)};N_P^{(n)}$ where $N_P^{(n)}$ is a Pauli normal form and $N_S^{(n)}$ is a symplectic normal form: 
    \[
    \scalebox{0.74}{\input{figures/PhaseFreeNormalForm/Normal-compose.tikz}}
    \]
    % as shown in \Cref{fig:CliffordNormal},
\end{definition}

% \begin{figure}[!htb]
% \[
% \scalebox{0.74}{\tikzfig{figures/PhaseFreeNormalForm/Normal-compose}}
% \]
% \vspace{-.3cm}
%     \caption{The construction of a Clifford normal form.}
%     \label{fig:CliffordNormal}
% \end{figure}

\begin{proposition}
For every $n$-qudit Clifford operator $C\in\Clifford_n$, there exists a unique Clifford normal form circuit $N\in \circuitClifford_n$ such that $\llbracket N\rrbracket_u\equiv_p C$.
    \label{prop:compose}
\end{proposition}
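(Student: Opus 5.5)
The plan is to prove existence and uniqueness separately. Both parts combine \cref{prop:normal-form} (symplectic normal forms are in bijection with $\symplectic{2n,\Z_p}$, i.e.\ with $\Clifford_n/\Pauli_n$) with \cref{lem:pauli-normal-phasefree} (Pauli normal forms are in bijection with $\widehat{\Pauli}_n$). Recall that $N_S;N_P$ denotes the operator $\llbracket N_P\rrbracket_u\llbracket N_S\rrbracket_u$.

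\emph{Existence.} Fix $C\in\Clifford_n$ and let $\phi(P)=C\bullet P$. Since $C$ normalises $\Pauli_n$ and fixes scalars, $\phi$ is an automorphism of $\Pauli_n$ fixing scalars. By \cref{prop:normal-form} there is a symplectic normal form $N_S$ such that $\llbracket N_S\rrbracket_u\bullet P\equiv_p C\bullet P$ for every $P\in\Pauli_n$.

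Set $D=C\llbracket N_S\rrbracket_u^{\dagger}$. Then $D\bullet P=\mu_P P$ for some phase $\mu_P$. By the discussion at the start of \cref{subsec:symplectic}, $D=\lambda Q$ for some $Q\in\Pauli_n$ and some global phase $\lambda$. Since $D\in\Clifford_n$ and $Q\in\Pauli_n\le\Clifford_n$, we get $\lambda I=DQ^\dagger\in\Clifford_n$, so $\lambda\in\langle-\omega\rangle$ by \cref{lem:qupit-Clifford-phase}. Thus $D\in\Pauli_n$.

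By \cref{lem:pauli-normal-phasefree} there is a Pauli normal form $N_P$ with $\llbracket N_P\rrbracket_u\equiv_p D$. Hence $\llbracket N_S;N_P\rrbracket_u=\llbracket N_P\rrbracket_u\llbracket N_S\rrbracket_u\equiv_p D\llbracket N_S\rrbracket_u=C$.

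\emph{Uniqueness.} Suppose $N_S;N_P$ and $N_S';N_P'$ both represent $C$ up to phase. Projecting to $\Clifford_n/\Pauli_n$ kills the Pauli parts and the phase, so $\llbracket N_S\rrbracket_s=\llbracket N_S'\rrbracket_s$. By the uniqueness in \cref{prop:normal-form} (equivalently the bijection with $\symplectic{2n,\Z_p}$ via \cref{prop:symplectic-quotient}), $N_S\equiv N_S'$.

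Cancelling the now identical unitary $\llbracket N_S\rrbracket_u$ gives $\llbracket N_P\rrbracket_u\equiv_p\llbracket N_P'\rrbracket_u$. Uniqueness in \cref{lem:pauli-normal-phasefree} then gives $N_P\equiv N_P'$.

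The main obstacle is the identification step in the uniqueness argument. \cref{prop:normal-form} is phrased in terms of automorphisms ``up to phase on each $P$'', and one must check that this is exactly equality in $\Clifford_n/\Pauli_n$. The key point is that two Cliffords inducing the same automorphism up to per-Pauli phases differ by a Pauli times a phase, which lies in $\Pauli_n$ by \cref{lem:qupit-Clifford-phase}. Conversely, multiplying by a Pauli only changes each $C\bullet P$ by a phase. Once this is established, the argument is the standard decomposition $\widehat{\Clifford}_n\cong\symplectic{2n,\Z_p}\ltimes\Z_p^{2n}$ of \cref{thm:semidirect}, with the section of the projection given by the symplectic normal forms.
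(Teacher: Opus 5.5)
Your proposal is correct and follows essentially the same route as the paper's proof. Existence comes from the symplectic normal form together with a Pauli normal form for the residual Pauli, and uniqueness comes from projecting to $\Clifford_n/\Pauli_n$ to force $N_S\equiv N_S'$, then cancelling $\llbracket N_S\rrbracket_u$ to get $N_P\equiv N_P'$. You also spell out two steps the paper passes over quickly. First, the residual $C\llbracket N_S\rrbracket_u^{\dagger}$ is a Pauli whose phase lies in $\langle-\omega\rangle$. Second, equality in $\Clifford_n/\Pauli_n$ is exactly the per-Pauli, up-to-phase condition that the uniqueness part of the symplectic normal form result uses.
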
   

\begin{proof}
We first prove the existence. By \Cref{prop:normal-form}, there is a unique circuit $N_S\in\circuitClifford_n$ in the symplectic normal form such that $\llbracket N_S\rrbracket_u \bullet P \equiv_p C\bullet P$ for all $P\in\Pauli_n$. Hence $\llbracket N_S\rrbracket _s = \pi_s(C)$. This implies there is a unique Pauli $P\in\Pauli_n$ such that $P\cdot \llbracket N_S\rrbracket_u=C$. Then by \Cref{lem:pauli-normal-phasefree}, there is a unique circuit $N_P\in\circuitClifford_n$ such that $\llbracket N_P\rrbracket_u \llbracket N_S\rrbracket_u\equiv_p C$. It follows that $N\equiv N_S;N_P$ is a circuit in Clifford normal form such that $\llbracket N\rrbracket_u\equiv_p C$. 

Now we prove the uniqueness. Suppose $N'\equiv N'_S;N'_P$ is another circuit in $\circuitClifford_n$ in Clifford normal form such that $\llbracket N'\rrbracket_u\equiv_p C$, where $N'_S$ and $N'_P$ are symplectic normal form circuit and Pauli normal form circuit respectively. It follows that $\llbracket N'_S\rrbracket_s = \llbracket N'\rrbracket_s=\pi_s(C)=\llbracket N_S\rrbracket_s$. By the uniqueness of symplectic normal form, we must have $N'_S\equiv N_S$. This also implies $\llbracket N'_S\rrbracket_u=\llbracket N_S\rrbracket_u$. Since $\llbracket N'\rrbracket_u\equiv _p \llbracket N\rrbracket_u$, we must then have $\llbracket N'_P\rrbracket_u\equiv_p \llbracket N_P\rrbracket_u$ as Pauli operators in $\Pauli_n$. By the uniqueness of Pauli normal form, we must have $N'_P\equiv N_P$. We conclude that $N'\equiv N$.
\end{proof}    

% \begin{remark}
%     It is not necessary to distinguish between the precise and ``symplectic'' construction of the normal boxes. We could say let the \RHS of Figure 8.(a) to be $W_X^{(n)}$.
% \end{remark}   

Since each element of $\widehat{\Clifford}_n$ is in bijection with a unique Clifford normal form by \cref{prop:compose}, counting these normal forms yields the cardinality of $\widehat{\Clifford}_n$. Using \cref{lem:qupit-Clifford-phase}, which characterises the Clifford phases as the integer powers of $-\omega$, we then obtain the cardinality of the Clifford group $\Clifford_n$.

\begin{remark}
    By \cref{lem:cardinality,lem:pauli-normal-phasefree} and \cref{lem:qupit-Clifford-phase,prop:compose},
    \begin{equation*}
       \left\lvert \Clifford_n \right\rvert= 2p^{2n+1}\cdot\left\lvert \symplectic{2n, \Z_p} \right\rvert=2p^{2n+1}\prod_{k=1}^n \left(p^{2k}-1\right) \cdot \left(p^{2k-1}\right).
    \end{equation*}
\end{remark}    

%% file: scripts/4-completeness.tex
\section{A Complete Set of Multi-Qudit Clifford Relations}
\label{sec:box-relations}
In this section, we derive a complete set of rewrite rules for multi-qudit Clifford circuits. In \cref{subsec:symplectic-completeness}, we establish completeness at the symplectic level and then in \cref{sec:completenessv1}, we lift these results to unitary Clifford circuits by accounting for Pauli corrections.

\subsection{Establishing Symplectic Completeness}
\label{subsec:symplectic-completeness}
First, we show how to derive a complete set of relations that suffices to transform any multi-qudit Clifford circuit into its unique symplectic normal form. We call this the \emph{Clifford normalisation} process, and it proceeds as follows. Let $C \in \circuitClifford_n$ be a circuit composed of gates $g_1,\ldots, g_m$. Append it on the right with the normal form of the identity circuit ($N_{I}$). Starting from the right-most gate $g_m$, iteratively `push' each gate of $C$ into the normal form, updating it at each step (e.g., from $N_{I}$ to $N'$). This process continues until all gates have been absorbed into the updated normal form $N$:
\[
\scalebox{.7}{\input{figures/Normalization/normalization.tikz}}
\]

% Below, we sketch a simple example. 

Since the Pauli component plays no role in the present discussion, we suppress the distinction between the symplectic normal form $N_S$ and the Pauli normal form $N_P$, and simply write $N$ for the symplectic normal form.

\paragraph{Overview}
In \eqref{eq:identities} and \eqref{eq:n-qudit-identity-normal}, we present the symplectic normal form of the identity operator. Since a Clifford circuit is composed of $H$, $S$, and $\CZ$ gates, it suffices to show how a normal form absorbs these gates and is updated to a new normal form. Locally, this reduces to demonstrating how to push a Clifford gate through each of the normal boxes specified in \cref{fig:Z-and-X-Normal-Boxes-imp}. We call each such instance a \emph{box relation}.

\cref{fig:residual-dirty-gates}(a) gives an abstract illustration of \emph{pushing through a normal box}, which results in an updated normal box $M'$ and \emph{residual dirty gates}, denoted as $\dir$, appearing after $M'$. $M'$ is uniquely determined by $g$ and $M$, which in turn fixes the symplectic action of $\dir$. \cref{fig:residual-dirty-gates}(b) shows a concrete example of pushing an $H$ gate through a $B$ box. 
In the appendix, \cref{lem:HSA,lem:CZ-A,lem:CZ-A-B,lem:HI-B,lem:SI-B,lem:IS-B,lem:IH-D,lem:IS-D,lem:SI-D,lem:CZ-D,lem:CZ-BB,lem:ICZ-B,lem:CZ-DD} show how we can find $M'$ for all cases we need. These results are summarised in \cref{fig:two-qubit-box-relations}.  

\begin{figure}[!htb]
    \centering
    \[
     \scalebox{.9}{\input{figures/Normalization/pushThroughNormal.tikz}}
    \]
    \vspace{-.3 cm}
    \caption{An abstract and a concrete example of pushing a gate through a normal box $M$.}
\label{fig:residual-dirty-gates}
\end{figure}

In total, we find $42$ parametric box relations, which are listed in \cref{app:relations}. These relations cover every possible case of pushing through a normal box, making them sufficient to normalise any Clifford operator. Combined with \cref{prop:normal-form}, any two Clifford circuits that are equal as symplectic matrices can be rewritten into the same normal form using these box relations. This shows that the box relations are \emph{complete} for multi-qudit Clifford circuits up to Pauli corrections.

\paragraph{Technical details}
We need to show that the Clifford normalisation terminates, and that when this happens there are no more gates left on the right-hand side of the normal form. \cref{fig:dirty-normal-form} provides a schematic summary of what a normal form could look like during the normalisation process. It accounts for all possible cases when pushing through a normal box. 

\begin{definition}\label{def:dirty-normal-form}
We say that a circuit is in \emph{clean symplectic normal form} if it is of the form in \cref{fig:macro-normal-circuit}.
A circuit is in \emph{dirty symplectic normal form} if it is of the form in \cref{fig:dirty-normal-form}, which has the structure of a normal form, but with any number of additional Clifford gates allowed in all the designated spots, subject to the following rules:
\begin{itemize}
    \item $H$ gates can be present at any wire labelled $\circledalt{1}$;
     \item $S$ gates can be present at any wire labelled $\circledalt{1}$ or $\circledalt{2}$;
    % \item $X$ gates can be present at any wire labelled $\circled{2}$;
    % \item $Z$ gates can be present at any wire labelled $\circled{2}$, $\circled{3}$, $\circled{4}$, or $\circled{5}$;
    \item \CZ gates can be present at any pair of adjacent wires, provided that the top wire is labelled $\circledalt{1}$ or $\circledalt{2}$, and the bottom wire is labelled $\circledalt{1}$.
\end{itemize}
In the context of a dirty normal form, we will refer to the normal boxes as \emph{clean}, and all the other $H$, $S$, $X$, $Z$, and $\CZ$ gates as \emph{dirty}.
\end{definition}

\begin{figure}[!htb]
    \centering
     \scalebox{0.7}{\input{figures/Normalization/DirtyNormalForm-phase-free.tikz}}
    \caption{An example of an $n$-qudit dirty symplectic normal form. Here, $N^{(n-1)}$ is in dirty normal form subject to the same constraints as $N^{(n)}$.}
    \label{fig:dirty-normal-form}
\end{figure}

The placement of dirty gates is restricted by which gates can be properly absorbed by each normal box, as this ultimately determines whether Clifford normalisation will terminate. 
In particular, the important point in \cref{fig:dirty-normal-form} is that there are no dirty gates allowed after the final $E_h$ box or after the $N^{(n-1)}$ so that when the dirty gates get pushed through starting with the left-most, they will eventually get removed completely.

In order to ensure the restrictions of \cref{def:dirty-normal-form} on the allowed dirty gates, we have the additional symmetries of the normal boxes in the third column of \cref{fig:normal-box-auto}. We can then confirm that all dirty gates follow the pattern described in \cref{def:dirty-normal-form} by manually inspecting the box relations in \cref{app:relations}.

\begin{lemma}
    Any dirty normal form can be converted to its symplectic normal form by applying the box relations in \cref{app:relations}, in the left-to-right direction, a finite number of times.
    \label{lem:conversion}
\end{lemma}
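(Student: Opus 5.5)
The plan is a termination argument by a well-founded measure, combined with an invariant showing that every intermediate circuit stays a dirty normal form. We proceed by induction on the number of qudits $n$. For $n=1$ the dirty normal form is an $A$ box followed by an $E$ box, with dirty gates allowed only on the wire labelled $\circledalt{1}$ before $E$. We push the left-most dirty gate into the box to its right using the single-qudit box relations (the $H$/$S$ relations for $A$ and $E$ in \cref{app:relations}). By inspection of those relations, the residual dirty gates either vanish or appear only in positions still allowed by \cref{def:dirty-normal-form}.

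For the inductive step we fix a strategy: always rewrite the left-most dirty gate, i.e.\ the one immediately preceding the earliest clean box it touches. This gate is one of $H$, $S$ or $\CZ$ on permitted wires, so exactly one of the $42$ box relations applies to it. We then verify the invariant. By the third column of \cref{fig:normal-box-auto} and a case-by-case reading of \cref{app:relations} (\cref{lem:HSA,lem:CZ-A,lem:CZ-A-B,lem:HI-B,lem:SI-B,lem:IS-B,lem:IH-D,lem:IS-D,lem:SI-D,lem:CZ-D,lem:CZ-BB,lem:ICZ-B,lem:CZ-DD}), the produced $\dir$ always lands on wires labelled $\circledalt{1}$ or $\circledalt{2}$ with the required adjacency for $\CZ$. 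Moreover, nothing is ever produced after the final $E_h$ box of the current layer, except gates on wires $1,\dots,n-1$ that enter $N^{(n-1)}$ as an $(n-1)$-qudit dirty normal form. Hence the rewritten circuit is again a dirty normal form.

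For termination we attach to each dirty gate its position, meaning the index of the next clean box in the left-to-right order of \cref{fig:dirty-normal-form}. The box order is chosen so that each relation moves all residual gates strictly past the box just traversed, i.e.\ to a later position. We order configurations by the multiset of these positions (positions further left count as larger), with the Dershowitz–Manna multiset order. Each rewrite removes one gate at position $k$ and adds finitely many gates at positions strictly greater than $k$. Since there are finitely many boxes in the top layer, this order is well founded. After the top layer is clean, all remaining dirty gates sit in front of $N^{(n-1)}$, and the induction hypothesis finishes the conversion.

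The main obstacle is the invariant: confirming that no box relation emits a dirty gate that is not absorbable downstream, such as an $H$ on a $\circledalt{2}$ wire or a $\CZ$ whose bottom wire is not $\circledalt{1}$, and in particular that $\CZ$ pushed through the $B$-ladder (\cref{lem:CZ-BB,lem:ICZ-B}) never creates gates below the $A$ box or after $E_h$. I would handle this by tabulating, for each of the $42$ relations, the wire labels of the outputs, and checking the tabulation against \cref{def:dirty-normal-form}. This check is finite and mechanical, and is also what the Agda formalisation certifies.
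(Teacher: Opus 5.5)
Your overall plan (induction on $n$, always rewriting a dirty gate that sits directly in front of a clean box, checking the shape invariant by inspecting \cref{app:relations}) matches what the paper intends by deferring to the qutrit argument. The termination measure, however, does not work as stated.

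\textbf{The gap.} Your key claim is that each rewrite ``removes one gate at position $k$ and adds gates at positions strictly greater than $k$''. This presupposes that the chain of clean boxes in the layer is fixed, and it is not, because the $\CZ$ relations through the $A$ box change the shape of the Z-normal circuit. In \cref{lem:CZ-A} the new preimage is $X^aZ^b\otimes Z^{-a}$, so for $a\neq 0$ the $A$ box moves one wire down and a new $B$ box appears. In \cref{lem:CZ-A-B} the preimage $X^cZ^{d-a}\otimes X^aZ^{b-c}$ becomes $X^bZ^d\otimes I$ when $a=0$ and $b=c$, so the $A$ box moves one wire up and a $B$ box disappears. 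This re-indexes later positions. More seriously, it changes the next clean box of dirty gates the rewrite never touched.

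For a concrete instance, take $A_{a,b}$ on wire $k$ with $a\neq 0$ and no box below it in the Z-normal circuit. Put a $\CZ$ on wires $k,k+1$ directly in front of it, and an $S$ on wire $k+1$ in front of that $\CZ$. This is a legal dirty normal form: the $\CZ$ is exactly the configuration of \cref{lem:CZ-A}, and $S$ is allowed on any labelled wire. Before the rewrite, the next clean box of the $S$ is the $D$ box on wires $k,k+1$, late in the layer. After applying \cref{lem:CZ-A}, the new $A$ box sits on wire $k+1$ directly after the $S$, so the $S$ jumps to the very first position. That step is not of the form ``delete one element, insert strictly smaller ones'', and your argument no longer shows that the multiset decreases.

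\textbf{A repair.} Let $q$ be the number of dirty gates in the layer from which there is a path into the layer's $A$ box.
\begin{itemize}
\item Every box of the layer lies downstream of $A$, and every relation places its residual gates after the updated boxes. So no rewrite ever adds a gate to the past of $A$.
\item A rewrite through a $B$, $D$ or $E$ box leaves that past unchanged.
\item Any rewrite that pushes a gate through $A$ (\cref{lem:HSA,lem:CZ-A,lem:CZ-A-B}) removes that gate from the past of $A$. The new $A$ box is fed by an input wire of the consumed gate, whose past was already contained in the past of the old $A$.
\end{itemize}
Hence $q$ never increases and strictly drops at every step through $A$. These are the only steps that change the shape of the layer. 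Between two such steps the chain of boxes is fixed, so your multiset of positions does strictly decrease. Ordering configurations lexicographically, by $q$ first and your multiset second, gives a well-founded measure. With that measure the rest of your argument, including the hand-off to $N^{(n-1)}$ via the induction hypothesis, goes through.
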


The proof is entirely analogous to that of~\cite[Lemma~4.2]{qutrit2024}.

\begin{proposition}
    Consider an $n$-qudit Clifford circuit composed of $H$, $S$, and \CZ gates. Any such circuit can be converted to its symplectic normal form by using the box relations in \cref{app:relations}, together with the equations:
    
    \begin{equation}
    \scalebox{.9}{\input{figures/Normalization/identities.tikz}} \label{eq:identities}
  \end{equation}
    \label{prop:normalization}
\end{proposition}

\begin{proof}
For $1 \leq k \leq n$, let $N_{I_k}$ be the normal form of the $k$-qudit identity operator. First note that 
\begin{equation}\label{eq:n-qudit-identity-normal}
    \scalebox{.7}{\input{figures/Normalization/n-quditIdentityNormal.tikz}}
\end{equation}
Indeed, an identity circuit can be converted to this normal form by applying \eqref{eq:identities} a finite number of times. Appending the given Clifford circuit on the right with the normal form in \eqref{eq:n-qudit-identity-normal}, we obtain a dirty normal form, which can be converted to its normal form by \cref{lem:conversion}.
\end{proof}

% Up to Pauli correction, \cref{prop:normal-form,prop:normalization} show that qudit Clifford operators are presented by the generators $H$, $S$, and $\CZ$, the derived generators defined in \cref{fig:derived-generators2}, the normal boxes given in \cref{fig:Z-and-X-Normal-Boxes-imp}, as well as the $42$ parametric box relations listed in \cref{app:relations}. This presentation is highly redundant, so we provide a reduced set of relations as listed in \cref{fig:rewriterules2}. This rule set remains symplectic complete for multi-qudit Clifford circuits, but it offers a more intuitive framework for us to understand the qudit Clifford gate interactions.

\cref{prop:normal-form,prop:normalization} show that qudit Clifford operators admit a presentation in terms of the generators $H$, $S$, and $\CZ$, together with the derived generators from \cref{fig:derived-generators2}, the normal boxes in \cref{fig:Z-and-X-Normal-Boxes-imp}, and the $42$ parametric box relations listed in \cref{app:relations} (up to Pauli corrections). This presentation is highly redundant, so we extract a reduced set of relations, shown in \cref{fig:rewriterules2}. The resulting rule set remains symplectically complete for multi-qudit Clifford circuits, while providing a more streamlined and intuitive account of qudit Clifford gate interactions.

\begin{figure}[!thb]
    \[
    \scalebox{0.6}{\input{figures/RewriteRules/RewriteRules2.tikz}}
    \]
  \caption{A complete set of symplectic relations for $n$-qudit Clifford circuits where $p$ is an odd prime and $n\in \N$. Derived generators such as $M_g$, $X$, $Z$, \SWAP, and \CX are defined in \cref{fig:derived-generators1}.}
  \label{fig:rewriterules2}
\end{figure}

\begin{theorem}
    The $18$ rewrite rules listed in \cref{fig:rewriterules2} are complete for qudit Clifford circuits up to Pauli corrections.
    \label{prop:reduced-relations}
\end{theorem}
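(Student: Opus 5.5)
The plan is to reduce the theorem to what has already been established, namely symplectic completeness of the $42$ box relations together with \eqref{eq:identities}, which follows from \cref{prop:normal-form} and \cref{prop:normalization}. Let $\mathcal{R}_{18}$ denote the rules of \cref{fig:rewriterules2}. There are two things to check.

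\emph{Soundness up to Paulis.} Each rule in $\mathcal{R}_{18}$ must satisfy $\equiv_s$. Since $\equiv_u \Rightarrow \equiv_s$, it suffices to compute the symplectic action of both sides on the Pauli generators $Z_j, X_j$, using \cref{lem:automorphism}, \cref{lem:automorphism-CZ} and \eqref{eq:multiplier-auto}. Alternatively, I would cite the unitary soundness of the corresponding lifted rules, established by path sums.

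\emph{Derivability.} Every box relation in \cref{app:relations}, and every instance of \eqref{eq:identities}, must be derivable from $\mathcal{R}_{18}$ modulo $\equiv_s$. Here the derived generators are unfolded via \cref{fig:derived-generators2}. Given this, if $C_1 \equiv_s C_2$, then appending the identity normal form \eqref{eq:n-qudit-identity-normal} and normalising each circuit by \cref{prop:normalization} yields symplectic normal forms $N_1, N_2$ with $\llbracket N_1\rrbracket_s=\llbracket N_2\rrbracket_s$. Uniqueness in \cref{prop:normal-form} then gives $N_1\equiv N_2$. Since every step is derivable from $\mathcal{R}_{18}$, we conclude $C_1\sim_{\mathcal{R}_{18}} C_2$.

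The derivations should be organised in layers. First, derive a toolkit of auxiliary lemmas from $\mathcal{R}_{18}$:
\begin{itemize}
\item the group laws $M_aM_b=M_{ab}$ for all $a,b\in\Z_p^*$, obtained from \eqref{eq:Mg-soundness} by writing $a=g^k$;
\item the commutation of $M_a$ with $S$ and $\CZ$ for arbitrary $a$, by iterating the $g$-versions;
\item the identity $H^2=M_{-1}$;
\item the order relations for $\CX$, $\SWAP$, and $S$;
\item the conjugation laws of $H$, $S$, $\CZ$, $\CX$ with $\SWAP$, together with naturality of $\SWAP$, via the SWAP rules and Yang--Baxter;
\item the $\CX$/$\CZ$ quasi-commutation, including the path-sum decomposition rule, which converts conjugated $S$ gates into $\CZ$ and local $S$ gates.
\end{itemize}
Second, express the parametric boxes of \cref{fig:Z-and-X-Normal-Boxes-imp} compactly in terms of multipliers, $S^a$, $H$, and $\CX$/$\CZ$ powers. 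Each box relation is then a statement about commuting a single generator past such a compact expression.

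Third, go through the $42$ relations by cases: single-qudit $H$ and $S$ on the $A$, $E$, $B$, $D$ boxes, then two-qudit $\CZ$ against single boxes, then $\CZ$ against pairs $BB$, $DD$ and the remote-$\CZ$ cases. Three-qudit situations reduce to the three-qudit rules via $\SWAP$ naturality.

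The main obstacle is the third step. The relations are parametrised by $a,b\in\Z_p$, and the case splits (e.g.\ $a=0$ versus $a\neq 0$) force symbolic manipulation of modular inverses inside exponents. My first attempt would be to normalise every parametric box to a canonical ``multiplier-sandwich'' shape, $M_a\,S^{c}\,H\,M_{a^{-1}}$ and similar. The box relations then follow from the multiplier commutation rules, with a uniform parameter rewrite instead of a separate computation for each $p$. The remaining hard cases are likely the two-box $\CZ$ relations, where one must show that the residual dirty gates match. For these I would compute symplectic matrices to guide the derivation, and rely on the supplement and the Agda development for the exhaustive checks.
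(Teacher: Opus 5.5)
Your proposal follows essentially the same route as the paper: the paper also reduces symplectic completeness to deriving the $42$ parametric box relations from the $18$ rules, relying on the normalisation procedure and the uniqueness of the symplectic normal form, and it likewise leaves the exhaustive derivations to the Agda formalisation and the supplement. You also state explicitly two things the paper leaves implicit, namely soundness up to Paulis and the derivability of \eqref{eq:identities}; that is a small, correct refinement.
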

\begin{proof}
It suffices to show that these $18$ relations imply the full set of $42$ parametric box relations. All proofs have been formalised, and the corresponding code is available in the GitHub repository~\cite{qupitgit}. Representative examples of this relation reduction, together with the derived relations used as intermediate lemmas, are given in the supplement~\cite{Supplement2026}.
\end{proof}

\subsection{From Symplectic to Unitary Completeness}
\label{sec:completenessv1}

In \cref{subsec:symplectic-completeness}, we showed that Clifford circuits corresponding to the same symplectic matrix can be rewritten into the same normal form using the relations of \cref{fig:rewriterules2}. To lift this result to unitary completeness, we account for the Pauli corrections that are not visible at the symplectic level. By combining symplectic completeness with the Pauli normal form from \cref{subsec:phase-free-normal-form-pauli}, we then obtain a complete and sound rule set of Clifford circuits for the unitary interpretation. 

Our construction will follow a generic group-theoretic approach: given a group $G$ with a normal subgroup $N$ and a presentation of both $N$ and the quotient group $G/N$, one can construct a presentation of $G$ itself~\cite[Proposition~2.55]{holt2005handbook}. This framework applies directly in our setting, since the Pauli group $\Pauli_n$ is a normal subgroup of the Clifford group $\Clifford_n$, and the quotient group $\Clifford_n / \Pauli_n$ is isomorphic to the symplectic group $\symplectic{2n,\Z_p}$. Since, up to global phases, $\Clifford_n$ is a semidirect product of $\symplectic{2n,\Z_p}$ and the Paulis (\cref{thm:semidirect}), lifting symplectic relations to unitary Clifford relations amounts to accounting for the associated Pauli relations, which leads to a particularly simple form of the general construction.

We first recall some notation from \cref{subsec:assemble}. We will use $A;B$ to denote the sequential composition of gates in diagrammatic order, meaning that $A$ is applied first and then $B$. 
For a Clifford circuit $C$, we write $\llbracket C\rrbracket_p$ for its projective unitary representation which ignores the global phase, and $\llbracket C\rrbracket_s$ for its interpretation as a symplectic matrix, i.e.~where we ignore Paulis.
In particular, in the symplectic representation, Paulis are all equal to the identity: $\llbracket P \rrbracket_s = \text{id}$. Recall that if two circuits are equal in the unitary interpretation, then they are also equal in the symplectic interpretation.

\begin{lemma}\label{lem:symplectic-Pauli-correction}
  Let $C_1$ and $C_2$ be two qudit Clifford circuits with equal symplectic representation: $\llbracket C_1\rrbracket_s = \llbracket C_2\rrbracket_s$. Then there exists a unique Pauli $P$ (unique up to a global phase) such that $\llbracket C_1\rrbracket_p = \llbracket C_2;P\rrbracket_p$.
\end{lemma}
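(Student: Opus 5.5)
Write $U_1=\llbracket C_1\rrbracket_u$ and $U_2=\llbracket C_2\rrbracket_u$, both elements of $\Clifford_n$. The plan is to work with the operator $U_2^\dagger U_1$ and show it is a Pauli up to phase. Note the composition convention: $C_2;P$ means $C_2$ first and then $P$, so $\llbracket C_2;P\rrbracket_u = P\,U_2$. We therefore want a Pauli $P$ with $U_1 \equiv_p P U_2$, that is, $U_1U_2^\dagger \in \Pauli_n$ up to a global phase.

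\textbf{Existence.} Since $\pi_s$ is a group homomorphism and $\pi_s(U_1)=\pi_s(U_2)$, we get $\pi_s(U_1U_2^\dagger)$ trivial. Hence $U_1U_2^\dagger \in \Pauli_n$, because $\Pauli_n$ is the kernel of $\pi_s:\Clifford_n\to\Clifford_n/\Pauli_n$. Setting $P := U_1U_2^\dagger$ gives $U_1 = PU_2 = \llbracket C_2;P\rrbracket_u$, and in particular $\llbracket C_1\rrbracket_p=\llbracket C_2;P\rrbracket_p$. If $P$ should be given as a circuit, \cref{lem:pauli-normal-phasefree} provides a Pauli normal form circuit with interpretation $P$ up to phase.

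\textbf{Uniqueness up to phase.} Suppose $P,P'\in\Pauli_n$ both satisfy $PU_2 \equiv_p P'U_2 \equiv_p U_1$. Then $PU_2=\lambda P'U_2$ for some $\lambda\in\U(1)$. Multiplying on the right by $U_2^\dagger$ gives $P=\lambda P'$, so $P$ and $P'$ agree up to a global phase. By \cref{lem:qupit-Clifford-phase}, moreover $\lambda\in\langle-\omega\rangle$, so $P$ and $P'$ define the same element of $\widehat{\Pauli}_n$.

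\textbf{Main obstacle.} The argument is mostly bookkeeping. The point needing care is that the kernel of $\pi_s$ is exactly $\Pauli_n$ under the identification of \cref{prop:symplectic-quotient}, rather than merely Paulis up to phase. This holds because $\pi_s$ is by definition the canonical projection onto $\Clifford_n/\Pauli_n$, and $\Pauli_n$ contains all Clifford phases, since $\Clifford_n\cap\U(1)=\langle-\omega\rangle\subseteq\Pauli_n$ by \cref{lem:qupit-Clifford-phase}. If one instead prefers to start from equal symplectic action on $\Z_p^{2n}$, the alternative route is the remark before \cref{prop:symplectic-quotient}. There $C\bullet Q=\mu_Q(D\bullet Q)$ for every Pauli $Q$ gives $C=\lambda DQ'$ for a Pauli $Q'$ determined by the phases $\mu_Q$, via \cref{prop:scalars}. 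Conjugating, $DQ'D^\dagger$ is again a Pauli, which recovers $U_1 = \lambda P U_2$ with $P=DQ'D^\dagger$.
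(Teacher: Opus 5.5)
Your proof is correct and is essentially the argument the paper relies on. The paper gives no separate proof; it uses the observation after \cref{def:semanticrelations} that $C_1\equiv_s C_2$ forces the two unitaries to differ by an element of $\Pauli_n=\ker\pi_s$, together with the remark opening \cref{subsec:symplectic}, which is exactly your alternative route via \cref{prop:scalars}. Your handling of the composition convention ($\llbracket C_2;P\rrbracket_u = PU_2$, which is harmless since $\Pauli_n$ is normal) and your uniqueness-by-cancellation step are both fine.
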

In other words: given two Cliffords implementing the same symplectic map, we can find a unique ``Pauli correction'' that makes the Clifford unitaries equal up to a global phase.

\begin{definition}
  Let $\mathcal{R}_s := \{C_l^{(i)} \sim_{\mathcal{R}_s} C_r^{(i)}\}$ be a set of rewrite rules of Clifford circuits. We say it is \emph{symplectically complete} if it is \emph{sound} with respect to the symplectic interpretation, $\llbracket C_l^{(i)}\rrbracket_s = \llbracket C_r^{(i)}\rrbracket_s$ for all $i$, and \emph{complete} for this interpretation: given circuits $C_1$ and $C_2$ such that $\llbracket C_1\rrbracket_s = \llbracket C_2\rrbracket_s$, we can find a sequence of rewrites from $\mathcal{R}_s$ that rewrites $C_1$ to $C_2$.
\end{definition}

\begin{definition}
  Let $\mathcal{R}_p$ be a set of rewrite rules of Clifford circuits. We say it is \emph{Pauli complete} if it is sound for the projective interpretation, and it suffices to prove the commutation and cancellation identities for Paulis as well as how to push Paulis through all the Clifford generators. That is, $\mathcal{R}_p$ must prove the following equations:
  \begin{itemize}
      \item $Z;X \sim_{\mathcal{R}_p} X;Z$
      \item $Z^p \sim_{\mathcal{R}_p} X^p \sim_{\mathcal{R}_p} \text{id}$
      \item $Z;H \sim_{\mathcal{R}_p} H;X^{-1}$ and $X;H \sim_{\mathcal{R}_p} H;Z$
      \item $Z;S \sim_{\mathcal{R}_p} S;Z$ and $X;S \sim_{\mathcal{R}_p} S;Z;X$
      \item $(Z\otimes I);CZ \sim_{\mathcal{R}_p} CZ;(Z\otimes I)$
      \item $(X\otimes I);CZ \sim_{\mathcal{R}_p} CZ;(X\otimes Z)$
  \end{itemize}
  \label{def:pauli-relations}
\end{definition}

\begin{proposition}\label{prop:Pauli-complete}
    The rewrite rules of \cref{fig:rewriterules6} are Pauli complete.
\end{proposition}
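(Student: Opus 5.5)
To prove \cref{prop:Pauli-complete}, we split the claim into its two parts: soundness for the projective interpretation, and derivability of each equation listed in \cref{def:pauli-relations}.

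\textbf{Soundness.} Each rule in \cref{fig:rewriterules6} is already sound for the unitary interpretation, by the path-sum computations referenced in \cref{subsec:soundness}. Since $C_1\equiv_u C_2$ implies $C_1\equiv_p C_2$, the rules are a fortiori sound for $\llbracket\cdot\rrbracket_p$. It therefore remains to derive the Pauli equations. Throughout, we work modulo scalars, so any occurrence of the scalar $-\omega$ may be moved or ignored. We unfold $X$ and $Z$ via \eqref{eq:def-X} and \eqref{eq:def-Z}, namely $Z=H^2SH^2S^{-1}$ and $X=HSH^2S^{-1}H$.

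\textbf{Single-qudit equations.} We take them in the following order.
\begin{enumerate}
\item First, $Z;S\sim S;Z$ follows directly from \eqref{eq:SH2-SH2-soundness}: $S$ commutes with $H^2SH^2$ and trivially with $S^{-1}$.
\item Next, we use \eqref{eq:H2-soundness} to replace $H^2$ by $M_{-1}$ up to a scalar. Combined with \eqref{eq:Mg-soundness}, which gives the group law on multipliers including $M_{-1}^2=\mathrm{id}$, this yields $H^4\sim\mathrm{id}$ projectively.
\item Together with $S^p\sim\mathrm{id}$ from \eqref{eq:S-soundness}, this gives $Z^p\sim\mathrm{id}$. Indeed, $Z=M_{-1}SM_{-1}S^{-1}$ and $M_{-1}S M_{-1}$ commutes with $S$, so $Z^p=M_{-1}S^pM_{-1}S^{-p}$, which reduces to $\mathrm{id}$.
\item The Hadamard equations $Z;H\sim H;X^{-1}$ and $X;H\sim H;Z$ are essentially definitional once $X$ is written as $HZ'H^{-1}$-type conjugates using $H^4=\mathrm{id}$. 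Here $Z'=SH^2S^{-1}H^2$, which is $Z$ up to reordering via \eqref{eq:SH2-SH2-soundness}.
\item Then $X^p\sim\mathrm{id}$ follows from $Z^p\sim\mathrm{id}$ by conjugating with $H$.
\item For $X;S\sim S;Z;X$ and $Z;X\sim X;Z$, we need the multiplier--$S$ interaction \eqref{eq:Mg-S-soundness}. Instantiating it at $M_{-1}=M_g^{(p-1)/2}$ gives how $H^2$ passes through $S$, namely $M_{-1}SM_{-1}=S\cdot(\text{a }Z\text{ power})$. Combined with the Euler-type decomposition of $H$ derivable from \eqref{eq:H2-soundness}, this yields the required quasi-commutations.
\end{enumerate}

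\textbf{Two-qudit equations.} For $(Z\otimes I);\CZ\sim\CZ;(Z\otimes I)$, we push $S$ through $\CZ$ using \eqref{eq:CZ-S-soundness}, and push $M_{-1}$ (equivalently $H^2$) using \eqref{eq:CZ-Mg-soundness}, again specialising to the power $(p-1)/2$ of $g$. Since $M_{-1}\otimes I$ conjugates $\CZ$ to $\CZ^{-1}$, the conjugation cancels in $Z=M_{-1}SM_{-1}S^{-1}$. For $(X\otimes I);\CZ\sim\CZ;(X\otimes Z)$, we write $X=HZ^{\pm1}H^{-1}$ and use \eqref{eq:CX-CZ-soundness}, which expresses $\CX$ as $\CZ$ conjugated by $H$ together with the local $S$-phase decomposition. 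Pushing $S\otimes I$ through $\CX$ then produces exactly the extra $Z$ on the target wire. Since only finitely many rule instances (with explicit exponents) are used, each step is a concrete finite rewrite; these steps can be, and were, checked in Agda.

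\textbf{Main obstacle.} I expect the hardest steps to be $X;S\sim S;Z;X$ and the $X$--$\CZ$ rule. Both require controlling the exponents produced by \eqref{eq:Mg-S-soundness} and \eqref{eq:CX-CZ-soundness} when specialising $g$ to $-1$. My plan is first to derive the general lemma $M_a;S\sim S^{a^2};(\text{Pauli }Z\text{-power})$ as an intermediate result, by induction on $k$ with $a=g^k$, and then to read off all the required commutations from it.
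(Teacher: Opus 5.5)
\textbf{The proposal has a genuine gap at the two equations $X;S\sim S;Z;X$ and $Z;X\sim X;Z$.} Your treatment of soundness, of $Z;S\sim S;Z$, $H^4\sim\mathrm{id}$, $Z^p\sim\mathrm{id}$, $X^p\sim\mathrm{id}$ and of $(Z\otimes I);\CZ\sim\CZ;(Z\otimes I)$ follows the paper. But those two equations, which the paper identifies as the only nontrivial ones, are asserted rather than derived, and the tool you name for them gives nothing new. Iterating \eqref{eq:Mg-S-soundness} up to $M_{-1}=M_g^{(p-1)/2}$ can only yield $M_{-1}SM_{-1}\sim ZS$. After \eqref{eq:H2-soundness} this reads $H^2SH^2\sim ZS$ up to a scalar, which is literally the definition $Z=H^2SH^2S^{-1}$ of \eqref{eq:def-Z}. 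Your fallback lemma $M_a;S\sim S^{a^2};Z^{c_a};M_a$ has the same problem: it only describes $S$ moving past multipliers, and $X=H^{-1}ZH$ is not a multiplier, so nothing about $X;S$ can be read off from it.

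Some relation tying $S$ to a single $H$ is indispensable. Modulo scalars, $H^4$, $S^p$ and \eqref{eq:SH2-SH2-soundness} present the amalgamated product $\bigl((\mathbb{Z}_p\times\mathbb{Z}_p)\rtimes\mathbb{Z}_2\bigr)*_{\mathbb{Z}_2}\mathbb{Z}_4$, where $H^2$ acts by swapping $S$ and $H^2SH^2$. In that group, $SXS^{-1}=S\cdot H\cdot SH^2S^{-1}\cdot H\cdot S^{-1}$ and $XZ=H\cdot SH^2S^{-1}\cdot H^3\cdot SH^2S^{-1}$ are reduced words of lengths $5$ and $4$, hence distinct. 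The identity $H^2SH^2=ZS$ holds there by definition and adds nothing.

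\textbf{The missing ingredient is one you already use implicitly.} \eqref{eq:Mg-soundness} at $k=0$ says $M_1$ is the empty circuit; this is exactly what gives you $M_{-1}^2\sim\mathrm{id}$. Unfolding $M_1$ via \eqref{eq:def-Mg} with $a=1$, where both Pauli exponents vanish, gives $H;S;H;S;H;S\sim\mathrm{id}$ up to a scalar, i.e.\ $H;S;H;S\sim S^{-1};H^3$.
\begin{itemize}
\item For $X;S$: this gives $X;S\sim H;S^{-1};H;(H;S;H;S)\sim H;S^{-1};H;S^{-1};H^3$. Writing $S;Z;X$ as $S';X$ with $S'=H^2;S;H^2$, and using \eqref{eq:SH2-SH2-soundness} together with the same relation, reduces it to that same word.
\item For $Z;X$: here the Euler decomposition must be made explicit. \eqref{eq:def-Mg} at $a=-1$ together with \eqref{eq:H2-soundness} gives $X;Z^{-1}\sim S;H^3;S;H^3;S;H$. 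Then $Z;X;Z^{-1};X^{-1}$ collapses to the identity, the final step again being $H;S;H;S;H;S\sim\mathrm{id}$.
\end{itemize}

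\textbf{Two smaller slips.}
\begin{itemize}
\item With $S'=H^2SH^2$, the word $SH^2S^{-1}H^2=SS'^{-1}$ is $Z^{-1}$, not $Z$, so $X=HZ^{-1}H^{-1}$. This gives $Z;H\sim H;X^{-1}$ at once. The equation $X;H\sim H;Z$ additionally needs $H^2ZH^2\sim Z^{-1}$, which is immediate from $H^4\sim\mathrm{id}$.
\item \eqref{eq:CX-CZ-soundness} does not present $\CX$ as an $H$-conjugate of $\CZ$; that is \eqref{eq:def-CX}. Instead it writes $\CZ$ as $\CX$ conjugating $I\otimes S$, with local $S^{-1}$ corrections. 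Your sketch of the $(X\otimes I);\CZ$ case should be rebuilt around that form.
\end{itemize}
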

\begin{proof}
    For visual clarity in this proof we will write `=' to denote rewriting using the rules of \cref{fig:rewriterules6}.
    There are only two nontrivial equations we need to show follow from \cref{fig:rewriterules6}, $Z;X = X;Z$ and $X;S = S;Z;X$. We first prove all the other ones.
    % $Z;X = X;Z$ follows directly from (C6).

    Define $S' := H^2;S;H^2$ so that $Z=S^{-1};S'$ and \eqref{eq:SH2-SH2-soundness} simplifies to $S';S = S; S'$. Note that combining \eqref{eq:S-soundness} and $H^4 = \text{id}$ we get $(S')^p = \text{id}$.

     $Z^p = 1$ follows from repeated application of \eqref{eq:SH2-SH2-soundness} to rewrite $Z^p = (S';S)^p = (S'^p);S^p = \text{id}$. Since we have $X=H;Z;H^3$ we then also get $X^p = \text{id}$, $Z;H = H;X^{-1}$ and $X;H = H;Z$. 

     $Z;S = S;Z$ follows directly from \eqref{eq:SH2-SH2-soundness}. To prove $(Z\otimes I);CZ = CZ;(Z\otimes I)$ first note that $(S'\otimes I);CZ = CZ;(S'\otimes I)$ by using \eqref{eq:H2-soundness} to convert the $H^2$ in $S'$ to $M_{-1}$, \eqref{eq:CZ-Mg-soundness} to push this through the CZ, and then \eqref{eq:CZ-S-soundness} to push the $S$, then pushing another $M_{-1}$ through CZ using \eqref{eq:CZ-Mg-soundness}, and finally converting these back to $H^2$ using \eqref{eq:H2-soundness}. Then $(Z\otimes I);CZ = CZ;(Z\otimes I)$ easily follows as well. That $(X\otimes I);CZ = CZ;(X\otimes Z)$ follows similarly, but using \eqref{eq:CX-CZ-soundness} instead.

     Next, we prove $X;S = S;Z;X$. From \eqref{eq:Mg-soundness} with $k=0$ we get $M_1 = \text{id}$. Combining this with definition \eqref{eq:def-Mg} for $a=1$ we get $\text{id} = H;S;H;S;H;S$ which we can rewrite to $H;S;H;S = S^{-1};H^3$. We can then write $X;S = H;S^{-1};H;(H;S;H;S) = H;S^{-1};H;S^{-1};H^3$. On the other hand, using $S;Z=S'$ we get 
     \begin{align*}
         S;Z;X \ &=\  S';X \ =\  H^2;S;H^2;H;S^{-1};H^2;S;H \\
         &= \ H^2;S;H;(S')^{-1};S;H \\ 
         &= \ H^2;S;H;S;(S')^{-1};H \\
         &= \ H;(H;S;H;S);H^2;S^{-1};H^3 \\
         &= \ H;S^{-1};H^3;H^2;S^{-1};H^3 \\
         &= \ H;S^{-1};H;S^{-1};H^3,
     \end{align*}
     which matches the expression we found for $X;S$.

     Finally, to prove $Z;X = X;Z$ we will actually prove $Z;X;Z^{-1};X^{-1} = \text{id}$ which suffices. Using \eqref{eq:def-Mg} with $a=-1$ and \eqref{eq:H2-soundness} gives us $H^2 = M_{-1} = H;S^{-1};H;S^{-1};H;S^{-1};X;Z^{-1}$ where for the Pauli we get $(1-a)/2 = 1$ and $(1-a)/(2a) = -1$. Bringing everything to the other side gives us $X;Z^{-1} = S;H^3;S;H^3;S;H$. Further using $Z=H^2;S;H^2;S^{-1}$ and $X^{-1} = H^3;S^{-1};H^2;S;H^3$ we calculate:
     \begin{align*}
         Z;(X;Z^{-1});X^{-1} \ &= \ H^2;S;H^2;S^{-1};S;H^3;S;H^3;S;H;H^3;S^{-1};H^2;S;H^3 \\
         &= \ H^2;S;H^2;H^3;S;H^3;S;S^{-1};H^2;S;H^3 \\
         &= \ H^2;S;H;S;H^3;H^2;S;H^3 \\
         &= \ H^2;S;H;S;H;S;H^3 \\
         &= \ H;(H;S;H;S;H;S);H^3 \\
         &= \ H;H^3 \\
         &= \ \text{id} \qedhere
     \end{align*}
\end{proof}

\begin{proposition}\label{prop:combine}
  Let $\mathcal{R}_s := \{C_l^{(i)} \sim_{\mathcal{R}_s} C_r^{(i)}\}$ be a symplectically complete set of rewrite rules and $\mathcal{R}_p$ a Pauli complete set of rewrite rules. Then we can efficiently find a modified set of rewrite rules $\mathcal{R}'_s := \{C_l^{(i)} \sim_{\mathcal{R}'_s} C_r^{(i)};P^{(i)}\}$ where the $P^{(i)}$ are uniquely defined Paulis such that $\mathcal{R}'_s$ is sound in the projective interpretation and $\mathcal{R}'_S \cup \mathcal{R}_p$ is sound and complete for the projective interpretation.
\end{proposition}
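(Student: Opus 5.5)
The proof has three parts: define the corrected rules, check soundness, and prove completeness by reducing every circuit to a normal form.

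\emph{Construction and soundness.} Fix a rule $C_l^{(i)} \sim_{\mathcal{R}_s} C_r^{(i)}$. Since $\mathcal{R}_s$ is sound for the symplectic interpretation, $\llbracket C_l^{(i)}\rrbracket_s=\llbracket C_r^{(i)}\rrbracket_s$. By \cref{lem:symplectic-Pauli-correction} there is a Pauli $P^{(i)}$, unique up to global phase, with $\llbracket C_l^{(i)}\rrbracket_p=\llbracket C_r^{(i)};P^{(i)}\rrbracket_p$. We take $P^{(i)}$ to be its Pauli normal form from \cref{lem:pauli-normal-phasefree}; this pins it down uniquely.

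For efficiency, note that each rule involves at most three qudits. The tableau of $\llbracket C_l^{(i)}\rrbracket_u^\dagger\llbracket C_r^{(i)}\rrbracket_u$ can therefore be computed by conjugating the Pauli generators $\mathcal{B}_n$ through the gates with \cref{lem:automorphism,lem:automorphism-CZ}. The resulting phases determine the exponents of $P^{(i)}$, because $Q\bullet P=\mu_P P$ fixes $Q$ up to phase. With this choice every rule of $\mathcal{R}'_s$ is sound for $\llbracket\cdot\rrbracket_p$, and every rule of $\mathcal{R}_p$ is sound by assumption. Since soundness is preserved by the congruence closure, $\mathcal{R}'_s\cup\mathcal{R}_p$ is sound.

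\emph{Pauli-pushing lemma.} Write $\mathcal{R}=\mathcal{R}'_s\cup\mathcal{R}_p$. The first key step is to show that for every circuit $D$ and every Pauli circuit $P$ there is a Pauli circuit $P'$ with $P;D\sim_{\mathcal{R}} D;P'$. This follows by induction on $D$, using the pushing equations of \cref{def:pauli-relations} for $H$, $S$ and $\CZ$ together with locality. Inverses of Paulis are expressed as positive powers using $Z^p\sim\text{id}$ and $X^p\sim\text{id}$. Second, using the commutation $Z;X\sim X;Z$ (projectively) and the cancellations, every Pauli circuit rewrites to its Pauli normal form. Global phase gates are simply ignored here, since we work projectively; this is legitimate if the scalar $-\omega$ commutes freely, which is structural.

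\emph{Completeness.} I claim that every circuit $C$ satisfies $C\sim_{\mathcal{R}} N_S;N_P$, its Clifford normal form from \cref{prop:compose}. Let $N_S$ be the symplectic normal form of $C$. By symplectic completeness there is a rewrite sequence $C=D_0\to D_1\to\cdots\to D_k=N_S$ in $\mathcal{R}_s$. I lift it by induction, maintaining $C\sim_{\mathcal{R}} D_j;Q_j$ for some Pauli circuit $Q_j$, starting with $Q_0$ empty. Suppose $D_j=E;C_l^{(i)};F$ becomes $D_{j+1}=E;C_r^{(i)};F$. (A rule used right-to-left is handled the same way.) Apply the corrected rule to get
$$E;C_r^{(i)};P^{(i)};F;Q_j.$$
Then use the pushing lemma to move $P^{(i)}$ through $F$, yielding $D_{j+1};Q_{j+1}$. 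Tensoring with identity wires is handled by locality. Finally, normalise $Q_k$ to a Pauli normal form $N_P'$. Soundness gives $\llbracket N_S;N_P'\rrbracket_p=\llbracket C\rrbracket_p$, so uniqueness in \cref{prop:compose} gives $N_P'\equiv N_P$. Two projectively equal circuits therefore share a normal form, and completeness follows.

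The main obstacle is the case where a symplectic rule is applied right-to-left. There the inverse rule would be $C_r^{(i)}\sim C_l^{(i)};(P^{(i)})^{-1}$. To obtain it, I would apply the corrected rule and then cancel $P^{(i)};(P^{(i)})^{-1}$ using $\mathcal{R}_p$, writing $(P^{(i)})^{-1}$ as $(P^{(i)})^{p-1}$.
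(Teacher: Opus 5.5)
Your proof is correct. It uses the same mechanism as the paper: correct each symplectic rule by the Pauli from \cref{lem:symplectic-Pauli-correction}, replay a symplectic derivation with the corrected rules, and use $\mathcal{R}_p$ to push each stray Pauli to the output.

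The difference is where the replayed derivations end.
\begin{itemize}
\item \emph{The paper} replays the derivation from $C_1$ directly to $C_2$. It then simply asserts that the accumulated Pauli circuit, being projectively trivial, can be erased by $\mathcal{R}_p$.
\item \emph{You} replay a derivation from each circuit to its symplectic normal form $N_S$. You then normalise the accumulated Pauli using $Z;X\sim X;Z$ and the order relations, and appeal to uniqueness in \cref{prop:compose}.
\end{itemize}

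The paper's version is more generic: it needs no Clifford normal form at all, only a symplectically complete $\mathcal{R}_s$. Yours leans on \cref{prop:normal-form,lem:pauli-normal-phasefree,prop:compose}. In return, it justifies the step the paper leaves implicit. Your Pauli-normalisation argument is exactly what is needed to erase a projectively trivial Pauli circuit. With it you could equally have lifted $C_1\to C_2$ directly and skipped the normal-form detour.

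You also treat right-to-left applications of a symplectic rule explicitly, via
$C_r^{(i)}\sim C_r^{(i)};P^{(i)};(P^{(i)})^{-1}\sim C_l^{(i)};(P^{(i)})^{-1}$.
The paper's proof only considers left-to-right uses of a rule.

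One small caveat concerns efficiency. Your argument relies on each rule acting on at most three qudits. That holds for \cref{fig:rewriterules2}, but it is not a hypothesis of the proposition. In general, efficiency instead follows from phase-tracking stabiliser-tableau simulation, which is polynomial in the size of the rule.
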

\begin{proof}
  By Lemma~\ref{lem:symplectic-Pauli-correction} we can find unique Paulis to make the equations of $\mathcal{R}_s$ sound in the projective interpretation by changing them to $\mathcal{R}'_s := \{C_l^{(i)} \sim_{\mathcal{R}'_s} C_r^{(i)};P^{(i)}\}$. Hence the equations $\mathcal{R}'_S \cup \mathcal{R}_p$ are indeed all sound for the projective interpretation. 
  Now, to prove completeness let us assume we have two circuits $C_1$ and $C_2$ that are equal in the projective interpretation: $\llbracket C_1\rrbracket_p = \llbracket C_2\rrbracket_p$. We need to find a set of rewrites from $\mathcal{R}'_S \cup \mathcal{R}_p$ that rewrites $C_1$ into $C_2$.
  
  Note that $C_1$ and $C_2$ are also necessarily equal in the symplectic interpretation. Hence, there is a series of rewrite rules from $\mathcal{R}_s$ that rewrites $C_1$ to $C_2$ in a symplectically sound way. Let us consider the first two rewrites in this sequence: $C_1 \rightsquigarrow C_1^{'} \rightsquigarrow C_1^{''}$ and suppose the rewrites used were $C_l^{(i)} \sim_{\mathcal{R}_s} C_r^{(i)}$ and $C_l^{(j)} \sim_{\mathcal{R}_s} C_r^{(j)}$. These rewrites were potentially applied locally in some larger context. 

  Now instead of $C_l^{(i)} \sim_{\mathcal{R}_s} C_r^{(i)}$ we must use the projectively sound modification $C_l^{(i)} \sim_{\mathcal{R}'_s} C_r^{(i)}; P^{(i)}$. The \LHS is still the same, and hence the rule still matches to $C_1$, but now we do not end up with $C_1^{'}$, but instead a modified circuit that has $P^{(i)}$ appearing at the end of the application of $C_r^{(i)}$. Using the rules from the Pauli-complete $\mathcal{R}_p$ we can now push $P^{(i)}$ further to the right in the modified $C_1^{'}$ past all the generators. This changes $P^{(i)}$ to a potentially different Pauli, but otherwise does not change any of the structure of $C_1^{'}$. At the end we hence end up with a circuit $C_1^{'};Q^{(i)}$ where the $Q^{(i)}$ is uniquely determined by $P^{(i)}$. We hence see that our rewrite rules prove $C_1 \rightsquigarrow C_1^{'};Q^{(i)}$.

  Now, the second rewrite rule $C_l^{(j)} \sim_{\mathcal{R}_s} C_r^{(j)}$ is modified to the projectively sound $C_l^{(j)} \sim_{\mathcal{R}'_s} C_r^{(j)};P^{(j)}$. The \LHS is still the same, and we still have $C_1^{'}$ (the only change is that it is composed with $Q^{(i)}$, but this does not change the fact whether the \LHS matches or not). Hence, we can apply this modified rewrite rule to $C_1^{'};Q^{(i)}$. This then results in a modified $C_1^{''}$ where there is the Pauli $P^{(j)}$ composed after the matching $C_r^{(j)}$. In the same way as above we can push this to the end of the circuit to end up with $C_1^{''};Q^{(j)}; Q^{(i)}$. 

  Repeating this procedure for all the rewrite rules from $\mathcal{R}_s$ that bring us from $C_1$ to $C_2$ gives us a set of now projectively sound rewrites from $\mathcal{R}'_S \cup \mathcal{R}_p$ that rewrites $C_1$ to $C_2; P$ where $P$ is some big collection of Paulis. Now, because we know that $C_1$ and $C_2$ are projectively equal, we must have $\llbracket P\rrbracket_u = \text{id}$, and so $\mathcal{R}_p$ must be able to rewrite $P$ to the identity (empty) circuit. This then gives us a rewrite from $C_1$ to $C_2; P$ and finally to $C_2$.
\end{proof}

This statement allows us to extend a symplectically complete set of rewrites, like that of \cref{prop:reduced-relations}, to a projectively complete set of rewrite rules. 

\begin{theorem}
    \completeness
    % The rewrite rules in \cref{fig:rewriterules3} are \emph{complete} for $n$-qudit Clifford circuits over any qudit dimension $p$ that is an odd prime. That is, given two $n$-qudit unitary Clifford circuits $C_1$ and $C_2$ implementing the same linear map, there is a sequence of rewrites from \cref{fig:rewriterules3} that proves $C_1$ and $C_2$ are equal.
    \label{thm:completeness}
\end{theorem}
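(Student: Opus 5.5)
The proof lifts symplectic completeness to unitary completeness in three stages: symplectic, then projective, then exact. First I would apply \cref{prop:combine} with $\mathcal{R}_s$ the $18$ relations of \cref{fig:rewriterules2}, which are symplectically complete by \cref{prop:reduced-relations}, and with $\mathcal{R}_p$ the rules of \cref{fig:rewriterules6}, which are Pauli complete by \cref{prop:Pauli-complete}.

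This yields a projectively complete system $\mathcal{R}'_s\cup\mathcal{R}_p$, where each symplectic rule $C_l^{(i)}\sim C_r^{(i)}$ is replaced by $C_l^{(i)}\sim C_r^{(i)};P^{(i)}$. The Pauli $P^{(i)}$ is the unique correction given by \cref{lem:symplectic-Pauli-correction}. Next, I would show that every such Pauli-corrected rule is derivable from \cref{fig:rewriterules6} up to a global phase. For most of the $18$ rules the two systems coincide up to Pauli and phase decorations; for example, \eqref{eq:H2-soundness} versus $H^2=M_{-1}$, and the $Z$ or $X$ factors appearing in \eqref{eq:def-Mg} and \eqref{eq:CX-CZ-soundness}. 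In these cases the derivation is a direct reading of the corresponding rule, followed by pushing Paulis through generators using the Pauli-completeness equations. Any remaining symplectic rules (e.g.\ those absorbed into derived-generator definitions) would be derived explicitly, as in the supplement and the Agda development~\cite{qupitgit,bian2026verified}. This establishes that \cref{fig:rewriterules6} is complete up to global phase.

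Second, I would lift to exact equality by tracking scalars. Suppose $\llbracket C_1\rrbracket_u=\llbracket C_2\rrbracket_u$. By the projective stage, $C_1$ rewrites to $(-\omega)^t;C_2$ for some $t$, where each projectively sound step is replaced by its exact version carrying a scalar $(-\omega)^{t_i}$. This is possible because every phase discrepancy between Clifford circuits lies in $\langle-\omega\rangle$ by \cref{lem:qupit-Clifford-phase}. The rules of \cref{fig:rewriterules6} are already stated with exact scalars, and their soundness is checked by path sums, so each rewrite used in the projective derivation is a genuine equation with an explicit power of $-\omega$. Scalars commute with everything, so they can be collected at one end of the circuit. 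Soundness then forces $(-\omega)^t=1$ as an operator. The scalar order rule \eqref{eq:omega-soundness}, $(-\omega)^{2p}=1$, reduces the collected circuit $(-\omega)^t$ to the empty circuit exactly when $t\equiv 0 \pmod{2p}$, which soundness guarantees.

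The main obstacle is the first step: verifying that the Pauli-and-phase-corrected versions of all $18$ symplectic relations (and hence the $42$ box relations) follow from the $16$ rules. For this I would first compute each correction $P^{(i)}$ by pushing Paulis through the derived-generator definitions, and then derive the corrected rule. I would rely on the formal Agda verification for the projective part. For the scalar bookkeeping, I would check by determinant or path-sum arguments that no stray phase other than powers of $-\omega$ appears.
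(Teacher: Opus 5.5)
Your proposal follows the paper's proof. Symplectic completeness (\cref{prop:reduced-relations}) and Pauli completeness (\cref{prop:Pauli-complete}) are combined via \cref{prop:combine} into projective completeness, and this is lifted to exact completeness using \eqref{eq:omega-soundness} together with the fact that the scalar $-\omega$ commutes with everything. The one difference is that you make explicit a step the paper leaves implicit when it calls \cref{fig:rewriterules6} symplectically complete: that the Pauli-corrected versions of the $18$ rules of \cref{fig:rewriterules2}, including those not literally among the $16$, are derivable from \cref{fig:rewriterules6} modulo phase. Like the paper, you delegate that check to the Agda development.
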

\begin{proof}
Soundness is established by \cref{prop:soundness}. For completeness, \cref{prop:reduced-relations} shows the rules are symplectically complete and \cref{prop:Pauli-complete} shows they are Pauli complete. Hence, by \cref{prop:combine} they are projectively complete. As in the proof of \cref{prop:combine}, we can apply a similar procedure to boost this to a unitarily complete set of rewrite rules: We have $\widehat{\Clifford}_n \cong \Clifford_n/\langle-\omega\rangle$ where $\langle-\omega\rangle$ is a normal subgroup of $\Clifford_n$. Hence, it suffices to find a complete set of rewrite rules for the group $\langle-\omega\rangle$ and for ``pushing'' through the generators of $\Clifford_n$ and then we can boost a complete set of rewrite rules of $\widehat{\Clifford}_n$ to a complete set of rewrite rules of $\Clifford_n$. A complete set of rewrite rules for $\langle-\omega\rangle$ is easily found: we only need $(-\omega)^{2p} = 1$, which is \eqref{eq:omega-soundness}. Pushing $-\omega$ through generators is equally trivial since it commutes with everything.
% By direct computation, the rewrite rules in \cref{fig:rewriterules3} imply all Pauli relations listed in \cref{def:pauli-relations}. By \cref{prop:combine}, the rules in \cref{fig:rewriterules3} form a complete rewrite system for $n$-qudit Clifford circuits.
\end{proof}

%% file: scripts/5-conclusion.tex
\section{Conclusion and Open Problems}
\label{sec:conclusion}
We found a complete and finite equational theory for $n$-qudit Clifford circuits in every odd prime dimension, by presenting a small set of local rewrite rules that suffices to determine equality of Clifford unitaries. Our approach proceeds by introducing a symplectic normal form for Clifford circuits, deriving relations at the symplectic level, and then lifting these relations to circuit identities by accounting for Pauli corrections.

An immediate open question is to extend this completeness framework beyond odd prime dimensions, in particular to prime-power or arbitrary dimensions, where the arithmetic over $\Z_p$ and the behaviour of Clifford phases are more intricate. More broadly, it would be interesting to develop compact circuit-level equational theories for larger circuit fragments, such as Clifford–permutation–dihedral circuits. In the meantime, it will be fruitful to extend our software infrastructure as a verified backend for automated optimisation and synthesis of qudit circuit families.

\paragraph{Acknowledgements}
The authors would like to thank Yuan Feng, Michele Mosca, Maris Ozols, and Peter Selinger for enlightening discussions. We are especially thankful to Colin Blake for pointing out that three relations in an earlier version of our rule set, namely the symmetry of $\CZ$ under $\SWAP$, the interaction between $\CZ$ and $\CX$ on two qudits, and the commutation of two $\CZ$ gates sharing a qudit, are derivable from the remaining rules. This allowed us to reduce \cref{fig:rewriterules6} from $19$ to $16$ relations.

The circuit diagrams in this paper were typeset using TikZiT~\cite{tikz}. The authors used automated language tools for editorial assistance. XB acknowledges support from the National Natural Science Foundation of China under Grant 92465202. JvdW acknowledges support from an NWO Veni grant. YZ~is supported by VILLUM FONDEN via QMATH Centre of Excellence grant number 10059 and Villum Young Investigator grant number 37532.

\paragraph{AI Disclosure}
All novel theoretical results in this paper were developed entirely by the authors, without AI assistance. The manuscript was written entirely by the authors, where AI assistance is limited to the activities described below. The Agda formalisation that box relations are derivable from relations similar to \cref{fig:rewriterules6} was completed without AI assistance, which is the most valuable part of the whole verification.

OpenAI GPT-5.6 and GPT-6, accessed through ChatGPT since June 2026, were used for editing suggestions. Anthropic Claude Fable 5.1 and later versions were used to resolve \LaTeX{} compilation issues and assist with Agda formalisation: composing box relations to merge gates into normal forms, defining the symplectic group, and establishing the equivalence, up to scalar, between the code's relations and \cref{fig:rewriterules6}. The formal verification is a small contribution of the paper, among which the part done by AI is mainly cosmetic. We take full responsibility for the correctness of the results.

%% file: scripts/appendix/sectiontwoproofs.tex
% !TEX root = ../../qudit-quantum.tex
\section{Proofs from \texorpdfstring{\cref{sec:foundations}}{Section~2}}
\label{sec:sectiontwoproofs}

In this appendix, we provide proofs that were omitted from \cref{sec:foundations} and document algebraic results related to the qudit Clifford group. Recall the summation of a geometric sequence for $a_n = a_1q^{n-1}$ when $q\neq 1$ is 
\begin{equation}
    S_n= \frac{a_1(1-q^n)}{1-q}.
\label{eq:sum-geo-sequence}
\end{equation}

\begin{lemma}
\label{lem:summation}
    Let $k,n \in \Z$ and $\omega$ be the primitive $n$-th root of unity, $\omega = e^{2\pi i/n}$.
    \begin{equation}
        \sum_{\ell = 0}^{n-1}\omega^{k\ell} = n\delta_n(k),\quad\text{where}\quad \delta_n(k) = \begin{cases}
        0, & k\not\equiv 0 \pmod{n}\\
        1, & k \equiv 0 \pmod{n}.
    \end{cases}
    \label{eq:summation}
    \end{equation}
\end{lemma}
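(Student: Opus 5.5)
The plan is to split on whether $k\equiv 0 \pmod{n}$ and handle each case directly. In both cases the only input is that $\omega=e^{2\pi i/n}$ is a primitive $n$-th root of unity, so that $\omega^m=1$ if and only if $n\mid m$.

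\emph{Case $k\equiv 0\pmod n$.} Write $k=nm$ with $m\in\Z$. Then each term satisfies $\omega^{k\ell}=(\omega^n)^{m\ell}=1$. Summing the $n$ terms for $\ell=0,\dots,n-1$ gives $n=n\delta_n(k)$.

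\emph{Case $k\not\equiv 0\pmod n$.} Set $q=\omega^k$. Because $\omega$ is primitive and $n\nmid k$, we have $q\neq 1$. The sum is then a geometric sequence with $a_1=1$ and ratio $q$, so by \eqref{eq:sum-geo-sequence} it equals $(1-q^n)/(1-q)$. Moreover $q^n=(\omega^n)^k=1$, so the numerator vanishes. The sum is therefore $0=n\delta_n(k)$.

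Negative values of $k$ need no separate treatment, since $\omega^{k}$ is defined for every integer $k$ and the divisibility argument above is unchanged. There is no real obstacle here. The only point that needs care is justifying $q\neq1$, which uses primitivity: if $\omega^k=1$, write $k=nt+r$ with $0\le r<n$; then $e^{2\pi i r/n}=1$, which forces $r=0$ and hence $n\mid k$, contradicting the case assumption.
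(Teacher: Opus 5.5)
Your proof is correct and follows the paper's argument: the same case split on whether $n \mid k$, with the sum equal to $n$ in the divisible case and vanishing by the geometric-series formula \eqref{eq:sum-geo-sequence} otherwise. Your explicit check that $\omega^k\neq 1$ via primitivity spells out a step the paper simply asserts.
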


\begin{proof}
When $k\equiv 0 \pmod{n}$, $\sum_{\ell = 0}^{n-1}\omega^{k\ell} = \sum_{\ell = 0}^{n-1}\omega^0 = n$. When $k\not\equiv 0 \pmod{n}$, $\omega^k \neq 1$. By \eqref{eq:sum-geo-sequence},
\[
\sum_{\ell = 0}^{n-1}\omega^{k\ell} = \frac{1\left(1-(\omega^k)^{n}\right)}{1-\omega^k}=0.\qedhere
\]
\end{proof}

The orthogonality relation \eqref{eq:summation} in \cref{lem:summation} is the key algebraic identity underpinning the computations throughout this section. Whenever we evaluate the action of a generator on a computational basis state, the resulting path sum collapses via this identity: every term vanishes except those whose exponent is congruent to $0$ modulo $p$, isolating the single surviving basis state. We therefore invoke \cref{lem:summation} repeatedly in what follows, for instance in establishing the actions of $H^2$ and of the derived generators $X$, $\CX$, and $\SWAP$ on the computational basis.

\subsection{Properties of the Qudit Pauli Generators}
\label{subsec:pauli-properties}

In this subsection, we study the algebraic properties of the qudit Pauli generators $X$ and $Z$, working directly from their definition on the computational basis given in \cref{def:pauli}. Starting from the actions $X\ket{j} = \ket{j+1}$ and $Z\ket{j} = \omega^j\ket{j}$, we derive their orders, their commutation relation, and the induced conjugation identities that will be used throughout the remainder of this section.

\begin{lemma}
    $X^p = Z^p = 1.$
\label{lem:Pauli-order}
\end{lemma}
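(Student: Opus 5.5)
The plan is to argue directly from the action on the computational basis in \cref{def:pauli}. Since $\{\ket{j} \mid 0\leq j\leq p-1\}$ is a basis of $\C^p$, two linear operators that agree on every $\ket{j}$ are equal. It therefore suffices to show $X^p\ket{j}=\ket{j}$ and $Z^p\ket{j}=\ket{j}$ for every $j\in\Z_p$.

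For $X$, we first show by induction on $k\in\N$ that $X^k\ket{j}=\ket{j+k}$, with addition in $\Z_p$. The case $k=0$ is trivial. For the step, $X^{k+1}\ket{j}=X\ket{j+k}=\ket{j+k+1}$ by \cref{def:pauli}. Taking $k=p$ gives $X^p\ket{j}=\ket{j+p}=\ket{j}$, since $p\equiv 0 \pmod p$.

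For $Z$, the same induction gives $Z^k\ket{j}=\omega^{kj}\ket{j}$, because each application of $Z$ contributes a factor $\omega^j$ and leaves $\ket{j}$ unchanged. Taking $k=p$ gives $Z^p\ket{j}=(\omega^p)^j\ket{j}=\ket{j}$, since $\omega^p=1$ by \cref{def:omega}.

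The only point needing care is that the label $j+k$ is read modulo $p$ and the exponent of $\omega$ is well defined modulo $p$; both are immediate from the conventions stated after \cref{def:omega}. There is no real obstacle here. The consequences $X^\dagger=X^{p-1}$ and $Z^\dagger=Z^{p-1}$ then follow because $X$ and $Z$ are unitary: $X$ is a permutation of the basis and $Z$ is diagonal with unimodular entries.
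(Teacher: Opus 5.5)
Your proof is correct and follows the same route as the paper, which simply checks $X^p\ket{j}=\ket{j+p}=\ket{j}$ and $Z^p\ket{j}=\ket{j}$ on each computational basis state. The inductive formulas $X^k\ket{j}=\ket{j+k}$ and $Z^k\ket{j}=\omega^{kj}\ket{j}$ that you establish along the way appear in the paper as the separate \cref{lem:X-Z-copy}, and your exponent $(\omega^p)^j$ is more careful than the paper's one-line write-up, which writes $\omega^p\ket{j}$.
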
    

\begin{proof}
For any $0\leq j \leq p-1$, we have $X^p\ket{j} = \ket{j+p} = \ket{j}$ and $Z^p\ket{j} = \omega^p\ket{j} = \ket{j}$.
\end{proof}

% Here is the proof that $ZXZ^\dagger X^\dagger =\omega$:
\begin{lemma}
    $ZXZ^\dagger X^\dagger =\omega$.
\label{lem:omega}
\end{lemma}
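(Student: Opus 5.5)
The plan is to prove the identity directly on the computational basis, using only the definitions of $X$ and $Z$ from \cref{def:pauli}. Since $\{\ket{j}\}_{0\le j\le p-1}$ is a basis and every operator involved is linear, it suffices to show $ZXZ^\dagger X^\dagger\ket{j}=\omega\ket{j}$ for each $0\le j\le p-1$.

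First, compute the adjoints. Because $X$ permutes basis states cyclically, it is unitary, and $X^\dagger\ket{j}=\ket{j-1}$, with subtraction modulo $p$. This follows from $X^\dagger = X^{p-1}$, which in turn follows from \cref{lem:Pauli-order}. Because $Z$ is diagonal with unimodular entries, $Z^\dagger\ket{j}=\overline{\omega^j}\ket{j}=\omega^{-j}\ket{j}$.

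Next, apply the four operators in turn, rightmost first:
\[
\ket{j}\overset{X^\dagger}{\longmapsto}\ket{j-1}\overset{Z^\dagger}{\longmapsto}\omega^{-(j-1)}\ket{j-1}\overset{X}{\longmapsto}\omega^{-(j-1)}\ket{j}\overset{Z}{\longmapsto}\omega^{j-(j-1)}\ket{j}=\omega\ket{j}.
\]

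The only point that needs care is the wrap-around at $j=0$. There $j-1$ is read as $p-1$, so the exponent from $Z^\dagger$ is $-(p-1)$ rather than $+1$. This causes no problem, because $\omega^p=1$ makes exponents of $\omega$ well defined modulo $p$, in line with \cref{rem:modular-inverse-exponent}. So the calculation holds uniformly in $j$.

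Since $ZXZ^\dagger X^\dagger$ and $\omega I$ agree on a basis, they are equal as operators. There is no real obstacle beyond this modular bookkeeping.
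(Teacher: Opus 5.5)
Your proposal is correct and follows essentially the same route as the paper: both evaluate $ZXZ^\dagger X^\dagger$ on an arbitrary basis state $\ket{j}$, applying the operators right to left to get $\omega^{-(j-1)+j}\ket{j}=\omega\ket{j}$. Your extra justification of the adjoints and your check of the wrap-around at $j=0$ are correct additions that the paper leaves implicit.
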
 

\begin{proof}
For $0\leq j \leq p-1$, we have
\[
ZXZ^\dagger X^\dagger \ket{j} = ZXZ^\dagger \ket{j-1} = ZX\omega^{-(j-1)}\ket{j-1} = Z\omega^{-(j-1)}\ket{j} = \omega^{-(j-1)+j}\ket{j} = \omega\ket{j}.
\]
\end{proof}

\begin{lemma}
For all $k \in \Z$, $X^k\ket{j} = \ket{j+k}$ and $Z^k\ket{j} = \omega^{kj}\ket{j}$, where addition and multiplication are taken modulo $p$.
\label{lem:X-Z-copy}
\end{lemma}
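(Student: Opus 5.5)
The plan is to argue by induction on $k$, treating $k\geq 0$ first and then extending to negative $k$ via the inverses. Throughout, exponents of $\omega$ and labels of basis states are read modulo $p$, which is legitimate because $\omega^p=1$ and $\ket{j}$ depends only on $j \bmod p$.

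\textbf{Non-negative $k$.} For $k=0$ both identities are trivial, since $X^0=Z^0=I$ and $\omega^{0}=1$. For the inductive step, assume $X^k\ket{j}=\ket{j+k}$ and $Z^k\ket{j}=\omega^{kj}\ket{j}$. Applying \cref{def:pauli} once more gives
\[
X^{k+1}\ket{j}=X\ket{j+k}=\ket{j+k+1}.
\]
Likewise,
\[
Z^{k+1}\ket{j}=\omega^{kj}Z\ket{j}=\omega^{kj}\omega^{j}\ket{j}=\omega^{(k+1)j}\ket{j}.
\]

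\textbf{Negative $k$.} For $k<0$, first compute the inverses. From \cref{def:pauli} we get $X^\dagger\ket{j}=\ket{j-1}$ and $Z^\dagger\ket{j}=\omega^{-j}\ket{j}$. One can also see this from $X^\dagger = X^{p-1}$ and $Z^\dagger=Z^{p-1}$, which follow from \cref{lem:Pauli-order} and the case $k=p-1\geq 0$ already proved. Writing $X^k=(X^\dagger)^{-k}$ and $Z^k=(Z^\dagger)^{-k}$, the same induction on $-k$ gives $\ket{j+k}$ and $\omega^{kj}\ket{j}$.

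Alternatively, one can reduce $k$ modulo $p$ to some $k'\in\{0,\dots,p-1\}$, using $X^p=Z^p=1$ from \cref{lem:Pauli-order}. Then $X^k=X^{k'}$, $\ket{j+k}=\ket{j+k'}$, and $\omega^{kj}=\omega^{k'j}$, so the negative case follows from the non-negative one.

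No step is a real obstacle. The only point needing care is stating that arithmetic modulo $p$ in the exponent and in the ket label is well defined, and this holds because $\omega^p=1$.
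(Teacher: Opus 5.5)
Your proof is correct. It differs from the paper's in two small ways.

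For $Z$, the paper skips induction. It notes that $Z=\mathrm{diag}(1,\omega,\ldots,\omega^{p-1})$, so $Z^k$ is the diagonal matrix of $k$-th powers. That one line covers every $k\in\Z$ at once, negative $k$ included, since the diagonal entries are invertible. You instead run the same induction for $Z$ as for $X$, which is uniform but longer.

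For $X$, both arguments induct on $k$. However, the paper starts at $k=1$ and never treats $k\le 0$, although the statement quantifies over all $k\in\Z$. Your handling of $k=0$ and $k<0$ therefore closes an omission in the paper's argument. You compute $X^\dagger\ket{j}=\ket{j-1}$ directly and induct on $-k$, or alternatively you reduce $k$ modulo $p$.

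One remark on the mod-$p$ route. The paper's proof of \cref{lem:Pauli-order} itself quietly uses the iterated action $X^p\ket{j}=\ket{j+p}$. It is cleaner to get $X^p=Z^p=1$ from your own non-negative case with $k=p$, so the argument is visibly non-circular. Your first route, via $X^\dagger$ and $Z^\dagger$, avoids the issue entirely.
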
    

\begin{proof}
Since $Z = \diag(1,\omega,\ldots,\omega^{p-1})$, $Z^k = \diag(1,\omega^k,\ldots,\omega^{k(p-1)}) = \sum_{j=0}^{p-1} \omega^{kj} \ket{j}\bra{j}$. Hence, $Z^k\ket{j} = \omega^{kj}\ket{j}$. For $X^k\ket{j} = \ket{j+k}$, we prove this by induction on $k$. The base case $k=1$ is true by definition. Assume that $X^k\ket{j} = \ket{j+k}$ for some $k \geq 1$. Then
\[
X^{k+1}\ket{j} = X\left(X^k\ket{j}\right) = X\left(\ket{j+k}\right) = \ket{j+k+1}.
\]
This completes the proof.
\end{proof}    

\begin{lemma}
For all $a,b \in \Z_p$, $Z^aX^b(Z^\dagger)^a = \omega^{ab}X^b$ and $X^bZ^a(X^\dagger)^b = \omega^{-ab}Z^a$.
\label{lem:anticommutation}
\end{lemma}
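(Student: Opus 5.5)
The plan is to verify both identities directly on the computational basis. Two operators that agree on every basis state $\ket{j}$, $0\leq j\leq p-1$, agree everywhere. The tools are \cref{lem:X-Z-copy}, which gives $X^k\ket{j}=\ket{j+k}$ and $Z^k\ket{j}=\omega^{kj}\ket{j}$ for all $k\in\Z$, and the facts $X^\dagger=X^{-1}$, $Z^\dagger=Z^{-1}$, which follow from \cref{lem:Pauli-order}. Since $\omega^p=1$ and $X^p=Z^p=1$, only the residues of $a,b$ modulo $p$ matter. We may therefore take representatives $a,b\in\Z$ and read $(Z^\dagger)^a=Z^{-a}$ and $(X^\dagger)^b=X^{-b}$.

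For the first identity, apply the operators right to left:
\[
Z^aX^bZ^{-a}\ket{j} = \omega^{-aj}\,Z^aX^b\ket{j} = \omega^{-aj}\,Z^a\ket{j+b} = \omega^{-aj}\omega^{a(j+b)}\ket{j+b} = \omega^{ab}X^b\ket{j}.
\]
The second identity is the same computation in the other order:
\[
X^bZ^aX^{-b}\ket{j} = X^bZ^a\ket{j-b} = \omega^{a(j-b)}\ket{j} = \omega^{-ab}Z^a\ket{j}.
\]

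The only point needing care is that exponents like $a(j+b)$ are computed with $j+b$ reduced modulo $p$. This is harmless because $\omega^{p}=1$. Everything else is routine bookkeeping.

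As an alternative route, one could proceed by induction on $a$ and $b$ from \cref{lem:omega}, which rewrites to $ZX=\omega XZ$. The direct basis computation is shorter, so I would use that.
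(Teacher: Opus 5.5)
Your proof is correct, but it takes a different route from the paper. You evaluate both sides directly on the computational basis using \cref{lem:X-Z-copy}, so each identity is a one-line check and the two are handled independently. Because you first reduce to integer representatives, the exponents $0$, $-a$ and $-b$ are treated uniformly. You rely on \cref{lem:X-Z-copy} for the negative exponent $-b$, where its written proof only covers positive $k$; this is harmless, since $X^{-b}=X^{(p-1)b}$.

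The paper's proof of this lemma never evaluates on basis states. It starts from the single commutation relation $ZXZ^\dagger=\omega X$ in \eqref{eq:XZ} and proves $Z^aX(Z^\dagger)^a=\omega^aX$ and $ZX^bZ^\dagger=\omega^bX^b$ by separate inductions. A third induction combines these into $Z^aX^b(Z^\dagger)^a=\omega^{ab}X^b$. The second identity is not computed separately: the paper rearranges the first into $Z^aX^b=\omega^{ab}X^bZ^a$ and multiplies on the right by $(X^\dagger)^b$.

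Your argument is shorter and avoids the inductions. Those inductions run over $a,b\ge 1$, leaving the trivial zero cases implicit. The paper's argument, in return, is representation-independent. It uses only the relation $ZXZ^\dagger=\omega X$, unitarity of $X$ and $Z$, and the fact that $\omega$ is a central scalar. So it shows the identities hold in any group satisfying this commutation relation, not just for the explicit matrices. That is closer to the equational, rewriting-style reasoning used elsewhere in the paper.
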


\begin{proof}
We first show the soundness of \eqref{eq:ZX-1} and \eqref{eq:ZX-2}. Then we will combine both to show \eqref{eq:ZX-3}. 

\begin{align}
Z^aX(Z^\dagger)^a &= \omega^{a}X\label{eq:ZX-1}\\
ZX^bZ^\dagger &= \omega^{b}X^b\label{eq:ZX-2}\\
Z^aX^b(Z^\dagger)^a &= \omega^{ab}X^b\label{eq:ZX-3}
\end{align}

For \eqref{eq:ZX-1}, we proceed by induction on $a$. The base case $a=1$ is true by \eqref{eq:XZ}. Assume that \eqref{eq:ZX-1} holds for some $a\geq 1$. Then 

\[
Z^{a+1}X(Z^\dagger)^{a+1} = Z^aZXZ^\dagger(Z^\dagger)^a = Z^a\omega X(Z^\dagger)^a \xlongequal[\IH]{\eqref{eq:ZX-1}} \omega Z^aX(Z^\dagger)^a = \omega \cdot \omega^a X = \omega^{a+1}X.
\]

For \eqref{eq:ZX-2}, we proceed by induction on $b$. The base case $b=1$ is true by \eqref{eq:XZ}. Assume that \eqref{eq:ZX-2} holds for some $b\geq 1$. Then 

\[
ZX^{b+1}Z^\dagger = \left(ZX^bZ^\dagger\right) \left(ZXZ^\dagger\right) \xlongequal[\IH]{\eqref{eq:ZX-2},\eqref{eq:XZ}} \left(\omega^b X^b\right) \left(\omega X\right) =  \omega^{b+1} X^{b+1}.
\]

For \eqref{eq:ZX-3}, we proceed by induction on $a$. The base case $a=1$ is true by \eqref{eq:ZX-2}. Assume that \eqref{eq:ZX-3} holds for some $a\geq 1$. Then 

\[
Z^{a+1}X^b(Z^\dagger)^{a+1} = Z^{a}\left(Z X^b Z^\dagger\right) (Z^\dagger)^{a} \xlongequal[]{\eqref{eq:ZX-2}} Z^a\left(\omega^b X^b\right)(Z^\dagger)^a \xlongequal[\IH]{\eqref{eq:ZX-3}} \omega^b \cdot \omega^{ab} X^b = \omega^{(a+1)b}X^b.
\]

Right-multiplying both sides of \eqref{eq:ZX-3} by $Z^a$ yields
\begin{equation}
  Z^aX^b = \omega^{ab}X^bZ^a.
  \label{eq:ZX-4}
\end{equation}  

Right-multiplying both sides of \eqref{eq:ZX-4} by $(X^\dagger)^b$ yields \eqref{eq:ZX-5}. This completes the proof.
\begin{equation}
  Z^a = \omega^{ab}X^bZ^a(X^\dagger)^b \iff \omega^{-ab}Z^a = X^bZ^a(X^\dagger)^b.
  \label{eq:ZX-5}
\end{equation}  
\end{proof}  

For convenience, we visualise \cref{lem:anticommutation} as follows.

\begin{equation}
    \scalebox{1}{\input{figures/Preliminaries/pauli-pushing.tikz}}
    \label{eq:pauli-pushing}
\end{equation}

\subsection{Properties of the Qudit Clifford Generators}
\label{subsec:clifford-properties}

In this subsection, we study the algebraic properties of the Clifford generators $H$, $S$, and $\CZ$, together with the normalisation phase $\lambda_p$. We establish, among other identities, the order and determinant of $H$, the action of $H^2$ on the computational basis, and the way $H$ and $S$ conjugate the Pauli generators. These properties form the algebraic backbone on which the soundness arguments of the later sections rest: when we verify that the derived generators and the rewrite relations act as intended, we repeatedly appeal to the identities collected here.

Recall the definition of these basic generators $-\omega$, $H$, $S$, and $\CZ$:
\begin{align}
(-\omega) \ket{\cdot} &=  -\omega\ket{\cdot}\label{eq:omega}\\
H  \ket{j} &= \frac{1}{\lambda_p\sqrt{p}}\sum_{\ell=0}^{p-1}\omega^{j\ell}\ket{\ell}\label{eq:H}\\
S \ket{j} &=  \omega^{\frac{j(j-1)}{2}} \ket{j}\label{eq:S}\\
\CZ \ket{j,\ell} &=  \omega^{j \ell} \ket{j,\ell}\label{eq:CZ}
\end{align}  

\begin{T13}
\lambdainring
\end{T13}    

\begin{proof}
This follows from the results in \cite{prakash2021normal}. By the quadratic Gauss sum, $\sum_{j}\omega^{j^2}=\left(\frac{2}{p}\right)_L\lambda_p^{-1}\sqrt{p}$. Here, $\left(\frac{2}{p}\right)_L$ is the Legendre symbol, which is $1$ if $p\equiv 1$ or $7 \pmod{8}$, and $-1$ if $p\equiv 3$ or $5 \pmod{8}$. Since $\left(\frac{2}{p}\right)_L^2 = 1$,

\[
\frac{1}{\lambda_p\sqrt{p}}=\frac{\lambda_p^{-1}\sqrt{p}}{p}=\frac{\sum_{j}\omega^{j^2}}{\left(\frac{2}{p}\right)_L\cdot p}=\frac{\sum_{j}\omega^{j^2} \cdot \left(\frac{2}{p}\right)_L}{p}\in\Z[1/p,\omega].
\]
\end{proof}

\begin{T8}
\rootsofunity
\end{T8}    

\begin{proof}
    By Theorem 5.3 in \cite{extensions}, 
    the roots of unity in $\Q(\omega)$ are exactly the complex numbers of the form $\pm \omega^t$, $t \in \Z_p$. Since $\Z[1/p,\omega] \subset \Q(\omega)$, this completes the proof.
\end{proof}

\begin{lemma}
Let $p$ be an odd prime and $\omega = e^{2\pi i/p}$.
    \begin{align}
        -1 & = (-\omega)^p \reltag{D}{eq:neg-one-omega-power}\\
        \omega & = (-\omega)^{p+1} \reltag{D}{eq:omega-neg-omega-power}
    \end{align}
\label{lem:gen-scalar}
\end{lemma}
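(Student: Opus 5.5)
The plan is to compute directly with complex exponentials, using only that $p$ is odd and $\omega^p = 1$ (\cref{def:omega}).

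For \eqref{eq:neg-one-omega-power}, expand the power as a product of commuting scalars: $(-\omega)^p = (-1)^p\,\omega^p$. Since $p$ is odd, $(-1)^p = -1$, and $\omega^p = e^{2\pi i}=1$. Hence $(-\omega)^p = -1$.

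For \eqref{eq:omega-neg-omega-power}, write $(-\omega)^{p+1} = (-\omega)^p\cdot(-\omega)$. Substituting the first identity gives $(-1)(-\omega) = \omega$. Alternatively, $(-1)^{p+1} = 1$ because $p+1$ is even, and $\omega^{p+1} = \omega$.

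Both identities hold as scalars, and therefore also as the scalar operators $\lambda I$ under the convention of \cref{ssec:operators}. As a corollary, I would note that $-\omega$ has order exactly $2p$. Indeed, $(-\omega)^{2p} = \bigl((-\omega)^p\bigr)^2 = 1$. The order must divide $2p$, and it is not $1$, $2$, or $p$: we have $-\omega \neq 1$, $(-\omega)^2 = \omega^2 \neq 1$ since $p$ is odd, and $(-\omega)^p = -1$. This justifies the description $\langle -\omega\rangle = \{\pm\omega^t \mid t\in\Z_p\}$.

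There is no real obstacle here. The only point to be careful about is keeping the scalar $-1 = e^{\pi i}$ separate from the element $-1 \in \Z_p$ (\cref{rem:minus-one}), because the parity argument uses the integer exponent $p$, not its residue modulo $p$.
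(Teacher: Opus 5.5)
Your proof is correct and is the paper's argument: it also computes $(-\omega)^p = -\omega^p = -1$ using that $p$ is odd, then deduces $(-\omega)^{p+1} = (-\omega)^p(-\omega) = (-1)(-\omega) = \omega$ from the first identity. Your extra observation that $-\omega$ has order exactly $2p$ is also correct, though the lemma itself does not need it.
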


\begin{proof}
\begin{align*}
    \eqref{eq:neg-one-omega-power}.\RHS &= (-\omega)^p = -\omega^p = -1=\eqref{eq:neg-one-omega-power}.\LHS.\\
    \eqref{eq:omega-neg-omega-power}.\RHS &= (-\omega)^{p+1}  = (-\omega)^p (-\omega) \xlongequal[]{\eqref{eq:neg-one-omega-power}} (-1)(-\omega)=\omega=\eqref{eq:omega-neg-omega-power}.\LHS. \qedhere
\end{align*}
\end{proof}

\begin{T9}
\Hsquare
\end{T9}    

\begin{proof}
We have
\begin{align*}
H^2 \ket{j} &= H\left( \frac{1}{\lambda_p\sqrt{p}} \sum_{k=0}^{p-1} \omega^{kj} \ket{k} \right) \\
                 &= \frac{1}{\lambda_p\sqrt{p}} \sum_{k=0}^{p-1} \omega^{kj} H\ket{k} \\
                 &= \frac{1}{\lambda_p\sqrt{p}} \sum_{k=0}^{p-1} \omega^{kj} \left( \frac{1}{\lambda_p\sqrt{p}} \sum_{\ell=0}^{p-1} \omega^{\ell k} \ket{\ell} \right) \\
                 &= \frac{1}{\lambda_p^{2}p} \sum_{\ell=0}^{p-1} \left(\sum_{k=0}^{p-1} \omega^{k(j+\ell)}  \right) \ket{\ell}.
\end{align*}
Now consider the sum $\sum_{k=0}^{p-1} \omega^{k(j+\ell)}$. When $\ell \equiv -j \pmod{p}$, then $\omega^{k(j+\ell)}=1$, so that the sum is equal to $p$. When $\ell \not\equiv -j \pmod{p}$, then $\omega^{j+\ell}$ is a primitive $p$-th root of unity, so that the sum is equal to 0. Hence,
\[
H^2\ket{j} = \frac{1}{\lambda_p^{2}p} \sum_{\ell=0}^{p-1} \left(\sum_{k=0}^{p-1} \omega^{k(j+\ell)}  \right) \ket{\ell} = \frac{1}{\lambda_p^2}\ket{-j},
\]
as desired.
\end{proof}

% \begin{definition}
% Let $k\in \Z_p$ and $f:\Z_p \to \Z_p$ be an invertible function. A \emph{generalised permutation matrix} is a matrix that maps $\ket{k}$ to $a_k\ket{f(k)}$, for some $a_k\in \C$ and $\lvert a_k\rvert = 1$.
% \label{def:generalised-perm}
% \end{definition}    

% \begin{corollary}
% $H^2$ is a generalised permutation matrix.
% \label{lem:H2-perm}
% \end{corollary} 

\begin{lemma}
$\lambda_p^4 = 1$.
\label{lem:order-lambda-p}
\end{lemma}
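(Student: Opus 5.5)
The plan is to compute $\lambda_p^4$ directly from \cref{def:lambda}, where $\lambda_p = e^{(p-1)\pi i/4}$. Raising to the fourth power gives
\[
\lambda_p^4 = e^{(p-1)\pi i} = \left(e^{\pi i}\right)^{p-1} = (-1)^{p-1}.
\]
Since $p$ is an odd prime, $p-1$ is even, so $(-1)^{p-1}=1$. This is the whole argument.

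As an independent check, I would use the case table in \cref{def:lambda}. There $\lambda_p\in\{1,i,-1,-i\}$ according to $p \bmod 8$. Each of these values is a fourth root of unity, so $\lambda_p^4=1$ in every case. The only care needed is the exponent convention: $\lambda_p$ is a genuine complex exponential with real exponent $(p-1)\pi/4$. It is not read in $\Z_p$ as in \cref{rem:modular-inverse-exponent}, so the usual law $(e^{x})^4=e^{4x}$ applies without issue.

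I expect no real obstacle here. The only step needing attention is the parity of $p-1$, which follows immediately from $p$ being odd.
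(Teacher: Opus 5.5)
Your proof is correct and is essentially the paper's argument: the paper also computes $\lambda_p^4 = e^{(p-1)\pi i} = (-1)^{p-1} = 1$ directly from \cref{def:lambda}, using that $p$ is odd. Your extra check via the case table, where every value of $\lambda_p$ is a fourth root of unity, is consistent with this but not needed.
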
  

\begin{proof}
  By \cref{def:lambda}, $\lambda_p  := e^{(p-1)\pi i/4}$. Then since $p$ is odd,
\[
    \lambda_p^4 = e^{4\cdot (p-1)\pi i/4}=e^{(p-1)\pi i} = (-1)^{p-1} = 1.
\]
\end{proof}  

% \begin{T15}
%     \Horder
% \end{T15}   
\begin{corollary}
  \label{cor:orderh}
  \Horder
\end{corollary}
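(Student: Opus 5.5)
The statement \Horder is presumably that $H^4 = 1$ (the order of $H$ divides $4$), so I will prove that identity by computing $H^4$ on the computational basis.

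First, I will square the formula of \cref{prop:hsquared}. That result gives $H^2\ket{j} = \lambda_p^{-2}\ket{-j}$ for every $0\le j\le p-1$. Applying it twice gives
\[
H^4\ket{j} = H^2\bigl(\lambda_p^{-2}\ket{-j}\bigr) = \lambda_p^{-2}\,H^2\ket{-j} = \lambda_p^{-4}\ket{-(-j)} = \lambda_p^{-4}\ket{j},
\]
where negation is taken modulo $p$. Linearity of $H^2$ justifies pulling out the scalar.

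Second, I will remove the phase using \cref{lem:order-lambda-p}. It gives $\lambda_p^4=1$, hence $\lambda_p^{-4}=1$. So $H^4\ket{j}=\ket{j}$ for all basis states, and $H^4 = I$ because an operator is determined by its action on the computational basis.

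If the corollary also asserts that the order is exactly $4$, I will argue as follows. We have $H^2=\lambda_p^{-2}M_{-1}$. Since $p$ is odd, $M_{-1}$ swaps $\ket{1}$ and $\ket{p-1}$, so it is not a scalar. Hence $H^2\neq I$, and the order of $H$ divides $4$ without dividing $2$, so it equals $4$. Neither step is a real obstacle; the only care needed is tracking the modular negation and confirming that the phase $\lambda_p^{-4}$ cancels.
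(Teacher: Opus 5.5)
Your proof is correct and is the same as the paper's: apply \cref{prop:hsquared} twice to get $H^4\ket{j}=\lambda_p^{-4}\ket{j}$, then cancel the phase using \cref{lem:order-lambda-p}. Your extra argument that the order is exactly $4$ is also sound, though the paper's proof only establishes $H^4=1$.
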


\begin{proof}
Let $\ket{j}$ be an arbitrary computational basis state. It suffices to show that $H^4\ket{j} = \ket{j}$. By \cref{prop:hsquared,lem:order-lambda-p}, $H^4\ket{j} =H^2\left(\lambda_p^{-2}\ket{-j}\right)=\lambda_p^{-4}\ket{j}=\ket{j}$.
\end{proof}

\iffalse
\begin{proof}
By \cref{def:H-and-S}, $H=\frac{1}{\lambda_p\sqrt{p}}\sum_{j,\ell=0}^{p-1}\omega^{j\ell}\ket{j}\bra{\ell}$. It follows that

\begin{align*}
H^2 &= \frac{1}{\lambda_p^2\sqrt{p}^2}\left(\sum_{j,\ell=0}^{p-1}\omega^{j\ell}\ket{j}\bra{\ell}\right)\left(\sum_{j,\ell=0}^{p-1}\omega^{j\ell}\ket{j}\bra{\ell}\right)\\
&= \frac{1}{\lambda_p^2 \cdot p}\left(\sum_{j,\ell,m,n=0}^{p-1}\omega^{j\ell}\omega^{mn}\ket{j}\braket{\ell}{m}\bra{n}\right)\\
&= \frac{1}{\lambda_p^2 \cdot p}\left(\sum_{j,\ell,n=0}^{p-1}\omega^{\ell(j+n)}\ket{j}\bra{n}\right)\\
&= \frac{1}{\lambda_p^2 \cdot p}\left(\sum_{j,n=0}^{p-1}\ket{j}\bra{n}\left(\sum_{\ell=0}^{p-1}\omega^{\ell(j+n)}\right)\right)\\
&= \frac{1}{\lambda_p^2 \cdot p}\left(\sum_{j+n \equiv 0 \pmod{p}}^{p-1}\ket{j}\bra{n}\left(\sum_{\ell=0}^{p-1}\omega^{\ell(j+n)}\right) + \sum_{j+n \not\equiv 0 \pmod{p}}^{p-1}\ket{j}\bra{n}\left(\sum_{\ell=0}^{p-1}\omega^{\ell(j+n)}\right)\right)\\
&\xlongequal[]{\cref{lem:summation}}\frac{1}{\lambda_p^2 \cdot p}\left(\sum_{j+n \equiv 0 \pmod{p}}^{p-1}p\cdot\ket{j}\bra{n} + \sum_{j+n \not\equiv 0 \pmod{p}}^{p-1}0\cdot\ket{j}\bra{n}\right)\\
&=\frac{p}{\lambda_p^2 \cdot p}\sum_{j=0}^{p-1}\ket{j}\bra{p-j}=\frac{1}{\lambda_p^2}\sum_{j=0}^{p-1}\ket{j}\bra{p-j}.
\end{align*}

By \cref{def:generalised-perm}, $H^2$ is a generalised permutation matrix since for all $k\in\Z_p$,
\[
H^2\ket{k} = \frac{1}{\lambda_p^2}\sum_{j=0}^{p-1}\ket{j}\braket{p-j}{k}=\frac{1}{\lambda_p^2}\ket{-k},\quad \lambda_p^2 = 1 \;\text{or} \; -1.\qedhere
\]
\end{proof}    
\fi
\begin{lemma}
$\det(H)=1$.
\label{lem:det-H}
\end{lemma}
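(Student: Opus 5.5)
The plan is to determine the eigenvalue multiplicities of $H$ and read the determinant off from them. Computing $\det(H)^2$ alone is not enough, because it leaves a sign ambiguity.

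\textbf{Step 1: eigenvalues.} By \cref{cor:orderh} we have $H^4=1$, so $H$ is diagonalisable with eigenvalues in $\{1,i,-1,-i\}$. Write $n_{i^k}$ for the multiplicity of the eigenvalue $i^k$. Then
\[
\det(H)=i^{\,n_i-n_{-i}}\,(-1)^{n_{-1}}.
\]
The multiplicities are recovered by the discrete Fourier inversion
\[
n_{i^k}=\tfrac14\sum_{m=0}^{3} i^{-km}\operatorname{tr}(H^m),
\]
so it suffices to know the four traces $\operatorname{tr}(H^m)$ for $0\le m\le 3$.

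\textbf{Step 2: the traces.}
\begin{itemize}
\item $\operatorname{tr}(H^0)=p$.
\item By \cref{prop:hsquared}, $H^2=\lambda_p^{-2}M_{-1}$. The permutation $M_{-1}$ fixes only $\ket{0}$, so $\operatorname{tr}(H^2)=\lambda_p^{-2}$. Note that $\lambda_p^{2}=i^{p-1}=(-1)^{(p-1)/2}$.
\item Since $H$ is unitary and $H^4=1$, we have $H^3=H^\dagger$, so $\operatorname{tr}(H^3)=\overline{\operatorname{tr}(H)}$.
\item For $\operatorname{tr}(H)$, use \cref{def:H-and-S} and the quadratic Gauss sum quoted in the proof of \cref{lem:lambda-in-ring}:
\[
\operatorname{tr}(H)=\frac{1}{\lambda_p\sqrt p}\sum_j\omega^{j^2}=\Bigl(\frac{2}{p}\Bigr)_{\!L}\lambda_p^{-2}.
\]
This value is real, so $\operatorname{tr}(H^3)=\operatorname{tr}(H)$.
\end{itemize}

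\textbf{Step 3: solving.} Substituting the traces gives explicit multiplicities:
\[
n_1=\tfrac14\bigl(p+2\epsilon+\lambda_p^{-2}\bigr),\quad
n_{-1}=\tfrac14\bigl(p-2\epsilon+\lambda_p^{-2}\bigr),\quad
n_{i}=n_{-i}=\tfrac14\bigl(p-\lambda_p^{-2}\bigr),
\]
where $\epsilon=\bigl(\frac{2}{p}\bigr)_L\lambda_p^{-2}$. Because $n_i=n_{-i}$, we get $\det(H)=(-1)^{n_{-1}}$, and it remains to show that $n_{-1}$ is even.

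This is the main obstacle, and it is a case check on $p \bmod 8$ using \cref{def:lambda} and the values of $\bigl(\frac{2}{p}\bigr)_L$:
\begin{itemize}
\item $p\equiv1\pmod 8$: $\epsilon=1$ and $n_{-1}=(p-1)/4$.
\item $p\equiv3\pmod 8$: $\epsilon=1$ and $n_{-1}=(p-3)/4$.
\item $p\equiv5\pmod 8$: $\epsilon=-1$ and $n_{-1}=(p+3)/4$.
\item $p\equiv7\pmod 8$: $\epsilon=-1$ and $n_{-1}=(p+1)/4$.
\end{itemize}
In each case $n_{-1}\equiv0\pmod 2$; for example, $p=8k+5$ gives $n_{-1}=2k+2$. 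Therefore $\det(H)=1$.

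\textbf{Consistency check.} Before the case analysis I would verify $\det(H)^2=\det(H^2)=\lambda_p^{-2p}\det(M_{-1})=1$. Here $\det(M_{-1})=(-1)^{(p-1)/2}$, since $M_{-1}$ is a product of $(p-1)/2$ transpositions. This confirms the answer is $\pm1$ and guards against sign slips in the Gauss-sum constant.
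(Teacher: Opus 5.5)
Your proof is correct, and it takes a genuinely different route from the paper's. The paper never looks at the spectrum of $H$. It writes $H=F/(\lambda_p\sqrt p)$ and evaluates $\det F$ directly as the Vandermonde product $\prod_{\ell<j}(\omega^j-\omega^\ell)$. It separates phases from moduli via $\omega^t-1=2i\xi^t\sin(\pi t/p)$ with $\xi=e^{\pi i/p}$, and fixes the moduli using $\prod_{t=1}^{p-1}(1-\omega^t)=p$. This gives $\det F=p^{p/2}\lambda_p^{p}$, which the normalisation $\lambda_p\sqrt p$ cancels exactly.

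You instead compute $\mathrm{tr}(H^m)$ for $0\le m\le 3$, recover all four eigenvalue multiplicities by Fourier inversion, and reduce the claim to the parity of $n_{-1}$. Your traces, the multiplicities (including $n_i=n_{-i}$), and the four residue classes modulo $8$ all check out.

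Your route is shorter once the Gauss sum is granted, and it yields more: the full spectrum of $H$. Its cost is that it rests on the evaluation of the quadratic Gauss sum \emph{including its sign}, which the paper only quotes from the literature in the proof of \cref{lem:lambda-in-ring}. That sign is the hard part of Gauss's theorem. In Schur's classical proof it is derived by running your multiplicity bookkeeping in reverse, starting from the Vandermonde value of $\det F$. There is no circularity inside the paper, since the Gauss sum is treated as an external fact. Still, the paper's argument is the self-contained one and does not depend on the Gauss sum, which is why it is called elementary.
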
    
Recall that $H=\frac{1}{\lambda_p\sqrt{p}}F$, where $F=\sum_{j,\ell=0}^{p-1}\omega^{j\ell}\ket{j}\bra{\ell}$ is a discrete Fourier transform. The determinant of $F$ is a standard computation (see e.g., \cite{MP72}), and the phase $\lambda_p$ is chosen so that $\det(H)=1$ (see also \cite{prakash2021normal}). For completeness, we give a short elementary proof.
\begin{proof}
    We first compute
\begin{align*}
\det(F)
&=\prod_{0\leq \ell<j\leq p-1}(\omega^j-\omega^\ell)
=\prod_{0\leq \ell<j\leq p-1}\omega^\ell(\omega^{j-\ell}-1)\\
&=\left(\prod_{t=1}^{p-1}\prod_{\ell=1}^{p-1-t}\omega^\ell\right)
\cdot\left(\prod_{t=1}^{p-1}(\omega^t-1)^{p-t}\right)\\
&=\omega^{(p-1)p(p-2)/6}
\left(\prod_{t=1}^{p-1}(\omega^t-1)^{p-t}\right).
\end{align*}
Let $\xi:=e^{\pi i/p}$. Since
\begin{equation*}
    \omega^t-1=2i\xi^t\sin(\pi t/p)
\end{equation*}
for all $t$, we obtain
\begin{align*}
\prod_{t=1}^{p-1}(\omega^t-1)^{p-t}
&=(2i)^{\sum_{t=1}^{p-1}(p-t)}
\cdot\xi^{\sum_{t=1}^{p-1}t(p-t)}
\cdot\prod_{t=1}^{p-1}\left(\sin(\pi t/p)\right)^{p-t}\\
&=(2i)^{p(p-1)/2}
\cdot\xi^{(p-1)p(p+1)/6}
\cdot\left(\prod_{t=1}^{(p-1)/2}\sin(\pi t/p)\right)^p,
\end{align*}
where the last equality pairs $(\sin(\pi t/p))^{p-t}$ with
$(\sin(\pi(p-t)/p))^t$. Differentiating the cyclotomic factorisation
\begin{equation*}
    x^p-1=\prod_{t=0}^{p-1}(x-\omega^t)
\end{equation*}
and evaluating at $x=1$ yields
\begin{equation*}
   \prod_{t=1}^{p-1}(1-\omega^t)=p. 
\end{equation*}
Note that $\vert 1-e^{i\theta}\vert=2\sin(\theta/2)$ for all
$\theta\in(0,2\pi)$. Taking absolute values therefore gives
\begin{equation*}
    \prod_{t=1}^{p-1}2\sin(\pi t/p)=p.
\end{equation*}
Since $\sin(\pi t/p)>0$ for all $1\leq t\leq p-1$, by pairing $t$ with $p-t$, we see that
\begin{equation*}
    \prod_{t=1}^{(p-1)/2}\sin(\pi t/p)
    =
    \left(\prod_{t=1}^{p-1}\sin(\pi t/p)\right)^{1/2}
    =
    2^{-(p-1)/2}p^{1/2}.
\end{equation*}
Hence
\begin{align*}
    \prod_{t=1}^{p-1}(\omega^t-1)^{p-t}
    =
    i^{p(p-1)/2}
    \cdot\xi^{(p-1)p(p+1)/6}
    \cdot p^{p/2}.
\end{align*}
Combining this with the prefactor
\begin{equation*}
    \omega^{(p-1)p(p-2)/6}
    =
    \xi^{(p-1)p(p-2)/3}
\end{equation*}
and using
\begin{equation*}
    \frac{(p-1)p(p-2)}{3}
    +
    \frac{(p-1)p(p+1)}{6}
    =
    \frac{p(p-1)^2}{2},
\end{equation*}
we obtain
\begin{equation*}
    \xi^{p(p-1)^2/2}
    =
    e^{\pi i(p-1)^2/2}
    =
    1,
\end{equation*}
since $(p-1)^2/2$ is an even integer for odd $p$. It follows that
\begin{align*}
    \det(F)=p^{p/2} e^{p(p-1)\pi i/4}=p^{p/2}\lambda_p^p.
\end{align*}
We conclude that $ \det(H)
    =
    \frac{\det(F)}{\lambda_p^p p^{p/2}}
    =
    1$.
\end{proof}

% \begin{T10}
% \determinant
% \end{T10}

\begin{lemma}
  \label{lem:Clifford-syllable-determinant}
  $\det(\CZ)=1$.
  $\det(S)=\omega$ when $p = 3$, and $\det(S)=1$ when $p \geq 5$.
\end{lemma}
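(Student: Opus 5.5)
Both $S$ and $\CZ$ are diagonal in the computational basis, so each determinant is the product of the diagonal phases, i.e.\ $\omega$ raised to the sum of the exponents. The plan is to compute these exponent sums modulo $p$, reading all exponents in $\Z_p$ as in \cref{rem:modular-inverse-exponent}.

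For $\CZ$, the diagonal entries are $\omega^{j\ell}$ for $j,\ell\in\{0,\dots,p-1\}$. Hence $\det(\CZ)=\omega^{\sum_{j,\ell} j\ell}$, and
\[
\sum_{j,\ell} j\ell=\Bigl(\sum_{j=0}^{p-1} j\Bigr)^2=\Bigl(\tfrac{p(p-1)}{2}\Bigr)^2 .
\]
Since $p$ is odd, $(p-1)/2$ is an integer, so $p(p-1)/2\equiv 0 \pmod p$. Therefore the exponent vanishes modulo $p$ and $\det(\CZ)=1$.

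For $S$, the diagonal entries are $\omega^{j(j-1)/2}$, where $j(j-1)/2$ is an ordinary integer and the exponent is taken modulo $p$. So $\det(S)=\omega^{T}$ with
\[
T=\sum_{j=0}^{p-1}\binom{j}{2}=\binom{p}{3}=\frac{p(p-1)(p-2)}{6},
\]
by the hockey-stick identity. The main step is deciding when $p\mid T$.

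Write $T=p\cdot\frac{(p-1)(p-2)}{6}$. If $p\geq 5$, then $p$ is coprime to $6$; since $T$ is an integer, $6$ divides $(p-1)(p-2)$. Hence $T$ is a multiple of $p$ and $\det(S)=1$. If $p=3$, then $T=\binom{3}{3}=1$ and $\det(S)=\omega$.

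No genuine obstacle is expected. The only care needed is that the exponents $j(j-1)/2$ are integers, so summing them over the integers before reducing modulo $p$ is legitimate.
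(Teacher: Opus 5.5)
Your proof is correct and follows the same route as the paper: both compute each determinant as $\omega$ raised to the integer sum of the diagonal exponents, obtaining $\bigl(p(p-1)/2\bigr)^2$ for $\CZ$ and $p(p-1)(p-2)/6$ for $S$, and then split into the cases $p=3$ and $p\geq 5$. The differences are only cosmetic. You evaluate the $S$-sum with the hockey-stick identity rather than the closed forms for $\sum j^2$ and $\sum j$, and you get $6\mid(p-1)(p-2)$ from the integrality of $p(p-1)(p-2)/6$ together with $\gcd(p,6)=1$, whereas the paper checks divisibility by $2$ and by $3$ directly.
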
 

\begin{proof}
    Using straightforward properties of sums of integers, we have
    \begin{align}
        \sum_{j=0}^{p-1}j^2 \ &=\  \left(\sum_{j=1}^{p}j^2\right) - p^2 \ = \  \frac{p(p+1)(2p+1)}{6}-p^2.\label{eq:sum-psquare}\\
        \sum_{j=0}^{p-1}j \ &=\ \frac{\left(0+(p-1)\right)p}{2}\ =\ \frac{p(p-1)}{2}.\label{eq:sum-p}\\
        \sum_{j,\ell=0}^{p-1}j\ell \ &=\  \sum_{j=0}^{p-1}j\left(\sum_{\ell=0}^{p-1}\ell\right)\ \xlongequal[]{\eqref{eq:sum-p}}\ \sum_{j=0}^{p-1}j\left(\frac{p(p-1)}{2}\right) \ \xlongequal[]{\eqref{eq:sum-p}}\ \left(\frac{p(p-1)}{2}\right)^2\ = \ \frac{p^2(p-1)^2}{4}.\label{eq:sum-pl}
    \end{align}
    Then we have
    \begin{align}
        \sum_{j=0}^{p-1}j^2 - \sum_{j=0}^{p-1}j&\xlongequal[\eqref{eq:sum-p}]{\eqref{eq:sum-psquare}}\frac{p(p+1)(2p+1)}{6}-p^2 -\frac{p(p-1)}{2}\ = \ \frac{p(p^2-3p+2)}{3}.
        \label{eq:det-S-mid}\\
        \sum_{j=0}^{p-1}j^2 + \sum_{j=0}^{p-1}j&\xlongequal[\eqref{eq:sum-p}]{\eqref{eq:sum-psquare}}\frac{p(p+1)(2p+1)}{6}-p^2 +\frac{p(p-1)}{2}\ = \ \frac{p(p^2-1)}{3}.
        \label{eq:det-S-prime-mid}
    \end{align}   

    Since $p$ is an odd prime, $p = 2m + 1$ for some integer $m \geq 1$. Hence 
    \begin{equation}
    \frac{p^2(p-1)^2}{4} = \frac{(2m+1)^2(2m)^2}{4}=m^2(2m+1)^2=p^2m^2.
    \label{eq:sum-pl-mod}
    \end{equation}

    It follows that

    \begin{align*}
        \det(\CZ) &\xlongequal[]{} \prod_{j,\ell=0}^{p-1}\omega^{j\ell}\ =\ \omega^{\sum_{j,\ell=0}^{p-1}j\ell}\xlongequal[]{\eqref{eq:sum-pl}} \omega^{p^2(p-1)^2/4}\xlongequal[]{\eqref{eq:sum-pl-mod}}\omega^{p^2 m^2}= \left(\omega^p\right)^{p m^2} = 1.
    \end{align*}  
    
    To calculate $\det(S)$, we proceed by case distinctions on $p$. Firstly, note that

    \begin{align}
        \det(S) \ &\xlongequal[]{\eqref{eq:S}}\  \prod_{j=0}^{p-1}\omega^{\frac{j(j-1)}{2}} \ =\ \omega^{\sum_{j=0}^{p-1}\frac{j(j-1)}{2}} \ =\ \omega^{\frac{1}{2}\left(\sum_{j=0}^{p-1}j^2 - \sum_{j=0}^{p-1}j\right)} \notag\\
        \ &\xlongequal[]{\eqref{eq:det-S-mid}}\  \omega^{\frac{1}{2}\left(\frac{p(p^2-3p+2)}{3}\right)}\ =\ \omega^{p(p^2-3p+2)/6}=\omega^{p(p-1)(p-2)/6}.
        \label{eq:det-S-final}
    \end{align}

    When $p = 3$, $p(p-1)(p-2)/6 = 3\cdot 2 \cdot 1 / 6 = 1$, so $\det(S) = \omega$. 
    
    When $p\geq 5$, we need to show that $6\vert (p-1)(p-2)$. Since $p$ is odd, $p-1$ is even so $2\vert (p-1)$. Since $p > 3$ and $p$ is prime, $3 \nmid p$. Hence, $p = 3m + 1$ or $p = 3m + 2$, for some integer $m$. Then $p - 1 = 3m$ or $p - 2 = 3m$, so $3 \vert (p-1)(p-2)$. Therefore, $6 \vert (p-1)(p-2)$. Putting everything together, we have

    \begin{align*}
        \det(S) &\xlongequal[]{\eqref{eq:det-S-final}}  \omega^{p(p-1)(p-2)/6} = \left(\omega^p\right)^{(p-1)(p-2)/6} =1.
    \end{align*}
    
\end{proof}  

\begin{lemma}
For every integer $g \geq 1$, $S^g\ket{j} =\omega^{\frac{j(j-1)}{2}\cdot g}\ket{j}$, where multiplication is taken modulo $p$.
\label{lem:multiple-s}
\end{lemma}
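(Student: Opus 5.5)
The claim is that $S^g\ket{j}=\omega^{\frac{j(j-1)}{2}\cdot g}\ket{j}$ for every integer $g\geq 1$. I will prove it by induction on $g$, using only the definition \eqref{eq:S} of $S$ and the fact that $\omega^p=1$. The point that needs care is that the exponent $\frac{j(j-1)}{2}$ is a genuine integer for every $0\leq j\leq p-1$, since $j(j-1)$ is a product of consecutive integers and hence even. So $\omega^{\frac{j(j-1)}{2}}$ is an honest integer power of $\omega$, and the exponent may be reduced modulo $p$ (consistently with \cref{rem:modular-inverse-exponent}). This is what makes the statement ``multiplication is taken modulo $p$'' meaningful.

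For the base case $g=1$, the claim is exactly \eqref{eq:S}. For the inductive step, assume $S^g\ket{j}=\omega^{\frac{j(j-1)}{2}g}\ket{j}$ for some $g\geq 1$. Then
\[
S^{g+1}\ket{j}=S\left(S^g\ket{j}\right)=\omega^{\frac{j(j-1)}{2}g}\,S\ket{j}=\omega^{\frac{j(j-1)}{2}g}\,\omega^{\frac{j(j-1)}{2}}\ket{j}=\omega^{\frac{j(j-1)}{2}(g+1)}\ket{j}.
\]
This uses linearity of $S$ to pull out the scalar, then \eqref{eq:S} once more, and finally adds the integer exponents.

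It remains to justify reading the exponent modulo $p$. Since $\omega^p=1$, for any integers $u,v$ with $u\equiv v \pmod p$ we have $\omega^u=\omega^v$. Hence $\frac{j(j-1)}{2}\cdot g$ may be replaced by its residue in $\Z_p$. In particular, the value depends only on $g \bmod p$, which recovers $S^p=1$ as the case $g=p$.

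There is no real obstacle. The only subtle point is making sure $\frac{j(j-1)}{2}$ is read as an integer rather than as $j(j-1)\cdot 2^{-1}$ in $\Z_p$. Both readings in fact agree modulo $p$, because $2$ is invertible modulo the odd prime $p$, and I would remark on this in passing so that the statement is unambiguous under either convention.
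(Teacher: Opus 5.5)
Your proof is correct and follows the paper's argument: induction on $g$, with the base case given by \eqref{eq:S} and the inductive step computing $S^{g+1}\ket{j}=S(S^g\ket{j})$ and applying \eqref{eq:S} once more. Your extra remarks are accurate and go slightly beyond what the paper spells out: $\frac{j(j-1)}{2}$ is an integer, it agrees with $j(j-1)\cdot 2^{-1}$ in $\Z_p$, and the case $g=p$ recovers $S^p=1$.
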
  

\begin{proof}
Let $\ket{j}$ be a computational basis state, and we proceed by induction on $g$. When $g=1$, the base case holds by \eqref{eq:S}. Suppose the statement holds when $g\geq 1$. Then
\[
S^{g+1}\ket{j} = S\left(S^g\ket{j}\right)\xlongequal[]{\IH}\omega^{\frac{j(j-1)}{2}\cdot g}\left(S\ket{j}\right)\xlongequal[]{\eqref{eq:S}}\omega^{\frac{j(j-1)}{2}\cdot g}\omega^{\frac{j(j-1)}{2}}\ket{j}=\omega^{\frac{j(j-1)}{2}\cdot (g+1)}\ket{j}.
\]
This completes the proof.
\end{proof}  

\begin{lemma}
For every integer $g \geq 1$, $\CZ^g\ket{j,\ell} =\omega^{gj\ell}\ket{j,\ell}$, where multiplication is taken modulo $p$.
\label{lem:multiple-cz}
\end{lemma}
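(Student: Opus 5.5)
We will prove \cref{lem:multiple-cz} by induction on $g$, following the proof of \cref{lem:multiple-s}. Fix an arbitrary computational basis state $\ket{j,\ell}$ with $0\leq j,\ell\leq p-1$. Since $\CZ$ is diagonal in the computational basis, each application of $\CZ$ only multiplies $\ket{j,\ell}$ by a phase. Therefore the whole argument reduces to adding exponents of $\omega$.

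\textbf{Base case.} For $g=1$, the claim $\CZ\ket{j,\ell}=\omega^{j\ell}\ket{j,\ell}$ is exactly the definition \eqref{eq:CZ}.

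\textbf{Inductive step.} Assume $\CZ^{g}\ket{j,\ell}=\omega^{gj\ell}\ket{j,\ell}$ for some $g\geq 1$. We write $\CZ^{g+1}\ket{j,\ell}=\CZ\left(\CZ^{g}\ket{j,\ell}\right)$ and apply the induction hypothesis to obtain $\omega^{gj\ell}\,\CZ\ket{j,\ell}$. Applying \eqref{eq:CZ} once more then gives
\[
\omega^{gj\ell}\,\omega^{j\ell}\ket{j,\ell}=\omega^{(g+1)j\ell}\ket{j,\ell}.
\]

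\textbf{Reduction modulo $p$.} It remains to justify the phrase ``multiplication is taken modulo $p$''. This holds because $\omega^p=1$, so the exponent $gj\ell$ only matters modulo $p$ (compare \cref{rem:modular-inverse-exponent}).

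No step is a genuine obstacle. The only point requiring care is the bookkeeping of exponents in $\Z$ versus $\Z_p$, and this is settled by $\omega^p=1$.
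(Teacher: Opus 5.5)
Your proof is correct and follows the paper's argument: induction on $g$, with the base case given by the definition of the controlled-$Z$ gate, and the inductive step obtained by applying one more gate to the $g$-fold power and adding the exponents of $\omega$. Your closing remark that $\omega^p=1$ justifies reading the exponent modulo $p$ is a harmless addition that the paper leaves implicit.
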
  

\begin{proof}
Let $\ket{j,\ell}$ be a computational basis state, and we proceed by induction on $g$. When $g=1$, the base case holds by \eqref{eq:CZ}. Suppose the statement holds when $g\geq 1$. Then
\[
\CZ^{g+1}\ket{j,\ell} = \CZ\left(\CZ^g\ket{j,\ell}\right)\xlongequal[]{\IH}\omega^{gj\ell}\left(\CZ\ket{j,\ell}\right)\xlongequal[]{\eqref{eq:CZ}}\omega^{gj\ell}\omega^{j\ell}\ket{j,\ell}=\omega^{(g+1)j\ell}\ket{j,\ell}.
\]
This completes the proof.
\end{proof}

% \begin{T14}
%     \Sorder
% \end{T14}   

\begin{corollary}
    \Sorder
    \label{cor:S-order}
\end{corollary}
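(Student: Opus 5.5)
The statement \verb|\Sorder| is the order of the phase gate, namely $S^p = 1$ (and, since $S\neq 1$ and $p$ is prime, $S$ has order exactly $p$). I will derive it directly from \cref{lem:multiple-s}. The idea is to compute the action of $S^p$ on every computational basis state and show that it is trivial. Since $S^p$ is diagonal, agreeing with the identity on the basis suffices.

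The key steps are as follows. First, apply \cref{lem:multiple-s} with $g=p$ to obtain, for every $0\leq j\leq p-1$,
\[
S^p\ket{j}=\omega^{\frac{j(j-1)}{2}\cdot p}\ket{j}.
\]
Second, observe that $\frac{j(j-1)}{2}$ is an ordinary integer, because $j(j-1)$ is a product of consecutive integers and hence even. Therefore
\[
\omega^{\frac{j(j-1)}{2}\cdot p}=\left(\omega^{p}\right)^{\frac{j(j-1)}{2}}=1,
\]
using $\omega^p=1$. Third, conclude by linearity that $S^p\ket{j}=\ket{j}$ for all $j$, so $S^p=1$.

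If the statement also asserts that the order is exactly $p$, I would argue as follows. The order of $S$ divides $p$, and $p$ is prime, so the order is either $1$ or $p$. It is not $1$, because $S\ket{2}=\omega\ket{2}\neq\ket{2}$. This uses $p\geq 3$, which guarantees that $\ket{2}$ is a basis state and that $\omega\neq 1$.

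The only point requiring care is the interpretation of the exponent $\frac{j(j-1)}{2}$. Per \cref{rem:modular-inverse-exponent}, it could be read in $\Z_p$ via $2^{-1}$. Under that reading it still multiplies $p$ to give $0$ in $\Z_p$, so the conclusion is unchanged. I would simply note that both readings agree because $2\mid j(j-1)$. I do not expect any real obstacle beyond this.
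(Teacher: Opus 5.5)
Your proof is correct and matches the paper's: both apply the lemma on powers of $S$ with $g=p$ and use $\omega^p=1$ to conclude that $S^p\ket{j}=\ket{j}$ for every basis state $j$. Your additional argument that the order is exactly $p$, using $S\ket{2}=\omega\ket{2}$, is valid but goes beyond what the paper's proof establishes.
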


\begin{proof}
Let $\ket{j}$ be an arbitrary computational basis state. It suffices to show that $S^p\ket{j} = \ket{j}$.
By \cref{lem:multiple-s}, $S^p\ket{j} = \left(\omega^p\right)^{\frac{j(j-1)}{2}}\ket{j}=\ket{j}$.
\end{proof}    

% \begin{T16}
%     \CZorder
% \end{T16}   

\begin{corollary}
    \CZorder
    \label{cor:CZ-order}
\end{corollary}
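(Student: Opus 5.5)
The statement is \cref{cor:CZ-order}, given by the macro \CZorder. By analogy with \cref{cor:S-order}, it asserts that $\CZ^p = 1$, that is, the controlled-$Z$ gate has order dividing $p$. I would prove it exactly as \cref{cor:S-order} was proved from \cref{lem:multiple-s}, now invoking \cref{lem:multiple-cz}.

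Since a linear operator on $\C^p\otimes\C^p$ is determined by its action on the computational basis, it suffices to fix an arbitrary basis state $\ket{j,\ell}$ with $0\leq j,\ell\leq p-1$ and show that $\CZ^p\ket{j,\ell}=\ket{j,\ell}$. Applying \cref{lem:multiple-cz} with $g=p$ gives
\[
\CZ^p\ket{j,\ell}=\omega^{pj\ell}\ket{j,\ell}.
\]
Because $\omega^p=1$ by \cref{def:omega}, this equals $\left(\omega^p\right)^{j\ell}\ket{j,\ell}=\ket{j,\ell}$.

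There is no real obstacle here. The only point needing care is that the exponent $gj\ell$ in \cref{lem:multiple-cz} is read modulo $p$, in line with \cref{rem:modular-inverse-exponent}. This reading is legitimate precisely because $\omega$ has order $p$, so reducing the exponent introduces no ambiguity. If one also wants to record that the order is exactly $p$ rather than a proper divisor, note that $p$ is prime and $\CZ\neq 1$, since $\CZ\ket{1,1}=\omega\ket{1,1}$. Hence the order is $p$.
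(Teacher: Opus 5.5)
Your proof is correct and is essentially the paper's argument: reduce to basis states $\ket{j,\ell}$, apply \cref{lem:multiple-cz} with $g=p$, and use $\omega^p=1$. Your extra remark that the order is exactly $p$ is a harmless addition that the statement does not require.
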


\begin{proof}
Let $\ket{j,\ell}$ be an arbitrary computational basis state. It suffices to show that $\CZ^p\ket{j,\ell} = \ket{j,\ell}$.
By \cref{lem:multiple-cz}, $\CZ^p\ket{j,\ell} = \left(\omega^p\right)^{j\ell}\ket{j,\ell}=\ket{j,\ell}$.
\end{proof}      
% W show that $\det(H)=1$ in the next appendix.

% \begin{proof}
%     When $1 \leq i < j \leq p-1$, $-(p-1) \leq -j < -i \leq -1$. It follows that $1 \leq p-j < p-i \leq p-1$. Since $-j \equiv p-j \pmod{p}$ and $-i \equiv p-i \pmod{p}$, we have

%     \begin{align}
%         \prod_{1 \leq i < j \leq p-1}\left(\omega^{-j} - \omega^{-i}\right)&=\prod_{1 \leq p-j < p-i \leq p-1}\left(\omega^{p-j} - \omega^{p-i}\right) =\prod_{1 \leq j' < i' \leq p-1}\left(\omega^{j'} - \omega^{i'}\right)\nonumber\\
%         &=\prod_{1 \leq j' < i' \leq p-1}-\left(\omega^{i'} - \omega^{j'}\right)=\prod_{1 \leq i < j \leq p-1}-\left(\omega^{j} - \omega^{i}\right)=(-1)^{\frac{(p-1)(p-2)}{2}}\prod_{1 \leq i < j \leq p-1}\left(\omega^{j} - \omega^{i}\right).\label{eq:sum}
%     \end{align}    
%     It follows that

%     \begin{align*}
%         \prod_{0 \leq i < j \leq p-1}\left(\omega^{-j} - \omega^{-i}\right) &=\left(\prod_{1 \leq j \leq p-1}\left(\omega^{-j} - \omega^{0}\right)\right)\left(\prod_{1 \leq i < j \leq p-1}\left(\omega^{-j} - \omega^{-i}\right)\right)\nonumber\\
%         &\xlongequal[]{\eqref{eq:sum}} (-1)^{\frac{(p-1)(p-2)}{2}}\left(\prod_{1 \leq k \leq p-1}\left(\omega^{k} - \omega^{0}\right)\right)\left(\prod_{1 \leq i < j \leq p-1}\left(\omega^{j} - \omega^{i}\right)\right)\\
%         &=(-1)^{\frac{(p-1)(p-2)}{2}} \prod_{0 \leq i < j \leq p-1}\left(\omega^{j} - \omega^{i}\right).\qedhere
%     \end{align*}  
% \end{proof}    

% Here is the proof of \cref{lem:automorphism}.
\begin{T11}
\HSaction
\end{T11}    

\begin{proof}
Let $\ket{j}$ be an arbitrary computational basis state. It suffices to show that $HXH^\dagger\ket{j} = Z\ket{j}$, $HZH^\dagger\ket{j} = X^\dagger\ket{j}$, $SXS^\dagger\ket{j} = XZ\ket{j}$, and $SZS^\dagger\ket{j} = Z\ket{j}$. One can check that $\lambda_p\lambda_p^\dagger = 1$. By \cref{def:H-and-S,def:pauli}, we have
\begin{align}
HXH^\dagger \ket{j} &\xlongequal[]{\eqref{eq:H}}   HX\left( \frac{1}{\lambda_p^\dagger\sqrt{p}} \sum_{k=0}^{p-1} \omega^{-kj} \ket{k} \right) \notag\\
                &=\  H\left( \frac{1}{\lambda_p^\dagger\sqrt{p}} \sum_{k=0}^{p-1} \omega^{-kj} \ket{k+1} \right) \notag\\      
                 &\xlongequal[]{\eqref{eq:H}}  \frac{1}{\lambda_p^\dagger\sqrt{p}} \sum_{k=0}^{p-1} \omega^{-kj} \left( 
                 \frac{1}{\lambda_p\sqrt{p}} \sum_{\ell=0}^{p-1} \omega^{(k+1)\ell} \ket{\ell}
                 \right)\notag\\ 
                 &=\  \frac{1}{p} \sum_{\ell=0}^{p-1} \omega^\ell \left( \sum_{k=0}^{p-1} \omega^{k(\ell -j)} \right)\ket{\ell}.
                 \label{eq:HXHdagger}
\end{align}
Now consider the sum $\sum_{k=0}^{p-1} \omega^{k(\ell -j)}$. When $\ell \equiv j \pmod{p}$, then $\omega^{k(\ell -j)}=1$, so that the sum is equal to $p$. When $\ell \not\equiv j \pmod{p}$, then $\omega^{\ell-j}$ is a primitive $p$-th root of unity, so that the sum is equal to 0. Hence,
\begin{equation}
\eqref{eq:HXHdagger}\ =\   \omega^j \ket{j} + \sum_{\ell\not\equiv j \pmod{p}} \omega^\ell \left(\frac{1}{p} \left( \sum_{k=0}^{p-1} \omega^{k(\ell -j)} \right)\right)\ket{\ell} \ =\  \omega^j \ket{j}  \ =\  Z\ket{j}.
\label{eq:HXHdagger-final}
\end{equation}

It follows that $HZH^\dagger \xlongequal[]{\eqref{eq:HXHdagger-final}} H^2 X (H^\dagger)^2 \xlongequal[]{\cref{cor:orderh}} H^2 X H^2$, and
% \[
% HZH^\dagger\ket{j}\ = \  H^2X H^2 \ket{j} \ = \  H^2 X \ket{-j} \ = \  H^2\ket{-j+1} \ = \  \ket{-(-j+1)} \ = \  \ket{j-1} \ = \  X^\dagger \ket{j}.
% \]

\begin{align*}
  HZH^\dagger\ket{j} &= H^2X H^2 \ket{j} \xlongequal[]{\cref{prop:hsquared}}  H^2 X \ket{-j}\cdot \lambda_p^{-2}\notag\\
  &= H^2\ket{-j+1}\cdot \lambda_p^{-2} \xlongequal[]{\cref{prop:hsquared}} \ket{-(-j+1)}\cdot \lambda_p^{-4} \notag\\
  &= \ket{j-1}\cdot \lambda_p^{-4} \xlongequal[]{\cref{lem:order-lambda-p}}\ket{j-1}=  X^\dagger \ket{j}.
\end{align*}  

Next, we have
\begin{align*}
    SX{S}^\dagger \ket{j} &\xlongequal[]{\eqref{eq:S}}  SX \ket{j}\cdot \omega^{-\frac{j(j-1)}{2}} =  S\ket{j+1}\cdot \omega^{-\frac{j(j-1)}{2}} \\ 
    &\xlongequal[]{\eqref{eq:S}}  \omega^{\frac{j(j+1)}{2}- \frac{j(j-1)}{2}} \ket{j+1} \ = \  \omega^{j}\ket{j+1} \ = \  XZ\ket{j}.
\end{align*}

Finally, we have
\begin{align*}
    SZS^\dagger \ket{j} &\xlongequal[]{\eqref{eq:S}}  SZ\ket{j}\cdot \omega^{-\frac{j(j-1)}{2}} =  S\ket{j}\cdot \omega^{-\frac{j(j-1)}{2} + j} \\ 
    &\xlongequal[]{\eqref{eq:S}} \omega^{-\frac{j(j-1)}{2} + j + \frac{j(j-1)}{2}}\ket{j} \ = \  \omega^j\ket{j} \ = \  Z\ket{j}.
\end{align*}

% and
% \begin{align*}
%     S'Z{S'}^\dagger \ket{j} \ &= \  SZ\omega{-\frac{j(j+1)}{2}}\ket{j} \ = \  S\omega{-\frac{j(j+1)}{2} + 1}\ket{j} \\
%      &=\  \omega{-\frac{j(j+1)}{2} + 1 + \frac{j(j+1)}{2}}\ket{j} \ = \  \omega\ket{j} \ = \  Z\ket{j}. \qedhere
% \end{align*}
\end{proof}

The proof of \cref{lem:automorphism-CZ} is similar to the proof of \cref{lem:automorphism}, and is omitted here.

\subsection{Soundness of the Derived Generators}
\label{subsec:soundness-derived}

In this subsection, we verify that each derived generator, built from the Clifford generators, acts on the computational basis exactly as intended. Recall the definition of the derived generators $X$, $Z$, $\CX$, $\XC$, $\SWAP$, and $\CIZ$ given in \cref{fig:derived-generators2}.

\begin{align}
  Z &= H^2 S H^2 S^{-1}\label{eq:Z}\\
  X &= HSH^2S^{-1}H\label{eq:X}\\
  M_a &= Z^{(1-a)/(2a)}\, X^{(1-a)/2}\, S^{a^{-1}} H S^{a} H S^{a^{-1}} H\cdot \left(\frac{a}{p}\right)_L\omega^{\frac{-a^2+4a-2}{8a}}\label{eq:Mg}\\
  \CX &= (I\otimes H^3)\CZ(I\otimes H)\label{eq:CX}\\
  \SWAP &= \CZ(H\otimes H)\CZ(H\otimes H)\CZ(H\otimes H)\cdot \lambda_p^2\label{eq:SWAP}\\
  \XC &= (\SWAP)\CX(\SWAP)\label{eq:XC}\\
  \CIZ &= (I\otimes \SWAP)(\CZ\otimes I)(I\otimes \SWAP)\label{eq:CIZ}\\
  \CIX &= (I\otimes \SWAP)(\CX\otimes I)(I\otimes \SWAP)\label{eq:CIX}\\
  \XIC &= (\SWAP\otimes I)(I\otimes\XC)(\SWAP\otimes I)\label{eq:XIC}
\end{align}

Next, we show that the derived generators given in \eqref{eq:Z}--\eqref{eq:XIC} have the desired actions on the computational basis states, as shown in \eqref{eq:Z-action}--\eqref{eq:XIC-action}.

\begin{align}
Z\ket{j} &= \omega^j\ket{j}\label{eq:Z-action}\\
X\ket{j}&= \ket{j+1}\label{eq:X-action}\\
M_a\ket{j}&= \ket{aj}\label{eq:Mg-action}\\
\CX\ket{j,\ell} &= \ket{j,j+\ell}\label{eq:CX-action}\\
\SWAP\ket{j,\ell} &= \ket{\ell,j}\label{eq:SWAP-action}\\
\XC\ket{j,\ell} &= \ket{j+\ell,\ell}\label{eq:XC-action}\\
\CIZ\ket{j,\ell,k} &= \omega^{jk}\ket{j,\ell,k}\label{eq:CIZ-action}\\
\CIX\ket{j,\ell,k} &= \ket{j,\ell,j+k}\label{eq:CIX-action}\\
\XIC\ket{j,\ell,k} &= \ket{j+k,\ell,k}\label{eq:XIC-action}
\end{align}  

% we prove two lemmas. The first one provides a simplified path-sum expression for $H^2$. The second one provides a path-sum interpretation of the derived generators above.

\begin{lemma}
  For every $j\in\Z_p$, $Z$ defined below satisfies $Z\ket{j}=\omega^j\ket{j}$.
  \[
    \scalebox{1}{\input{figures/Preliminaries/Z-soundness.tikz}}
  \]
\label{lem:Z-soundness}
\end{lemma}

% \begin{proposition}
% \label{prop:d2}
% For $0\leq j \leq p-1$, $H^2 \ket{j} = \lambda_p^{-2}\ket{-j}$. Here the negation $-j$ is performed modulo $p$.
% \end{proposition}

\begin{proof}
Let $\ket{j}$ be a computational basis state. 
\begin{align}
\RHS\ket{j}=H^2SH^2S^{-1}\ket{j} 
&\xlongequal[\cref{lem:multiple-s}]{\eqref{eq:S}}  H^2SH^2\left(\omega^{-\frac{j(j-1)}{2}}\right)\ket{j} \notag\\
&\xlongequal[]{\cref{prop:hsquared}} \left(\omega^{-\frac{j(j-1)}{2}}\lambda_p^{-2}\right)H^2S\ket{-j} \notag\\
&\xlongequal[]{\eqref{eq:S}} \left(\omega^{-\frac{j(j-1)}{2}+\frac{(-j)(-j-1)}{2}}\lambda_p^{-2}\right)H^2\ket{-j} \notag\\
&\xlongequal[]{\cref{prop:hsquared}} \left(\omega^{-\frac{j(j-1)}{2}+\frac{(-j)(-j-1)}{2}}\lambda_p^{-4}\right)\ket{j} \notag\\
&\xlongequal[]{\cref{lem:order-lambda-p}} \left(\omega^{-j\frac{(j-1) + (-j-1)}{2}}\right)\ket{j} = \omega^j\ket{j}. \label{eq:Z-int}
\end{align}
\end{proof}  

\begin{lemma}
  For every $j\in\Z_p$, $X$ defined below satisfies $X\ket{j}=\ket{j+1}$.
  \[
    \scalebox{1}{\input{figures/Preliminaries/X-soundness.tikz}}
  \]
\label{lem:X-soundness}
\end{lemma}

\begin{proof}
Let $\ket{j}$ be a computational basis state.
\begin{align}
  \RHS\ket{j} &= HSH^2S^{-1}H\ket{j}\notag\\
  &\xlongequal[]{\eqref{eq:H}}HSH^2S^{-1}\left(\frac{1}{\lambda_p\sqrt{p}}\sum_{\ell=0}^{p-1}\omega^{j\ell}\ket{\ell}\right)\notag\\
  &\xlongequal[]{\cref{lem:multiple-s}}HSH^2\left(\frac{1}{\lambda_p\sqrt{p}}\sum_{\ell=0}^{p-1}\omega^{j\ell}\omega^{\frac{-\ell(\ell-1)}{2}}\ket{\ell}\right)\notag\\
  &\xlongequal[]{\cref{prop:hsquared}}HS\left(\frac{1}{\lambda_p\sqrt{p}}\sum_{\ell=0}^{p-1}\omega^{j\ell+\frac{-\ell(\ell-1)}{2}}\lambda_p^{-2}\ket{-\ell}\right)\notag\\
  &\xlongequal[]{\eqref{eq:S}}H\left(\frac{1}{\lambda_p^3\sqrt{p}}\sum_{\ell=0}^{p-1}\omega^{j\ell+\frac{-\ell(\ell-1)}{2}}\omega^{\frac{-\ell(-\ell-1)}{2}}\ket{-\ell}\right)\notag\\
  &\xlongequal[]{\eqref{eq:H}}\left(\frac{1}{\lambda_p^3\sqrt{p}}\sum_{\ell=0}^{p-1}\omega^{j\ell+\frac{-\ell(\ell-1)}{2}+\frac{-\ell(-\ell-1)}{2}}\left(\frac{1}{\lambda_p\sqrt{p}}\sum_{k=0}^{p-1}\omega^{-k\ell}\ket{k}\right)\right)\notag\\
  &=\left(\frac{1}{\lambda_p^4\cdot p}\sum_{\ell,k=0}^{p-1}\omega^{j\ell+\frac{-\ell(\ell-1-\ell-1)}{2}-k\ell}\ket{k}\right)\notag\\
  &\xlongequal[]{\cref{lem:order-lambda-p}}\frac{1}{p}\sum_{\ell,k=0}^{p-1}\omega^{j\ell+\ell-k\ell}\ket{k}=\sum_{k=0}^{p-1}\left(\frac{1}{p}\sum_{\ell=0}^{p-1}\omega^{\ell(j+1-k)}\right)\ket{k}\label{eq:X-int}
\end{align}

Hence, $\RHS\ket{j} = \sum_{k=0}^{p-1}C_k\ket{k}$, where

\begin{equation}
C_k = \frac{1}{p}\sum_{\ell=0}^{p-1}\omega^{\ell(j+1-k)} = \frac{1}{p}\sum_{\ell=0}^{p-1}\omega^a,\quad a = j+1-k.
\label{eq:X-int2}
\end{equation}  

By \cref{lem:summation}, we proceed by case distinctions. When $a \equiv 0 \pmod{p}$, $C_k = 1$. When $a \not\equiv 0 \pmod{p}$, $C_k = 0$. That is, each term $\ket{k}$ has coefficient $0$, except the one with $k \equiv j+1 \pmod{p}$. Therefore, $\RHS\ket{j} \xlongequal[]{\eqref{eq:X-int2}} \ket{j+1}$.
\end{proof}  

% \begin{lemma}
%   $M_a\ket{j} = \ket{aj}$, where multiplication is taken modulo $p$ and
%   \[
%     \scalebox{.9}{\tikzfig{figures/Preliminaries/Mg-soundness}}
%   \]
% \label{lem:Mg-soundness}
% \end{lemma}  

\begin{T12}
\multiplier
\end{T12}    

\begin{proof}
For every $a\in\Z_p^*$, by tracking the Pauli computation (and ignoring the global phase) we can check that 

\begin{align}
M_aZM_a^\dagger &= Z^{a^{-1}} \iff M_aZ = Z^{a^{-1}}M_a\label{eq:Ma-Z}\\
M_aXM_a^\dagger &= X^a \iff M_aX = X^aM_a.\label{eq:Ma-X}
\end{align}

% $M_aZM_a^\dagger=Z^{a^{-1}}$ and $M_aXM_a^\dagger=X^a$. 

\[
     \scalebox{.7}{\input{figures/Preliminaries/multiplier-soundness.tikz}}
\]

We must then have $M_a\ket{0} = M_a Z^a\ket{0} \xlongequal[]{\eqref{eq:Ma-Z}} Z^{a^{-1}a}M_a \ket{0} = ZM_a\ket{0}$, so that $M_a\ket{0}$ is a $+1$ eigenstate of $Z$ and hence must be proportional to $\ket{0}$. That is, 

\begin{equation}
M_a\ket{0} = c_a\ket{0}, \quad \text{where } c_a\in \U(1).
\label{eq:Ma-0}
\end{equation}  

For all $y \in \Z_p$, $M_a\ket{y} = M_a X^y\ket{0}  \xlongequal[]{\eqref{eq:Ma-X}} X^{ay} M_a\ket{0} \xlongequal[]{\eqref{eq:Ma-0}} c_a X^{ay}\ket{0} = c_a\ket{ay}$. By using the path-sum formalism and equations \eqref{eq:H}, \eqref{eq:S}, \eqref{eq:Z-action}, and \eqref{eq:X-action}, we have
\begin{equation*}
M_a\ket{0}
=
\frac{1}{\lambda_p^3p^{3/2}}
\omega^{\frac{-a^2+4a-2}{8a}}
\left(\frac{a}{p}\right)_L
\sum_h
\omega^{t_h}\mu_h
\ket{h+\frac{1-a}{2}},
\end{equation*}
where
\begin{equation*}
t_h
=
\frac{h(h-1)}{2a}
+
\frac{1-a}{2a}
\left(
h+\frac{1-a}{2}
\right)
\end{equation*}
and
\begin{equation*}
\mu_h
=
\sum_{k,j}
\omega^{
\frac{j(j-1)}{2a}
+kj
+\frac{ak(k-1)}{2}
+kh
}.
\end{equation*}
As we also have $M_a\ket{0}=c_a\ket{0}$ we must then have $\mu_h=0$ for all $h\neq\frac{a-1}{2}$ so that
\begin{equation*}
\sum_h\mu_h=\mu_{\frac{a-1}{2}}.
\end{equation*}
It follows that
\begin{equation*}
\begin{aligned}
\mu_{\frac{a-1}{2}}
&=
\sum_h\mu_h\\
&=
\sum_{k,j}
\omega^{
\frac{j(j-1)}{2a}
+kj
+\frac{ak(k-1)}{2}
}
\sum_h\omega^{kh}\\
&=
p\left(
\sum_j\omega^{\frac{j(j-1)}{2a}}
\right)\\
&=
\lambda_p^{-1}p^{3/2}
\left(\frac{a}{p}\right)_L
\omega^{-\frac{1}{8a}}.
\end{aligned}
\end{equation*}
The second last equality uses that
\begin{equation*}
\sum_h\omega^{kh}
=
\begin{cases}
p & \text{if } k=0,\\
0 & \text{otherwise}.
\end{cases}
\end{equation*}
The last equality follows from the property of the quadratic Gauss sum
\begin{equation*}
\sum_j\omega^{\frac{j^2}{2a}}
=
\lambda_p^{-1}\sqrt{p}
\left(\frac{a}{p}\right)_L,
\end{equation*}
where $\left(\frac{\cdot}{\cdot}\right)_L$ is the Legendre symbol. Finally,
\begin{equation*}
t_{\frac{a-1}{2}}
=
\frac{(a-1)(a-3)}{8a}.
\end{equation*}
It follows that
\begin{equation*}
\begin{aligned}
c_a
&=
\left(\frac{a}{p}\right)_L
\omega^{\frac{-a^2+4a-2}{8a}}
\frac{1}{\lambda_p^3p^{3/2}}
\omega^{t_{\frac{a-1}{2}}}
\mu_{\frac{a-1}{2}}\\
&=
\lambda_p^{-4}
\left(\frac{a}{p}\right)_L^2
\omega^{
\frac{-a^2+4a-2}{8a}
+
\frac{(a-1)(a-3)}{8a}
-
\frac{1}{8a}
}\\
&=1,
\end{aligned}
\end{equation*}
where we used $\lambda_p^4=1$, $\left(\frac{a}{p}\right)_L^2=1$, and
\begin{equation*}
(-a^2+4a-2)+(a-1)(a-3)-1=0.
\end{equation*}
We conclude that $M_a\ket{x}=\ket{ax}$ for all $x\in\mathbb{Z}_p$.
\end{proof}

\begin{corollary}
  For every $a\in\Z_p^*$, $M_a$ is a unitary operator.
  \label{cor:Mg-unitary}
\end{corollary}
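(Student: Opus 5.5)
The plan is to show directly that $M_a^\dagger M_a = I$ using only what \cref{lem:multiplier} already gives us: $M_a\ket{x}=\ket{ax}$ for all $x\in\Z_p$. I will argue that $M_a$ is a permutation matrix, and permutation matrices are unitary. The key input is that $a\in\Z_p^*$ is invertible modulo $p$. So the map $\sigma_a:\Z_p\to\Z_p$, $x\mapsto ax$, is a bijection, with inverse $x\mapsto a^{-1}x$.

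The steps, in order, are as follows.

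First, by \cref{lem:multiplier}, $M_a$ sends each computational basis state $\ket{x}$ to the basis state $\ket{\sigma_a(x)}$. Hence the matrix of $M_a$ in the computational basis has exactly one entry equal to $1$ in each column, and zeros elsewhere.

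Second, since $\sigma_a$ is injective, distinct columns have their $1$ in distinct rows. Therefore the columns $\{\ket{ax}\}_{x}$ form an orthonormal set:
\[
\bra{ay}\ket{ax}=\delta_{ay,ax}=\delta_{y,x}.
\]
In the last equality we cancel $a$ using $a^{-1}$. This yields $M_a^\dagger M_a=I_p$.

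Third, since $M_a$ is a square $p\times p$ matrix, a one-sided inverse is two-sided, so $M_aM_a^\dagger=I_p$ as well. Equivalently, $M_{a^{-1}}M_a=M_aM_{a^{-1}}=I$, which also identifies $M_a^\dagger=M_{a^{-1}}$.

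As an alternative route, one could note that $M_a$ is, by \eqref{eq:Mg}, a product of $H$, $S$, Paulis and a phase in $\langle-\omega\rangle$, each of which is unitary, so $M_a$ is unitary. However, this relies on checking that the scalar has modulus one. Here $\bigl|\left(\frac{a}{p}\right)_L\omega^{t}\bigr|=1$ holds trivially, so this route is also immediate.

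I do not expect a real obstacle. The only point needing care is justifying injectivity of $x\mapsto ax$ over $\Z_p$, which uses that $p$ is prime, or more precisely that $a$ is coprime to $p$.
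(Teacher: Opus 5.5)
Your proposal is correct and follows essentially the same route as the paper: both take the basis action $M_a\ket{x}=\ket{ax}$ from the multiplier lemma, use the invertibility of $a$ modulo $p$, and conclude that $M_a=\sum_x\ket{ax}\bra{x}$ is a permutation matrix. The only difference is cosmetic: the paper checks both $M_aM_a^\dagger=1$ and $M_a^\dagger M_a=1$ by explicit sums using $M_a^\dagger=\sum_k\ket{a^{-1}k}\bra{k}$, whereas you check that the columns are orthonormal and then use that a one-sided inverse of a square matrix is two-sided.
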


\begin{proof}
By \cref{lem:multiplier}, $M_a\ket{j} = \ket{aj}$ for all $j\in\Z_p$. Write $M_a = \sum_{j=0}^{p-1} \ket{aj}\bra{j}$. Since $p$ is prime, every non-zero element of $\Z_p$ is invertible. Let $k = aj$, then $j = a^{-1}k$ for some $k \in \Z_p$. Then $M_a^\dagger = \sum_{j=0}^{p-1} \ket{j}\bra{aj}=\sum_{k=0}^{p-1} \ket{a^{-1}k}\bra{k}$. It follows that $M_a M_a^\dagger = M_a^\dagger M_a=1$, since

\begin{align*}
M_a M_a^\dagger&=\left(\sum_{j=0}^{p-1} \ket{aj}\bra{j}\right)\left(\sum_{k=0}^{p-1} \ket{a^{-1}k}\bra{k}\right)=\sum_{j,k=0}^{p-1} \ket{aj}\braket{j}{a^{-1}k}\bra{k}\\
&\xlongequal[]{j=a^{-1}k}\sum_{k=0}^{p-1} \ket{aa^{-1}k}\bra{k}=\sum_{k=0}^{p-1} \ket{k}\bra{k}=1.
\end{align*}

\begin{align*}
M_a^\dagger M_a&=\left(\sum_{k=0}^{p-1} \ket{a^{-1}k}\bra{k}\right)\left(\sum_{j=0}^{p-1} \ket{aj}\bra{j}\right)=\sum_{k,j=0}^{p-1} \ket{a^{-1}k}\braket{k}{aj}\bra{j}\\
&\xlongequal[]{k=aj}\sum_{j=0}^{p-1} \ket{a^{-1}aj}\bra{j}=\sum_{j=0}^{p-1} \ket{j}\bra{j}=1.
\end{align*}

Therefore, $M_a$ is unitary.
\end{proof}

\begin{lemma}
  For every $j,\ell\in\Z_p$, $\CX$ defined below satisfies $\CX\ket{j,\ell}=\ket{j,j+\ell}$, and addition is taken modulo $p$.
  \[
    \scalebox{1}{\input{figures/Preliminaries/CX-soundness.tikz}}
  \]
\label{lem:CNOT-soundness}
\end{lemma}  

\begin{proof}
  Let $\ket{j}$ and $\ket{\ell}$ be arbitrary computational basis states.
  \begin{align}
  \RHS\ket{j,\ell} &= (I \otimes H^3)\CZ (I\otimes H)\ket{j,\ell}\notag\\
  &\xlongequal[]{\eqref{eq:H}} (I \otimes H^3)\CZ\left(\frac{1}{\lambda_p\sqrt{p}}\sum_{k=0}^{p-1}\omega^{k\ell}\ket{j,k}\right)\notag\\
   &\xlongequal[]{\eqref{eq:CZ}} (I \otimes H^3)\left(\frac{1}{\lambda_p\sqrt{p}}\sum_{k=0}^{p-1}\omega^{k\ell}\omega^{jk}\ket{j,k}\right)\notag\\
   &= \frac{1}{\lambda_p\sqrt{p}}(I\otimes H)\left(\sum_{k=0}^{p-1}\omega^{k(j+\ell)}\ket{j} \left(H^2\ket{k}\right)\right)\notag\\
   &\xlongequal[]{\cref{prop:hsquared}}\frac{1}{\lambda_p\sqrt{p}}(I\otimes H)\left(\sum_{k=0}^{p-1}\omega^{k(j+\ell)}\ket{j}\left(\lambda_p^{-2}\ket{-k}\right)\right)\notag\\
   &=\frac{1}{\lambda_p^3\sqrt{p}}(I\otimes H)\left(\sum_{k=0}^{p-1}\omega^{k(j+\ell)}\ket{j,-k}\right)\notag\\
   &\xlongequal[]{\eqref{eq:H}} \frac{1}{\lambda_p^3\sqrt{p}}\left(\sum_{k=0}^{p-1}\omega^{k(j+\ell)}\ket{j}\left(\frac{1}{\lambda_p\sqrt{p}}\sum_{t=0}^{p-1}\omega^{-kt}\ket{t}\right)\right)\notag\\
   &=\frac{1}{\lambda_p^4 \cdot p}\left(\sum_{k,t=0}^{p-1}\omega^{k(j+\ell-t)}\ket{j,t}\right)\label{eq:CX-int}.
  \end{align} 

  By \cref{lem:order-lambda-p}, we can further simplify $\RHS\ket{j,\ell}$ as follows.

  \begin{equation}
    \eqref{eq:CX-int} = \frac{1}{p}\left(\sum_{k,t=0}^{p-1}\omega^{k(j+\ell-t)}\ket{j,t}\right)=\sum_{t=0}^{p-1}\left(\frac{1}{p}\sum_{k=0}^{p-1}\omega^{k(j + \ell - t)}\right)\ket{j,t}=\sum_{t=0}^{p-1}C_t\ket{j,t},\;\text{where}
    \label{eq:CX-int2}
  \end{equation}

  \begin{equation}
    C_t = \frac{1}{p}\sum_{k=0}^{p-1}\omega^{k(j + \ell - t)}=\frac{1}{p}\sum_{k=0}^{p-1}\omega^{ka},\quad a = j + \ell - t.
    \label{eq:CX-int3}
  \end{equation}

  By \cref{lem:summation}, we proceed by case distinctions. When $a \equiv 0 \pmod{p}$, $C_t = 1$. When $a \not\equiv 0 \pmod{p}$, $C_t = 0$. That is, each term $\ket{j,t}$ has coefficient $0$, except the one with $t \equiv j+\ell \pmod{p}$. Therefore, $\RHS\ket{j,\ell} \xlongequal[]{\eqref{eq:CX-int2}} \ket{j,j+\ell}$.
\end{proof}

\begin{lemma}
For all $g \in \Z_p^*$, $\CX^g\ket{j,\ell} =\ket{j,\ell+gj}$, where multiplication and addition are taken modulo $p$.
\label{lem:multiple-cx}
\end{lemma}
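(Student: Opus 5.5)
The plan is to argue by induction on $g\geq 1$, in the same style as \cref{lem:multiple-s,lem:multiple-cz}. Here $g\in\Z_p^*$ is represented by an integer between $1$ and $p-1$. Since the exponent of a gate of order $p$ is read modulo $p$, it suffices to prove the identity $\CX^g\ket{j,\ell}=\ket{j,\ell+gj}$ for every positive integer $g$. The case $g=p$ then also agrees, because $\ket{j,\ell+pj}=\ket{j,\ell}$.

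For the base case $g=1$, I would invoke \cref{lem:CNOT-soundness} directly, which gives $\CX\ket{j,\ell}=\ket{j,j+\ell}$. For the inductive step, I assume $\CX^g\ket{j,\ell}=\ket{j,\ell+gj}$ for some $g\geq1$. Then I compute
\[
\CX^{g+1}\ket{j,\ell}=\CX\bigl(\CX^g\ket{j,\ell}\bigr)=\CX\ket{j,\ell+gj}=\ket{j,\ell+gj+j}=\ket{j,\ell+(g+1)j}.
\]
The middle equality uses the induction hypothesis, and the next one applies \cref{lem:CNOT-soundness} with second input $\ell+gj$. All arithmetic is in $\Z_p$. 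The key fact used throughout is that $\CX$ fixes the control value $j$, so the same shift $j$ is added at every step.

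There is no real obstacle in this argument. The only point to state explicitly is that the identity holds for all positive integers $g$, not just those in $\Z_p^*$. It is therefore independent of the representative chosen for $g$, and it is consistent with the convention in \cref{rem:minus-one}, under which, for example, $\CX^{-1}=\CX^{p-1}$ acts as $\ket{j,\ell}\mapsto\ket{j,\ell-j}$.
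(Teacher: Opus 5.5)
Your proof is correct and is the same argument as the paper's. The paper also inducts on $g\geq 1$, takes the base case from the action $\CX\ket{j,\ell}=\ket{j,j+\ell}$, and proves the inductive step as $\CX^{g+1}\ket{j,\ell}=\CX\ket{j,\ell+gj}=\ket{j,\ell+(g+1)j}$. Your extra remark, that the formula holds for every positive integer $g$ and so does not depend on the chosen representative, is a small clarification the paper leaves implicit.
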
  

\begin{proof}
Let $\ket{j,\ell}$ be a computational basis state, and we proceed by induction on $g$. When $g=1$, the base case holds by \eqref{eq:CX-action}. Suppose the statement holds when $g\geq 1$. Then
\[
\CX^{g+1}\ket{j,\ell} = \CX\left(\CX^g\ket{j,\ell}\right)\xlongequal[]{\IH}\CX\ket{j,\ell+gj}\xlongequal[]{\eqref{eq:CX-action}}\ket{j,j+\ell+gj}=\ket{j,\ell+(g+1)j}.
\]
This completes the proof.
\end{proof} 

\begin{lemma}
  For every $j,\ell\in\Z_p$, $\SWAP$ defined below satisfies $\SWAP\ket{j,\ell}=\ket{\ell,j}$.
    \[
    \scalebox{1}{\input{figures/Preliminaries/SWAP-soundness.tikz}}
  \]
\label{lem:SWAP-soundness}
\end{lemma}  

\begin{proof}
Let $\ket{j}$ and $\ket{\ell}$ be arbitrary computational basis states.
\begin{align}
  \RHS\ket{j,\ell} &= \lambda_p^2 \cdot \CZ(H\otimes H)\CZ(H\otimes H)\CZ(H\otimes H)\ket{j,\ell}\notag\\
  &\xlongequal[]{\eqref{eq:H}} \lambda_p^2 \cdot \CZ(H\otimes H)\CZ(H\otimes H)\CZ\left(\frac{1}{\lambda_p^2 p}\cdot\sum_{k,t=0}^{p-1}\omega^{kj+t\ell}\ket{k,t}\right)\notag\\
  &\xlongequal[]{\eqref{eq:CZ}} \lambda_p^2 \cdot \frac{1}{\lambda_p^2 p}\cdot \CZ(H\otimes H)\CZ(H\otimes H)\left(\sum_{k,t=0}^{p-1}\omega^{kj+t\ell+kt}\ket{k,t}\right)\notag\\
  &\xlongequal[\eqref{eq:CZ}]{\eqref{eq:H}} \frac{1}{p}\cdot\CZ(H\otimes H)\left(\frac{1}{\lambda_p^2 p}\cdot\sum_{k,t=0}^{p-1}\omega^{kj+t\ell+kt}\sum_{a,b=0}^{p-1}\omega^{ak+bt+ab}\ket{a,b}\right)\notag\\
  &= \frac{1}{\lambda_p^2 p^2}\cdot\CZ(H\otimes H)\left(\sum_{a,b,k,t=0}^{p-1}\omega^{kj+t\ell+kt + ak+bt+ab}\ket{a,b}\right)\notag\\
  &\xlongequal[\eqref{eq:CZ}]{\eqref{eq:H}} \frac{1}{\lambda_p^4 p^3}\cdot\left(\sum_{a,b,k,t=0}^{p-1}\omega^{kj+t\ell+kt + ak+bt+ab}\sum_{c,r = 0}^{p-1}\omega^{ca + br + cr}\ket{c,r}\right)\notag\\
  &\xlongequal[]{\cref{lem:order-lambda-p}}\frac{1}{p^3}\cdot\left(\sum_{a,b,c,r,k,t=0}^{p-1}\omega^{kj+t\ell+kt+ak+bt+ab+ca+br+cr}\ket{c,r}\right)\notag\\
  &=\frac{1}{p^3}\cdot\left(\sum_{b,c,r,k,t=0}^{p-1}\omega^{kj+t\ell+kt+bt+br+cr}\left(\sum_{a=0}^{p-1}\omega^{a(b+c+k)}\right)\ket{c,r}\right)\notag\\
  &\xlongequal[]{\eqref{eq:summation}}\frac{p}{p^3}\cdot\left(\sum_{b,c,r,k,t=0}^{p-1}\omega^{kj+t\ell+kt+bt+br+cr}\delta_p(b+c+k)\ket{c,r}\right)\notag\\
  &=\frac{1}{p^2}\cdot\left(\sum_{c,r,k,t=0}^{p-1}\omega^{kj+t\ell+kt+(-k-c)t+(-k-c)r+cr}\ket{c,r}\right)\label{eq:SWAP-int}
\end{align}

Note that $kj+t\ell+kt+(-k-c)t+(-k-c)r+cr=kj+t\ell+kt-kt-tc-kr-cr+cr=kj+t\ell-tc-kr = k(j-r)+t(\ell-c)$. Hence, \eqref{eq:SWAP-int} can be further simplified as follows.
\begin{align}
    \RHS\ket{j,\ell} &= \frac{1}{p^2}\cdot\left(\sum_{c,r,k,t=0}^{p-1}\omega^{k(j-r)}\cdot\omega^{t(\ell-c)}\ket{c,r}\right)\notag\\
    &= \frac{1}{p^2}\cdot\left(\sum_{c,r=0}^{p-1}\left(\sum_{k=0}^{p-1}\omega^{k(j-r)}\right)\cdot\left(\sum_{t=0}^{p-1}\omega^{t(\ell-c)}\right)\ket{c,r}\right)\notag\\
    &\xlongequal[]{\eqref{eq:summation}} \frac{1}{p^2}\cdot\left(\sum_{c,r=0}^{p-1}p^2\cdot\delta_p(j-r)\cdot \delta_p(\ell-c)\ket{c,r}\right)\notag\\
    &= \sum_{c,r=0}^{p-1}\delta_p(j-r)\cdot \delta_p(\ell-c)\ket{c,r} = \ket{\ell,j}.
\end{align}
\end{proof}  

\begin{lemma}
  For every $j,\ell\in\Z_p$, $\XC$ defined below satisfies $\XC\ket{j,\ell}=\ket{j+\ell,\ell}$, and addition is taken modulo $p$.
    \[
    \scalebox{1}{\input{figures/Preliminaries/XC-soundness.tikz}}
  \]
\label{lem:XC-soundness}
\end{lemma}  

\begin{proof}
  Let $\ket{j}$ and $\ket{\ell}$ be arbitrary computational basis states.
\[
  \RHS\ket{j,\ell} = (\SWAP)\CX(\SWAP)\ket{j,\ell} \xlongequal[]{\eqref{eq:SWAP-action}}(\SWAP)\CX\ket{\ell,j} \xlongequal[]{\eqref{eq:CX-action}} \SWAP\ket{\ell,j+\ell} \xlongequal[]{\eqref{eq:SWAP-action}}\ket{j+\ell,\ell}.
\]
\end{proof}

\begin{lemma}
  For every $j,\ell,k\in\Z_p$, $\CIZ$ defined below satisfies $\CIZ\ket{j,\ell,k}=\omega^{jk}\ket{j,\ell,k}$, and multiplication is taken modulo $p$.
    \[
    \scalebox{1}{\input{figures/Preliminaries/CIZ-soundness.tikz}}
  \]
\label{lem:CIZ-soundness}
\end{lemma}

\begin{proof}
Let $\ket{j}$, $\ket{\ell}$, and $\ket{k}$ be arbitrary computational basis states.
\begin{align*}
\CIZ\ket{j,\ell,k}
&= (I\otimes \SWAP)(\CZ\otimes I)(I\otimes \SWAP)\ket{j,\ell,k} \xlongequal[]{\eqref{eq:SWAP-action}} (I\otimes \SWAP)(\CZ\otimes I)\ket{j,k,\ell}\\
&\xlongequal[]{\eqref{eq:CZ}} (I\otimes \SWAP)\omega^{jk}\ket{j,k,\ell}\xlongequal[]{\eqref{eq:SWAP-action}} \omega^{jk}\ket{j,\ell,k}.
\end{align*}
\end{proof}  

Reasoning analogously to the proofs of \cref{lem:CNOT-soundness,lem:SWAP-soundness,lem:XC-soundness,lem:CIZ-soundness}, we can show that the derived generators $\CIX$ and $\XIC$ have the desired actions on the computational basis states.

\begin{corollary}
  For every $j,\ell,k\in\Z_p$, $\CIX$ defined below satisfies $\CIX\ket{j,\ell,k}=\ket{j,\ell,j+k}$, and addition is taken modulo $p$.
    \[
    \scalebox{1}{\input{figures/Preliminaries/CIX-soundness.tikz}}
  \]
\label{cor:CIX-soundness}
\end{corollary}  

\begin{corollary}
  For every $j,\ell,k\in\Z_p$, $\XIC$ defined below satisfies $\XIC\ket{j,\ell,k}=\ket{j+k,\ell,k}$, and addition is taken modulo $p$.
    \[
    \scalebox{1}{\input{figures/Preliminaries/XIC-soundness.tikz}}
  \]
\label{cor:XIC-soundness}
\end{corollary}

\subsection{Soundness of the Reduced Relations}
\label{subsec:soundness}

Finally, we show that the relations of \cref{fig:rewriterules6} are sound. That is, we show that, for each relation in \cref{fig:rewriterules6}, the circuit on the left-hand side of the relation and the circuit on the right-hand side of the relation correspond to the same linear map. To this end, we use path-sums~\cite{amy2019towards,koh2017computing} to show that the left- and right-hand side circuits act identically on an arbitrary computational basis state, and therefore on all computational basis states. Because linear maps are completely determined by their action on a basis, this suffices to establish the soundness of the relation. An advantage of path-sums for this task is that it enables us to conveniently reason for an unspecified odd prime $p$. As a result, we can establish the soundness of the relations in \cref{fig:rewriterules6} for all odd primes in one fell swoop.

\paragraph{The Soundness of the Single-Qudit Clifford Relations} 

We first show that the single-qudit Clifford relations hold on the nose. That is, we show that the left- and right-hand side circuits of each relation in \cref{fig:rewriterules6} are equal as linear maps, not just up to a global phase.

\begin{lemma}
Let $g \in \Z_p^*$ and $k \in \Z$. $Z$ and $M_g$ are defined in \eqref{eq:Z} and \eqref{eq:Mg}, respectively. The following equations are sound.
\begin{align}
  (-\omega)^{2p} &= 1 \reltag{C}{eq:omega-soundness}\\
  S^p &= 1 \reltag{C}{eq:S-soundness}\\
  H^2 &= M_{-1}\cdot (-1)^{\frac{p-1}{2}} \reltag{C}{eq:H2-soundness}\\
  (M_g)^k &= M_{g^k} \reltag{C}{eq:Mg-soundness}\\
  M_gS &= Z^{(1-g)/(2g^2)}S^{g^{-2}}M_{g} \reltag{C}{eq:Mg-S-soundness}\\
  SH^2SH^2 &= H^2SH^2S \reltag{C}{eq:SH2-SH2-soundness}
\end{align}
\label{lem:single-qudit-soundness}
\end{lemma}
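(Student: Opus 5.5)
We verify each of the six equations by evaluating both sides on an arbitrary computational basis state $\ket{j}$. We use the basis actions already established: \eqref{eq:S}, \cref{lem:multiple-s}, \cref{prop:hsquared}, \eqref{eq:Z-action}, and the multiplier action $M_a\ket{x}=\ket{ax}$ from \cref{lem:multiplier}. Since every operator that appears apart from $H^2$ is diagonal or a permutation up to phase, and $H^2$ itself is $\lambda_p^{-2}$ times a permutation, no genuine path sums are needed. Each check reduces to comparing an exponent of $\omega$ (read in $\Z_p$ per \cref{rem:modular-inverse-exponent}) together with a power of $\lambda_p$.

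The first four equations are immediate.
\begin{itemize}
\item For \eqref{eq:omega-soundness}, note that $(-\omega)^{2p}=\omega^{2p}=1$.
\item For \eqref{eq:S-soundness}, apply \cref{cor:S-order}.
\item For \eqref{eq:H2-soundness}, combine \cref{prop:hsquared} with $M_{-1}\ket{j}=\ket{-j}$ and $\lambda_p^{-2}=e^{-(p-1)\pi i/2}=(-1)^{(p-1)/2}$.
\item For \eqref{eq:Mg-soundness}, induct on $k\ge 0$: $M_g^{k}\ket{x}=\ket{g^kx}$. The case $k=0$ gives $M_1=\mathrm{id}$.
\end{itemize}

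For \eqref{eq:SH2-SH2-soundness}, both $S$ and $H^2SH^2$ are diagonal. Indeed, \cref{prop:hsquared} gives $H^2SH^2\ket{j}=\lambda_p^{-4}\omega^{\frac{(-j)(-j-1)}{2}}\ket{j}$, and $\lambda_p^{-4}=1$ by \cref{lem:order-lambda-p}. Diagonal operators commute, so the two sides agree.

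The only equation with real content is \eqref{eq:Mg-S-soundness}. Write $u=g^{-1}$ and $c=(1-g)/(2g^2)$.
\begin{itemize}
\item Left side: $M_gS\ket{j}=\omega^{j(j-1)/2}\ket{gj}$.
\item Right side: by \cref{lem:multiple-s} and \eqref{eq:Z-action}, $Z^{c}S^{u^2}M_g\ket{j}=\omega^{u^2\frac{gj(gj-1)}{2}+c\,gj}\ket{gj}$.
\end{itemize}
It then suffices to check the identity in $\Z_p$ (valid since $2$ and $g$ are invertible):
\[
\frac{j^2-g^{-1}j}{2}+\frac{(1-g)j}{2g}=\frac{j^2-j}{2}.
\]
The obstacle here is purely bookkeeping. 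We must justify that exponents like $\frac{x(x-1)}{2}$ may be manipulated as elements of $\Z_p$ with $2^{-1}$, which is legitimate because $\omega$ has order $p$ and $p$ is odd. We must also be careful that $S^{g^{-2}}$ means an integer power, per \cref{rem:minus-one}. With that convention fixed, each relation is a direct one-line comparison of basis actions, so equality holds on the nose, and linearity extends it to all states.
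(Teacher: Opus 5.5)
Your proof is correct and follows the paper's route: you evaluate both sides on a basis state $\ket{j}$ using the established basis actions, and your exponent identity for \eqref{eq:Mg-S-soundness} is exactly the paper's. For \eqref{eq:SH2-SH2-soundness}, observing that $S$ and $H^2SH^2$ are both diagonal (hence commute) is a slightly slicker form of the paper's explicit check that both sides give $\omega^{j^2}\ket{j}$; the only small omission is that \eqref{eq:Mg-soundness} is stated for $k\in\Z$, so you should add that negative $k$ follows from $M_g^{-1}=M_{g^{-1}}$.
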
  

\begin{proof}
By \eqref{eq:neg-one-omega-power}, $(-\omega)^{2p} = \left((-\omega)^p\right)^2 = (-1)^2 = 1$, so \eqref{eq:omega-soundness} holds. By \cref{cor:S-order}, \eqref{eq:S-soundness} holds. Now consider an arbitrary computational basis state $\ket{j}$. To show \eqref{eq:H2-soundness} holds, note that since $(p-1)/2$ is an integer, \cref{def:lambda} gives $\lambda_p^{-2} = e^{-(p-1)\pi i/2} = (-1)^{\frac{p-1}{2}}$. Hence,
\[
(-1)^{\frac{p-1}{2}}\cdot M_{-1}\ket{j}\xlongequal[]{\eqref{eq:Mg-action}} (-1)^{\frac{p-1}{2}}\ket{-j} = \lambda_p^{-2}\ket{-j}\xlongequal[]{\cref{prop:hsquared}}H^2\ket{j}.
\]

To show \eqref{eq:Mg-soundness} holds, we have

\[
(M_g)^k\ket{j} =\ket{g^kj} \xlongequal[]{\eqref{eq:Mg-action}} M_{g^k}\ket{j}.
\]

To show \eqref{eq:Mg-S-soundness} holds, we have
\begin{equation}
\LHS\ket{j} = M_gS\ket{j} \xlongequal[]{\eqref{eq:S}} \omega^{\frac{j(j-1)}{2}}M_g\ket{j} \xlongequal[]{\eqref{eq:Mg-action}}\omega^{\frac{j(j-1)}{2}}\ket{gj}
\label{eq:Mg-S-soundness-1}
\end{equation}
\begin{align}
\RHS\ket{j} &= Z^{(1-g)/(2g^2)}S^{g^{-2}}M_g\ket{j} \xlongequal[]{\eqref{eq:Mg-action}} Z^{(1-g)/(2g^2)}S^{g^{-2}}\ket{gj}  \notag\\
&\xlongequal[]{\cref{lem:multiple-s}} Z^{(1-g)/(2g^2)}\omega^{\frac{gj(gj-1)}{2g^2}}\ket{gj} \notag\\
&\xlongequal[]{\eqref{eq:Z-action}} \omega^{\frac{(1-g)gj}{2g^2}}\omega^{\frac{gj(gj-1)}{2g^2}}\ket{gj}.
\label{eq:Mg-S-soundness-2}
\end{align}

We can simplify the scalar of \eqref{eq:Mg-S-soundness-2}, since 
\[
\frac{(1-g)gj}{2g^2}+\frac{gj(gj-1)}{2g^2}= \frac{gj(1-g+gj-1)}{2g^2} = \frac{gj(gj-g)}{2g^2} = \frac{j(j-1)}{2}.
\]

Combining \eqref{eq:Mg-S-soundness-1} and \eqref{eq:Mg-S-soundness-2}, $\LHS \ket{j} = \RHS \ket{j}$ for all $j\in \Z_p$, which establishes the soundness of \eqref{eq:Mg-S-soundness}.

To show \eqref{eq:SH2-SH2-soundness} holds, we have
\begin{align}
\LHS\ket{j} &= SH^2SH^2\ket{j} \xlongequal[]{\cref{prop:hsquared}} \lambda_p^{-2}\cdot SH^2S\ket{-j} \xlongequal[]{\eqref{eq:S}} \lambda_p^{-2}\omega^{\frac{(-j)(-j-1)}{2}} SH^2\ket{-j} \notag\\
&\xlongequal[]{\cref{prop:hsquared}} \lambda_p^{-4}\omega^{\frac{(-j)(-j-1)}{2}}S\ket{j}\xlongequal[]{\eqref{eq:S}} \lambda_p^{-4}\omega^{\frac{(-j)(-j-1)}{2}}\omega^{\frac{j(j-1)}{2}}\ket{j}\notag\\ 
&\xlongequal[]{\cref{lem:order-lambda-p}} \omega^{\frac{(-j)(-j-1)+j(j-1)}{2}}\ket{j} = \omega^{j^2}\ket{j}.
\label{eq:SH2-SH2-soundness-1}
\end{align}
% We can simplify the scalar of \eqref{eq:SH2-SH2-soundness-1}, since 
% \[
% \frac{(-j)(-j-1)+j(j-1)}{2}= \frac{j(j+1)+j(j-1)}{2} = \frac{2j^2}{2} = j^2.
% \]

\begin{align}
\RHS\ket{j} &= H^2SH^2S\ket{j} \xlongequal[]{\eqref{eq:S}} \omega^{\frac{j(j-1)}{2}}H^2SH^2\ket{j} \xlongequal[]{\cref{prop:hsquared}} \lambda_p^{-2}\omega^{\frac{j(j-1)}{2}}H^2S\ket{-j} \notag\\
&\xlongequal[]{\eqref{eq:S}} \lambda_p^{-2}\omega^{\frac{j(j-1)}{2}}\omega^{\frac{(-j)(-j-1)}{2}}H^2\ket{-j} \xlongequal[]{\cref{prop:hsquared}} \lambda_p^{-4}\omega^{\frac{j(j-1)}{2}}\omega^{\frac{(-j)(-j-1)}{2}}\ket{j} \notag\\
&\xlongequal[]{\cref{lem:order-lambda-p}} \omega^{\frac{j(j-1)+(-j)(-j-1)}{2}}\ket{j} = \omega^{j^2}\ket{j}.
\label{eq:SH2-SH2-soundness-2}
\end{align}

Combining \eqref{eq:SH2-SH2-soundness-1} and \eqref{eq:SH2-SH2-soundness-2}, $\LHS \ket{j} = \RHS \ket{j}$ for all $j\in \Z_p$, which establishes the soundness of \eqref{eq:SH2-SH2-soundness}.
\end{proof}  

\begin{lemma}
Let $g \in \Z_p^*$. $M_g$, $\CX$, and $\SWAP$ are defined in \eqref{eq:Mg}, \eqref{eq:CX}, and \eqref{eq:SWAP}, respectively. The following equations are sound.
\begin{align}
  \CZ^p &= 1 \reltag{C}{eq:CZ-soundness}\\
  \SWAP^2 &= 1 \reltag{C}{eq:SWAP-soundness}\\
  \CZ(S\otimes I) &= (S\otimes I)\CZ \reltag{C}{eq:CZ-S-soundness}\\
  \CZ(M_g\otimes I) &= (M_g\otimes I)\CZ^g \reltag{C}{eq:CZ-Mg-soundness}\\
  \SWAP(S\otimes I) &= (I\otimes S)\SWAP \reltag{C}{eq:SWAP-S-soundness}\\
  \SWAP(H\otimes I) &= (I\otimes H)\SWAP \reltag{C}{eq:SWAP-H-soundness}\\
  (S^{-1}\otimes S^{-1})\CX^{-1}(I\otimes S)\CX &= \CZ \reltag{C}{eq:CX-CZ-soundness}
\end{align}
\label{lem:two-qudit-soundness}
\end{lemma}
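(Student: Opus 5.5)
Each of the seven equations is an equality of linear maps on two qudits, so I will proceed as in \cref{lem:single-qudit-soundness}. I fix an arbitrary basis state $\ket{j,\ell}$ and compute both sides with path sums. Since both sides are maps of the form phase times permutation, each computation stays short.

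I will not re-expand $M_g$, $\CX$ and $\SWAP$ from their definitions. Instead I will use the actions already established: \eqref{eq:Mg-action} for $M_g$, \eqref{eq:CX-action} together with \cref{lem:multiple-cx} for $\CX$, and \eqref{eq:SWAP-action} for $\SWAP$. I will also use \cref{lem:multiple-s} for powers of $S$ and \cref{lem:multiple-cz} for powers of $\CZ$.

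The individual checks are as follows.
\begin{itemize}
\item \eqref{eq:CZ-soundness} is \cref{cor:CZ-order}.
\item \eqref{eq:SWAP-soundness} follows from applying \eqref{eq:SWAP-action} twice.
\item \eqref{eq:CZ-S-soundness} holds because both sides are diagonal, with phase $\omega^{j(j-1)/2+j\ell}$.
\item For \eqref{eq:CZ-Mg-soundness}, the left side gives $\CZ\ket{gj,\ell}=\omega^{gj\ell}\ket{gj,\ell}$, and the right side gives $(M_g\otimes I)\omega^{gj\ell}\ket{j,\ell}$. These agree.
\item For \eqref{eq:SWAP-S-soundness}, both sides send $\ket{j,\ell}$ to $\omega^{j(j-1)/2}\ket{\ell,j}$.
\item For \eqref{eq:SWAP-H-soundness}, both sides give $\frac{1}{\lambda_p\sqrt p}\sum_k\omega^{jk}\ket{\ell,k}$.
\end{itemize}

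The one computation needing care is \eqref{eq:CX-CZ-soundness}, where the phases must be tracked modulo $p$. Reading the operator product right to left on $\ket{j,\ell}$:
\begin{itemize}
\item $\CX$ gives $\ket{j,j+\ell}$;
\item $I\otimes S$ contributes $\omega^{(j+\ell)(j+\ell-1)/2}$;
\item $\CX^{-1}=\CX^{p-1}$ returns the state to $\ket{j,\ell}$, by \cref{lem:multiple-cx};
\item $S^{-1}\otimes S^{-1}$ contributes $\omega^{-j(j-1)/2-\ell(\ell-1)/2}$.
\end{itemize}
Expanding $(j+\ell)(j+\ell-1)/2 = j(j-1)/2+\ell(\ell-1)/2+j\ell$, the total phase is $\omega^{j\ell}$, which matches \eqref{eq:CZ}. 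This is exactly the $q(j+\ell)$ decomposition described in the introduction.

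The main obstacle is making sure the halves in the exponents are legitimate. I will note that $j(j-1)/2$ is computed as an integer for $0\le j\le p-1$. The expression $(j+\ell)(j+\ell-1)/2$ is also an integer, and it agrees modulo $p$ with the value at the reduced representative of $j+\ell$ because $2$ is invertible modulo $p$. Hence the phase $S$ assigns is well defined on $\Z_p$, and all the identities above can be read in $\Z_p$.
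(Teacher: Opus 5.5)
Your proposal is correct and follows essentially the same route as the paper: each identity is checked on an arbitrary basis state $\ket{j,\ell}$ using the established actions of $M_g$, $\CX$ and $\SWAP$ together with the power lemmas for $S$, $\CZ$ and $\CX$. In both versions the last relation reduces to the same exponent identity $\frac{(j+\ell)(j+\ell-1)}{2}-\frac{j(j-1)}{2}-\frac{\ell(\ell-1)}{2}=j\ell$, and your remark that $j(j-1)/2$ is well defined modulo $p$ is a small point of rigour the paper leaves implicit.
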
 

\begin{proof}
By \cref{cor:CZ-order}, \eqref{eq:CZ-soundness} holds. To show \eqref{eq:SWAP-soundness} holds, it suffices to show that $\SWAP^2\ket{j,\ell} = \ket{j,\ell}$ for all $j,\ell\in \Z_p$. By \cref{lem:SWAP-soundness}, we have $\SWAP^2\ket{j,\ell} = \SWAP\ket{\ell,j} = \ket{j,\ell}$, which establishes the soundness of \eqref{eq:SWAP-soundness}. To show \eqref{eq:CZ-S-soundness} holds, we have
\begin{align*}
\CZ(S\otimes I)\ket{j,\ell} &\xlongequal[]{\eqref{eq:S}} \omega^{\frac{j(j-1)}{2}}\CZ\ket{j,\ell} \xlongequal[]{\eqref{eq:CZ}} \omega^{\frac{j(j-1)}{2}}\omega^{j\ell}\ket{j,\ell} \\
&\xlongequal[]{\eqref{eq:S}} \omega^{j\ell}(S\otimes I)\ket{j,\ell} \xlongequal[]{\eqref{eq:CZ}} (S\otimes I)\CZ\ket{j,\ell}.
\end{align*}

To show \eqref{eq:CZ-Mg-soundness} holds, we have
\begin{align*}
\CZ(M_g\otimes I)\ket{j,\ell} &\xlongequal[]{\eqref{eq:Mg-action}} \CZ\ket{gj,\ell} \xlongequal[]{\eqref{eq:CZ}} \omega^{gj\ell}\ket{gj,\ell} \\
&\xlongequal[]{\eqref{eq:Mg-action}} \omega^{gj\ell}(M_g\otimes I)\ket{j,\ell} \xlongequal[\cref{lem:multiple-cz}]{\eqref{eq:CZ}} (M_g\otimes I)\CZ^g\ket{j,\ell}.
\end{align*}

To show \eqref{eq:SWAP-S-soundness} holds, we have
\begin{align*}
\SWAP(S\otimes I)\ket{j,\ell}&\xlongequal[]{\eqref{eq:S}} \omega^{\frac{j(j-1)}{2}}\SWAP\ket{j}\ket{\ell} \xlongequal[]{\eqref{eq:SWAP-action}} \omega^{\frac{j(j-1)}{2}}\ket{\ell}\ket{j} \\
&\xlongequal[]{\eqref{eq:S}} (I\otimes S)\ket{\ell}\ket{j}\xlongequal[]{\eqref{eq:SWAP-action}}(I\otimes S)\SWAP\ket{j}\ket{\ell}.
\end{align*}

To show \eqref{eq:SWAP-H-soundness} holds, we have
\begin{align*}
\SWAP(H\otimes I)\ket{j,\ell}&\xlongequal[]{\eqref{eq:H}} \frac{1}{\lambda_p\sqrt{p}}\cdot\SWAP\left( \sum_{k=0}^{p-1}\omega^{kj}\ket{k,\ell}\right) \xlongequal[]{\eqref{eq:SWAP-action}} \frac{1}{\lambda_p\sqrt{p}}\cdot\sum_{k=0}^{p-1}\omega^{kj}\ket{\ell,k} \\
&= \ket{\ell}\left(\frac{1}{\lambda_p\sqrt{p}}\cdot\sum_{k=0}^{p-1}\omega^{kj}\ket{k}\right) \xlongequal[]{\eqref{eq:H}} (I\otimes H) \ket{\ell,j} \xlongequal[]{\eqref{eq:SWAP-action}} (I\otimes H)\SWAP \ket{j,\ell}.
\end{align*}

To show \eqref{eq:CX-CZ-soundness} holds, we have
\begin{align}
(S^{-1}\otimes S^{-1})\CX^{-1}(I\otimes S)\CX\ket{j,\ell} &\xlongequal[]{\eqref{eq:CX-action}} (S^{-1}\otimes S^{-1})\CX^{-1}(I\otimes S)\ket{j,j+\ell}\notag\\
&\xlongequal[]{\eqref{eq:S}} \omega^{\frac{(j+\ell)(j+\ell-1)}{2}}(S^{-1}\otimes S^{-1})\CX^{-1}\ket{j,j+\ell} \notag\\
&\xlongequal[]{\cref{lem:multiple-cx}} \omega^{\frac{(j+\ell)(j+\ell-1)}{2}}(S^{-1}\otimes S^{-1})\ket{j,\ell}\notag\\
&\xlongequal[]{\cref{lem:multiple-s}} \omega^{\frac{(j+\ell)(j+\ell-1)}{2}}\omega^{\frac{-j(j-1)}{2}}\omega^{\frac{-\ell(\ell-1)}{2}}\ket{j,\ell}\label{eq:CX-CZ-soundness-1}
\end{align}
The scalar of \eqref{eq:CX-CZ-soundness-1} can be simplified as $\omega^{j\ell}$, since
\begin{align*}
\frac{(j+\ell)(j+\ell-1)}{2}-\frac{j(j-1)}{2}-\frac{\ell(\ell-1)}{2} &= \frac{j^2+2j\ell+\ell^2-(j+\ell)-j^2+j-\ell^2 +\ell}{2} = j\ell.
\end{align*}

By \eqref{eq:CZ}, we have $\omega^{j\ell}\ket{j,\ell} = \CZ\ket{j,\ell}$, which establishes the soundness of \eqref{eq:CX-CZ-soundness}.
\end{proof}  

\begin{lemma}
$\CX$, $\SWAP$, and $\CIZ$ are defined in \eqref{eq:CX}, \eqref{eq:SWAP}, and \eqref{eq:CIZ}, respectively. The following equations are sound.
\begin{align}
  (\SWAP\otimes I)(I\otimes \SWAP)(\SWAP\otimes I) &= (I\otimes \SWAP)(\SWAP\otimes I)(I\otimes \SWAP)  \reltag{C}{eq:SWAP-SWAP-soundness}\\
  (I\otimes \SWAP)(\SWAP\otimes I)(I\otimes \CZ) &= (\CZ\otimes I)(I\otimes \SWAP)(\SWAP\otimes I) \reltag{C}{eq:SWAP-CZ-soundness}\\
  (I\otimes \CZ)(\CX\otimes I) &= \CIZ(\CX\otimes I)(I\otimes \CZ) \reltag{C}{eq:CZ-CX-soundness}
\end{align}
\label{lem:three-qudit-soundness}
\end{lemma}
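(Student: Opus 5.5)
We verify each of the three relations on an arbitrary computational basis state $\ket{j,\ell,k}$, in the same path-sum style as \cref{lem:two-qudit-soundness}. Since a linear map is determined by its action on a basis, this suffices. The only inputs are the basis actions of the derived generators already established: \eqref{eq:SWAP-action} for $\SWAP$, \eqref{eq:CZ} for $\CZ$, \eqref{eq:CX-action} for $\CX$, and \eqref{eq:CIZ-action} for $\CIZ$. All of these are permutations of basis states, possibly with a diagonal phase, so no Gauss sums or orthogonality arguments are needed.

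For the Yang--Baxter relation \eqref{eq:SWAP-SWAP-soundness}, we apply the three factors on each side in turn, with the rightmost factor acting first. On the left, $\ket{j,\ell,k}\mapsto\ket{\ell,j,k}\mapsto\ket{\ell,k,j}\mapsto\ket{k,\ell,j}$. On the right, $\ket{j,\ell,k}\mapsto\ket{j,k,\ell}\mapsto\ket{k,j,\ell}\mapsto\ket{k,\ell,j}$. Both sides therefore agree, and each realises the reversal permutation.

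For \eqref{eq:SWAP-CZ-soundness}, the left side first applies $I\otimes\CZ$, giving the phase $\omega^{\ell k}$. It then applies $\SWAP\otimes I$ and $I\otimes\SWAP$, giving $\ket{\ell,j,k}$ and then $\ket{\ell,k,j}$. The right side first permutes to $\ket{\ell,j,k}$ and then to $\ket{\ell,k,j}$, after which $\CZ\otimes I$ contributes the phase $\omega^{\ell k}$ from the first two entries. Hence both sides give $\omega^{\ell k}\ket{\ell,k,j}$.

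For \eqref{eq:CZ-CX-soundness}, the left side maps $\ket{j,\ell,k}$ to $\ket{j,j+\ell,k}$ via $\CX\otimes I$, and then $I\otimes\CZ$ contributes $\omega^{(j+\ell)k}$. On the right side, $I\otimes\CZ$ first gives $\omega^{\ell k}$, then $\CX\otimes I$ gives $\ket{j,j+\ell,k}$, and finally $\CIZ$ adds $\omega^{jk}$ by \eqref{eq:CIZ-action}. Since $\omega^{\ell k}\omega^{jk}=\omega^{(j+\ell)k}$, the two sides agree.

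The only step needing care is the bookkeeping of operator order, since these equations are written as matrix products and the rightmost factor acts first. We must also check that the phase in \eqref{eq:SWAP-CZ-soundness} is read off from the correct wires after each permutation. Nothing else is an obstacle, and there are no global-phase subtleties because the $\lambda_p^2$ in \eqref{eq:SWAP} is already absorbed into \cref{lem:SWAP-soundness}.
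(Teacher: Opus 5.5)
Your proposal is correct and follows the paper's approach: each relation is checked on an arbitrary basis state $\ket{j,\ell,k}$ using only the basis actions \eqref{eq:SWAP-action}, \eqref{eq:CZ}, \eqref{eq:CX-action}, and \eqref{eq:CIZ-action}, giving $\ket{k,\ell,j}$, $\omega^{\ell k}\ket{\ell,k,j}$, and $\omega^{(j+\ell)k}\ket{j,j+\ell,k}$ respectively. The only cosmetic difference is that the paper writes each check as one chain of equalities from the left-hand side to the right-hand side, whereas you evaluate the two sides separately.
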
 

\begin{proof}
To show \eqref{eq:SWAP-SWAP-soundness} holds, it suffices to show that $(\SWAP\otimes I)(I\otimes \SWAP)(\SWAP\otimes I)\ket{j,\ell,k} = (I\otimes \SWAP)(\SWAP\otimes I)(I\otimes \SWAP)\ket{j,\ell,k}$ for all $j,\ell,k \in \Z_p$. Then,

\begin{align*}
(\SWAP\otimes I)(I\otimes \SWAP)(\SWAP\otimes I)\ket{j,\ell,k} 
&\xlongequal[]{\eqref{eq:SWAP-action}} (\SWAP\otimes I)(I\otimes \SWAP)\ket{\ell,j,k}  \\
&\xlongequal[]{\eqref{eq:SWAP-action}} (\SWAP\otimes I)\ket{\ell,k,j} \\
&\xlongequal[]{\eqref{eq:SWAP-action}} \ket{k,\ell,j} \\
&\xlongequal[]{\eqref{eq:SWAP-action}} (I\otimes \SWAP)\ket{k,j,\ell} \\
&\xlongequal[]{\eqref{eq:SWAP-action}} (I\otimes \SWAP)(\SWAP\otimes I)\ket{j,k,\ell} \\
&\xlongequal[]{\eqref{eq:SWAP-action}} (I\otimes \SWAP)(\SWAP\otimes I)(I\otimes \SWAP)\ket{j,\ell,k}.
\end{align*}

To show \eqref{eq:SWAP-CZ-soundness} holds, we have
\begin{align*}
(I\otimes \SWAP)(\SWAP\otimes I)(I\otimes \CZ)\ket{j,\ell,k} 
&\xlongequal[]{\eqref{eq:CZ}} \omega^{\ell k}(I\otimes \SWAP)(\SWAP\otimes I) \ket{j,\ell,k} \\
&\xlongequal[]{\eqref{eq:SWAP-action}} \omega^{\ell k}(I\otimes \SWAP) \ket{\ell,j,k} \\
&\xlongequal[]{\eqref{eq:SWAP-action}} \omega^{\ell k}\ket{\ell,k,j} \\
&\xlongequal[]{\eqref{eq:CZ}} (\CZ\otimes I)\ket{\ell,k,j} \\
&\xlongequal[]{\eqref{eq:SWAP-action}} (\CZ\otimes I)(I\otimes \SWAP)\ket{\ell,j,k} \\
&\xlongequal[]{\eqref{eq:SWAP-action}} (\CZ\otimes I)(I\otimes \SWAP)(\SWAP\otimes I)\ket{j,\ell,k}.
\end{align*}

To show \eqref{eq:CZ-CX-soundness} holds, we have

\begin{align*}
(I\otimes \CZ)(\CX\otimes I)\ket{j,\ell,k} 
&\xlongequal[]{\eqref{eq:CX-action}} (I\otimes \CZ)\ket{j,j+\ell,k} \\
&\xlongequal[]{\eqref{eq:CZ}} \omega^{(j+\ell)k}\ket{j,j+\ell,k} \\
&= \omega^{jk}\omega^{\ell k}\ket{j,j+\ell,k} \\
&\xlongequal[]{\eqref{eq:CIZ-action}} \omega^{\ell k}\CIZ\ket{j,j+\ell,k} \\
&\xlongequal[]{\eqref{eq:CX-action}} \omega^{\ell k}\CIZ(\CX\otimes I)\ket{j,\ell,k}\\
&\xlongequal[]{\eqref{eq:CZ}} \CIZ(\CX\otimes I)(I\otimes \CZ)\ket{j,\ell,k}.
\end{align*}
\end{proof}
\begin{proposition}
The relations in \cref{fig:rewriterules6} are sound.
\label{prop:soundness}
\end{proposition}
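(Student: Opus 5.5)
The proof collects the soundness lemmas already proved in this subsection. The sixteen relations of \cref{fig:rewriterules6} split by the number of qudits they act on. The single-qudit rules \eqref{eq:omega-soundness}, \eqref{eq:S-soundness}, \eqref{eq:H2-soundness}, \eqref{eq:Mg-soundness}, \eqref{eq:Mg-S-soundness} and \eqref{eq:SH2-SH2-soundness} are covered by \cref{lem:single-qudit-soundness}. The two-qudit rules \eqref{eq:CZ-soundness}--\eqref{eq:CX-CZ-soundness} are covered by \cref{lem:two-qudit-soundness}. The three-qudit rules \eqref{eq:SWAP-SWAP-soundness}, \eqref{eq:SWAP-CZ-soundness} and \eqref{eq:CZ-CX-soundness} are covered by \cref{lem:three-qudit-soundness}. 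The first step is simply to check this list against the figure: six, seven and three rules respectively, sixteen in total, so every rule is accounted for.

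Each of these lemmas reasons about derived generators through their action on computational basis states. $M_a$ is used via \eqref{eq:Mg-action}, and $\CX$, $\SWAP$, $\CIZ$, $X$, $Z$ via \eqref{eq:CX-action}--\eqref{eq:CIZ-action}. These actions are justified by \cref{lem:multiplier,lem:CNOT-soundness,lem:SWAP-soundness,lem:CIZ-soundness,lem:X-soundness,lem:Z-soundness}, which show that the circuits in \cref{fig:derived-generators2} (with their stated scalar factors) implement exactly these maps. The second step is therefore to confirm that the relations in \cref{fig:rewriterules6}, read as circuits over $\{-\omega,H,S,\CZ\}$ after expanding derived generators, are interpreted by $\llbracket\cdot\rrbracket_u$ as the operators used in the lemmas. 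Negative and fractional exponents must be read as in \cref{rem:minus-one}: for example, $\CX^{-1}=\CX^{p-1}$, $S^{-1}=S^{p-1}$, and $(1-g)/(2g^2)$ is computed in $\Z_p$. Under this reading, \cref{lem:multiple-s,lem:multiple-cz,lem:multiple-cx} give the actions of these powers, and \cref{lem:X-Z-copy} gives the actions of $Z^k$ and $X^k$.

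Finally, two linear maps agreeing on every computational basis state are equal. Each lemma establishes exactly this agreement, and on the nose rather than up to phase, so each relation holds under $\llbracket\cdot\rrbracket_u$. Since $\llbracket\cdot\rrbracket_u$ respects composition and tensor product, equality is preserved when a rule is applied in any context. Hence $\sim_{\mathcal{R}}$ is contained in $\equiv_u$, which is soundness.

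The main obstacle is bookkeeping rather than mathematics. The difficult computations, the Gauss-sum evaluation fixing the scalar in \cref{lem:multiplier} and the $\lambda_p^2$ factor in $\SWAP$, are already done. What remains is to check that the phase conventions in the figure, such as $(-1)^{(p-1)/2}=\lambda_p^{-2}$ in \eqref{eq:H2-soundness}, match those used in the lemmas.
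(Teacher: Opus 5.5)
Your proposal is correct and follows the paper's proof, which simply cites the single-, two- and three-qudit soundness lemmas to cover every rule in the figure. Your additional bookkeeping makes explicit what the paper leaves implicit: the $6+7+3=16$ count, the basis-state actions of the derived generators, and the observation that $\equiv_u$ is a congruence, so $\sim_{\mathcal{R}}$ is contained in it.
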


\begin{proof}
By \cref{lem:single-qudit-soundness,lem:two-qudit-soundness,lem:three-qudit-soundness}, the relations in \cref{fig:rewriterules6} are sound.
\end{proof}
% \iffalse
% \fi

%% file: scripts/appendix/clifford.tex
\section{Clifford Conjugation}
\label{sec:clifford}

\begin{lemma}
\label{lem:H-auto-var1}
    For all $a \in \Z_p$,
    \begin{equation}
    \input{figures/Preliminaries/single-qupit-Clifford-auto-var1.tikz}
    \label{eq:H-auto-var1}
\end{equation}
\end{lemma}

\begin{proof}
    Based on \cref{lem:automorphism}, we proceed by repeatedly applying the action of $H$ on $a$ copies of Pauli $X$ and $Z$.
\begin{align*}
    HX^aH^\dagger &= (HXH^\dagger)^a \xlongequal[]{\eqref{eq:H-S-auto}}Z^a \\ 
     HZ^aH^\dagger &= (HZH^\dagger)^a \xlongequal[]{\eqref{eq:H-S-auto}}\left(X^{-1}\right)^a = X^{-a}\\ 
     H^\dagger X^aH &= (H^\dagger XH)^a \xlongequal[]{\eqref{eq:H-S-auto}}\left(Z^{-1}\right)^a = Z^{-a}\qedhere
\end{align*}    
\end{proof}  

\begin{lemma}
\label{lem:H-auto-var2}
    \begin{equation}
    \input{figures/Preliminaries/single-qupit-Clifford-auto-var2.tikz}
    \label{eq:H-auto-var2}
\end{equation}
\end{lemma}    

\begin{proof}
    Based on \cref{lem:automorphism}, we proceed by repeatedly applying the action of $H$ on Pauli $X$ and $Z$.
    \[
        \scalebox{.9}{\input{figures/Preliminaries/single-qupit-Clifford-auto-var2-proof.tikz}}%\qedhere
    \]
\end{proof}

\begin{lemma}
    \begin{align}
        &\scalebox{.9}{\input{figures/Preliminaries/CNOT-action.tikz}}\label{eq:CNOT-action}\\[1 em]
        &\scalebox{.9}{\input{figures/Preliminaries/NOTC-action.tikz}}\label{eq:NOTC-action}
    \end{align}
\label{lem:soundness-CNOT}
\end{lemma}   

\begin{proof}
Recall that
\begin{align}
        &\scalebox{.8}{\input{figures/Preliminaries/CNOT.tikz}}\reltag{D}{eq:CNOT}\\[1 em]
        &\scalebox{.8}{\input{figures/Preliminaries/NOTC.tikz}}\reltag{D}{eq:NOTC}
    \end{align}

Up to symmetry, it suffices to show \eqref{eq:CNOT-action}. By \eqref{eq:H-S-auto}, \eqref{eq:CZ-auto}, and \eqref{eq:H-auto-var2}, we therefore have
\[
\scalebox{.7}{\input{figures/Preliminaries/CNOT-action-proof.tikz}}
\]
\end{proof}

\begin{lemma}
\label{lem:H-auto-var3}
    For all $a \in \Z_p$,
    \begin{equation}
    \scalebox{.9}{\input{figures/Preliminaries/single-qupit-Clifford-auto-var3.tikz}}
    \label{eq:H-auto-var3}
\end{equation}
\end{lemma}    

\begin{proof}
    By \cref{lem:automorphism}, $SZ^aS^\dagger = \left(SZS^\dagger\right)^a = Z^a$. For the actions of $S$ on $X^a$, we proceed by induction on $a$.
   \begin{description}
        \item[Base Case:]When $a=1$, the statement holds by \cref{lem:automorphism}.
        \item[Induction Hypothesis (\IH):] Suppose the statement holds for $a\geq 1$. That is,
            \begin{equation}
                SX^aS^\dagger = \omega^{\frac{a(a-1)}{2}}X^aZ^a,\qquad S^\dagger X^aS = \omega^{-\frac{a(a-1)}{2}}X^aZ^{-a}.
                \label{eq:base-case-var3}
            \end{equation}    
        \item[Induction Step:] We want to show that the statement holds for $a+1$.
            \begin{align*}
                SX^{a+1}S^\dagger &= S\left(X^{a}X\right)S^\dagger = \left(SX^aS^\dagger\right)\left(SXS^\dagger\right)\xlongequal[\eqref{eq:H-S-auto}]{\IH}\left(\omega^{\frac{a(a-1)}{2}}X^aZ^a\right)\left(XZ\right)\\
                &\xlongequal[]{\eqref{eq:XZ}}\omega^{\frac{a(a-1)}{2}}\omega^a X^{a+1}Z^{a+1}=\omega^{\frac{(a+1)a}{2}}X^{a+1}Z^{a+1}.
            \end{align*}  
            \begin{align*}
                S^\dagger X^{a+1} S&= S^\dagger\left(X^{a}X\right)S = \left(S^\dagger X^aS\right)\left(S^\dagger XS\right)\xlongequal[\eqref{eq:H-S-auto}]{\IH}\left(\omega^{-\frac{a(a-1)}{2}}X^aZ^{-a}\right)\left(XZ^{-1}\right)\\
                &\xlongequal[]{\eqref{eq:XZ}}\omega^{-\frac{a(a-1)}{2}}\omega^{-a} X^{a+1}Z^{-(a+1)}=\omega^{-\frac{(a+1)a}{2}}X^{a+1}Z^{-(a+1)}.\qedhere
            \end{align*}    
   \end{description}
\end{proof}  

\begin{lemma}
\label{lem:H-auto-var4}
    For all $a \in \Z_p$,
    \begin{equation}
    \scalebox{.9}{\input{figures/Preliminaries/single-qupit-Clifford-auto-var4.tikz}}
    \label{eq:H-auto-var4}
\end{equation}
\end{lemma}   

\begin{proof}
    By \cref{lem:automorphism}, $S^aZ\left(S^\dagger\right)^a = Z$. For the actions of $S^a$ on $X$, we proceed by induction on $a$.
   \begin{description}
        \item[Base Case:]When $a=1$, the statement holds by \cref{lem:automorphism}.
        \item[Induction Hypothesis (\IH):] Suppose the statement holds for $a\geq 1$. That is,
            \begin{equation}
                S^aX\left(S^\dagger\right)^a = XZ^a,\qquad \left(S^\dagger\right)^a XS^a = XZ^{-a}.
                \label{eq:base-case-var4}
            \end{equation}    
        \item[Induction Step:] We want to show that the statement holds for $a+1$.
            \begin{align*}
                S^{a+1}X\left(S^\dagger\right)^{a+1} \ &=\  S\left(S^aX\left(S^\dagger\right)^a\right)S^\dagger \ \xlongequal[]{\IH}\ S\left(XZ^a\right)S^\dagger \\
                &=\ \left(SXS^\dagger\right) \left(S Z^a S^\dagger\right)\ \xlongequal[]{\eqref{eq:H-S-auto}}\ XZZ^a \ =\  XZ^{a+1}\\
                \left(S^\dagger\right)^{a+1}XS^{a+1} \ &=\  S^\dagger\left(\left(S^\dagger \right)^a XS^a\right)S \ \xlongequal[]{\IH}\ S^\dagger\left(XZ^{-a}\right)S \ \\
                &=\ \left(S^\dagger XS\right) \left(S^\dagger Z^{-a} S\right)
                \ \xlongequal[]{\eqref{eq:H-S-auto}}\ XZ^{-1}Z^{-a} \ =\  XZ^{-(a+1)}.
            \end{align*}    
   \end{description}
\end{proof} 

\begin{lemma}
\label{lem:H-auto-var5}
    For all $a,b \in \Z_p$,
    \begin{equation}
        \scalebox{.9}{\input{figures/Preliminaries/single-qupit-Clifford-auto-var5.tikz}}
        \label{eq:H-auto-var5}
    \end{equation}
\end{lemma} 
\begin{proof}
    By \cref{lem:automorphism}, $S^bZ^a\left(S^\dagger\right)^b = Z^a$. For the actions of $S^b$ on $X^a$, we proceed by induction on $b$.
   \begin{description}
        \item[Base Case:]When $b=1$, the statement holds by \cref{lem:H-auto-var3}.
        \[
            \input{figures/Preliminaries/single-qupit-Clifford-auto-var3-X.tikz}
        \] 
        \item[Induction Hypothesis (\IH):] Suppose the statement holds for $b\geq 1$. That is,
            \begin{equation}
                S^bX^a\left(S^\dagger\right)^b = \omega^{\frac{ab(a-1)}{2}}X^aZ^{ab},\qquad \left(S^\dagger\right)^b X^aS^b = \omega^{-\frac{ab(a-1)}{2}}X^aZ^{-ab}.
                \label{eq:base-case-var5}
            \end{equation}    
        \item[Induction Step:] We want to show that the statement holds for $b+1$.
            \begin{align*}
                S^{b+1}X^a\left(S^\dagger\right)^{b+1} \ &=\  S\left(S^bX^a\left(S^\dagger\right)^b\right)S^\dagger \ \xlongequal[]{\IH}\ S\left(\omega^{\frac{ab(a-1)}{2}}X^aZ^{ab}\right)S^\dagger \\
                &=\ \omega^{\frac{ab(a-1)}{2}}\left(SX^aS^\dagger\right)\left(SZ^{ab}S^\dagger\right)\ \xlongequal[]{\eqref{eq:H-auto-var3}}\ \omega^{\frac{ab(a-1)}{2}}\left(\omega^{\frac{a(a-1)}{2}}X^aZ^a\right)Z^{ab} \\
                &=\ \omega^{\frac{a(b+1)(a-1)}{2}}X^aZ^{a(b+1)}.\\
                \left(S^\dagger\right)^{b+1}X^a S^{b+1} \ &=\ S^\dagger\left(\left(S^\dagger\right)^bX^aS^b\right)S \ \xlongequal[]{\IH}\ S^\dagger\left(\omega^{-\frac{ab(a-1)}{2}}X^aZ^{-ab}\right)S\\
                &= \ \omega^{-\frac{ab(a-1)}{2}}\left(S^\dagger X^aS\right)\left(S^\dagger Z^{-ab}S\right)\\
                &\xlongequal[]{\eqref{eq:H-auto-var3}}\ \omega^{-\frac{ab(a-1)}{2}}\left(\omega^{-\frac{a(a-1)}{2}}X^aZ^{-a}\right)Z^{-ab}\\
                &=\ \omega^{-\frac{a(b+1)(a-1)}{2}}X^aZ^{-a(b+1)}.\qedhere
            \end{align*}    
   \end{description}
\end{proof} 

\begin{lemma}
\label{lem:H-auto-var6}
    For all $a,b \in \Z_p$,
    \begin{equation}
    \left(XZ^{-a}\right)^b = \omega^{-\frac{ab(b-1)}{2}}X^bZ^{-ab}.
    \label{eq:H-auto-var6}
\end{equation}
\end{lemma}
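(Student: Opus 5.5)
We prove \eqref{eq:H-auto-var6} by induction on $b$, treating $b$ as a non-negative integer representative; since both sides depend only on $b \bmod p$, this covers all $b\in\Z_p$. On the left, $(XZ^{-a})^p=1$ holds because this Pauli has order $p$ (for odd $p$ the accumulated phase $\omega^{-a\,p(p-1)/2}$ equals $1$, which the formula itself confirms at $b=p$). On the right, the exponents $b$ and $ab$ are read mod $p$, and the phase exponent $ab(b-1)/2$ is well defined mod $p$ because $b(b-1)$ is even.

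The base cases are immediate. For $b=0$ both sides equal the identity. For $b=1$ both sides equal $XZ^{-a}$, since the phase exponent vanishes.

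For the induction step, assume $(XZ^{-a})^b=\omega^{-\frac{ab(b-1)}{2}}X^bZ^{-ab}$. Then
\[
(XZ^{-a})^{b+1}=\omega^{-\frac{ab(b-1)}{2}}X^bZ^{-ab}XZ^{-a}.
\]
The only real step is to move $Z^{-ab}$ past the single $X$. By \eqref{eq:ZX-4} from \cref{lem:anticommutation}, applied with exponents $-ab$ and $1$, we have $Z^{-ab}X=\omega^{-ab}XZ^{-ab}$. This gives
\[
\omega^{-\frac{ab(b-1)}{2}-ab}X^{b+1}Z^{-a(b+1)}.
\]
Finally, the identity $\frac{b(b-1)}{2}+b=\frac{(b+1)b}{2}$ turns the phase into $\omega^{-\frac{a(b+1)b}{2}}$, which completes the induction.

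There is no genuine obstacle here. The one point needing care is the sign convention in \eqref{eq:ZX-4}, since $Z^cX^d=\omega^{cd}X^dZ^c$ with $c=-ab$ is what produces the negative phase. This is a direct analogue of the induction in \cref{lem:H-auto-var3}.
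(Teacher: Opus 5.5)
Your proof is correct and is the same induction on $b$ as the paper's. The key step is the single commutation $Z^{-ab}X=\omega^{-ab}XZ^{-ab}$, which the paper attributes to \eqref{eq:XZ}; your citation of \eqref{eq:ZX-4} is the more precise one, and your treatment of the $b=0$ case and of the dependence only on $b \bmod p$ fills in details the paper leaves implicit.
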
 
\begin{proof}
    We proceed by induction on $b$.
   \begin{description}
        \item[Base Case:]When $b=1$, the statement holds trivially.
        \item[Induction Hypothesis (\IH):] Suppose the statement holds for $b\geq 1$. That is,
            \begin{equation}
                \left(XZ^{-a}\right)^b = \omega^{-\frac{ab(b-1)}{2}}X^bZ^{-ab}.
                \label{eq:base-case-var6}
            \end{equation}    
        \item[Induction Step:] We want to show that the statement holds for $b+1$.
            \begin{align*}
                \left(XZ^{-a}\right)^{b+1} &= \left(XZ^{-a}\right)^{b}\left(XZ^{-a}\right)\xlongequal[]{\IH}\left(\omega^{-\frac{ab(b-1)}{2}}X^bZ^{-ab}\right)\left(XZ^{-a}\right)\\
                &\xlongequal[]{\eqref{eq:XZ}}\omega^{-\frac{ab(b-1)}{2}}\omega^{-ab}X^{b+1}Z^{-ab}Z^{-a}=\omega^{-\frac{a(b+1)b}{2}}X^{b+1}Z^{-a(b+1)}.\qedhere
            \end{align*}    
   \end{description}
\end{proof}

\begin{lemma}
    The $X$ gate, defined in \eqref{eq:def-X}, acts on the Pauli generators as follows.
    \[
        \input{figures/Preliminaries/X-action.tikz}
    \]
\label{lem:X-action}
\end{lemma}

\begin{proof}
    Based on \eqref{eq:H-S-auto} and \cref{lem:H-auto-var1,lem:H-auto-var2,lem:H-auto-var3,lem:H-auto-var4,lem:H-auto-var5}, we have
    \[
        \scalebox{.8}{\input{figures/Preliminaries/X-implementation-proof.tikz}}
    \]
\end{proof}    

\begin{lemma}
    The $Z$ gate, defined in \eqref{eq:def-Z}, acts on the Pauli generators as follows.
    \[
        \input{figures/Preliminaries/Z-action.tikz}
    \]
\label{lem:Z-action}
\end{lemma}

\begin{proof}
    Let $S'=H^2SH^2$. Based on \eqref{eq:H-S-auto} and \cref{lem:H-auto-var4}, we have

    \[
        \input{figures/Preliminaries/Z-implementation-proof.tikz}\qedhere
    \]
\end{proof}   

\begin{lemma}
\label{lem:SWAP-action}
    The $\SWAP$ gate, defined in \eqref{eq:def-SWAP}, acts on the Pauli generators as follows.
    \begin{equation}
        \scalebox{.9}{\input{figures/Preliminaries/SWAP-action.tikz}}
        \label{eq:SWAP-Pauli-action}
    \end{equation}
\end{lemma}

\begin{proof}
The global phase $\lambda_p^2$ in \eqref{eq:def-SWAP} does not affect conjugation. Up to symmetry, it suffices to show the action on $X\otimes I$ and $Z\otimes I$. By \eqref{eq:H-S-auto} and \eqref{eq:CZ-auto}, we can show that
\[
\input{figures/Preliminaries/SWAP-XI.tikz}
\]

\[
\input{figures/Preliminaries/SWAP-ZI.tikz}
\]
\end{proof}    

%% file: scripts/appendix/comp-def.tex
\newpage

\section{Actions of the Normal Boxes}
\label{sec:comp-def}

% \begin{figure}[!htb]
%     \centering
% % \scalebox{0.8}{\tikzfig{figures/NormalForm/normalBoxesMacroSimp-v5}}
% \scalebox{0.8}{\tikzfig{figures/NormalForm/normalBoxesMacroSimp-v4}}
% % \centering \scalebox{0.8}{\tikzfig{figures/agda-box-defs-rels/box_def_c}}
%     \caption{The concrete implementations of the Z- and X-normal boxes. Here $a,b\in \Z_p$ and $a\neq 0$.}
%     \label{fig:Z-and-X-Normal-Boxes-imp-alt}
% \end{figure}

\Cref{fig:A-Boxes-auto-phase-free,fig:B-Boxes-auto-phase-free,fig:D-Boxes-auto-phase-free,fig:E-Boxes-auto} list the actions of all the normal boxes on all Pauli generators up to phase.

\begin{figure}[!htb]
    \centering
    \scalebox{1}{\input{figures/PhaseFreeNormalForm/ABox-phasefree-auto-v3.tikz}}
    \caption{The action of $A$ boxes on the Pauli generators up to phase.}
    \label{fig:A-Boxes-auto-phase-free}
\end{figure}

\begin{figure}[!htb]
    \centering
    \scalebox{.8}{\input{figures/PhaseFreeNormalForm/BBox-phasefree-auto.tikz}}
    \caption{The action of $B$ boxes on the Pauli generators up to phase.}
    \label{fig:B-Boxes-auto-phase-free}
\end{figure}

\begin{figure}[!htb]
    \centering
    \scalebox{.8}{\input{figures/PhaseFreeNormalForm/DBox-phasefree-auto.tikz}}
    \caption{The action of $D$ boxes on the Pauli generators up to phase.}
    \label{fig:D-Boxes-auto-phase-free}
\end{figure}

\begin{figure}[!htb]
    \centering
    \scalebox{1}{\input{figures/PhaseFreeNormalForm/E-Boxes-auto.tikz}}
    \caption{The action of $E$ boxes on the Pauli generators up to phase.}
    \label{fig:E-Boxes-auto}
\end{figure}

\clearpage

%% file: scripts/appendix/push-normal.tex
\section{Pushing Through a Normal Box}
\label{sec:push-normal}

The goal of this section is to establish what we call the proof-of-principle box relations as shown in \cref{fig:two-qubit-box-relations}. That is, the dirty gates resulting from pushing the generators through the boxes that occur in the normal form have the shape that we want them to have.

\begin{figure}[!htbp]
    \[
    \scalebox{.65}{\input{figures/NormalForm/two-qubit-box-relations.tikz}}
    \]
    \caption{Proof-of-principle single- and two-qudit box relations. They demonstrate the structure of residual dirty gates in all scenarios when we normalise a two-qudit Clifford operator through ``pushing through a normal box''.}
    \label{fig:two-qubit-box-relations}
\end{figure}    

Recall that we use $\mathcal{C}_n$ to denote the $n$-qudit Clifford group, where $p$ is an odd prime, and we use $\mathcal{B}_n$ to denote the set of $n$-qudit Pauli generators, which together with $-\omega$ generate the $n$-qudit Pauli group $\mathcal{P}_n$.
\begin{proposition} \label{prop:untangled}
     Let $k_1, k_2\in \N$, and $C\in \mathcal{C}_{k_1+k_2}$. Suppose that for every $B \in \mathcal{B}_{k_1}$ there exists a $B' \in \mathcal{P}_{k_1}$ such that
     \begin{equation}
         C(B \otimes I)C^\dagger = B' \otimes I. \label{eq:CBC}
     \end{equation}
     Then there exist $C_1 \in \Clifford_{k_1}$ and $C_2 \in \Clifford_{k_2}$ such that $C = C_1\otimes C_2$, up to scalar. Here, $I$ is the identity map on $\C^{p^{k_2}}$.
\end{proposition}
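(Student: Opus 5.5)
The strategy is to use the symplectic structure to show that $C$ also preserves the Pauli operators supported on the last $k_2$ qudits, and then to conclude via \cref{prop:scalars} applied to a suitable product of Clifford operators.

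\textbf{Step 1: the first $k_1$ qudits.} By hypothesis, conjugation by $C$ maps the subgroup $\Pauli_{k_1}\otimes I$ into itself. The conjugation action is a group homomorphism, and the elements $B\otimes I$ with $B\in\mathcal{B}_{k_1}$, together with scalars, generate $\Pauli_{k_1}\otimes I$. Conjugation is injective, and $\Pauli_{k_1}\otimes I$ is finite, so the restriction is an automorphism $\phi_1$ of $\Pauli_{k_1}$ that fixes scalars. At the symplectic level this means the symplectic matrix $M=\llbracket C\rrbracket_s$ preserves the subspace $V_1\subseteq\Z_p^{2(k_1+k_2)}$ corresponding to the first $k_1$ qudits. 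By \cref{prop:normal-form} (equivalently, the surjectivity of the symplectic representation of $\Clifford_{k_1}$), there is a $C_1\in\Clifford_{k_1}$ with $C_1\bullet P\equiv_p\phi_1(P)$ for all $P\in\Pauli_{k_1}$.

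\textbf{Step 2: the last $k_2$ qudits (the key step).} Let $V_2$ be the subspace for the last $k_2$ qudits, so $V_2=V_1^{\perp}$ with respect to $\Omega$. Since $M$ is symplectic and $M V_1=V_1$, we get $M V_1^\perp=(MV_1)^\perp=V_1^\perp$. Concretely, take $Q=I\otimes B_2$ with $B_2\in\mathcal{B}_{k_2}$. Then $Q$ commutes with every $B\otimes I$, so $CQC^\dagger$ commutes with every $B'\otimes I$. These $B'$ range, up to phase, over all of $\Pauli_{k_1}$, because $\phi_1$ is surjective. A Pauli $X^{\vec a}Z^{\vec b}\otimes R$ that commutes with all of $\Pauli_{k_1}\otimes I$ must have $\vec a=\vec b=0$, by nondegeneracy of the commutation phase $\omega^{\Omega(\cdot,\cdot)}$ on the first factor. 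Hence $CQC^\dagger=I\otimes B_2'$ for some $B_2'\in\Pauli_{k_2}$. The same argument as in Step 1 then gives an automorphism $\phi_2$ of $\Pauli_{k_2}$ and a $C_2\in\Clifford_{k_2}$ realising it up to phase.

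\textbf{Step 3: fixing phases.} Now $D:=(C_1\otimes C_2)^\dagger C$ maps every Pauli generator $P\in\mathcal{B}_{k_1+k_2}$ to $\mu_PP$ for some phase $\mu_P$. By the discussion preceding \cref{def:projective-groups} (the remark at the start of \cref{subsec:symplectic}), $D=\lambda Q$ for some Pauli $Q=Q_1\otimes Q_2$ and scalar $\lambda$. Absorbing $Q_1$ into $C_1$ and $Q_2$ into $C_2$, which keeps them Clifford since $\Pauli\le\Clifford$, yields $C=\lambda\,(C_1Q_1)\otimes(C_2Q_2)$. This is $C=C_1'\otimes C_2'$ up to scalar, as required.

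The main obstacle is Step 2: deducing that the complementary factor is preserved. The argument hinges on the nondegeneracy of $\Omega$ and on the surjectivity of $\phi_1$, which in turn relies on injectivity of conjugation plus finiteness.
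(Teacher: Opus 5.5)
Your proof is correct, but it takes a different route from the paper's. The paper constructs only the first factor explicitly. It extracts $B'$ from $B'\otimes I$ with the normalised partial trace $\mathrm{tr}_2$, extends $B\mapsto\mathrm{tr}_2\bigl(C(B\otimes I)C^\dagger\bigr)$ to an automorphism $\Phi$ of $\Pauli_{k_1}$, and obtains $C_1$ realising $\Phi$. It then delegates the splitting-off of $C_2$ to \cite[Lemma E.6]{qutrit2024}. The natural way to finish from that point is a commutant argument: $C_2$ is whatever remains of $C$ once $C_1\otimes I$ (suitably Pauli-corrected) is stripped off. That remainder fixes $\Pauli_{k_1}\otimes I$, and since the Paulis span the matrix algebra it lies in $I\otimes\mathcal{M}_{p^{k_2}}(\C)$. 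On that route, invariance of $I\otimes\Pauli_{k_2}$ never has to be checked; it comes out as a consequence. You instead prove this invariance directly from the nondegeneracy of $\Omega$ in your Step 2, build $C_2$ symmetrically from \cref{prop:normal-form}, and close with the Pauli-correction rigidity of \cref{prop:scalars} applied to $(C_1\otimes C_2)^\dagger C$. Your version stays inside the paper, with no external lemma and no partial traces. It also makes the phase bookkeeping explicit: \cref{prop:normal-form} only gives $C_1BC_1^\dagger\equiv_p\Phi(B)$, and your Pauli absorption in Step 3 handles exactly this, whereas the paper's one-line proof asserts equality and leaves the correction implicit. The price is a second appeal to the normal form and the extra symplectic-complement step. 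The paper's route is shorter and gets the second factor without analysing how $C$ acts on the last $k_2$ qudits.
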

\begin{proof}
    By \Cref{eq:CBC}, the mapping $\Phi:\mathcal{B}_{k_1}\rightarrow \mathcal{M}_{p^{k_1}}(\C)$ sending $B\mapsto \text{tr}_2\left( C(B\otimes I)C^{\dagger} \right)$ extends to an automorphism $\Phi$ of $\mathcal{P}_{k_1}$ such that $C(B\otimes I)C^{\dagger}=\Phi(B)\otimes I$, where $\text{tr}_2$ is the normalised partial trace $\frac{1}{p^{k_2}} id \otimes \tr$ that traces out the second system $\mathcal{M}_{p^{k_2}}(\C)$. Hence, there exists a $C_1\in \mathcal{C}_{k_1}$ such that $C_1 B C_1^\dagger=\Phi(B)$. The rest follows from \cite[Lemma E.6]{qutrit2024}.
\end{proof}

Firstly, note that \eqref{eq:B-auto} and \eqref{eq:D-auto} follow from \cref{fig:B-Boxes-auto-phase-free,fig:D-Boxes-auto-phase-free}. Based on this, we can derive proof-of-principle box relations as shown in \cref{fig:two-qubit-box-relations}.
\begin{align}
    \input{figures/NormalForm/B-auto-LHS.tikz}\quad &\quad\input{figures/NormalForm/B-auto-RHS.tikz}\label{eq:B-auto}\\[2 em]
    \input{figures/NormalForm/D-auto-LHS.tikz}\quad&\quad\input{figures/NormalForm/D-auto-RHS.tikz}\label{eq:D-auto}
\end{align}    

\begin{lemma}
    For all $(a,b)\in\Z_p\times \Z_p\setminus\{(0,0)\}$, there exist $\dir,\;\dir' \in \circuitClifford_1$ such that
    \begin{equation}
        \input{figures/NormalForm/HSA.tikz}
        \label{eq:HSA}
    \end{equation}
    \label{lem:HSA}
\end{lemma}

\begin{proof}
    Computing the preimage of $Z$ in the \LHS of \eqref{eq:HSA} yields
\[
\input{figures/NormalForm/HSA-preimage.tikz}
\]

By the uniqueness of the symplectic normal form, $X^bZ^{-a}$ and $X^aZ^{b-a}$ imply that the updated normal boxes should be those in the \RHS of \eqref{eq:HSA}.
    \iffalse
    By the uniqueness of the normal form and the action of the $A$ boxes specified in \cref{fig:normal-box-auto}, it suffices to find the preimage of $X^aZ^b$ under the action of $H$ and $S$ gates.
    \[
    H^\dagger (X^aZ^b) H = (H^\dagger X^a H)(H^\dagger Z^b H)\xlongequal[]{\eqref{eq:H-S-auto}}Z^{-a}X^b\xlongequal[]{\eqref{eq:XZ}}\omega^{-ab}X^bZ^{-a}.
    \]

    Hence, after pushing $H$ gate through an $A_{ab}$ box, the updated $A$ box must be of the form $A_{b,-a}$.

    \[
    S^\dagger (X^aZ^b) S = (S^\dagger X^a S)(S^\dagger Z^b S)\xlongequal[]{\eqref{eq:H-S-auto}}\left((XZ^{-1})\cdots(XZ^{-1})\right)Z^b\xlongequal[]{\eqref{eq:XZ}}\omega^{\frac{-a(a-1)}{2}}X^aZ^{-a}Z^b=\omega^{\frac{-a(a-1)}{2}}X^aZ^{b-a}.
    \]

    Hence, after pushing an $S$ gate through an $A_{ab}$ box, the updated $A$ box must be of the form $A_{a,b-a}$.\qedhere
    \fi
\end{proof}

\begin{lemma}
    For all $(a,b)\in\Z_p\times \Z_p\setminus\{(0,0)\}$, there exist $\dir,\;\dir' \in \circuitClifford_2$ such that
    \begin{equation}
        \input{figures/NormalForm/CZ-A.tikz}
        \label{eq:CZ-A}
    \end{equation}    
    \label{lem:CZ-A}
\end{lemma}

\begin{proof}
Computing the preimage of $Z \otimes I$ in the \LHS of \eqref{eq:CZ-A} yields
\[
\input{figures/NormalForm/CZ-A-preimage.tikz}
\]

By the uniqueness of the symplectic normal form, $Z^b \otimes I$ and $X^aZ^b \otimes Z^{-a}$ imply that the updated normal boxes should be those in the \RHS of \eqref{eq:CZ-A}.
\end{proof}  

\begin{lemma}
    For all $(a,b)\in\Z_p\times \Z_p\setminus\{(0,0)\}$ and $(c,d)\in\Z_p\times \Z_p$, there exist $\dir,\;\dir'\in \circuitClifford_2$ and $\dir''\in \circuitClifford_1$ such that
    \begin{equation}
        \input{figures/NormalForm/CZ-AB-v2.tikz}
        \label{eq:CZ-A-B}
    \end{equation}    
    \label{lem:CZ-A-B}
\end{lemma}

\begin{proof}
Computing the preimage of $Z \otimes I$ in the \LHS of \eqref{eq:CZ-A-B} yields
\[
\input{figures/NormalForm/CZ-AB-preimage.tikz}
\]

By the uniqueness of the symplectic normal form, $X^bZ^d \otimes I$ and $X^cZ^{d-a} \otimes X^aZ^{b-c}$ imply that the updated normal boxes should be those in the \RHS of \eqref{eq:CZ-A-B}.

Next, we consider the case when $a\neq 0$. Suppose that for some $\dir'' \in \circuitClifford_2$,
\[
\input{figures/NormalForm/CZ-AB.tikz}
\]

By \cref{prop:untangled}, it remains to show that
\[
\input{figures/NormalForm/CZ-AB-dir-auto.tikz}
\]

Equivalently, we can compute the preimage of $Z\otimes I$ and $X\otimes I$ in the \LHS of \eqref{eq:CZ-AB-int}.

\begin{equation}
\input{figures/NormalForm/CZ-AB-int.tikz}
\label{eq:CZ-AB-int}
\end{equation}    

% By \cref{fig:B-Boxes-auto-phase-free}, for all $a,b\in \Z_p$, we have
% \[
% \tikzfig{figures/NormalForm/B-auto}
% \]

By \eqref{eq:B-auto}, we have
\[
\input{figures/NormalForm/CZ-AB-int2.tikz}
\]

This completes the proof.
\end{proof}  

\begin{lemma}
    For all $a,b\in\Z_p$, there exists $\dir\in \circuitClifford_1$ such that
    \begin{equation}
        \input{figures/NormalForm/HI-B-v2.tikz}
        \label{eq:HI-B}
    \end{equation}    
    \label{lem:HI-B}
\end{lemma}

\begin{proof}
Computing the preimage of $Z \otimes I$ in the \LHS of \eqref{eq:HI-B} yields
\[
\input{figures/NormalForm/HI-B-preimage.tikz}
\]

By the uniqueness of the symplectic normal form, $X^bZ^{-a} \otimes Z$ implies that the updated normal box should be the one in the \RHS of \eqref{eq:HI-B}. Suppose that for some $\dir \in \circuitClifford_2$,
\[
\input{figures/NormalForm/HI-B.tikz}
\]

By \cref{prop:untangled}, it remains to show that
\[
\input{figures/NormalForm/HI-B-dir-auto.tikz}
\]

Equivalently, we can compute the preimage of $Z\otimes I$ and $X\otimes I$ in the \LHS of \eqref{eq:HI-B-int}.

\begin{equation}
\input{figures/NormalForm/HI-B-int.tikz}
\label{eq:HI-B-int}
\end{equation}    

% By \cref{fig:B-Boxes-auto-phase-free}, for all $a,b\in \Z_p$, we have
% \[
% \tikzfig{figures/NormalForm/B-auto}
% \]

% It follows that
By \eqref{eq:B-auto}, we have

\[
\input{figures/NormalForm/HI-B-int2.tikz}
\]

This completes the proof.
\end{proof}  

\begin{lemma}
    For all $a,b\in\Z_p$, there exists $\dir\in \circuitClifford_1$ such that
    \begin{equation}
        \input{figures/NormalForm/SI-B-v2.tikz}
        \label{eq:SI-B}
    \end{equation}    
    \label{lem:SI-B}
\end{lemma}

\begin{proof}
Computing the preimage of $Z \otimes I$ in the \LHS of \eqref{eq:SI-B} yields
\[
\input{figures/NormalForm/SI-B-preimage.tikz}
\]

By the uniqueness of the symplectic normal form, $X^aZ^{b-a} \otimes Z$ implies that the updated normal box should be the one in the \RHS of \eqref{eq:SI-B}. Suppose that for some $\dir \in \circuitClifford_2$,
\[
\input{figures/NormalForm/SI-B.tikz}
\]

By \cref{prop:untangled}, it remains to show that
\[
\input{figures/NormalForm/SI-B-dir-auto.tikz}
\]

Equivalently, we can compute the preimage of $Z\otimes I$ and $X\otimes I$ in the \LHS of \eqref{eq:SI-B-int}.

\begin{equation}
\input{figures/NormalForm/SI-B-int.tikz}
\label{eq:SI-B-int}
\end{equation}    

% By \cref{fig:B-Boxes-auto-phase-free}, for all $a,b\in \Z_p$, we have
% \[
% \tikzfig{figures/NormalForm/B-auto}
% \]

% It follows that
By \eqref{eq:B-auto}, we have
\[
\input{figures/NormalForm/SI-B-int2.tikz}
\]

This completes the proof.
\end{proof} 

\begin{lemma}
    For all $a,b\in\Z_p$, there exists $\dir\in \circuitClifford_2$ such that
    \begin{equation}
        \input{figures/NormalForm/IS-B.tikz}
        \label{eq:IS-B}
    \end{equation}    
    \label{lem:IS-B}
\end{lemma}

\begin{proof}
Computing the preimage of $Z \otimes I$ in the \LHS of \eqref{eq:IS-B} yields
\[
\input{figures/NormalForm/IS-B-preimage.tikz}
\]

By the uniqueness of the symplectic normal form, $X^aZ^b \otimes Z$ implies that the updated normal box should be the one in the \RHS of \eqref{eq:IS-B}.
\end{proof} 

\begin{lemma}
    For all $a,b\in\Z_p$, there exists $\dir\in \circuitClifford_1$ such that
    \begin{equation}
        \input{figures/NormalForm/IH-D-v2.tikz}
        \label{eq:IH-D}
    \end{equation}    
    \label{lem:IH-D}
\end{lemma}

\begin{proof}
For any $j\in \Z_p$, computing the preimage of $I \otimes XZ^j$ in the \LHS of \eqref{eq:IH-D} yields
\[
\input{figures/NormalForm/IH-D-preimage.tikz}
\]

By the uniqueness of the symplectic normal form, $XZ^j \otimes X^bZ^{-a}$ implies that the updated normal box should be the one in the \RHS of \eqref{eq:IH-D}. Suppose that for some $\dir \in \circuitClifford_2$,
\[
\input{figures/NormalForm/IH-D.tikz}
\]

By \cref{prop:untangled}, it remains to show that
\[
\input{figures/NormalForm/IH-D-dir-auto.tikz}
\]

Equivalently, we can compute the preimage of $I\otimes X$ and $I\otimes Z$ in the \LHS of \eqref{eq:IH-D-int}.

\begin{equation}
\input{figures/NormalForm/IH-D-int.tikz}
\label{eq:IH-D-int}
\end{equation}    

% By \cref{fig:D-Boxes-auto-phase-free}, for all $a,b\in \Z_p$, we have
% \[
% \tikzfig{figures/NormalForm/D-auto}
% \]

% It follows that

By \eqref{eq:D-auto}, we have

\[
\input{figures/NormalForm/IH-D-int2.tikz}
\]

This completes the proof.
\end{proof} 

\begin{lemma}
    For all $a,b\in\Z_p$, there exists $\dir\in \circuitClifford_1$ such that
    \begin{equation}
        \input{figures/NormalForm/IS-D-v2.tikz}
        \label{eq:IS-D}
    \end{equation}    
    \label{lem:IS-D}
\end{lemma}

\begin{proof}
For any $j\in \Z_p$, computing the preimage of $I \otimes XZ^j$ in the \LHS of \eqref{eq:IS-D} yields
\[
\input{figures/NormalForm/IS-D-preimage.tikz}
\]

By the uniqueness of the symplectic normal form, $XZ^j \otimes X^aZ^{b-a}$ implies that the updated normal box should be the one in the \RHS of \eqref{eq:IS-D}.  Suppose that for some $\dir \in \circuitClifford_2$,
\[
\input{figures/NormalForm/IS-D.tikz}
\]

By \cref{prop:untangled}, it remains to show that
\[
\input{figures/NormalForm/IS-D-dir-auto.tikz}
\]

Equivalently, we can compute the preimage of $I\otimes X$ and $I\otimes Z$ in the \LHS of \eqref{eq:IS-D-int}.

\begin{equation}
\input{figures/NormalForm/IS-D-int.tikz}
\label{eq:IS-D-int}
\end{equation}    

% By \cref{fig:D-Boxes-auto-phase-free}, for all $a,b\in \Z_p$, we have
% \[
% \tikzfig{figures/NormalForm/D-auto}
% \]

% It follows that

By \eqref{eq:D-auto}, we have

\[
\input{figures/NormalForm/IS-D-int2.tikz}
\]

This completes the proof.
\end{proof} 

\begin{lemma}
    For all $a,b\in\Z_p$, there exists $\dir\in \circuitClifford_1$ such that
    \begin{equation}
        \input{figures/NormalForm/SI-D-v2.tikz}
        \label{eq:SI-D}
    \end{equation}    
    \label{lem:SI-D}
\end{lemma}

\begin{proof}
For any $j\in \Z_p$, computing the preimage of $I \otimes XZ^j$ in the \LHS of \eqref{eq:SI-D} yields
\[
\input{figures/NormalForm/SI-D-preimage.tikz}
\]

By the uniqueness of the symplectic normal form, $XZ^{j-1} \otimes X^aZ^b$ implies that the updated normal box should be the one in the \RHS of \eqref{eq:SI-D}.  Suppose that for some $\dir \in \circuitClifford_2$,
\[
\input{figures/NormalForm/SI-D.tikz}
\]

By \cref{prop:untangled}, it remains to show that
\[
\input{figures/NormalForm/SI-D-dir-auto.tikz}
\]

Equivalently, we can compute the preimage of $I\otimes X$ and $I\otimes Z$ in the \LHS of \eqref{eq:SI-D-int}.

\begin{equation}
\input{figures/NormalForm/SI-D-int.tikz}
\label{eq:SI-D-int}
\end{equation}

% By \cref{fig:D-Boxes-auto-phase-free}, for all $a,b\in \Z_p$, we have
% \[
% \tikzfig{figures/NormalForm/D-auto}
% \]

% It follows that
By \eqref{eq:D-auto}, we have
\[
\input{figures/NormalForm/SI-D-int2.tikz}
\]

Since $S$ sends $XZ^{-1}$ to $X$ and $Z$ to $Z$, this completes the proof.
\end{proof} 

\begin{lemma}
    For all $a,b\in\Z_p$, there exists $\dir\in \circuitClifford_1$ such that
    \begin{equation}
        \input{figures/NormalForm/CZ-D-v2.tikz}
        \label{eq:CZ-D}
    \end{equation}    
    \label{lem:CZ-D}
\end{lemma}

\begin{proof}
For any $j\in \Z_p$, computing the preimage of $I \otimes XZ^j$ in the \LHS of \eqref{eq:CZ-D} yields
\[
\input{figures/NormalForm/CZ-D-preimage.tikz}
\]

By the uniqueness of the symplectic normal form, $XZ^{j-a} \otimes X^aZ^{b-1}$ implies that the updated normal box should be the one in the \RHS of \eqref{eq:CZ-D}. Suppose that for some $\dir \in \circuitClifford_2$,
\[
\input{figures/NormalForm/CZ-D.tikz}
\]

By \cref{prop:untangled}, it remains to show that
\[
\input{figures/NormalForm/CZ-D-dir-auto.tikz}
\]

Equivalently, we can compute the preimage of $I\otimes X$ and $I\otimes Z$ in the \LHS of \eqref{eq:CZ-D-int}.

\begin{equation}
\input{figures/NormalForm/CZ-D-int.tikz}
\label{eq:CZ-D-int}
\end{equation}    

% By \cref{fig:D-Boxes-auto-phase-free}, for all $a,b\in \Z_p$, we have
% \[
% \tikzfig{figures/NormalForm/D-auto}
% \]

% It follows that

By \eqref{eq:D-auto}, we have

\[
\input{figures/NormalForm/CZ-D-int2.tikz}
\]

Since $S^a$ sends $XZ^{-a}$ to $X$ and $Z$ to $Z$, this completes the proof.
\end{proof} 

\begin{lemma}
    For all $a,b\in\Z_p$, there exists $\dir\in \circuitClifford_2$ such that
    \begin{equation}
        \input{figures/NormalForm/CZ-BB.tikz}
        \label{eq:CZ-BB}
    \end{equation}    
    \label{lem:CZ-BB}
\end{lemma}

\begin{proof}
Computing the preimage of $Z\otimes I \otimes I$ in the \LHS of \eqref{eq:CZ-BB} yields

\[
\input{figures/NormalForm/CZ-BB-preimage-v2.tikz}
\]

By the uniqueness of the symplectic normal form, $X^cZ^{d-a}\otimes X^aZ^{b-c} \otimes Z$ implies that the updated normal boxes should be the ones in the \RHS of \eqref{eq:CZ-BB}. Suppose that for some $\dir\in \circuitClifford_3$,

\[
\input{figures/NormalForm/CZ-BB-int.tikz}
\]

By \cref{prop:untangled}, it remains to show that
\[
\input{figures/NormalForm/CZ-BB-dir-auto.tikz}
\]

Equivalently, we can compute the preimage of $Z\otimes I \otimes I$ and $X\otimes I \otimes I$ in the \LHS of \eqref{eq:CZ-BB-int2}.

\begin{equation}
\input{figures/NormalForm/CZ-BB-int2.tikz}
\label{eq:CZ-BB-int2}
\end{equation}    

According to \cref{fig:B-Boxes-auto-phase-free}, 
\[
\input{figures/NormalForm/B-auto.tikz}
\]

we have

\[
\input{figures/NormalForm/CZ-BB-int3.tikz}
\]

This completes the proof.
\end{proof}  

\begin{lemma}
    For all $a,b\in\Z_p$, there exists $\dir\in \circuitClifford_3$ such that
    \begin{equation}
        \input{figures/NormalForm/ICZ-B.tikz}
        \label{eq:ICZ-B}
    \end{equation}    
    \label{lem:ICZ-B}
\end{lemma}

\begin{proof}
Computing the preimage of $Z\otimes I \otimes I$ in the \LHS of \eqref{eq:ICZ-B} yields

\[
\input{figures/NormalForm/ICZ-B-preimage.tikz}
\]

By the uniqueness of the symplectic normal form, $X^aZ^b\otimes Z \otimes I$ implies that the updated normal box should be the one in the \RHS of \eqref{eq:ICZ-B}. 
\end{proof}

\begin{lemma}
    For all $a,b\in\Z_p$, there exists $\dir\in \circuitClifford_2$ such that
    \begin{equation}
        \input{figures/NormalForm/CZ-DD.tikz}
        \label{eq:CZ-DD}
    \end{equation}    
    \label{lem:CZ-DD}
\end{lemma}

\begin{proof}
For any $j\in \Z_p$, computing the preimage of $I\otimes I \otimes XZ^j$ in the \LHS of \eqref{eq:CZ-DD} yields
\[
\input{figures/NormalForm/CZ-DD-preimage.tikz}
\]

By the uniqueness of the symplectic normal form, $XZ^j \otimes X^aZ^{b-c}\otimes X^cZ^{d-a}$ implies that the updated normal boxes should be the ones in the \RHS of \eqref{eq:CZ-DD}. Suppose that for some $\dir\in \circuitClifford_3$,

\[
\input{figures/NormalForm/CZ-DD-int.tikz}
\]

By \cref{prop:untangled}, it remains to show that 
\[
\input{figures/NormalForm/CZ-DD-dir-auto.tikz}
\]

Equivalently, we can compute the preimage of $I\otimes I \otimes Z$ and $I\otimes I \otimes X$ in the \LHS of \eqref{eq:CZ-DD-int2}.

\begin{equation}
\input{figures/NormalForm/CZ-DD-int2.tikz}
\label{eq:CZ-DD-int2}
\end{equation}  

According to \cref{fig:D-Boxes-auto-phase-free}, 
\[
\input{figures/NormalForm/D-auto.tikz}
\]

we have

\[
\input{figures/NormalForm/CZ-DD-int3.tikz}
\]

This completes the proof.
\end{proof}  

%% file: scripts/appendix/normal-form.tex
\section{Proofs from \texorpdfstring{\cref{sec:phase-free-normal-form}}{Section~3.1}}
\label{sec:sectionthreeproofs}

\begin{T1}
\lemznormal
\end{T1}
% \begin{lemma}
  
%     For every $P \in \Pauli_n\setminus \{\omega^tI;\;t \in \Z_p\}$ be a non-scalar $n$-qudit Pauli operator, there exists a unique $Z$-normal circuit $W_Z$ such that $\llbracket W_Z\rrbracket_u\bullet P\equiv_p Z \otimes I \otimes \cdots \otimes I$.
% \label{lem:Z-normal-reduction}
% \end{lemma}

\begin{proof}
There are unique $a_j, b_j \in \Z_p, 1\leq j\leq n$ such that 
$P \equiv_p \bigotimes_{j=1}^{n}X^{a_j}Z^{b_j}$. Since $P$ is not a scalar, there exists $j$ such that $X^{a_j}Z^{b_j}$ is not an identity. Let $m$ be the largest such index, then we can write
\begin{equation}
    P \equiv_p \left(X^{a_1}Z^{b_1}\right)\otimes \cdots \otimes \left(X^{a_m}Z^{b_m}\right) \otimes I \otimes \cdots \otimes I.
    \label{eq:P-decomposition}
\end{equation}
Next, we show how to construct a unique $Z$-normal circuit $W_Z$ based on \eqref{eq:P-decomposition} and the unique normal box actions specified in the second column of \cref{fig:normal-box-auto}. Starting from the qudit wire $m$, $A_{a_m,b_m}$ is uniquely determined by $X^{a_m}Z^{b_m}$. As a result, the unitary implementation of this $A$ box maps $X^{a_m}Z^{b_m}$ to $Z$ up to a phase.%\footnote{Recall that the Pauli propagation is tracked up to a phase}. 
Consequently, $B_{a_{m-1},b_{m-1}}$ is uniquely determined by $X^{a_{m-1}}Z^{b_{m-1}} \otimes Z$. Then, this $B$ box maps $X^{a_{m-1}}Z^{b_{m-1}} \otimes Z$ to $Z \otimes I$. Continue this process by concatenating $B$ normal boxes upward, until reaching the top two qudit wires, where $B_{a_1,b_1}$ is uniquely determined by $X^{a_{1}}Z^{b_{1}}\otimes Z$. As a result, $X^{a_{1}}Z^{b_{1}}\otimes Z$ is mapped to $Z \otimes I$. This gives us a $Z$-normal circuit $W_Z$ that is displayed below, which is uniquely determined by $P$. Since all the Pauli propagations are tracked up to phases, we conclude that $\llbracket W_Z\rrbracket_u\bullet P \equiv_p Z \otimes I \otimes \cdots \otimes I$.

\[
\scalebox{.75}{\input{figures/PhaseFreeNormalForm/ZNormalReduction.tikz}}\qedhere
\] 
\end{proof}

Note that from this proof it is clear that given a $Z$-normal circuit $W_Z$, we can read off its preimage of $Z \otimes I \otimes \cdots \otimes I$ by reading the indices of each normal box in $W_Z$ from right to left.

\begin{T2}
\lemxnormal
\end{T2}

% \begin{lemma}
% For every $Q\in \Pauli_n$ such that $(Z \otimes I \otimes \cdots \otimes I) Q = \omega Q(Z \otimes I \otimes \cdots \otimes I)$, there exists a unique $X$-normal circuit $W_X$ such that $\llbracket W_X\rrbracket_u\bullet Q \equiv_p I \otimes \cdots \otimes I \otimes X$.
% \label{lem:X-normal-reduction}
% \end{lemma}

\begin{proof}
There are unique $a_j, b_j\in \Z_p, 1\leq j\leq n$ such that  $Q \equiv_p \bigotimes_{j=1}^{n} X^{a_j}Z^{b_j}$. Since $(Z \otimes I \otimes \cdots \otimes I)Q = \omega Q (Z \otimes I \otimes \cdots \otimes I)$, we must have
\begin{equation}
    Q\equiv_p\left(XZ^i\right) \otimes \left(X^{a_2}Z^{b_2}\right) \otimes \cdots \otimes \left(X^{a_n}Z^{b_n}\right).
    \label{eq:Q-decomposition}
\end{equation}

Next, we show how to construct a unique $X$-normal circuit $W_X$ based on \eqref{eq:Q-decomposition} and the unique normal box actions specified in the second column of \cref{fig:normal-box-auto}. Starting from qudit wires $1$ and $2$, $D_{a_2,b_2}$ is uniquely determined by $ (XZ^i) \otimes (X^{a_2}Z^{b_2})$. As a result, this $D$ box maps $(XZ^i) \otimes (X^{a_2}Z^{b_2})$ to $I \otimes (XZ^i)$. Consequently, $D_{a_{3},b_{3}}$ is uniquely determined by $  (XZ^i) \otimes (X^{a_3}Z^{b_3})$. Again, this $D$ box maps $(XZ^i) \otimes (X^{a_3}Z^{b_3})$ to $I \otimes (XZ^i)$. Continue this process by concatenating $D$ normal boxes downward, until reaching the bottom two qudit wires, where $D_{a_n,b_n}$ is uniquely determined by $ (XZ^i) \otimes (X^{a_n}Z^{b_n})$. Thus, $ (XZ^i) \otimes (X^{a_n}Z^{b_n})$ is mapped to $I\otimes (XZ^i)$ by this last $D$ box.  Then, $E_i$ is uniquely determined by $XZ^i$, and it maps $XZ^i$ to $X$. This gives us an $X$-normal circuit $W_X$ that is displayed below. It is uniquely determined by $Q$ and $\llbracket W_X\rrbracket_u\bullet Q \equiv_p I \otimes \cdots \otimes I \otimes X$. 

\[
\scalebox{0.75}{\input{figures/PhaseFreeNormalForm/XNormalReduction.tikz}}\qedhere
\]
\end{proof}

Given an $X$-normal circuit $W_X$, we can read off its preimage of $I \otimes \cdots \otimes I \otimes X$ by reading the indices of each normal box in $W_X$ from right to left.

\begin{T3}
\lemxnormalpropagateZ
\end{T3}

% \begin{lemma}
% Every $X$-normal circuit $W_X$ satisfies $\llbracket W_X\rrbracket_u\bullet(Z \otimes I \otimes \cdots \otimes I \otimes I)\equiv_p I \otimes I \otimes \cdots \otimes I \otimes Z$.
% \label{lem:X-normal-propagate-Z}
% \end{lemma}

\begin{proof}
This follows from the additional actions of $D$ and $E$ boxes on certain Pauli operators (see the third column of \cref{fig:normal-box-auto}). In particular, every $D$ box maps $Z \otimes I$ to $I \otimes Z$, every $E$ box maps $Z$ to $Z$. It is visualised below.

    \[
    \scalebox{0.75}{\input{figures/PhaseFreeNormalForm/XNormalZCommute.tikz}}\qedhere
    \]
\end{proof}

\begin{T4}
    \lemnormalcircuitcompo
    % Let $P,Q\in \Pauli_n$ such that $PQ = \omega QP$. Then there exist unique normal circuits $W_Z$ and $W_X$ such that $\llbracket W_XW_Z\rrbracket_u\bullet P  \equiv_p I \otimes \cdots \otimes I \otimes Z$ and $\llbracket W_XW_Z\rrbracket_u\bullet Q  \equiv_p I \otimes \cdots \otimes I \otimes X$.
    % \label{lem:normal-circuit-compositon}
\end{T4}

\begin{proof}
    Since $PQ = \omega QP$, we must have $P \notin \langle -\omega \rangle$. By \cref{lem:Z-normal-reduction}, there exists a unique $Z$-normal circuit $W_Z$ such that $\llbracket W_Z\rrbracket_u \bullet P \equiv_p  Z \otimes I \otimes \cdots \otimes I$. Moreover, $(\llbracket W_Z\rrbracket_u\bullet P) (\llbracket W_Z\rrbracket_u\bullet Q) = \llbracket W_Z\rrbracket_u\bullet (PQ) = \llbracket W_Z\rrbracket_u\bullet (\omega QP) = \omega (\llbracket W_Z\rrbracket_u\bullet Q)  (\llbracket W_Z\rrbracket_u\bullet P)$. Hence $(Z \otimes I \otimes \cdots \otimes I) (\llbracket W_Z\rrbracket_u\bullet Q) = \omega (\llbracket W_Z\rrbracket_u\bullet Q)  (Z \otimes I \otimes \cdots \otimes I)$. By \cref{lem:X-normal-reduction}, there exists a unique $X$-normal circuit $W_X$ such that $\llbracket W_XW_Z\rrbracket_u\bullet Q=\llbracket W_X\rrbracket_u \bullet (\llbracket W_Z\rrbracket_u\bullet Q) \equiv_p  I \otimes \cdots \otimes I \otimes X$. 
    
    By \cref{lem:X-normal-propagate-Z}, $\llbracket W_XW_Z\rrbracket_u\bullet P=\llbracket W_X\rrbracket_u\bullet(\llbracket W_Z\rrbracket_u\bullet P) \equiv_p  \llbracket W_X\rrbracket_u\bullet(Z \otimes I \otimes \cdots \otimes I)\equiv_p I \otimes \cdots \otimes I \otimes Z$. It follows that
    \begin{equation}
        \llbracket W_XW_Z\rrbracket_u\bullet P \equiv_p  I \otimes \cdots \otimes I \otimes Z, \qquad  \llbracket W_XW_Z\rrbracket_u\bullet Q \equiv_p  I \otimes \cdots \otimes I \otimes X.
        \label{eq:condition}
    \end{equation}

    Diagrammatically, we can express this process:
    \[
    \scalebox{0.8}{\input{figures/PhaseFreeNormalForm/decompose.tikz}}
    \]
    
    This proves the existence of the desired $Z$-normal circuit $W_Z$ and $X$-normal circuit $W_X$. Now suppose towards contradiction that there exist another $Z$-normal circuit $W_Z'$ and $X$-normal circuit $W_X'$ satisfying  \eqref{eq:condition}. Since $\llbracket W_X'\rrbracket_u\bullet (\llbracket W_Z'\rrbracket_u\bullet P) \equiv_p I \otimes \cdots \otimes I \otimes Z$, \cref{lem:X-normal-propagate-Z} implies that $\llbracket W_Z'\rrbracket_u\bullet P \equiv_p Z\otimes I\otimes \cdots \otimes I$. By the uniqueness of the $Z$-normal circuit in \cref{lem:Z-normal-reduction}, $W_Z\equiv W_Z'$. Then 
    \[
        \llbracket W_X'\rrbracket_u\bullet(\llbracket W_Z'\rrbracket_u\bullet Q) \equiv_p \llbracket W_X'\rrbracket_u\bullet(\llbracket W_Z\rrbracket_u\bullet Q)\equiv_pI \otimes \cdots \otimes I \otimes X.
    \]
    By the uniqueness of the $X$-normal circuit in \cref{lem:X-normal-reduction}, $W_X'\equiv W_X$.
\end{proof}

\begin{T5}
\propnormalform
% Let $\phi: \Pauli_n \rightarrow \Pauli_n$ be an automorphism of the Pauli group such that $\phi$ fixes scalars. There exists a unique Clifford circuit $C\in\circuitClifford_n$ in symplectic normal form such that for all $P\in \Pauli_n$, $\llbracket C \rrbracket_u\bullet P \equiv_p \phi(P)$.
%     \label{prop:normal-form}
\end{T5}

\begin{proof}
We proceed by induction on $n$. When $n=0$, the Pauli operators are scalars of the form $(-\omega)^t$, $t \in \Z_{2p}$. In this case, $\phi$ is the identity. So $C$ must be the empty circuit. Now suppose that our claim is true for $n-1$ and consider the case of $n$. 
%In what follows, we carry out discussions up to Pauli correction on the Clifford normal form. 
First we prove that the symplectic normal circuit $W^{(n)}$ on the $n$-th layer exists. 

\[
    \scalebox{1}{\input{figures/PhaseFreeNormalForm/Normal-induction.tikz}}
    \]
Since $\phi$ is bijective, there exist $P, Q \in \Pauli_n$ such that $\phi(P) = I \otimes \cdots \otimes I \otimes Z$ and $\phi(Q) = I \otimes \cdots \otimes I \otimes X$. That is, $P = \phi^{-1}(I \otimes \cdots \otimes I \otimes Z)$ and $Q =\phi^{-1}(I \otimes \cdots \otimes I \otimes X)$. Then $PQ =\phi^{-1}(I \otimes \ldots \otimes I \otimes ZX)$. Since $(I \otimes \ldots \otimes I \otimes Z)(I \otimes \ldots \otimes I \otimes X) = \omega (I \otimes \ldots \otimes I \otimes X)(I \otimes \ldots \otimes I \otimes Z)$, $PQ=\omega QP$. By \cref{lem:normal-circuit-compositon}, there exist a unique $X$-normal circuit $W_X$ and a unique $Z$-normal circuit $W_Z$ such that $\llbracket W_XW_Z\rrbracket_u \bullet P \equiv_p I \otimes \ldots \otimes I \otimes Z = \phi(P)$ and $\llbracket W_XW_Z\rrbracket_u \bullet Q \equiv_p I \otimes \ldots \otimes I \otimes X = \phi(Q)$.

\[
    \scalebox{0.9}{\input{figures/PhaseFreeNormalForm/W-decompose.tikz}}
    \]

Let $\phi ': \Pauli_{n} \rightarrow \Pauli_{n}$ be the new automorphism defined as
\begin{equation}
    \phi '(U) = \phi\left(\llbracket W_XW_Z\rrbracket_u^{-1} \bullet U\right).
    \label{eq:new-auto}
\end{equation}
Then $I \otimes \cdots \otimes I \otimes Z$ and $I \otimes \cdots \otimes I \otimes X$ are fixed points of $\phi'$, since for $P\in \{X,Z\}$,

\begin{align*}
\phi '(I \otimes \dots \otimes I \otimes P) &= \phi\left(\llbracket W_XW_Z\rrbracket_u^{-1} \bullet (I \otimes \dots \otimes I \otimes P)\right) \\
&= \phi\left(\llbracket W_XW_Z\rrbracket_u^{-1} \bullet \llbracket W_XW_Z\rrbracket_u \bullet P\right) \equiv_p  \phi(P) = I \otimes \dots \otimes I \otimes P.
\end{align*}

For any $R \in \Pauli_{n-1}$, since $R \otimes I$ commutes with $I \otimes \dots \otimes I \otimes Z$ and $I \otimes \dots \otimes I \otimes X$, $\phi '(R \otimes I)$ commutes with $\phi '(I \otimes \dots \otimes I \otimes Z)= I \otimes \dots \otimes I \otimes Z$ and $\phi '(I \otimes \dots \otimes I \otimes X)= I \otimes \dots \otimes I \otimes X$. This implies that $\phi '(R\otimes I) = S \otimes I$ for some $S \in \Pauli_{n-1}$. Then there exists an automorphism $\phi '': \Pauli_{n-1} \rightarrow \Pauli_{n-1}$ such that $\phi''(R) = S$. Since $\phi '$ fixes $I \otimes \dots \otimes I \otimes Z$ and $I \otimes \dots \otimes I \otimes X$, 

\begin{equation}
    \phi ' = \phi '' \otimes I.
    \label{eq:equality}
\end{equation}

By the induction hypothesis, there exists $C' \in \circuitClifford_{n-1}$ in normal form such that, for all $R \in \pauli{n-1}$,

\begin{equation}
    \llbracket C'\rrbracket_u \bullet R \equiv_p  \phi ''(R).
    \label{eq:IH}
\end{equation}

Then $C =  (C'\otimes I)  W_XW_Z$ is in normal form. Next, we show that $\llbracket C\rrbracket_u \bullet U \equiv_p \phi(U)$ for all $U \in \Pauli_n$. By  \eqref{eq:new-auto},

\begin{equation}
    \llbracket W_XW_Z\rrbracket_u^{-1}\bullet U = \phi^{-1}  \phi'(U).
    \label{eq:map1}
\end{equation}
It follows that 
\begin{align*}
    \llbracket C\rrbracket_u \bullet U &=\llbracket(C' \otimes I)  W_XW_Z\rrbracket_u \bullet U \xlongequal[]{\eqref{eq:map1}} \llbracket C' \otimes I\rrbracket_u\bullet\left(\phi'^{-1}\left(\phi(U)\right)\right)\\
    &\xlongequal[]{\eqref{eq:equality}}
    \llbracket C' \otimes I\rrbracket_u\bullet (\phi''^{-1}\otimes I)\left(\phi(U)\right)\overset{\eqref{eq:IH}}{\equiv_p} \phi(U).
\end{align*}

To prove uniqueness, suppose towards contradiction that $D \in \circuitClifford_n$ is another Clifford circuit in symplectic normal form such that $\llbracket D\rrbracket_u \bullet U \equiv_p \phi(U)$ for all $U \in \pauli{n}$.  By \cref{def:normal}, $D = (D' \otimes I)  \left(W_X'W_Z'\right)$, where $W'_X$ is an $X$-normal circuit, $W'_Z$ is a $Z$-normal circuit, and $D'$ is a symplectic normal Clifford circuit on $n-1$ qudits. Since $\llbracket D \rrbracket_u \bullet P \equiv_p \phi(P) = I \otimes \dots \otimes I \otimes Z$, $\left(\llbracket D' \otimes I\rrbracket_u  \llbracket W_X'W_Z'\rrbracket_u\right)\bullet P \equiv_p I \otimes \dots \otimes I \otimes Z$. Then

\[
  \llbracket W_X'W_Z'\rrbracket_u \bullet P \equiv_p \llbracket D' \otimes I\rrbracket_u^{-1}\bullet(I \otimes \dots \otimes I \otimes Z) = (\llbracket D'\rrbracket_u^{-1} \otimes I) \bullet(I \otimes \dots \otimes I \otimes Z)=I \otimes \dots \otimes I \otimes Z.
\]

Similarly, $\llbracket D\rrbracket_u\bullet Q \equiv_p I \otimes \dots \otimes I \otimes X$ implies that $\llbracket W_X'W_Z'\rrbracket_u \bullet Q \equiv_p I \otimes \dots \otimes I \otimes X$. By the uniqueness of the $X$- and $Z$-normal circuits in \cref{lem:normal-circuit-compositon}, $W'_X\equiv W_X$ and $W'_Z\equiv W_Z$. Since $C$ and $D$ both act as $\phi$, it follows that $\llbracket D'\rrbracket_u \bullet R \equiv_p \llbracket C'\rrbracket_u \bullet R \equiv_p \phi''(R)$ for all $R \in \pauli{n-1}$. By the induction hypothesis, $C'\equiv D'$. Thus, the same is true for $C$ and $D$. This completes the proof.
\end{proof}

\begin{T6}
\corcardinality
% $\left\lvert \symplectic{2n, \Z_p} \right\rvert = \prod_{k=1}^n \left(p^{2k}-1\right) \cdot \left(p^{2k-1}\right).$
    % \label{lem:cardinality}
\end{T6}

\begin{proof}
Since $\symplectic{2n, \Z_p} \cong \Clifford_n/\Pauli_n$, by \cref{prop:normal-form}, it is sufficient to count the total number of distinct symplectic normal forms. Recall the symplectic normal form of an arbitrary $n$-qudit Clifford operator, which is unique up to Pauli correction.

 \[
    \input{figures/PhaseFreeNormalForm/cardinality2.tikz}
    \]

By induction, our problem is reduced to counting different ways of constructing $W_Z^{(n)}$ and $W_X^{(n)}$. 
\begin{itemize}
    \item For $W_Z^{(n)}$, according to \cref{fig:Z-and-X-Normal-Boxes-imp}, there are $p^2-1$ choices for an $A$ box. For $B$ boxes, we can start concatenating them upward from any qudit wire. According to \cref{def:Z-and-X-normal}, we proceed by cases.
    \begin{description}
        \item[In one extreme case: ] The ladder of $B$ boxes starts from the bottom qudit wire, as shown below. There are $n-1$ $B$ boxes in $W_Z^{(n)}$. According to \cref{fig:Z-and-X-Normal-Boxes-imp}, there are $p^2$ choices for a $B$ box. Since each choice of a normal box is independent of each other, there are $(p^2-1) \cdot p^2{^{(n-1)}}= p^{2n} - p^{2(n-1)}$ distinct $W_Z^{(n)}$ when it expands over all $n$ qudits.
        \[
            \input{figures/PhaseFreeNormalForm/ZNormalExtreme.tikz}
        \]
        \item[In the other extreme case: ]The ladder of $B$ boxes starts from the top qudit wire, as shown below. There is no $B$ box in $W_Z^{(n)}$. Since each choice of a normal box is independent of each other, there are $p^2-1$ distinct $W_Z^{(n)}$ when it expands over $1$ qudit.
        \[
            \input{figures/PhaseFreeNormalForm/ZNormalExtreme3.tikz}
        \]
        \item[In all other remaining cases: ] The ladder of $B$ boxes starts from qudit wire $k$, $1 < k < n$. In the illustration below, there are $(k-1)$ $B$ boxes in $W_Z^{(n)}$. Reasoning analogously as before, there are $(p^2-1) \cdot p^{2(k-1)} = p^{2k} - p^{2(k-1)}$ distinct $W_Z^{(n)}$ when it expands over $k$ qudits.
        \[
            \input{figures/PhaseFreeNormalForm/ZNormalExtreme2.tikz}
        \]
    \end{description}

    Considering all cases of concatenating B boxes in $W_Z^{(n)}$, the number of distinct $W_Z^{(n)}$ is
    \[
    \sum_{k = 1}^{n}\left(p^{2k} - p^{2(k-1)}\right) = \left(\sum_{k = 1}^{n}p^{2k}\right) - \left(\sum_{k = 1}^{n}p^{2(k-1)}\right)=\frac{\left(p^2-1\right)\left(1-p^{2n}\right)}{1-p^2}=p^{2n}-1.
    \]
    \item For $W_X^{(n)}$, we consider all possible ways of concatenating $D$ and $E$ boxes. According to \cref{fig:Z-and-X-Normal-Boxes-imp}, there are $p$ choices for an $E$ box. For each $D$ box, there are $p^2$ choices. According to \cref{def:Z-and-X-normal}, there are $n-1$ D boxes in $W_X^{(n)}$. Since each choice of a normal box is independent of each other, there are $p \cdot \left(p^2\right)^{(n-1)} = p^{2n-1}$ distinct $W_X^{(n)}$.
\end{itemize}
Since the choice of $W_Z^{(n)}$ and $W_X^{(n)}$ is independent of each other, there are $\left(p^{2n}-1\right) \cdot \left(p^{2n-1}\right)$ distinct $W_X^{(n)}W_Z^{(n)}$. According to the symplectic normal form outlined in \cref{fig:macro-normal-circuit}, we can calculate the total number of distinct $N_S^{(n)}$:
\[
\prod_{k=1}^n\left(p^{2k}-1\right) \cdot \left(p^{2k-1}\right).
\]
Since they are in one-to-one correspondence with the elements in $\Clifford_n/\Pauli_n$, this completes the proof.
\end{proof}

%% file: scripts/appendix/boxrelations.tex
\clearpage
\newgeometry{margin=2cm}
\section{A Complete Set of Box Relations}
\label{app:relations}

% \subsection{Phase-Free Box Relations}
% \label{subsec:box-relations-phase-free}
% In this section, we show how to achieve single-qupit Clifford completeness up to global phases and Pauli correction.
% \subsubsection{Single-Qupit Phase-Free Box Relations}
% \label{subsubsec:box-relations-phase-free-single}

\begin{figure}[H]
    \centering
    \[
    \scalebox{1}{\input{figures/BoxRelations/br1.tikz}}
    \]
    \caption{Single-qudit box relations: Up to Pauli correction, pushing $H$ (or $S$) through an $A$ box, pushing $S$ through an $E$ box.}
    \label{fig:single-qudit-relations}
\end{figure}

\iffalse
\begin{figure}[H]
    \centering
    \[
    \scalebox{.8}{\input{figures/BoxRelations/H.A-v2.tikz}}
    \]
    \caption{Pushing $H$ through an $A$ box, up to Pauli correction.}
    \label{fig:H-A-relations}
\end{figure}

\begin{figure}[H]
    \centering
    \[
    \scalebox{.8}{\input{figures/BoxRelations/S.A-v2.tikz}}
    \]
    \caption{Pushing $S$ through an $A$ box, up to Pauli correction.}
    \label{fig:S-A-relations}
\end{figure}

\begin{figure}[H]
    \centering
    \[
    \scalebox{.8}{\input{figures/BoxRelations/S.E.tikz}}
    \]
    \caption{Pushing $S$ through an $E$ box.}
    \label{fig:S-E-relations}
\end{figure}
\fi

\begin{figure}[H]
    \centering
    \[
    \scalebox{1}{\input{figures/BoxRelations/br2CZL.tikz}}
    \]
    \caption{Two-qudit box relations: Up to Pauli correction, pushing $\CZ$ through an $A$ box or $A.B$ boxes.}
    \label{fig:two-qudit-relations-CZ}
\end{figure}

\begin{figure}[H]
    \centering
    \[
    \scalebox{1}{\input{figures/BoxRelations/br2HuSuSdB.tikz}}
    \]
    \caption{Two-qudit box relations: Up to Pauli correction, pushing H and S gates through a B box.}
    \label{fig:two-qudit-relations-H-S-B}
\end{figure}

\begin{figure}[H]
    \centering
    \[
    \scalebox{1}{\input{figures/BoxRelations/br2HdSuSdCZD.tikz}}
    \]
    \caption{Two-qudit box relations: Up to Pauli correction, pushing H $\downarrow$, S $\uparrow$, S $\downarrow$, and $\CZ$ through a D box.}
    \label{fig:two-qudit-relations-H-S-CZ-D}
\end{figure}

\begin{figure}[H]
    \centering
    \[
    \scalebox{1}{\input{figures/BoxRelations/br3CZuBB.tikz}}
    \]
    \caption{Three-qudit box relations: Up to Pauli correction, pushing CZ $\uparrow$ through two B boxes.}
    \label{fig:three-qudit-relations-CZ-B-B}
\end{figure}

\begin{figure}[H]
    \centering
    \[
    \scalebox{1}{\input{figures/BoxRelations/br3CZdBu.tikz}}
    \]
    \caption{Three-qudit box relations: Up to Pauli correction, pushing CZ $\downarrow$ through a B box.}
    \label{fig:three-qudit-relations-CZ-B}
\end{figure}

\begin{figure}[H]
    \centering
    \[
    \scalebox{1}{\input{figures/BoxRelations/br3CZdDD.tikz}}
    \]
    \caption{Three-qudit box relations: Up to Pauli correction, pushing CZ $\downarrow$ through two D boxes.}
    \label{fig:three-qudit-relations-CZ-D-D}
\end{figure}